\def\slt{-0.5}
\pgfmathsetmacro{\ae}{atan(\slt)}
\pgfmathsetmacro{\aw}{\ae+180}
\pgfmathsetmacro{\an}{90-\ae}
\pgfmathsetmacro{\as}{\an+180}
\pgfmathsetmacro{\sltb}{sqrt(1-\slt*\slt)}
\pgfmathsetmacro{\lcrot}{45-atan(\slt/\sltb)*0.5}
\tikzset{distort/.style={cm={1,0,-\slt,\sltb,(0,0)}}}
\tikzset{bgplaq/.style={fill=lightgray!20!white}}
\tikzset{dbgplaq/.style={fill=lightgray}}
\tikzset{arrow/.style={postaction={decorate,thick,decoration={markings,mark = at position #1 with {\arrow{>}}}}},arrow/.default=0.5}
\tikzset{fline/.style={line width=0.2mm}}
\tikzset{wline/.style={line width=0.2mm,draw=white}}
\tikzset{bline/.style={line width=0.5mm}}
\tikzset{rline/.style={line width=0.5mm,draw=black!20!red}}
\tikzset{gline/.style={line width=0.5mm,draw=black!20!green}}
\tikzset{bbline/.style={line width=0.5mm,draw=white!20!blue}}
\newcommand\wlight[2]{\draw[dotted,bgplaq] (#1,#2) -- (#1+1,#2) -- (#1+1,#2+1) -- (#1,#2+1) -- (#1,#2);}
\newcommand\wwlight[6]{\draw[dotted,bgplaq] (#1,#2) -- node{${\bm #3}$} (#1+1,#2) -- node{${\bm #4}$} (#1+1,#2+1) -- node{${\bm #5}$} (#1,#2+1) -- node{${\bm #6}$} (#1,#2);}
\newcommand\blight[2]{\draw[fline,bgplaq] (#1,#2) -- (#1+1,#2) -- (#1+1,#2+1) -- (#1,#2+1) -- (#1,#2);}
\newcommand\bdark[2]{\draw[fline,dbgplaq] (#1,#2) -- (#1+1,#2) -- (#1+1,#2+1) -- (#1,#2+1) -- (#1,#2);}
\newcommand\bbull[3]{\filldraw[fill=black, draw=black] (#1,#2) circle (#3cm);}
\newcommand\rbull[3]{\filldraw[fill=black!20!red, draw=black!20!red] (#1,#2) circle (#3cm);}
\newcommand\gbull[3]{\filldraw[fill=black!20!green,draw=black!20!green] (#1,#2) circle (#3cm);}
\newcommand\ebull[3]{\draw[fill=white,draw=black] (#1,#2) circle (#3cm);}
\newcommandx\bpart[1][1=0.05]{\begin{tikzpicture}[scale=0.5,baseline=-2.5] \bbull{0}{0}{#1}; \end{tikzpicture}}
\newcommandx\rpart[1][1=0.05]{\begin{tikzpicture}[scale=0.5,baseline=-2.5] \rbull{0}{0}{#1}; \end{tikzpicture}}
\newcommandx\gpart[1][1=0.05]{\begin{tikzpicture}[scale=0.5,baseline=-2.5] \gbull{0}{0}{#1}; \end{tikzpicture}}
\newcommandx\hole[1][1=0.05]{\begin{tikzpicture}[scale=0.5,baseline=-2.5] \ebull{0}{0}{#1}; \end{tikzpicture}}
\def\Te{T_{\begin{tikzpicture}[scale=0.5] \ebull{0}{0}{0.1} \end{tikzpicture}}}
\def\Tb{T_{\begin{tikzpicture}[scale=0.5] \bbull{0}{0}{0.1} \end{tikzpicture}}}
\def\Tr{T_{\begin{tikzpicture}[scale=0.5] \rbull{0}{0}{0.1} \end{tikzpicture}}}
\def\Tg{T_{\begin{tikzpicture}[scale=0.5] \gbull{0}{0}{0.1} \end{tikzpicture}}}
\def\Tes{T^{*}_{\begin{tikzpicture}[scale=0.5] \ebull{0}{0}{0.1} \end{tikzpicture}}}
\def\Tbs{T^{*}_{\begin{tikzpicture}[scale=0.5] \bbull{0}{0}{0.1} \end{tikzpicture}}}
\def\Trs{T^{*}_{\begin{tikzpicture}[scale=0.5] \rbull{0}{0}{0.1} \end{tikzpicture}}}
\newcommand{\greenoc}[3]{\filldraw[gline,fill=black!20!green] (#1,#2) circle (0.12cm);
\node[text centered] at (#1,#2) {\color{white} \tiny $#3$};}
\newcommand{\redoc}[2]{\filldraw[rline,fill=black!20!red] (#1,#2) circle (0.12cm);}
\newcommand{\blackoc}[3]{\filldraw[bline,fill=black] (#1,#2) circle (0.12cm);
\node[text centered] at (#1,#2) {\color{white} \tiny $#3$};}
\newcommand{\gdata}[4]{
\ifnum#1=0
\else 
\greenoc{#2+0.5}{#3+1}{#1};
\fi
\ifnum#4=0
\else 
\greenoc{#2+0.5}{#3}{#4};
\fi
}
\newcommand{\rdata}[4]{
\ifnum#1=0
\else 
\redoc{#2+0.5}{#3+1};
\fi
\ifnum#4=0
\else 
\redoc{#2+0.5}{#3};
\fi
}
\newcommand{\bdata}[4]{
\ifnum#1=0
\else 
\blackoc{#2+0.5}{#3+1}{#1};
\fi
\ifnum#4=0
\else 
\blackoc{#2+0.5}{#3}{#4};
\fi
}
\newcommand{\gbdata}[6]{
\ifnum#1=0
\else
\greenoc{#3+0.25}{#4+1}{#1};
\fi
\ifnum#2=0
\else 
\blackoc{#3+0.75}{#4+1}{#2};
\fi
\ifnum#5=0
\else
\greenoc{#3+0.25}{#4}{#5};
\fi
\ifnum#6=0
\else
\blackoc{#3+0.75}{#4}{#6};
\fi
}
\newcommand{\rbdata}[6]{
\ifnum#1=0
\else
\redoc{#3+0.25}{#4+1};
\fi
\ifnum#2=0
\else 
\blackoc{#3+0.75}{#4+1}{#2};
\fi
\ifnum#5=0
\else
\redoc{#3+0.25}{#4};
\fi
\ifnum#6=0
\else
\blackoc{#3+0.75}{#4}{#6};
\fi
}
\newcommandx{\gblank}[4][1=0,4=0]{
\begin{scope}[on background layer]
\blight{#2}{#3};
\end{scope}
\gdata{#1}{#2}{#3}{#4}
}
\newcommandx{\gplus}[4][1=0,4=0]{
\begin{scope}[on background layer]
\blight{#2}{#3};
\draw[gline] (#2+1,#3+0.5) -- (#2+0.5,#3+0.5) -- (#2+0.5,#3+0);
\end{scope}
\gdata{#1}{#2}{#3}{#4}
}
\newcommandx{\gminus}[4][1=0,4=0]{
\begin{scope}[on background layer]
\blight{#2}{#3};
\draw[gline] (#2+0.5,#3+1) -- (#2+0.5,#3+0.5) -- (#2,#3+0.5);
\end{scope}
\gdata{#1}{#2}{#3}{#4}
}
\newcommandx{\gfull}[4][1=0,4=0]{
\begin{scope}[on background layer]
\blight{#2}{#3};
\draw[gline] (#2,#3+0.5) -- (#2+1,#3+0.5);
\end{scope}
\gdata{#1}{#2}{#3}{#4}
}
\newcommandx{\darkgfull}[4][1=0,4=0]{
\begin{scope}[on background layer]
\bdark{#2}{#3};
\draw[gline] (#2,#3+0.5) -- (#2+1,#3+0.5);
\end{scope}
\bdata{#1}{#2}{#3}{#4}
}
\newcommandx{\rblank}[4][1=0,4=0]{
\begin{scope}[on background layer]
\blight{#2}{#3};
\end{scope}
\rdata{#1}{#2}{#3}{#4}
}
\newcommandx{\rplus}[4][1=0,4=0]{
\begin{scope}[on background layer]
\blight{#2}{#3};
\draw[rline] (#2+1,#3+0.5) -- (#2+0.5,#3+0.5) -- (#2+0.5,#3+0);
\end{scope}
\rdata{#1}{#2}{#3}{#4}
}
\newcommandx{\rminus}[4][1=0,4=0]{
\begin{scope}[on background layer]
\blight{#2}{#3};
\draw[rline] (#2+0.5,#3+1) -- (#2+0.5,#3+0.5) -- (#2,#3+0.5);
\end{scope}
\rdata{#1}{#2}{#3}{#4}
}
\newcommandx{\rfull}[4][1=0,4=0]{
\begin{scope}[on background layer]
\blight{#2}{#3};
\draw[rline] (#2,#3+0.5) -- (#2+1,#3+0.5);
\end{scope}
\rdata{#1}{#2}{#3}{#4}
}
\newcommandx{\darkrfull}[4][1=0,4=0]{
\begin{scope}[on background layer]
\bdark{#2}{#3};
\draw[rline] (#2,#3+0.5) -- (#2+1,#3+0.5);
\end{scope}
\bdata{#1}{#2}{#3}{#4}
}
\newcommandx{\bblank}[4][1=0,4=0]{
\begin{scope}[on background layer]
\bdark{#2}{#3};
\end{scope}
\bdata{#1}{#2}{#3}{#4}
}
\newcommandx{\bplus}[4][1=0,4=0]{
\begin{scope}[on background layer]
\bdark{#2}{#3};
\draw[bline] (#2+1,#3+0.5) -- (#2+0.5,#3+0.5) -- (#2+0.5,#3+0);
\end{scope}
\bdata{#1}{#2}{#3}{#4}
}
\newcommandx{\bminus}[4][1=0,4=0]{
\begin{scope}[on background layer]
\bdark{#2}{#3};
\draw[bline] (#2+0.5,#3+1) -- (#2+0.5,#3+0.5) -- (#2,#3+0.5);
\end{scope}
\bdata{#1}{#2}{#3}{#4}
}
\newcommandx{\bfull}[4][1=0,4=0]{
\begin{scope}[on background layer]
\bdark{#2}{#3};
\draw[bline] (#2,#3+0.5) -- (#2+1,#3+0.5);
\end{scope}
\bdata{#1}{#2}{#3}{#4}
}
\newcommandx{\lightbfull}[4][1=0,4=0]{
\begin{scope}[on background layer]
\blight{#2}{#3};
\draw[bline] (#2,#3+0.5) -- (#2+1,#3+0.5);
\end{scope}
\gdata{#1}{#2}{#3}{#4}
}
\newcommandx{\lightbfullr}[4][1=0,4=0]{
\begin{scope}[on background layer]
\blight{#2}{#3};
\draw[bline] (#2,#3+0.5) -- (#2+1,#3+0.5);
\end{scope}
\rdata{#1}{#2}{#3}{#4}
}
\newcommandx\rbblank[6][1=0,2=0,5=0,6=0]{
\begin{scope}[on background layer]
\blight{#3}{#4};
\end{scope}
\rbdata{#1}{#2}{#3}{#4}{#5}{#6}
}
\newcommandx\rplusb[6][1=0,2=0,5=0,6=0]{
\begin{scope}[on background layer]
\blight{#3}{#4};
\draw[rline] (#3+1,#4+0.5) -- (#3+0.25,#4+0.5) -- (#3+0.25,#4+0);
\end{scope}
\rbdata{#1}{#2}{#3}{#4}{#5}{#6}
}
\newcommandx\bplusr[6][1=0,2=0,5=0,6=0]{
\begin{scope}[on background layer]
\blight{#3}{#4};
\draw[bline] (#3+1,#4+0.5) -- (#3+0.75,#4+0.5) -- (#3+0.75,#4+0);
\end{scope}
\rbdata{#1}{#2}{#3}{#4}{#5}{#6}
}
\newcommandx\rminusb[6][1=0,2=0,5=0,6=0]{
\begin{scope}[on background layer]
\blight{#3}{#4};
\draw[rline] (#3+0.25,#4+1) -- (#3+0.25,#4+0.5) -- (#3,#4+0.5);
\end{scope}
\rbdata{#1}{#2}{#3}{#4}{#5}{#6}
}
\newcommandx\bminusr[6][1=0,2=0,5=0,6=0]{
\begin{scope}[on background layer]
\blight{#3}{#4};
\draw[bline] (#3+0.75,#4+1) -- (#3+0.75,#4+0.5) -- (#3,#4+0.5);
\end{scope}
\rbdata{#1}{#2}{#3}{#4}{#5}{#6}
}
\newcommandx\rbplusmin[6][1=0,2=0,5=0,6=0]{
\begin{scope}[on background layer]
\blight{#3}{#4};
\draw[rline] (#3,#4+0.5) -- (#3+0.25,#4+0.5) -- (#3+0.25,#4+1);
\draw[bline] (#3+0.75,#4) -- (#3+0.75,#4+0.5) -- (#3+1,#4+0.5);
\end{scope}
\rbdata{#1}{#2}{#3}{#4}{#5}{#6}
}
\newcommandx\rfullb[6][1=0,2=0,5=0,6=0]{
\begin{scope}[on background layer]
\blight{#3}{#4};
\draw[rline] (#3,#4+0.5) -- (#3+1,#4+0.5);
\end{scope}
\rbdata{#1}{#2}{#3}{#4}{#5}{#6}
}
\newcommandx\bfullr[6][1=0,2=0,5=0,6=0]{
\begin{scope}[on background layer]
\blight{#3}{#4};
\draw[bline] (#3,#4+0.5) -- (#3+1,#4+0.5);
\end{scope}
\rbdata{#1}{#2}{#3}{#4}{#5}{#6}
}
\newcommandx\gbblank[6][1=0,2=0,5=0,6=0]{
\begin{scope}[on background layer]
\blight{#3}{#4};
\end{scope}
\gbdata{#1}{#2}{#3}{#4}{#5}{#6}
}
\newcommandx\gplusb[6][1=0,2=0,5=0,6=0]{
\begin{scope}[on background layer]
\blight{#3}{#4};
\draw[gline] (#3+1,#4+0.5) -- (#3+0.25,#4+0.5) -- (#3+0.25,#4+0);
\end{scope}
\gbdata{#1}{#2}{#3}{#4}{#5}{#6}
}
\newcommandx\bplusg[6][1=0,2=0,5=0,6=0]{
\begin{scope}[on background layer]
\blight{#3}{#4};
\draw[bline] (#3+1,#4+0.5) -- (#3+0.75,#4+0.5) -- (#3+0.75,#4+0);
\end{scope}
\gbdata{#1}{#2}{#3}{#4}{#5}{#6}
}
\newcommandx\gminusb[6][1=0,2=0,5=0,6=0]{
\begin{scope}[on background layer]
\blight{#3}{#4};
\draw[gline] (#3+0.25,#4+1) -- (#3+0.25,#4+0.5) -- (#3,#4+0.5);
\end{scope}
\gbdata{#1}{#2}{#3}{#4}{#5}{#6}
}
\newcommandx\bminusg[6][1=0,2=0,5=0,6=0]{
\begin{scope}[on background layer]
\blight{#3}{#4};
\draw[bline] (#3+0.75,#4+1) -- (#3+0.75,#4+0.5) -- (#3,#4+0.5);
\end{scope}
\gbdata{#1}{#2}{#3}{#4}{#5}{#6}
}
\newcommandx\gbplusmin[6][1=0,2=0,5=0,6=0]{
\begin{scope}[on background layer]
\blight{#3}{#4};
\draw[gline] (#3,#4+0.5) -- (#3+0.25,#4+0.5) -- (#3+0.25,#4+1);
\draw[bline] (#3+0.75,#4) -- (#3+0.75,#4+0.5) -- (#3+1,#4+0.5);
\end{scope}
\gbdata{#1}{#2}{#3}{#4}{#5}{#6}
}
\newcommandx\gfullb[6][1=0,2=0,5=0,6=0]{
\begin{scope}[on background layer]
\blight{#3}{#4};
\draw[gline] (#3,#4+0.5) -- (#3+1,#4+0.5);
\end{scope}
\gbdata{#1}{#2}{#3}{#4}{#5}{#6}
}
\newcommandx\bfullg[6][1=0,2=0,5=0,6=0]{
\begin{scope}[on background layer]
\blight{#3}{#4};
\draw[bline] (#3,#4+0.5) -- (#3+1,#4+0.5);
\end{scope}
\gbdata{#1}{#2}{#3}{#4}{#5}{#6}
}
\newcommand{\plusg}[2]{
\blight{#1}{#2};
\filldraw[fill=black!20!green,draw=black!20!green] (#1+0.5,#2+0.5) circle (0.25cm);
\draw[gline] (#1+0.5,#2+0.5) -- (#1+1,#2+0.5);
\node[text centered] at (#1+0.5,#2+0.5) {\color{white} $+$};
}
\newcommand{\ming}[2]{
\blight{#1}{#2};
\filldraw[fill=black!20!green,draw=black!20!green] (#1+0.5,#2+0.5) circle (0.25cm);
\draw[gline] (#1,#2+0.5) -- (#1+0.5,#2+0.5);
\node[text centered] at (#1+0.5,#2+0.5) {\color{white} $-$};
}
\newcommand{\flatg}[2]{
\blight{#1}{#2};
\draw[gline] (#1,#2+0.5) -- (#1+1,#2+0.5);
}
\newcommand{\flatG}[2]{
\bdark{#1}{#2};
\draw[gline] (#1,#2+0.5) -- (#1+1,#2+0.5);
}
\newcommand{\plusr}[2]{
\blight{#1}{#2};
\filldraw[fill=black!20!red,draw=black!20!red] (#1+0.5,#2+0.5) circle (0.25cm);
\draw[rline] (#1+0.5,#2+0.5) -- (#1+1,#2+0.5);
\node[text centered] at (#1+0.5,#2+0.5) {\color{white} $+$};
}
\newcommand{\minr}[2]{
\blight{#1}{#2};
\filldraw[fill=black!20!red,draw=black!20!red] (#1+0.5,#2+0.5) circle (0.25cm);
\draw[rline] (#1,#2+0.5) -- (#1+0.5,#2+0.5);
\node[text centered] at (#1+0.5,#2+0.5) {\color{white} $-$};
}
\newcommand{\flatr}[2]{
\blight{#1}{#2};
\draw[rline] (#1,#2+0.5) -- (#1+1,#2+0.5);
}
\newcommand{\flatR}[2]{
\bdark{#1}{#2};
\draw[rline] (#1,#2+0.5) -- (#1+1,#2+0.5);
}
\newcommand{\plusb}[2]{
\bdark{#1}{#2};
\filldraw[fill=black,draw=black] (#1+0.5,#2+0.5) circle (0.25cm);
\draw[bline] (#1+0.5,#2+0.5) -- (#1+1,#2+0.5);
\node[text centered] at (#1+0.5,#2+0.5) {\color{white} $+$};
}
\newcommand{\minb}[2]{
\bdark{#1}{#2};
\filldraw[fill=black,draw=black] (#1+0.5,#2+0.5) circle (0.25cm);
\draw[bline] (#1,#2+0.5) -- (#1+0.5,#2+0.5);
\node[text centered] at (#1+0.5,#2+0.5) {\color{white} $-$};
}
\newcommand{\flatb}[2]{
\bdark{#1}{#2};
\draw[bline] (#1,#2+0.5) -- (#1+1,#2+0.5);
}
\newcommand{\flatB}[2]{
\blight{#1}{#2};
\draw[bline] (#1,#2+0.5) -- (#1+1,#2+0.5);
}
\newcommand{\clear}[2]{
\wlight{#1}{#2};
}
\newcommand{\rhor}[2]{
\wlight{#1}{#2};
\draw[rline] (#1,#2+0.5) -- (#1+1,#2+0.5);
}
\newcommandx{\rp}[2]{
\wlight{#1}{#2};
\draw[rline] (#1+1,#2+0.5) -- (#1+0.5,#2+0.5) -- (#1+0.5,#2+0);
}
\newcommandx{\rmin}[2]{
\wlight{#1}{#2};
\draw[rline] (#1+0.5,#2+1) -- (#1+0.5,#2+0.5) -- (#1,#2+0.5);
}
\newcommand{\rver}[2]{
\wlight{#1}{#2};
\draw[rline] (#1+0.5,#2) -- (#1+0.5,#2+1);
}
\newcommand{\bhor}[2]{
\wlight{#1}{#2};
\draw[bline] (#1,#2+0.5) -- (#1+1,#2+0.5);
}
\newcommand{\bver}[2]{
\wlight{#1}{#2};
\draw[bline] (#1+0.5,#2) -- (#1+0.5,#2+1);
}
\newcommandx{\bp}[2]{
\wlight{#1}{#2};
\draw[bline] (#1+1,#2+0.5) -- (#1+0.5,#2+0.5) -- (#1+0.5,#2+0);
}
\newcommandx{\bmin}[2]{
\wlight{#1}{#2};
\draw[bline] (#1+0.5,#2+1) -- (#1+0.5,#2+0.5) -- (#1,#2+0.5);
}
\newcommand{\rb}[2]{
\wlight{#1}{#2};
\draw[rline] (#1,#2+0.5) -- (#1+1,#2+0.5);
\draw[bline] (#1+0.5,#2) -- (#1+0.5,#2+1);
}
\newcommand{\bra}[1]{\langle #1|}
\newcommand{\ket}[1]{|#1\rangle}
\newcommand{\cev}[1]{\xleftarrow{#1}}
\renewcommand{\vec}[1]{\xrightarrow{#1}}
\renewcommand{\b}[1]{\bar{#1}}
\renewcommand{\leq}{\leqslant}
\renewcommand{\geq}{\geqslant}
\renewcommand\ss{\scriptstyle}
\newcommand\sss{\scriptscriptstyle}
\def\phid{\phi^\dagger}
\def\psid{\psi^\dagger}
\newtheorem{defn}{Definition}
\newtheorem{lem}{Lemma}
\newtheorem{thm}{Theorem}
\newtheorem{prop}{Proposition}
\newtheorem{rmk}{Remark}
\newtheorem{ex}{Example}
\def\gr{\color{black!20!green}}
\def\re{\color{black!20!red}}
\title[]
{Hall polynomials, inverse Kostka polynomials and puzzles}
\author{M.~Wheeler}
\address{Michael Wheeler, School of Mathematics and Statistics, University of Melbourne, Parkville, Victoria 3010, Australia}
\email{mwheeler@ms.unimelb.edu.au}
\author{P.~Zinn-Justin}
\address{Paul Zinn-Justin, Laboratoire de Physique Th\'eorique et Hautes \'Energies, CNRS UMR 7589 and Universit\'e Pierre et Marie Curie (Paris 6), 4 place Jussieu, 75252 Paris cedex 05, France}
\email{pzinn@lpthe.jussieu.fr}
\begin{document}

\begin{abstract}
We study two different one-parameter generalizations of Littlewood--Richardson coefficients, namely Hall polynomials and generalized inverse Kostka polynomials, and derive new combinatorial formulae for them. Our combinatorial expressions are closely related to puzzles, originally introduced by Knutson and Tao in their work on the equivariant cohomology of the Grassmannian.
\end{abstract}

\maketitle

\section{Introduction}

\subsection{Background}

Littlewood--Richardson coefficients, $c^{\lambda}_{\mu\nu}$, are the structure constants for the multiplication of Schur functions:
\begin{align}
\label{schur-prod}
s_{\mu} s_{\nu} = \sum_{\lambda} c^{\lambda}_{\mu\nu} s_{\lambda}.
\end{align}
where $\lambda,\mu,\nu$ are three partitions.
They are also structure constants of the Grothendieck ring of polynomial
representations of $GL(n)$ (ignoring all partitions with more than $n$ rows); if we discard as well partitions whose first part is greater than $m$, we obtain the structure constants of the cohomology ring of the Grassmannian
$Gr(n,m+n)$. The computation of $c^{\lambda}_{\mu\nu}$ has a long history,
starting with the original work of Littlewood and Richardson \cite{lit-ric}
providing a rule for computing them, which was only rigorously proved some forty years later by Thomas \cite{tho} and Sch\"utzenberger \cite{sch}; since then, many alternative formulations have been given, including
the puzzle rule of Knutson and Tao \cite{knu-tao} used in their proof of the saturation conjecture \cite{knu-tao(saturation)}. The puzzle formulation is important for our purposes
because it displays most explicitly the underlying integrability of
the $c^{\lambda}_{\mu\nu}$ \cite{z-j}.

\subsection{Description of the paper}
Schur functions are specializations of a diverse range of symmetric functions\footnote{In a different direction, and one that preserves links with enumerative geometry, they are also special cases of Schubert polynomials (see, for example, the reviews \cite{mac2} and \cite{knu}).}. These include Hall--Littlewood, $q$--Whittaker and Jack polynomials as one-parameter generalizations, and Macdonald polynomials with two extra parameters. It is natural to ask whether the puzzle approach of \cite{knu-tao,knu-tao-woo} can be applied to these other families, leading to new combinatorial expressions for the relevant structure constants. 
In this paper, we focus on the Hall--Littlewood polynomials. There are various motivations for introducing them: {\bf 1.} They interpolate between Schur polynomials (at the value $t=0$ of the parameter), Schur $Q$-polynomials (at $t=-1$) and monomial symmetric polynomials (at $t=1$); {\bf 2.} On the geometric/representation-theoretic side, they occur
when lifting the Borel--Weil construction to the cotangent bundle of the flag variety \cite{bro,bry}, and (up to a plethystic substitution) in relation to the cohomology of the Springer fiber \cite{HS}; {\bf 3.} Their structure constants, the Hall polynomials, count short exact sequences of finite abelian $p$-groups for $t=p$ \cite{hal}.

We will study here two different generalizations of $c^{\lambda}_{\mu\nu}$. The first are 
the Hall polynomials $f^{\lambda}_{\mu\nu}(t)$, mentioned just above, which arise in the multiplication of 
Hall--Littlewood polynomials \cite{mac}:
\begin{align}
\label{HL-prod}
P_{\mu} P_{\nu} = \sum_{\lambda} f^{\lambda}_{\mu\nu}(t) P_{\lambda}.
\end{align}
Our first key result is Theorem \ref{thm-hall}, which expresses $f^{\lambda}_{\mu\nu}(t)$ via the action of divided-difference operators on a partition function $\mathcal{F}^{\lambda}_{\mu\nu}(x;t)$ in an integrable lattice model. More specifically, $\mathcal{F}^{\lambda}_{\mu\nu}(x;t)$ is a non-symmetric homogeneous polynomial of degree $d$ in a set of variables $x$. By acting with a product of $d$ divided-difference operators on $\mathcal{F}^{\lambda}_{\mu\nu}(x;t)$ we recover 
$f^{\lambda}_{\mu\nu}(t)$, up to known multiplicative factors. It is by no means unusual that divided-difference operators should be used in such a context. Indeed, Schubert polynomials can be defined via the action of divided-difference operators on a staircase monomial $\prod_{i=1}^{n} x_i^{n-i}$ \cite{las-sch2}, and the Schubert structure constants can themselves be obtained via the action of skew divided-difference operators on Schubert polynomials \cite{mac2,kir}. By evaluating the action of these operators on the partition function $\mathcal{F}^{\lambda}_{\mu\nu}(x;t)$ we obtain Theorem \ref{thm-hall-puzzle}, a combinatorial formula for 
$f^{\lambda}_{\mu\nu}(t)$ in terms of lattice tilings. 
  
The second generalization pertains to an intermediate situation, namely the product of a Schur and Hall--Littlewood polynomial expanded in the Schur basis:
\begin{align}
\label{schur-HL-prod}
s_{\mu} P_{\nu} = \sum_{\lambda} \b{K}^{\lambda}_{\mu\nu}(t) s_{\lambda}.
\end{align}
The structure constants $\b{K}^{\lambda}_{\mu\nu}(t)$ are clear generalizations of the inverse Kostka polynomials \cite{mac} (the latter are recovered by setting $\mu = 0$), but to our best knowledge they are unstudied in the literature. In Theorem \ref{thm-kost}, we express $\b{K}^{\lambda}_{\mu\nu}(t)$ via the action of divided-difference operators on a partition function $\mathcal{K}^{\lambda}_{\mu\nu}(x;t)$ in another integrable lattice model, different from that of Theorem \ref{thm-hall}. Evaluating the action of these operators on the partition function $\mathcal{K}^{\lambda}_{\mu\nu}(x;t)$, we arrive at Theorem \ref{thm-kost-puzzle}, a combinatorial formula for 
$\b{K}^{\lambda}_{\mu\nu}(t)$ in terms of lattice tilings analogous to those used in Theorem \ref{thm-hall-puzzle}.

It is important to note that, while the tiling formulae in Theorems \ref{thm-hall-puzzle} and \ref{thm-kost-puzzle} are conceptually analogous to the puzzles of \cite{knu-tao,knu-tao-woo}, they do not reduce to the latter at $t=0$. This raises the question of where the Knutson--Tao puzzles sit in the landscape of this paper. It turns out that they can be recovered from the same model (at $t=0$) as that used to study $\b{K}^{\lambda}_{\mu\nu}(t)$, but using a different partition function, which we denote $\mathcal{C}^{\lambda}_{\mu\nu}(x)$. This is the content of Theorem \ref{thm-LR}. From there we are able to express $c^{\lambda}_{\mu\nu}$ in terms of similar lattice tilings to those used to evaluate $f^{\lambda}_{\mu\nu}(t)$ and $\b{K}^{\lambda}_{\mu\nu}(t)$, before stating that these tilings are in bijection with the puzzles of \cite{knu-tao,knu-tao-woo}.

\subsection{General approach}

Since the same basic method will be applied in each of the cases \eqref{schur-prod}--\eqref{schur-HL-prod}, it is worthwhile to give a unified description here. First, we remark that each product rule has a corresponding {\it coproduct} version which arises from duality arguments. The coproduct versions of \eqref{schur-prod}--\eqref{schur-HL-prod} are, respectively,
\begin{align}
\label{coproducts}
s_{\lambda/\mu} = \sum_{\nu} c^{\lambda}_{\mu\nu} s_{\nu},
\quad
Q_{\lambda/\mu} = \sum_{\nu} f^{\lambda}_{\mu\nu}(t) Q_{\nu},
\quad
S_{\lambda/\mu} = \sum_{\nu} \b{K}^{\lambda}_{\mu\nu}(t) Q_{\nu},
\end{align}
where the left hand side of each identity is a certain symmetric polynomial assigned to a skew Young diagram (see \cite{mac} and the rest of this paper for further details). It is these identities which will be our focus.

In the initial step of our approach, we write the left hand side of each coproduct identity \eqref{coproducts} as an expectation value of monodromy matrix operators in a suitable integrable lattice model. In the case of skew Hall--Littlewood $Q_{\lambda/\mu}$ and Schur polynomials 
$s_{\lambda/\mu}$, it is well-known that this can be done using a model of $t$-deformed bosons (and its $t=0$ specialization) \cite{tsi,korff,bor,bor-pet,whe-z-j,bog}. In the case of the skew polynomials 
$S_{\lambda/\mu}$ (which are the skew version of the ``$t$-Schur'' polynomials\footnote{Macdonald gave no name to $S_{\lambda}(x;t)$ in \cite{mac}, but they are sometimes termed ``big'' Schur functions in the literature.} defined in Section 4, Chapter III of \cite{mac}) a similar construction is possible, using a free-fermionic six-vertex model. All such expectation values can be interpreted schematically as lattice partition functions of the form
\begin{align*}
\begin{tikzpicture}[scale=0.4]
\node[text centered] at (3,4.75) {$\bm \lambda$};
\node[text centered] at (3,-0.75) {$\bm \mu$};
\foreach\y in {0,...,4}{
\draw (0,\y) -- (6,\y);
}
\foreach\x in {0,...,6}{
\draw (\x,0) -- (\x,4);
}
\end{tikzpicture}
\end{align*}
in which the partitions $\lambda$, $\mu$ are encoded as particle states at the top and bottom of the lattice (and the left and right edges have appropriate, uniform boundary conditions). A rapidity variable $x_i$ is assigned to the $i$-th row of the lattice for all $i$, and the uniformity of the side boundary conditions ensures that the partition function is symmetric in its rapidities.

In the second step, it is necessary to find a higher-rank version of the models discussed above. The first of these is a model containing two commuting copies of the $t$-boson algebra, and accordingly we refer to it as a ``boson-boson'' model. The entries of its $L$-matrix act in a tensor product $\mathcal{B} \otimes \mathcal{B}$ of two bosonic Fock spaces, meaning that states in the Hilbert space $H$ are now indexed by a {\it pair} of partitions. The second model is a ``fermion-boson'' model, and the entries of its $L$-matrix act in a tensor product $\mathcal{F} \otimes \mathcal{B}$ of a fermionic and bosonic Fock space. In both models, we are then interested in partition functions of the type
\begin{align}
\label{general-puzzle}
\begin{tikzpicture}[scale=0.4,baseline=1cm]
\node[text centered] at (3,4.75) {$(\bm \lambda,\bm 0)$};
\node[text centered] at (3,-0.75) {$(\bm \mu,\ )$};
\node[text centered] at (-1.25,2) {$(\ ,\bm \nu)$};
\foreach\y in {0,...,4}{
\draw (0,\y) -- (6,\y);
}
\foreach\x in {0,...,6}{
\draw (\x,0) -- (\x,4);
}
\end{tikzpicture}
\end{align}
where $(\bm \lambda,\bm 0)$ denotes the state $\ket{\lambda} \otimes \ket{0}$ (\textit{i.e.} the state in $H$ indexed by the pair of partitions $\lambda$ and $0$) and $(\bm \mu,\ )$ denotes the state $\bra{\mu} \otimes \bra{\ }$ (\textit{i.e.} the state in $H^{*}$ indexed by $\mu$ and the empty, or trivial, partition\footnote{In this work, we distinguish between zero partitions $(0,\dots,0)$ containing a certain number of repetitions of 0, and the empty partition.}). The boundary conditions at the left edge of the lattice, denoted $(\ ,\bm \nu)$, are no longer uniform; they now encode a third partition $\nu$. As before, each row of the lattice has a corresponding rapidity variable, but the non-uniformity of the left boundary means that the resulting partition function is no longer symmetric in its $x$ variables.

This partition function can then be directly related to the structure constants under consideration. They are either recovered under the action of divided-difference operators (acting on the rapidity variables), as is the case for $f^{\lambda}_{\mu\nu}(t)$ and $\b{K}^{\lambda}_{\mu\nu}(t)$, or else can be read off as the coefficient of a single monomial in $x$, which is the case for $c^{\lambda}_{\mu\nu}$. The layout of the partition function \eqref{general-puzzle}, which is framed by three non-trivial partitions, resembles closely the puzzles introduced in \cite{knu-tao,knu-tao-woo}. In the case of $c^{\lambda}_{\mu\nu}$, we are able to make this correspondence precise.

\subsection{Notation and conventions} 

A partition $\lambda = (\lambda_1,\dots,\lambda_{\ell})$ is a finite, weakly-decreasing sequence of non-negative integers: $\lambda_1 \geq \cdots \geq \lambda_{\ell} \geq 0$. The part-multiplicities $m_i(\lambda)$ of a partition $\lambda$ indicate the number of repetitions of each part:
\begin{align*}
\lambda = 0^{m_0} 1^{m_1} 2^{m_2} \dots,
\end{align*}
and we take the length of $\lambda$ to be the sum of all multiplicities, $\ell(\lambda) = \sum_{i \geq 0} m_i(\lambda)$. Note that {\it we include parts of size zero in our definition of the length,} which is non-standard. The complement of a partition by $L$, denoted $\b{\lambda}$ when the value of $L$ is clear from context, is the partition with parts 
\begin{align*}
\b\lambda_i = L - \lambda_{\ell-i+1},
\quad
1 \leq i \leq \ell,
\quad
\ell \equiv \ell(\lambda).
\end{align*}

We use the notation $\lambda/\mu \in \mathfrak{h}$ to indicate that the skew diagram $\lambda-\mu$ forms a horizontal strip, and $\lambda/\mu \in \mathfrak{h}_k$ to specify that the horizontal strip contains $k$ boxes. Similar conventions apply to $\lambda/\mu \in \mathfrak{v}$ and $\lambda/\mu \in \mathfrak{v}_k$, which indicate vertical strips.

Divided-difference operators are denoted $\Delta_{i,j}$, and defined as
\begin{align*}
\Delta_{i,j}
=
\frac{1}{x_i - x_{j}}
(\sigma_{i,j}-1),
\end{align*}
where $\sigma_{i,j}$ transposes $x_i$ and $x_j$, namely $\sigma_{i,j}(g(\dots,x_i,\dots,x_j,\dots)) = g(\dots,x_j,\dots,x_i,\dots)$ for any function $g$. We will only consider the case where $j=i+1$, when it is convenient to write $\Delta_{i,i+1} \equiv \Delta_i$.

To improve clarity of presentation, we will distinguish bosonic and fermionic operators by colour-coding. Bosonic operators, vector spaces and corresponding basis elements will be coloured green (or black), while those of fermions will be coloured red.

\section{Hall--Littlewood polynomials from a rank-one model of bosons}

\subsection{Bosonic Fock space \texorpdfstring{$\mathcal{B}$}{B}}

Consider a semi-infinite one-dimensional lattice, with sites labelled by non-negative integers. In a finite configuration of this lattice each site $i \geq 0$ is occupied by $m_i \geq 0$ particles, and there exists $M \in \mathbb{N}$ such that $m_k = 0$ for all $k \geq M$. The bosonic Fock space $\mathcal{B}$ is obtained by taking linear combinations of all possible finite configurations:
\begin{align}
\label{boson-space}
\mathcal{B}
=
{\rm Span} 
\left\{
\ket{m_0}_0
\otimes
\ket{m_1}_1
\otimes
\ket{m_2}_2
\otimes
\cdots
\right\},
\qquad
m_i \geq 0,\ \forall\ i \geq 0.
\end{align}
The Fock space $\mathcal{B}$ will be the physical space for the model that we are about to study. It has a natural dual vector space 
\begin{align*}
\mathcal{B}^{*} = {\rm Span} 
\left\{
\bra{m_0}_0
\otimes
\bra{m_1}_1
\otimes
\bra{m_2}_2
\otimes
\cdots
\right\},
\qquad
m_i \geq 0,\ \forall\ i \geq 0,
\end{align*}
whose action on $\mathcal{B}$ is deduced by linearity and the relation
\begin{align*}
\langle m | n \rangle = \prod_{i=0}^{\infty} \delta_{m_i,n_i},
\quad
\forall\
\bra{m} = \bigotimes_{k=0}^{\infty} \bra{m_k}_k,
\quad
\ket{n} = \bigotimes_{k=0}^{\infty} \ket{n_k}_k.
\end{align*}

\subsection{Mapping partitions to states in \texorpdfstring{$\mathcal{B}$}{B}}

There is a simple mapping between partitions and the basis vectors of $\mathcal{B}$. Given a partition 
$\lambda = 0^{m_0} 1^{m_1} 2^{m_2} \dots$, associate the following state $\ket{{\gr\vec \lambda}} \in \mathcal{B}$:
\begin{align*}
\ket{{\gr\vec \lambda}}
=
\bigotimes_{k=0}^{\infty}
\ket{m_k}_k.
\end{align*}
Whenever we wish to display the partition $\lambda = (\lambda_1,\dots,\lambda_{\ell})$ explicitly we will write $\ket{{\gr\vec \lambda}} = \ket{{\gr (\lambda_{\ell},\dots,\lambda_1)}}$, with parts increasing from left to right. We will use $\ket{\ }$ to denote the completely empty state; the state for which all occupation numbers $m_i$ are zero.

\begin{ex}
\label{ex1}
{\rm 
Consider the partition $\lambda = (5,3,3,1,0)$, for which $m_0(\lambda) = 1$, $m_1(\lambda) = 1$, $m_3(\lambda) = 2$, $m_5(\lambda) = 1$, and all other $m_k(\lambda) = 0$. Then
\begin{align*}
\ket{{\gr\vec \lambda}}
=
\ket{{\gr 1}}_0
\otimes
\ket{{\gr 1}}_1
\otimes
\ket{{\gr 0}}_2
\otimes
\ket{{\gr 2}}_3
\otimes
\ket{{\gr 0}}_4
\otimes
\ket{{\gr 1}}_5
\otimes
\ket{{\gr 0}}_6
\otimes
\cdots.
\end{align*}
This mapping admits a simple pictorial interpretation. Starting from the Young diagram of $\lambda$ (in its traditional orientation), we assign particles to each upward edge at the boundary of $\lambda$. These particles are then projected down onto the integer lattice, whereby some sites become multiply-occupied:
\begin{center}
\begin{tikzpicture}[scale=0.6]
\draw (0,4) -- (5,4);
\draw (0,3) -- (5,3);
\draw (0,2) -- (3,2);
\draw (0,1) -- (3,1);
\draw (0,0) -- (1,0);
\draw (0,-1) -- (0,4);
\draw (1,0) -- (1,4);
\draw (2,1) -- (2,4);
\draw (3,1) -- (3,4);
\draw (4,3) -- (4,4);
\draw (5,3) -- (5,4);
\gbull{0}{-0.5}{0.09};
\gbull{1}{0.5}{0.09};
\gbull{3}{1.5}{0.09};\gbull{3}{2.5}{0.09};
\gbull{5}{3.5}{0.09};
\draw[dotted,arrow=0.5] (0,-1) -- (0,-2.5);
\draw[dotted,arrow=0.5] (1,0) -- (1,-2.5);
\draw[dotted,arrow=0.5] (3,1) -- (3,-2.5);
\draw[dotted,arrow=0.5] (5,3) -- (5,-2.5);
\draw[thick,arrow=1] (-0.5,-3) -- (7.5,-3);
\foreach\x in {1,...,8}{
\draw[thick] (-1.5+\x,-3) -- (-1.5+\x,-2.7);
}
\gbull{0}{-3}{0.09};
\gbull{1}{-3}{0.09};
\gbull{3}{-3}{0.09};\gbull{3}{-2.7}{0.09};
\gbull{5}{-3}{0.09};
\node at (0,-3.5) {$\sss m_0$};
\node at (1,-3.5) {$\sss m_1$};
\node at (2,-3.5) {$\sss m_2$};
\node at (3,-3.5) {$\sss m_3$};
\node at (4,-3.5) {$\sss m_4$};
\node at (5,-3.5) {$\sss m_5$};
\node at (6,-3.5) {$\sss m_6$};
\node at (7,-3.5) {$\cdots$};
\end{tikzpicture}
\end{center}
The corresponding basis vector $\ket{{\gr \vec \lambda}} \in \mathcal{B}$ is then obtained by reading the occupation numbers along the integer lattice.
}
\end{ex}

\subsection{Reverse partition states}
\label{sec:rev}

Let $\lambda = 0^{m_0} 1^{m_1} 2^{m_2} \dots$ be a partition and $L=\lambda_1$ its largest part. The {\it reverse partition state} $\ket{{\gr \cev \lambda}} \in \mathcal{B}$ is defined as
\begin{align*}
\ket{{\gr \cev \lambda}}
=
\bigotimes_{k=0}^{\infty}
\ket{\overline{m}_k}_k,
\quad\quad
\overline{m}_k = m_{L-k},
\ \forall\ 0 \leq k \leq L,
\quad\quad
\overline{m}_k = 0,
\ \forall\ k > L.
\end{align*}
Reverse partition states can also be written as $\ket{{\gr \cev \lambda}} = \ket{{\gr (\lambda_1,\dots,\lambda_{\ell})}}$, with parts decreasing from left to right.

\begin{ex}
{\rm
Let $\mu = (5,4,2,2,0)$, for which $m_0(\mu) = 1$, $m_2(\mu) = 2$, $m_4(\mu) = 1$, $m_5(\mu) =1$, and all other $m_k(\mu)=0$. This yields $\overline{m}_5(\mu) = 1$, $\overline{m}_3(\mu) = 2$, $\overline{m}_1(\mu) = 1$, $\overline{m}_0(\mu) =1$, and all other $\overline{m}_k(\mu)=0$. Then
\begin{align*}
\ket{{\gr\cev \mu}}
=
\ket{{\gr 1}}_0
\otimes
\ket{{\gr 1}}_1
\otimes
\ket{{\gr 0}}_2
\otimes
\ket{{\gr 2}}_3
\otimes
\ket{{\gr 0}}_4
\otimes
\ket{{\gr 1}}_5
\otimes
\ket{{\gr 0}}_6
\otimes
\cdots,
\end{align*}
which is the same state as in Example \ref{ex1}. Graphically, the reverse partition state can be obtained by first rotating the Young diagram of $\mu$ by 180\degree, then performing the same projection procedure as above:
\begin{center}
\begin{tikzpicture}[scale=0.6]
\bdark{0}{3}; \bdark{1}{3}; \bdark{2}{3}; \bdark{3}{3}; \bdark{4}{3};
\bdark{0}{2}; \bdark{1}{2}; \bdark{2}{2};
\bdark{0}{1}; \bdark{1}{1}; \bdark{2}{1};
\bdark{0}{0}; 
\draw (3,3) -- (5,3); 
\draw (3,2) -- (5,2);
\draw (1,1) -- (5,1);
\draw (0,0) -- (5,0);
\draw (0,-1) -- (5,-1);
\draw (0,-1) -- (0,0);
\draw (1,-1) -- (1,1); 
\draw (2,-1) -- (2,1);
\draw (3,-1) -- (3,3);
\draw (4,-1) -- (4,3);
\draw (5,-1) -- (5,4);
\gbull{0}{-0.5}{0.09};
\gbull{1}{0.5}{0.09};
\gbull{3}{1.5}{0.09};\gbull{3}{2.5}{0.09};
\gbull{5}{3.5}{0.09};
\draw[dotted,arrow=0.5] (0,-1) -- (0,-2.5);
\draw[dotted,arrow=0.5] (1,-1) -- (1,-2.5);
\draw[dotted,arrow=0.5] (3,-1) -- (3,-2.5);
\draw[dotted,arrow=0.5] (5,-1) -- (5,-2.5);
\draw[thick,arrow=1] (-0.5,-3) -- (7.5,-3);
\foreach\x in {1,...,8}{
\draw[thick] (-1.5+\x,-3) -- (-1.5+\x,-2.7);
}
\gbull{0}{-3}{0.09};
\gbull{1}{-3}{0.09};
\gbull{3}{-3}{0.09};\gbull{3}{-2.7}{0.09};
\gbull{5}{-3}{0.09};
\node at (0,-3.5) {$\sss \overline{m}_0$};
\node at (1,-3.5) {$\sss \overline{m}_1$};
\node at (2,-3.5) {$\sss \overline{m}_2$};
\node at (3,-3.5) {$\sss \overline{m}_3$};
\node at (4,-3.5) {$\sss \overline{m}_4$};
\node at (5,-3.5) {$\sss \overline{m}_5$};
\node at (6,-3.5) {$\sss \overline{m}_6$};
\node at (7,-3.5) {$\cdots$};
\end{tikzpicture}
\end{center}
From this picture the relationship with Example \ref{ex1} is explained: the chosen partitions $\lambda$ and $\mu$ are related under complementation ($\lambda$ is indicated using darkly shaded boxes) This leads us to the following result, directly relating reverse partition states with the complementation of partitions.

}
\end{ex}

\begin{prop}
\label{rev-prop}
{\rm
Let $\lambda$ be a partition with largest part $L$, and $\b\lambda$ the complement of $\lambda$ by $L$. Then $\ket{{\gr \cev \lambda}} = \ket{{\gr \vec {\b\lambda}}}$.
}
\end{prop}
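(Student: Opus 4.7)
The plan is a direct computation at the level of part-multiplicities. Since both $\ket{\cev\lambda}$ and $\ket{\vec{\b\lambda}}$ are defined as infinite tensor products $\bigotimes_{k\geq 0}\ket{n_k}_k$ with non-negative integer occupation numbers $n_k$, it suffices to show that the occupation number at each site $k$ agrees on the two sides.

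First I would unpack the definitions. By the definition of the reverse partition state, the occupation at site $k$ in $\ket{\cev\lambda}$ is $\bar m_k = m_{L-k}(\lambda)$ for $0\leq k\leq L$ and $\bar m_k = 0$ for $k>L$. The occupation at site $k$ in $\ket{\vec{\b\lambda}}$ is, by the mapping of partitions to states, $m_k(\b\lambda)$. So the proposition reduces to the combinatorial identity
\begin{align*}
m_k(\b\lambda) = m_{L-k}(\lambda), \quad 0 \leq k \leq L, \qquad m_k(\b\lambda)=0,\ k>L.
\end{align*}

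Next I would verify this identity directly from the definition $\b\lambda_i = L - \lambda_{\ell-i+1}$, where $\ell=\ell(\lambda)$. Since $\lambda_1\geq\cdots\geq\lambda_\ell\geq 0$, the sequence $\b\lambda_1,\dots,\b\lambda_\ell$ is weakly decreasing with values in $\{0,1,\dots,L\}$, so $\b\lambda$ is indeed a partition with largest part at most $L$ (yielding the vanishing for $k>L$). For $0\leq k\leq L$, the multiplicity $m_k(\b\lambda)$ counts the indices $i\in\{1,\dots,\ell\}$ with $L-\lambda_{\ell-i+1}=k$, equivalently $\lambda_{\ell-i+1}=L-k$; reindexing by $j=\ell-i+1$, this is exactly the number of indices $j$ with $\lambda_j=L-k$, i.e., $m_{L-k}(\lambda)$. (Here it matters that we include parts of size zero in $\ell$, as emphasised in the notation conventions, so that all parts equal to $L-k$ — including $L-k=0$ — are captured.)

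Combining these two observations gives $m_k(\b\lambda)=\bar m_k$ for every $k\geq 0$, hence
\begin{align*}
\ket{\vec{\b\lambda}} \;=\; \bigotimes_{k=0}^{\infty} \ket{m_k(\b\lambda)}_k \;=\; \bigotimes_{k=0}^{\infty} \ket{\bar m_k}_k \;=\; \ket{\cev\lambda},
\end{align*}
which is the claim. There is no real obstacle here beyond careful index bookkeeping; the worked example preceding the proposition already exhibits both sides evaluating to the same tensor product, and the argument above simply promotes that observation to a general identity.
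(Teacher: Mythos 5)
Your proof is correct. The paper gives no formal argument for this proposition: it is stated as an immediate consequence of the pictorial observation in the preceding example, where rotating the Young diagram by $180^{\circ}$ before projecting onto the integer lattice visibly reverses the occupation numbers and realizes the complementary partition. Your computation is exactly the algebraic content of that picture: the identity $m_k(\b\lambda)=m_{L-k}(\lambda)$ for $0\leq k\leq L$ (and $m_k(\b\lambda)=0$ for $k>L$), obtained from $\b\lambda_i=L-\lambda_{\ell-i+1}$ by the reindexing $j=\ell-i+1$ and using the convention that zero parts count toward $\ell(\lambda)$, so your write-up simply supplies the bookkeeping the paper leaves implicit.
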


\subsection{A rank-one model of bosons}
\label{model-bb}

Introduce bosonic creation and annihilation operators ${\gr \phid}$ and ${\gr \phi}$, and the particle-number operator ${\gr N}$, which satisfy the bilinear relations
\begin{align}
\label{t-bos-rel}
{\gr \phi} {\gr \phid}
-
t
{\gr \phid} {\gr \phi}
=
1-t,
\quad
[{\gr N}, {\gr \phi}] = -1,
\quad
[{\gr N}, {\gr \phid}] = 1.
\end{align}
Let ${\gr \mathfrak{b}}$ denote the algebra generated by $\{ {\gr \phi}, {\gr \phid}, {\gr N} \}$ modulo the relations \eqref{t-bos-rel}. We shall use the standard Fock representation of ${\gr \mathfrak{b}}$ on a single site of $\mathcal{B}$:
\begin{align*}
{\gr \phid} \ket{{\gr m}}
=
(1-t^{m+1})
\ket{{\gr m+1}},
\quad
{\gr \phi} \ket{{\gr m}}
=
\ket{{\gr m-1}},
\quad
{\gr N} \ket{{\gr m}}
=
m
\ket{{\gr m}},
\end{align*}
where it is assumed that ${\gr \phi} \ket{{\gr 0}} = 0$. 
%
The $L$-matrix\footnote{We denote the $L$-matrix \eqref{Lmat-b} by $L^{*}$ for consistency with our notation in \cite{whe-z-j}.}
\begin{align}
\label{Lmat-b}
L^{*}_a(x|{\gr\mathfrak b})
=
\begin{pmatrix}
1 & {\gr \phid}
\\
x {\gr \phi} & x
\end{pmatrix}_a
=
\left(
\begin{array}{cc}
\begin{tikzpicture}[scale=0.6]
\gblank{0}{0}
\end{tikzpicture} 
& 
\begin{tikzpicture}[scale=0.6]
\gplus{0}{0}
\end{tikzpicture} 
\\
\begin{tikzpicture}[scale=0.6]
\gminus{0}{0}
\end{tikzpicture} 
& 
\begin{tikzpicture}[scale=0.6]
\gfull{0}{0}
\end{tikzpicture} 
\end{array}
\right)_a
\end{align}
satisfies the intertwining equation
\begin{align*}
R_{ab}(y/x)
L^{*}_a(x|{\gr\mathfrak b})
L^{*}_b(y|{\gr\mathfrak b})
=
L^{*}_b(y|{\gr\mathfrak b})
L^{*}_a(x|{\gr\mathfrak b})
R_{ab}(y/x),
\end{align*}
with $R$-matrix given by
\begin{align}
\label{Rmat-b}
R_{ab}(z)
=
\left(
\begin{array}{cc|cc}
1-tz & 0 & 0 & 0
\\
0 & t(1-z) & (1-t)z & 0
\\
\hline
0 & 1-t & 1-z & 0
\\
0 & 0 & 0 & 1-tz
\end{array}
\right)_{ab}.
\end{align}
It is very useful to adopt a graphical representation for the entries of the $L$-matrix, as shown on the right of \eqref{Lmat-b}. We refer to these as {\it tiles.} Expectation values of an entry of the $L$-matrix are then indicated by placing occupation numbers at the top and bottom edges of its corresponding tile. For example,
\begin{align*}
\bra{{\gr 3}}
{\gr \phid}
\ket{{\gr 2}}
\equiv
\begin{tikzpicture}[scale=0.8,baseline=1cm]
\gplus[2]{1}{1}[3]  
\end{tikzpicture}
=
(1-t^3),
\qquad
\bra{{\gr 0}}
x
{\gr \phi}
\ket{{\gr 1}}
\equiv
\begin{tikzpicture}[scale=0.8,baseline=1cm]
\gminus[1]{1}{1}  
\end{tikzpicture}
=
x.
\end{align*}
An alternative notation is to indicate an occupation number, $n$, by $n$ lines that propagate vertically through a tile (see, for example, \cite{bor-pet}), but we will mostly avoid this convention in the present work. 

The preceding construction gives the $t$-boson model \cite{bog-bul,bog-ize-kit,whe-z-j}, and the $R$-matrix \eqref{Rmat-b} is that of the six-vertex model. We construct a monodromy matrix by taking a product of $L$-matrices over all non-negative sites $i$ of the integer lattice:
\begin{align*}
T^{*}_a(x)
=
L^{*}_a(x|{\gr \mathfrak{b}_0})
L^{*}_a(x|{\gr \mathfrak{b}_1})
\cdots
=
\prod_{i=0}^{\infty}
L^{*}_a(x|{\gr \mathfrak{b}_i})
:=
\begin{pmatrix}
A(x) & B(x)
\\
C(x) & D(x)
\end{pmatrix}_a,
\end{align*}
where the $L$-matrix at each site depends on its own copy of the algebra ${\gr \mathfrak{b}}$, and different copies are assumed to be commuting. In our subsequent constructions we mainly consider $A(x) \in {\rm End}(\mathcal{B})$, which can be regarded as the sum of all possible (semi-infinite) rows of the tiles \eqref{Lmat-b}, whose left-most tile is \begin{tikzpicture}[scale=0.5,baseline=0.1cm] \gblank{0}{0} \end{tikzpicture} or \begin{tikzpicture}[scale=0.5,baseline=0.1cm] \gplus{0}{0} \end{tikzpicture} and such that sufficiently far to the right only \begin{tikzpicture}[scale=0.5,baseline=0.1cm] \gblank{0}{0} \end{tikzpicture} occurs. One has $[A(x),A(y)] = 0$ for all $x,y$, which follows immediately from the intertwining equation once it is applied to two monodromy matrices $T^{*}_a(x), T^{*}_b(y)$.

\subsection{Skew Hall--Littlewood polynomials \texorpdfstring{$P_{\lambda/\mu}(x;t)$}{P lambda/mu(x;t)}}

We survey some basic facts about Hall--Littlewood polynomials. For more details on the theory, we refer the reader to \cite{mac}. Hall--Littlewood polynomials are a $t$-deformation of the Schur polynomials, which are given explicitly by
\begin{align*}
P_{\lambda}(x_1,\dots,x_n;t)
=
\sum_{\sigma \in S_n}
\sigma\left(
\prod_{i=1}^{n}
x_i^{\lambda_i}
\prod_{1 \leq i<j \leq n}
\frac{x_i-tx_j}{x_i-x_j}
\right).
\end{align*}
They are orthogonal with respect to the Hall inner product:
\begin{align}
\label{ortho-rel}
\langle Q_{\lambda}, P_{\mu} \rangle
=
\delta_{\lambda,\mu},
\qquad
Q_{\lambda}(x_1,\dots,x_n;t)
=
b_{\lambda}(t)
P_{\lambda}(x_1,\dots,x_n;t),
\qquad
b_{\lambda}(t)
=
\prod_{i \geq 1}
\prod_{j=1}^{m_i(\lambda)}
(1-t^j).
\end{align}
This orthogonality relation can be used to define skew Hall--Littlewood polynomials $P_{\lambda/\mu}(x;t)$:
\begin{align}
\label{skew-def}
\langle Q_{\lambda}, P_{\mu} P_{\nu} \rangle
:=
\langle Q_{\lambda/\mu}, P_{\nu} \rangle,
\qquad
Q_{\lambda/\mu}(x_1,\dots,x_n;t)
=
\frac{b_{\lambda}(t)}{b_{\mu}(t)}
P_{\lambda/\mu}(x_1,\dots,x_n;t).
\end{align}
From the point of view of integrable lattice models, the most important property of the Hall--Littlewood polynomials is their branching rule:
\begin{align}
\label{hl-branch}
P_{\lambda}(x_1,\dots,x_n,y_1,\dots,y_m;t)
=
\sum_{\mu}
P_{\lambda/\mu}(x_1,\dots,x_n;t)
P_{\mu}(y_1,\dots,y_m;t),
\end{align}
valid for any partition $\lambda$ and any two sets of variables $(x_1,\dots,x_n)$, $(y_1,\dots,y_m)$, and where the sum on the right hand side is taken over all partitions $\mu$. This allows 
$P_{\lambda/\mu}(x_1,\dots,x_n;t)$ to be constructed recursively, provided that the skew Hall--Littlewood polynomial in one variable is known. In fact, the latter has a very simple form:
\begin{align}
\label{1var-P}
P_{\lambda/\mu}(z;t)
=
\left\{
\begin{array}{ll}
\psi_{\lambda/\mu}(t) 
z^{|\lambda-\mu|},
\quad
&
\lambda / \mu \in \mathfrak{h},
\\ \\
0,
\quad
&
\text{otherwise},
\end{array}
\right.
\qquad
\psi_{\lambda/\mu}(t)
=
\prod_{i \geq 1: m_i(\mu) = m_i(\lambda)+1}
(1-t^{m_i(\mu)}).
\end{align}
In view of \eqref{1var-P} and the branching rule \eqref{hl-branch}, skew Hall--Littlewood polynomials can then be expressed as a sum over sequences of interlacing partitions (equivalently, in terms of semi-standard Young tableaux):
\begin{align}
\label{skew-ssyt}
P_{\lambda/\mu}(x_1,\dots,x_n;t)
&=
\sum_{\mu \equiv \nu^{(0)} \prec \cdots \prec \nu^{(n)} \equiv \lambda}\
\prod_{i=1}^{n}
\psi_{\nu^{(i)} / \nu^{(i-1)}}(t)\
x_i^{|\nu^{(i)} - \nu^{(i-1)}|},
\end{align}
where we write $\nu^{(i-1)} \prec \nu^{(i)}$ to indicate that $\nu^{(i)}/\nu^{(i-1)} \in \mathfrak{h}$.

\subsection{Lattice expression for \texorpdfstring{$P_{\lambda/\mu}(x;t)$}{P lambda/mu(x;t)}}

Next, we review the well-known construction of Hall--Littlewood polynomials as expectation values in the $t$-boson model \cite{tsi,korff}.

\begin{lem}{\rm 
Skew Hall--Littlewood polynomials are given by the following expectation value:
\begin{align}
\label{skew-HL}
P_{\lambda/\mu}(x_1,\dots,x_n;t)
=
\frac{\prod_{j=1}^{m_0(\lambda)}(1-t^j)}{\prod_{j=1}^{m_0(\mu)}(1-t^j)}
\bra{{\gr\vec\mu}}
A(x_1)
\dots
A(x_n)
\ket{{\gr\vec\lambda}}.
\end{align}
%
}
\end{lem}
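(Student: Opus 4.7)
My plan is to prove \eqref{skew-HL} by reducing the $n$-row matrix element to an $n$-fold iteration of a single-row identity, and matching the result term by term against the interlacing-partition expansion \eqref{skew-ssyt}. Inserting a resolution of the identity $\sum_{\nu} \ket{{\gr\vec\nu}}\bra{{\gr\vec\nu}} = \mathrm{Id}_{\mathcal{B}}$ between successive operators yields
\begin{align*}
\bra{{\gr\vec\mu}} A(x_1) \cdots A(x_n) \ket{{\gr\vec\lambda}}
=
\sum_{\nu^{(1)},\dots,\nu^{(n-1)}}
\prod_{i=1}^{n}
\bra{{\gr\vec{\nu^{(i-1)}}}} A(x_i) \ket{{\gr\vec{\nu^{(i)}}}},
\quad
\nu^{(0)} \equiv \mu,\
\nu^{(n)} \equiv \lambda.
\end{align*}
Comparing with \eqref{skew-ssyt} and \eqref{1var-P}, it suffices to establish the single-row identity
\begin{align*}
\bra{{\gr\vec{\nu'}}} A(x) \ket{{\gr\vec\nu}}
=
\frac{\prod_{j=1}^{m_0(\nu')}(1-t^j)}{\prod_{j=1}^{m_0(\nu)}(1-t^j)}
\, P_{\nu/\nu'}(x;t)
\end{align*}
for any pair of partitions $(\nu,\nu')$; iterating the identity and telescoping the $m_0$-ratios then produces \eqref{skew-HL}.

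For the single-row identity, I would expand $A(x)$ graphically as a sum over semi-infinite rows of tiles with auxiliary state $0$ at both ends. Each such configuration decomposes uniquely into a disjoint union of ``bumps'' $(c_k^+, c_k^-)$ with $c_k^+ < c_k^-$: a plus tile ${\gr\phi^\dagger}$ at column $c_k^+$, a minus tile $x{\gr\phi}$ at column $c_k^-$, full tiles $x$ at the intermediate columns, and blank tiles elsewhere. Setting $\delta_i := m_i(\nu') - m_i(\nu)$, a plus column contributes $\delta_i = +1$, a minus column $\delta_i = -1$, and full/blank columns $\delta_i = 0$. Consistency of the auxiliary state, which alternates $0 \leftrightarrow 1$ across consecutive non-zero $\delta_i$'s, then forces the non-zero values of $\delta_i$ to alternate $+1,-1,+1,-1,\dots$ from left to right, beginning with $+1$ and ending with $-1$. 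This condition is equivalent, via the cumulative counts $M_c(\nu) - M_c(\nu') \in \{0,1\}$, to the horizontal-strip condition $\nu/\nu' \in \mathfrak{h}$; moreover the placement of plus, minus, full and blank tiles is then uniquely determined by $(\nu,\nu')$, so the sum over configurations collapses to a single term.

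Evaluating the weight of this unique configuration and using ${\gr\phi^\dagger}\ket{{\gr m}} = (1-t^{m+1})\ket{{\gr m+1}}$ gives
\begin{align*}
\bra{{\gr\vec{\nu'}}} A(x) \ket{{\gr\vec\nu}}
=
\prod_{i \geq 0\,:\, m_i(\nu') = m_i(\nu)+1}
(1-t^{m_i(\nu')}) \cdot x^{|\nu|-|\nu'|}
\end{align*}
whenever $\nu/\nu' \in \mathfrak{h}$, where the $x$-power is the sum of all bump lengths and equals $|\nu|-|\nu'|$. Comparing with \eqref{1var-P}, whose product ranges over $i \geq 1$, and observing that the alternation constraint forces $\delta_0 \in \{0,+1\}$, the ``missing'' $i=0$ factor equals $\prod_{j=1}^{m_0(\nu')}(1-t^j)/\prod_{j=1}^{m_0(\nu)}(1-t^j)$ in both cases (it is $1$ when $\delta_0 = 0$ and $1-t^{m_0(\nu')}$ when $\delta_0 = +1$). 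This establishes the single-row identity; combining with the telescoping prefactor and \eqref{skew-ssyt} completes the proof.

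The main obstacle is the structural claim that the configuration sum collapses to a single admissible tiling for each pair with $\nu/\nu' \in \mathfrak{h}$, and that all other pairs give zero; equivalently, the identification of the auxiliary-state alternation with the horizontal-strip condition. Once this is in hand, the $m_0$-bookkeeping and row-by-row iteration are routine.
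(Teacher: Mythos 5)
Your proposal is correct and follows essentially the same route as the paper: reduce to the one-row expectation value via a resolution of the identity together with the branching rule (you use its iterated form \eqref{skew-ssyt}), then evaluate $\bra{{\gr\vec{\nu'}}}A(x)\ket{{\gr\vec\nu}}$ graphically and telescope the $m_0$-ratios. Your bump decomposition and auxiliary-state alternation argument is simply a more explicit justification of the paper's graphical claim that the one-row configuration is unique, vanishes unless $\nu/\nu'\in\mathfrak{h}$, and carries weight $\psi_{\nu/\nu'}(t)\,x^{|\nu-\nu'|}$ times the $i=0$ correction factor.
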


\begin{proof}
By inserting a complete set of states in the expectation value on the right hand side of \eqref{skew-HL}, we have
\begin{align*}
\bra{{\gr\vec \mu}}
A(x_1)
\dots
A(x_n)
\ket{{\gr\vec \lambda}}
=
\sum_{\nu}
\bra{{\gr \vec \mu}}
A(x_1)
\dots
A(x_{n-1})
\ket{{\gr \vec \nu}}
\bra{{\gr \vec \nu}}
A(x_n)
\ket{{\gr \vec \lambda}}.
\end{align*}
Owing to the branching rule \eqref{hl-branch}, to complete the proof it suffices to show that
\begin{align}
\label{1var-exp}
P_{\lambda/\nu}(z;t)
=
\frac{\prod_{j=1}^{m_0(\lambda)}(1-t^j)}{\prod_{j=1}^{m_0(\nu)}(1-t^j)}
\bra{{\gr\vec \nu}}
A(z)
\ket{{\gr\vec \lambda}}
\end{align}
for any two partitions $\lambda$ and $\nu$, which is simply the one-variable case of \eqref{skew-HL}. We must show that the right hand side of \eqref{1var-exp} matches that of \eqref{1var-P}. By studying the lattice representation of $\bra{{\gr\vec \nu}} A(z) \ket{{\gr\vec \lambda}}$, for example
\begin{align*}
\bra{{\gr (0,0,1,3,4)}}
A(z)
\ket{{\gr (0,1,3,3,5)}}
=
\begin{tikzpicture}[scale=0.9,baseline=1cm]
\gplus[1]{1}{1}[2]  
\gfull[1]{2}{1}[1]
\gfull[0]{3}{1}[0]
\gminus[2]{4}{1}[1]
\gplus[0]{5}{1}[1]
\gminus[1]{6}{1}[0]
\gblank[0]{7}{1}[0]
\node[text centered] at (1.5,0.5) {\gr \tiny $m_0(\nu)$};
\node[text centered] at (2.5,0.5) {\gr \tiny $m_1(\nu)$};
\node[text centered] at (3.5,0.5) {\gr \tiny $m_2(\nu)$};
\node[text centered] at (4.5,0.5) {\gr \tiny $m_3(\nu)$};
\node[text centered] at (5.5,0.5) {\gr \tiny $m_4(\nu)$};
\node[text centered] at (6.5,0.5) {\gr \tiny $m_5(\nu)$};
\node[text centered] at (7.5,0.5) {\gr \tiny $\cdots$};
\node[text centered] at (1.5,2.5) {\gr \tiny $m_0(\lambda)$};
\node[text centered] at (2.5,2.5) {\gr \tiny $m_1(\lambda)$};
\node[text centered] at (3.5,2.5) {\gr \tiny $m_2(\lambda)$};
\node[text centered] at (4.5,2.5) {\gr \tiny $m_3(\lambda)$};
\node[text centered] at (5.5,2.5) {\gr \tiny $m_4(\lambda)$};
\node[text centered] at (6.5,2.5) {\gr \tiny $m_5(\lambda)$};
\node[text centered] at (7.5,2.5) {\gr \tiny $\cdots$};
\end{tikzpicture}
=
z^4 (1-t^2)(1-t),
\end{align*}
it is immediately apparent that $\bra{{\gr\vec \nu}} A(z) \ket{{\gr\vec \lambda}} = 0$ if $\lambda/\nu \not\in \mathfrak{h}$ (since there is no way to connect the two states using only the tiles in the $L$-matrix \eqref{Lmat-b}). 
If $\lambda/\nu\in \mathfrak{h}$, we acquire a weight of $z$ for every horizontal unit step taken by the green line, which gives the correct factor $z^{|\lambda - \nu|}$. Also, in the transition from $\lambda$ to $\nu$, every time the number of particles at a lattice site increases from $m-1$ to $m$ we acquire a factor of $1-t^m$. This gives rise to the total factor
$
\psi_{\lambda/\nu}(t)
\prod_{i=1}^{m_0(\nu)}(1-t^i) / \prod_{i=1}^{m_0(\lambda)}(1-t^i)
$.

\end{proof}

\begin{ex}{\rm 
Let $\lambda = (3,1,1)$ and $\mu = (0,0,0)$. Then $P_{\lambda/\mu}(x_1,x_2,x_3;t) = P_{\lambda}(x_1,x_2,x_3;t)$ can be expressed as
\begin{align*}
\prod_{i=1}^{m_0(\mu)}
(1-t^i)
P_{\lambda}(x_1,x_2,x_3;t)
&=
\begin{tikzpicture}[scale=0.8,baseline=2cm]
\gplus[0]{1}{3}
\gminus[2]{2}{3}
\gblank[0]{3}{3}
\gblank[1]{4}{3}
\gplus[1]{1}{2}
\gminus[1]{2}{2}
\gblank[0]{3}{2}
\gblank[1]{4}{2}
%
\gplus[2]{1}{1}[3]
\gfull[0]{2}{1}[0]
\gfull[0]{3}{1}[0]
\gminus[1]{4}{1}[0]
\end{tikzpicture}
+
\begin{tikzpicture}[scale=0.8,baseline=2cm]
\gplus[0]{1}{3}
\gminus[2]{2}{3}
\gblank[0]{3}{3}
\gblank[1]{4}{3}
\gplus[1]{1}{2}
\gminus[1]{2}{2}
\gplus[0]{3}{2}
\gminus[1]{4}{2}
\gplus[2]{1}{1}[3]
\gfull[0]{2}{1}[0]
\gminus[1]{3}{1}[0]
\gblank[0]{4}{1}[0]
\end{tikzpicture}
+
\begin{tikzpicture}[scale=0.8,baseline=2cm]
\foreach\x in {1,...,3}{\node at (5.5,\x+0.5) {$x_\x$};}
\gplus[0]{1}{3}
\gminus[2]{2}{3}
\gblank[0]{3}{3}
\gblank[1]{4}{3}
\gplus[1]{1}{2}
\gfull[1]{2}{2}
\gfull[0]{3}{2}
\gminus[1]{4}{2}
\gplus[2]{1}{1}[3]
\gminus[1]{2}{1}[0]
\gblank[0]{3}{1}[0]
\gblank[0]{4}{1}[0]
\end{tikzpicture}
\\
&+
\begin{tikzpicture}[scale=0.8,baseline=2cm]
\gplus[0]{1}{3}
\gminus[2]{2}{3}
\gplus[0]{3}{3}
\gminus[1]{4}{3}
\gplus[1]{1}{2}
\gfull[1]{2}{2}
\gminus[1]{3}{2}
\gblank[0]{4}{2}
\gplus[2]{1}{1}[3]
\gminus[1]{2}{1}[0]
\gblank[0]{3}{1}[0]
\gblank[0]{4}{1}[0]
\end{tikzpicture}
+
\begin{tikzpicture}[scale=0.8,baseline=2cm]
\gplus[0]{1}{3}
\gminus[2]{2}{3}
\gplus[0]{3}{3}
\gminus[1]{4}{3}
\gplus[1]{1}{2}
\gminus[1]{2}{2}
\gblank[1]{3}{2}
\gblank[0]{4}{2}
\gplus[2]{1}{1}[3]
\gfull[0]{2}{1}[0]
\gminus[1]{3}{1}[0]
\gblank[0]{4}{1}[0]
\end{tikzpicture}
+
\begin{tikzpicture}[scale=0.8,baseline=2cm]
\foreach\x in {1,...,3}{\node at (5.5,\x+0.5) {$x_\x$};}
\gplus[0]{1}{3}
\gfull[2]{2}{3}
\gfull[0]{3}{3}
\gminus[1]{4}{3}
\gplus[1]{1}{2}
\gminus[2]{2}{2}
\gblank[0]{3}{2}
\gblank[0]{4}{2}
\gplus[2]{1}{1}[3]
\gminus[1]{2}{1}[0]
\gblank[0]{3}{1}[0]
\gblank[0]{4}{1}
\end{tikzpicture}
\end{align*}
in which all possible lattice configurations are summed over. Calculating the Boltzmann weight of each configuration, we obtain the explicit sum 
\begin{align*}
&
\prod_{i=1}^{m_0(\mu)} (1-t^i)
P_{\lambda}(x_1,x_2,x_3;t)
=
\\
&
(1-t)(1-t^2)(1-t^3)
\Big( x_1^3 x_2 x_3 + (1-t) x_1^2 x_2^2 x_3 + x_1 x_2^3 x_3 
+ (1-t) x_1 x_2^2 x_3^2 + x_1 x_2 x_3^3 + (1-t) x_1^2 x_2 x_3^2 \Big).
\end{align*}
The common factor $\prod_{i=1}^{m_0(\mu)} (1-t^i) = (1-t)(1-t^2)(1-t^3)$ originates from the zeroth column, which is the same across all configurations. 
}
\end{ex}

The expression \eqref{skew-HL} depends on the partitions $\lambda$ and $\mu$ only via the boundary conditions. Naturally, there is the option to evaluate the partition function under 180\degree\ rotation, which leads to 
\begin{lem}{\rm 
Skew Hall--Littlewood polynomials are given alternatively by the following expectation value: 
\begin{align*}
Q_{\lambda/\mu}(x_1,\dots,x_n;t)
=
\bra{{\gr\cev\lambda}}
A(x_1)
\dots
A(x_n)
\ket{{\gr\cev\mu}}
=
\bra{{\gr\vec{\b\lambda}}}
A(x_1)
\dots
A(x_n)
\ket{{\gr\vec{\b\mu}}},
\end{align*}
where $\b\lambda$ and $\b\mu$ denote the complements of $\lambda$ and $\mu$ by $L$, the largest part of 
$\lambda$.
}
\end{lem}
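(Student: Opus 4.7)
The second equality is immediate from Proposition \ref{rev-prop} applied to both $\lambda$ and $\mu$. The first equality is the substantive content, and the plan is to reduce it to the previous lemma by treating the complemented partitions as boundary data.

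Concretely, applying the previous lemma with $\bar\mu$ playing the role of the top partition and $\bar\lambda$ the bottom, and using the elementary identities $m_0(\bar\lambda) = m_L(\lambda)$ and $m_0(\bar\mu) = m_L(\mu)$ that follow from the definition of the complement, yields
\begin{align*}
\bra{{\gr\cev\lambda}} A(x_1) \cdots A(x_n) \ket{{\gr\cev\mu}}
=
\frac{\prod_{j=1}^{m_L(\lambda)}(1-t^j)}{\prod_{j=1}^{m_L(\mu)}(1-t^j)}\,
P_{\bar\mu/\bar\lambda}(x_1,\dots,x_n;t).
\end{align*}
The first equality thus becomes equivalent to the purely combinatorial symmetry
\begin{align*}
Q_{\lambda/\mu}(x_1,\dots,x_n;t)
=
\frac{\prod_{j=1}^{m_L(\lambda)}(1-t^j)}{\prod_{j=1}^{m_L(\mu)}(1-t^j)}\,
P_{\bar\mu/\bar\lambda}(x_1,\dots,x_n;t),
\end{align*}
which we denote $(\ast)$ for the purposes of this discussion.

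To prove $(\ast)$, I would induct on $n$, using the branching rule \eqref{hl-branch} for $P$ and its $(b_\lambda/b_\mu)$-rescaled analogue for $Q$ on each side to reduce to the one-variable case. In one variable, \eqref{1var-P} shows that both sides of $(\ast)$ vanish unless $\lambda/\mu \in \mathfrak{h}$, and it is straightforward to verify that $\lambda/\mu \in \mathfrak{h}$ if and only if $\bar\mu/\bar\lambda \in \mathfrak{h}$, in which case $|\lambda - \mu| = |\bar\mu - \bar\lambda|$, so that the $z$-dependence on both sides agrees. What remains is the numerical identity
\begin{align*}
\frac{b_\lambda(t)}{b_\mu(t)}\,\psi_{\lambda/\mu}(t)
=
\frac{\prod_{j=1}^{m_L(\lambda)}(1-t^j)}{\prod_{j=1}^{m_L(\mu)}(1-t^j)}\,
\psi_{\bar\mu/\bar\lambda}(t),
\end{align*}
which is verified by direct product expansion using $m_i(\bar\lambda) = m_{L-i}(\lambda)$ (and likewise for $\mu$) together with the definitions of $b_\lambda(t)$ and $\psi$ in \eqref{ortho-rel} and \eqref{1var-P}.

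The only mild obstacle is the bookkeeping at the boundary indices $i = 0$ and $i = L$. The product defining $\psi_{\bar\mu/\bar\lambda}(t)$ is restricted to $i \geq 1$, and so under the reindexing $k = L - i$ it omits the contribution corresponding to the largest part $L$ of $\lambda$; the explicit prefactor $\prod_{j=1}^{m_L(\lambda)}(1-t^j)/\prod_{j=1}^{m_L(\mu)}(1-t^j)$ in $(\ast)$ is precisely the missing boundary factor. Once this index shift is handled correctly, the remaining products on the two sides of $(\ast)$ telescope against each other and the identity follows at once.
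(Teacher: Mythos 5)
Your argument is correct, but it takes a genuinely different route from the paper's. The paper proves the lemma by a lattice symmetry: it rewrites \eqref{skew-HL} as $Q_{\lambda/\mu}=(B_\lambda/B_\mu)\bra{{\gr\vec\mu}}A(x_1)\cdots A(x_n)\ket{{\gr\vec\lambda}}$, observes that the tiles of \eqref{Lmat-b} are invariant under rotating the whole partition function by 180 degrees up to exactly the factor $B_\mu/B_\lambda$, and then uses $[A(x),A(y)]=0$ to restore the order of the rapidities. You instead feed the complemented boundary data back into \eqref{skew-HL} and reduce the claim to the symmetric-function identity $(\ast)$, which is equivalent to $Q_{\lambda/\mu}=(B_{\b\lambda}/B_{\b\mu})\,Q_{\b\mu/\b\lambda}$, i.e.\ to the quasi-symmetry property {\bf 3} of Hall polynomials quoted later in the paper; since you prove $(\ast)$ from scratch by branching induction there is no circularity, and your boundary bookkeeping is right: $(b_\lambda/b_\mu)\psi_{\lambda/\mu}=\varphi_{\lambda/\mu}$ and $\psi_{\b\mu/\b\lambda}$ agree factor by factor for columns $1\le k\le L-1$, the prefactor accounts for the possible $k=L$ factor, and the would-be $k=0$ factor never occurs because $m_0(\lambda)\le m_0(\mu)$ when $\mu\subseteq\lambda$ with equal lengths. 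Two details to tighten in the inductive step: you need the skew branching rule $Q_{\lambda/\mu}(x,y)=\sum_\nu Q_{\lambda/\nu}(x)\,Q_{\nu/\mu}(y)$ rather than \eqref{hl-branch} itself (it follows from \eqref{skew-HL} by inserting a complete set of states, or from Macdonald), and, because complementation reverses containment, the single variable must be split off the inner end on one side and the outer end on the other (or one invokes the symmetry of the multivariable skew polynomial) so that the intermediate partitions match under $\nu\mapsto\b\nu$, the prefactors composing multiplicatively; it is also cleanest to state $(\ast)$ for complementation by any fixed $L\ge\lambda_1$, so that the inductive hypothesis applies uniformly, and to read the reversal of $\mu$ in the lemma as being taken with respect to $L$ rather than $\mu_1$. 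The trade-off: the paper's rotation argument is two lines but leans entirely on the graphical calculus, whereas your route is representation-free and in effect supplies an independent combinatorial proof of the duality that the rotation encodes.
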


\begin{proof}
By rearranging factors, equation \eqref{skew-HL} can be written as
\begin{align*}
Q_{\lambda/\mu}(x_1,\dots,x_n;t)
&=
\frac{B_{\lambda}(t)}{B_{\mu}(t)}
\bra{{\gr\vec\mu}}
A(x_1)
\dots
A(x_n)
\ket{{\gr\vec\lambda}},
\qquad
B_{\lambda}(t) := \prod_{i=1}^{m_0(\lambda)} (1-t^i) b_{\lambda}(t).
\end{align*}
Now take the lattice representation of $\bra{{\gr\vec\mu}}
A(x_1)
\dots
A(x_n)
\ket{{\gr\vec\lambda}}$, and consider rotating the associated partition function by $180\degree$. From the form of the $L$-matrix \eqref{Lmat-b}, it is easy to deduce that it remains invariant under this rotation, up to the factor $B_{\mu}(t)/B_{\lambda}(t)$, which cancels the factor already present. Hence,
\begin{align*}
Q_{\lambda/\mu}(x_1,\dots,x_n;t)
=
\bra{{\gr\cev\lambda}}
A(x_n)
\dots
A(x_1)
\ket{{\gr\cev\mu}},
\end{align*}
and we are done in view of the commutativity of the $A(x_i)$ operators.

\end{proof}

\section{Hall polynomials from a rank-two model of bosons}

\subsection{Model with two bosons}
\label{model-2-bosons}

We now consider a higher-rank version of the model in Section \ref{model-bb}. This model has also appeared recently in \cite{cd-gw}, where it is the second in a series of higher-rank bosonic models, used in an integrable construction of Macdonald polynomials. The entries of its $L$-matrix act in ${\gr \mathcal{B}} \otimes \mathcal{B}$, the tensor product of two bosonic spaces, and are given by
\begin{align}
\label{Lmat-bb}
L_a(x| {\gr \mathfrak{b}} \otimes \mathfrak{b} )
=
\begin{pmatrix}
x[1 \otimes 1] & x[1 \otimes {\phid}] & x[{\gr \phid} \otimes 1]
\\ \\
t^{\gr N} \otimes {\phi} &  t^{\gr N} \otimes 1 & 0
\\ \\
{\gr \phi} \otimes 1 & {\gr \phi} \otimes {\phid}  &  1 \otimes 1 
\end{pmatrix}_a
=
\left(
\begin{array}{ccc}
\begin{tikzpicture}[scale=0.6,baseline=0.25cm]
\gblank{0}{0}
\bblank{1}{0}
\end{tikzpicture} 
& 
\begin{tikzpicture}[scale=0.6,baseline=0.25cm]
\gblank{0}{0}
\bplus{1}{0}
\end{tikzpicture} 
&
\begin{tikzpicture}[scale=0.6,baseline=0.25cm]
\gplus{0}{0}
\darkgfull{1}{0}
\end{tikzpicture} 
\\ \\
\begin{tikzpicture}[scale=0.6,baseline=0.25cm]
\lightbfull{0}{0}
\bminus{1}{0}
\end{tikzpicture} 
& 
\begin{tikzpicture}[scale=0.6,baseline=0.25cm]
\lightbfull{0}{0}
\bfull{1}{0}
\end{tikzpicture} 
&
0
\\ \\
\begin{tikzpicture}[scale=0.6,baseline=0.25cm]
\gminus{0}{0}
\bblank{1}{0}
\end{tikzpicture} 
& 
\begin{tikzpicture}[scale=0.6,baseline=0.25cm]
\gminus{0}{0}
\bplus{1}{0}
\end{tikzpicture} 
&
\begin{tikzpicture}[scale=0.6,baseline=0.25cm]
\gfull{0}{0}
\darkgfull{1}{0}
\end{tikzpicture} 
\end{array}
\right)_a.
\end{align}
By considering only the entries (1,1), (1,3), (3,1) and (3,3) of the $L$-matrix \eqref{Lmat-bb} it reduces to the model \eqref{Lmat-b}, up to inversion of and multiplication by $x$. The $L$-matrix satisfies the intertwining equation
\begin{align}
\label{inter-2-bosons}
R_{ab}(x/y)
L_a(x| {\gr \mathfrak{b}} \otimes \mathfrak{b})
L_b(y| {\gr \mathfrak{b}} \otimes \mathfrak{b})
=
L_b(y| {\gr \mathfrak{b}} \otimes \mathfrak{b})
L_a(x| {\gr \mathfrak{b}} \otimes \mathfrak{b})
R_{ab}(x/y)
\end{align}
with respect to the $R$-matrix
\begin{align*}
R_{ab}(z)
=
\left(
\begin{array}{ccc|ccc|ccc}
1-tz & 0 & 0 & 0 & 0 & 0 & 0 & 0 & 0
\\
0 & t(1-z) & 0 & (1-t)z & 0 & 0 & 0 & 0 & 0
\\
0 & 0 & t(1-z) & 0 & 0 & 0 & (1-t)z & 0 & 0
\\
\hline
0 & 1-t & 0 & 1-z & 0 & 0 & 0 & 0 & 0
\\
0 & 0 & 0 & 0 & 1-tz & 0 & 0 & 0 & 0
\\
0 & 0 & 0 & 0 & 0 & t(1-z) & 0 & (1-t)z & 0
\\
\hline
0 & 0 & 1-t & 0 & 0 & 0 & 1-z & 0 & 0
\\
0 & 0 & 0 & 0 & 0 & 1-t & 0 & 1-z & 0
\\
0 & 0 & 0 & 0 & 0 & 0 & 0 & 0 & 1-tz
\end{array}
\right)_{ab}.
\end{align*}
We construct a monodromy matrix using the $L$ matrix, namely
\begin{align}
\label{mon-prod}
T_a(x)
=
L_a(x|{\gr \mathfrak b_0} \otimes {\mathfrak{b}_0})
L_a(x|{\gr \mathfrak b_1} \otimes {\mathfrak{b}_1})
\cdots
=
\prod_{i=0}^{\infty}
L_a(x|{\gr \mathfrak b_i} \otimes {\mathfrak{b}_i}).
\end{align}
Since $L_a(x|{\gr \mathfrak b} \otimes \mathfrak{b})$ is a $3 \times 3$ matrix, the result of the matrix multiplication \eqref{mon-prod} is also a $3 \times 3$ matrix:
\begin{align*}
T_a(x)
=
\begin{pmatrix}
\Te(x) & \star & \star
\\
\Tb(x) & \star & \star
\\
\Tg(x) & \star & \star
\end{pmatrix}_a,
\end{align*}
where $\Te(x), \Tb(x), \Tg(x)$ are operators acting in $\mathcal{\gr B} \otimes \mathcal{B}$. We suppress the matrix entries in the remaining two columns, since they are never used in our calculations. Two commutation relations will be important in our later calculations:
\begin{align*}
(y-x)
\Tb(x) \Te(y)
+
(1-t) y
\Te(x) \Tb(y)
=
(y-tx)
\Te(y) \Tb(x),
\quad\quad
\Tb(x) \Tb(y)
=
\Tb(y) \Tb(x).
\end{align*}

\subsection{Definition of the Hall polynomials}

Hall polynomials $f^{\lambda}_{\mu\nu}(t)$ are the structure constants of the Hall algebra, originally studied in \cite{hal}. Equivalently, they are the expansion coefficients in the product of two Hall--Littlewood polynomials:
\begin{align*}
P_{\mu}(x;t)
P_{\nu}(x;t)
=
\sum_{\lambda}
f^{\lambda}_{\mu\nu}(t)
P_{\lambda}(x;t).
\end{align*}
It is a non-trivial fact that $f^{\lambda}_{\mu\nu}(t) \in \mathbb{Z}[t]$, justifying the nomenclature ``polynomial''. However, unlike the Kostka polynomials (to be discussed later on), $f^{\lambda}_{\mu\nu}(t)$ are polynomials in $t$ with integer coefficients (rather than natural number coefficients). {\it A priori,} this rules out hope of a positive combinatorial formula for $f^{\lambda}_{\mu\nu}(t)$. In addition to their polynomiality they satisfy a number of properties \cite{mac}:
\begin{enumerate}[{\bf 1.}]
\item If the Littlewood--Richardson coefficient $c^{\lambda}_{\mu\nu}=0$, then $f^{\lambda}_{\mu\nu}(t) = 0$ identically. This implies, in particular, that $f^{\lambda}_{\mu\nu}(t)$ is non-zero only if $|\lambda| = |\mu| + |\nu|$ and $\mu,\nu \subseteq \lambda$.

\item For all $\lambda,\mu,\nu$, one has the symmetry $f^{\lambda}_{\mu\nu}(t) = f^{\lambda}_{\nu\mu}(t)$. 

\item The quasi-symmetry $f^{\lambda}_{\mu\nu}(t) = f^{\b\mu}_{\b\lambda \nu}(t) B_{\b\lambda}(t) / B_{\b\mu}(t)$ holds, where $\b\lambda$ and $\b\mu$ are the $L$-complements of $\lambda$ and $\mu$, $L$ being the largest part of $\lambda$.

\end{enumerate}

Due to the definition \eqref{skew-def}, the Hall polynomials can also be considered as the expansion coefficients of a skew Hall--Littlewood polynomial:
\begin{align*}
Q_{\lambda/\mu}(x;t)
=
\sum_{\nu}
f^{\lambda}_{\mu\nu}(t)
Q_{\nu}(x;t),
\end{align*}
and it is this latter point of view which is the most convenient for our purposes. Our aim in this work is to obtain a new combinatorial formula for $f^{\lambda}_{\mu\nu}(t)$, using the integrable model introduced above. 

\subsection{Hall polynomials as constant terms}
\label{CT}

Here we derive a useful identity expressing the Hall polynomial $f^{\lambda}_{\mu\nu}(t)$ as the coefficient of a particular monomial in a certain formal power series (which we refer to as a constant term identity). 

\begin{lem}
\label{CT-lem}
{\rm
Let $\lambda,\mu,\nu$ be three partitions, and suppose $\ell(\nu) = n$. The (renormalized) Hall polynomial $f^{\lambda}_{\mu\nu}(t) b_{\nu}(t)$ is given by the coefficient of $z_1^{\nu_1} \dots z_n^{\nu_n}$ in the expression
\begin{align*}
\prod_{1 \leq i<j \leq n}
\left(
\frac{1-z_j/z_i}{1- t z_j/z_i}
\right)
Q_{\lambda/\mu}(z_1,\dots,z_n;t),
\end{align*}
interpreted as a formal power series in $z_1,\dots,z_n$ (or by assuming $|z_j/z_i|$ are small for all $1 \leq i<j \leq n$.)
}
\end{lem}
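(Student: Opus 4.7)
Proof plan.

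The strategy is to expand $Q_{\lambda/\mu}$ in the Hall--Littlewood $Q$-basis and reduce the lemma to a coefficient-extraction identity that isolates each individual $Q_{\nu'}$.

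Using the coproduct identity $Q_{\lambda/\mu}(z;t)=\sum_{\nu'}f^\lambda_{\mu\nu'}(t)\,Q_{\nu'}(z;t)$ from \eqref{coproducts} (restricted in effect to $\ell(\nu')\le n$, since $Q_{\nu'}$ vanishes in $n$ variables otherwise), the lemma follows from the pointwise identity: for any two partitions $\nu,\nu'$ with $\ell(\nu),\ell(\nu')\le n$,
\[
[z_1^{\nu_1}\cdots z_n^{\nu_n}]\;\Delta(z;t)\;Q_{\nu'}(z_1,\dots,z_n;t)=b_\nu(t)\,\delta_{\nu,\nu'},
\qquad
\Delta(z;t):=\prod_{1\le i<j\le n}\frac{1-z_j/z_i}{1-tz_j/z_i}.
\]
Summing this against $f^\lambda_{\mu\nu'}(t)$ immediately yields the claim.

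To prove the extraction identity, I would substitute $Q_{\nu'}=b_{\nu'}(t)P_{\nu'}$ together with the symmetrized formula for $P_{\nu'}$ and expand $\Delta(z;t)$ as a formal Laurent series in the region $|z_1|\gg\cdots\gg|z_n|$. The identity permutation in the $S_n$ symmetrization is readily handled: the telescoping
\[
\Delta(z;t)\cdot\prod_{i<j}\frac{z_i-tz_j}{z_i-z_j}=1
\]
reduces its contribution to the pure monomial $z^{\nu'}$, whose coefficient of $z^\nu$ is $\delta_{\nu,\nu'}$. Tracking the normalizing factors (matching $b_{\nu'}/v_{\nu'}$ against the desired $b_\nu$) yields the diagonal case.

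The main obstacle is showing that the $\sigma\ne\mathrm{id}$ contributions in the symmetrization, once multiplied by $\Delta(z;t)$ and Laurent-expanded, produce no net contribution to $[z^\nu]$ for any partition $\nu$. This is a genuine cancellation, not a trivial mismatch: already for $n=2$, $\nu'=(3)$, $\nu=(2,1)$, one has $P_{(3)}(z_1,z_2;t)=z_1^3+z_2^3+(1-t)(z_1^2z_2+z_1z_2^2)$, and the $(1-t)z_1^2z_2$ piece cancels exactly the first-order correction from expanding $\Delta(z;t)$ against $z_1^3$. To handle this uniformly, I would identify the functional $f\mapsto[z^\nu]\Delta(z;t)f(z)$ on symmetric polynomials with $b_\nu(t)\langle P_\nu,f\rangle$, by recognizing the coefficient extraction as the non-symmetric analogue of Macdonald's finite-$n$ torus scalar product (see \cite{mac}, III.\S4). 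Under that identification, the cancellation for $\sigma\ne\mathrm{id}$ is the finite-$n$ orthogonality $\langle P_\nu,Q_{\nu'}\rangle=\delta_{\nu,\nu'}$ of \eqref{ortho-rel}; the $n!$ prefactor in the symmetric torus pairing is absorbed by passing from a constant term $[z^0]$ to the single-monomial extraction $[z_1^{\nu_1}\cdots z_n^{\nu_n}]$.
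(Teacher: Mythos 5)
Your opening reduction---expand $Q_{\lambda/\mu}=\sum_{\nu'}f^{\lambda}_{\mu\nu'}(t)Q_{\nu'}$ and reduce the lemma to the dual-basis identity $[z^{\nu}]\,\Delta(z;t)\,Q_{\nu'}(z;t)=b_{\nu}(t)\,\delta_{\nu\nu'}$---is the same first step as the paper's, but your proposed proof of that identity is where the problems lie. First, a structural claim is wrong: it is not true that the $\sigma\ne\mathrm{id}$ terms of the symmetrization contribute nothing to $[z^{\nu}]$. Take $n=2$ and $\nu=\nu'=(1,1)$, so $P_{(1,1)}=z_1z_2$, $\Delta(z;t)=(z_1-z_2)/(z_1-tz_2)$, and the correctly normalized symmetrization carries the factor $1/v_{\nu'}(t)$ with $v_{(1,1)}=1+t$. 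The identity term then contributes $1/(1+t)$ to $[z_1z_2]$ and the transposition contributes $t/(1+t)$; only their sum gives the required $1$. So on the diagonal the non-identity terms are essential whenever $\nu$ has repeated parts, and your fallback of ``matching $b_{\nu'}/v_{\nu'}$ against $b_{\nu}$'' cannot work either, since $b_{(1,1)}/v_{(1,1)}=(1-t)^2\ne b_{(1,1)}$.

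Second, the step you invoke to handle this uniformly---identifying $f\mapsto[z^{\nu}]\Delta(z;t)f$ with $b_{\nu}(t)\langle P_{\nu},f\rangle$---is, applied to $f=Q_{\nu'}$, exactly the identity you are trying to prove, so ``recognizing'' it is circular unless you establish it independently. Doing so honestly requires rewriting the single-monomial extraction as a constant term over nested tori, justifying deformation to a common torus, the algebraic identity $\sum_{\sigma}\sigma\bigl(z^{-\nu}\prod_{i<j}\frac{1-z_j/z_i}{1-tz_j/z_i}\bigr)=v_{\nu}(t)\,P_{\nu}(z^{-1};t)\prod_{i\ne j}\frac{1-z_i/z_j}{1-tz_i/z_j}$, and the finite-$n$ torus norm $\langle P_{\nu},P_{\nu}\rangle'_n=1/v_{\nu}(t)$; note this last input is \emph{not} the Hall-product orthogonality \eqref{ortho-rel} you cite (the two inner products have different norms), and none of these steps is carried out. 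That package is at least as heavy as the input the paper simply quotes: Macdonald's formula expressing $Q_{\lambda}(x;t)$ as the coefficient of $z^{\lambda}$ in $\Delta(z;t)\prod_{i,j}\frac{1-tx_iz_j}{1-x_iz_j}$, combined with the Cauchy identity $\prod_{i,j}\frac{1-tx_iz_j}{1-x_iz_j}=\sum_{\mu}Q_{\mu}(x;t)P_{\mu}(z;t)$ and linear independence of the $Q_{\mu}$, which yields $[z^{\lambda}]\Delta(z;t)P_{\mu}(z;t)=\delta_{\lambda\mu}$ in two lines; the lemma then follows by applying this functional to $Q_{\lambda/\mu}=\sum_{\nu}f^{\lambda}_{\mu\nu}(t)b_{\nu}(t)P_{\nu}$. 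Either import that Cauchy-identity argument, or supply the contour-symmetrization and torus-norm details above; as written, the core of your proof is a gap.
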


\begin{proof}
Following Section 2, Chapter III of \cite{mac}, the Hall--Littlewood polynomial $Q_{\lambda}(x_1,\dots,x_m;t)$ is given by the coefficient of $z_1^{\lambda_1} \dots z_n^{\lambda_n}$ in the expression
\begin{align*}
\prod_{1 \leq i<j \leq n}
\left(
\frac{1-z_j/z_i}{1- t z_j/z_i}
\right)
\prod_{i=1}^{m}
\prod_{j=1}^{n}
\left(
\frac{1-t x_i z_j}{1-x_i z_j}
\right).
\end{align*}
Combining this with the Cauchy identity
\begin{align*}
\prod_{i=1}^{m}
\prod_{j=1}^{n}
\left(
\frac{1-t x_i z_j}{1-x_i z_j}
\right)
=
\sum_{\mu}
Q_{\mu}(x_1,\dots,x_m;t)
P_{\mu}(z_1,\dots,z_n;t),
\end{align*}
we find that
\begin{align*}
\sum_{\mu}
Q_{\mu}(x_1,\dots,x_m;t)
\times
{\rm Coeff}\left[
\prod_{1 \leq i<j \leq n}
\left(
\frac{1-z_j/z_i}{1- t z_j/z_i}
\right)
P_{\mu}(z_1,\dots,z_n;t),
z_1^{\lambda_1} \dots z_n^{\lambda_n}
\right]
=
Q_{\lambda}(x_1,\dots,x_n;t).
\end{align*}
By the linear independence of the $\{Q_{\mu}\}$, this means that
\begin{align*}
{\rm Coeff}\left[
\prod_{1 \leq i<j \leq n}
\left(
\frac{1-z_j/z_i}{1- t z_j/z_i}
\right)
P_{\mu}(z_1,\dots,z_n;t),
z_1^{\lambda_1} \dots z_n^{\lambda_n}
\right]
=
\delta_{\lambda,\mu}.
\end{align*}
Hence if we expand any symmetric polynomial $G(x_1,\dots,x_n)$ in terms of Hall--Littlewood polynomials, $G(x_1,\dots,x_n) = \sum_{\lambda} g_{\lambda}(t) P_{\lambda}(x_1,\dots,x_n;t)$, the expansion coefficients are given by
\begin{align*}
{\rm Coeff}\left[
\prod_{1 \leq i<j \leq n}
\left(
\frac{1-z_j/z_i}{1- t z_j/z_i}
\right)
G(z_1,\dots,z_n),
z_1^{\lambda_1} \dots z_n^{\lambda_n}
\right]
=
g_{\lambda}(t).
\end{align*}
In particular, choosing $G(x_1,\dots,x_n) = Q_{\lambda/\mu}(x_1,\dots,x_n;t)$, we recover the Hall polynomials:
\begin{align}
\label{CT-Hall}
{\rm Coeff}\left[
\prod_{1 \leq i<j \leq n}
\left(
\frac{1-z_j/z_i}{1- t z_j/z_i}
\right)
Q_{\lambda/\mu}(z_1,\dots,z_n;t),
z_1^{\nu_1} \dots z_n^{\nu_n}
\right]
=
f^{\lambda}_{\mu\nu}(t)
b_{\nu}(t).
\end{align}
\end{proof}

\subsection{Hall polynomials from action of divided-difference operators}

\begin{defn}{\rm
Let $\nu = (\nu_1,\dots,\nu_n)$ be a partition. Associate a set 
$k(\nu) = \{k_1,\dots,k_n\}$ to $\nu$, where 
\begin{align*}
k_i = i+\sum_{j=n-i+1}^{n} \nu_j.
\end{align*}
We call $k(\nu)$ the {\it shifted partial sums} of $\nu$. One simple interpretation of $k(\nu)$ comes from arranging the boxes of a Young diagram $\nu$ along the integer lattice. For example,
\begin{align*}
\nu = (3,1,1,0)
\leftrightarrow
\begin{tikzpicture}[baseline=-0.75cm,scale=0.5]
\blackoc{1}{-1.5}{}
\blackoc{3}{-1.5}{}
\blackoc{5}{-1.5}{}
\blackoc{6}{-1.5}{}
\blackoc{7}{-1.5}{}
\draw (0.5,-1) -- (1.5,-1); 
\draw (0.5,-2) -- (1.5,-2);
\draw (0.5,-2) -- (0.5,-1); \draw (1.5,-2) -- (1.5,-1);
\draw (2.5,-1) -- (3.5,-1); \draw (2.5,-2) -- (3.5,-2);
\draw (2.5,-2) -- (2.5,-1); \draw (3.5,-2) -- (3.5,-1);
\draw (4.5,-1) -- (7.5,-1); \draw (4.5,-2) -- (7.5,-2);
\draw (4.5,-2) -- (4.5,-1); \draw (5.5,-2) -- (5.5,-1); 
\draw (6.5,-2) -- (6.5,-1); \draw (7.5,-2) -- (7.5,-1);
\draw[thick,arrow=1] (-0.5,-3) -- (9.5,-3);
\foreach\x in {1,...,10}{
\draw[thick] (-1.5+\x,-3) -- (-1.5+\x,-2.7);
}
\node at (0,-3.5) {$\sss 1$};
\node at (2,-3.5) {$\sss 3$};
\node at (4,-3.5) {$\sss 5$};
\node at (8,-3.5) {$\sss 9$};
\end{tikzpicture}
\leftrightarrow
k(\nu) = \{1,3,5,9\}.
\end{align*}
}
\end{defn}

\begin{thm}
\label{thm-hall}
{\rm
Let $\lambda, \mu, \nu$ be three partitions with respective lengths $\ell,m,n$ such that $\ell=m$, and largest parts $L,M,N$ such that $M \geq L$ and $M \geq N$. Let $\b\lambda,\b\mu,\b\nu$ denote the partitions obtained by complementing each part of $\lambda,\mu,\nu$ by $M$, and assume that $|\b\lambda| = |\b\mu|+|\b\nu|$. Then
\begin{align}
\label{Hall-PF}
\mathcal{F}^{\lambda}_{\mu\nu}(x;t)
:=
\bra{{\gr\cev \mu}}
\otimes
\bra{\ }
\prod_{i=1}^{n+|\nu|}
\left\{
\begin{array}{ll}
\Te(x_i),
&
i \in k(\nu)
\\
\Tb(x_i),
&
i \not\in k(\nu)
\end{array}
\right\}
\ket{{\gr\cev \lambda}}
\otimes
\ket{{\cev 0}}
\end{align}
is a homogeneous polynomial in $\{x_1,\dots,x_{n+|\nu|}\}$ of total degree $n+|\nu|$, and the Hall polynomial $f^{\b\lambda}_{\b\mu \b\nu}(t)$ is given by the repeated action of divided-difference operators on this partition function:
\begin{align}
\label{main-formula}
f^{\b\lambda}_{\b\mu \b\nu}(t)
=
t^{-(m+1)|\nu|}
\times
\frac{B_{\b\lambda}(t)}
{B_{\b\mu}(t) b_{\b\nu}(t)}
\times
\left(
\prod_{j \not\in k(\nu)}
\Delta_{j}
\right)
\left(
\prod_{i \in k(\nu)}
\b{x}_i
\right)
\mathcal{F}^{\lambda}_{\mu\nu}(x;t),
\end{align}
where the product of divided-difference operators is ordered from left to right, starting at $j=1$ and finishing at $j = n+|\nu|$, and omitting $j \in k(\nu)$.

}
\end{thm}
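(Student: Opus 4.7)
The plan is to reduce Theorem~\ref{thm-hall} to the constant-term expression \eqref{CT-Hall} for Hall polynomials given in Lemma~\ref{CT-lem}, viewing the partition function $\mathcal{F}^{\lambda}_{\mu\nu}(x;t)$ as a two-boson ``dressing'' of the one-boson lattice representation of $Q_{\b\lambda/\b\mu}$ established in Section~\ref{model-bb}. The extra $|\nu|$ rows carrying $\Tb$ operators, together with the divided-difference operators $\Delta_j$ (for $j\not\in k(\nu)$) and the prefactor $\prod_{i\in k(\nu)}\b x_i$, should implement precisely the rational factor $\prod_{i<j}(1-z_j/z_i)/(1-t z_j/z_i)$ and the monomial extraction $[z_1^{\b\nu_1}\cdots z_n^{\b\nu_n}]$ appearing in that formula.

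First I would prove the polynomiality and homogeneity claims. Inspection of the L-matrix \eqref{Lmat-bb} shows that only the top row of matrix entries carries an explicit factor of $x$, while the other entries are $x$-independent. A conservation argument in each row (the net auxiliary-state change is fixed by the choice of $\Te$ or $\Tb$, and the number of top-row vertices in any nonzero configuration is controlled by this data) yields the overall homogeneous degree $n+|\nu|$. Finiteness of the number of admissible configurations, hence polynomiality, follows from the finite support of the boundary states $\bra{\cev\mu}\otimes\bra{\ }$ and $\ket{\cev\lambda}\otimes\ket{\cev 0}$.

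The main step is then to establish \eqref{main-formula}. The algebraic engine is the commutation relation $(y-x)\Tb(x)\Te(y) + (1-t)y\,\Te(x)\Tb(y) = (y-tx)\Te(y)\Tb(x)$, which shows that $\Delta_j$ acting on an adjacent pair $\Tb(x_j)\Te(x_{j+1})$ (with $j\not\in k(\nu)$ and $j+1\in k(\nu)$) produces a simple $\Te\Tb$ exchange modulo a controlled remainder, while $\Delta_j$ acting on $\Tb(x_j)\Tb(x_{j+1})$ (with both $j,j{+}1\not\in k(\nu)$) vanishes by the commutativity $\Tb(x)\Tb(y)=\Tb(y)\Tb(x)$. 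Iterating $\prod_{j\not\in k(\nu)}\Delta_j$ in the order specified by the theorem, the surviving contributions are those in which all $\Tb$'s have been pushed past all $\Te$'s, yielding a canonical product whose $\Te$-part is a one-boson expectation for $A(y)$ on the black vacuum and whose $\Tb$-part produces a Cauchy-type factor of the form $\prod_{i<j}(y_j-ty_i)/(y_j-y_i)$ between $\bra{\ }$ and $\ket{\cev 0}$ in the black space. By the second lemma of Section~\ref{model-bb}, the $\Te$-part evaluates to $Q_{\b\lambda/\b\mu}$ up to the ratio $B_{\b\lambda}(t)/B_{\b\mu}(t)$; combined with the prefactor $\prod_{i\in k(\nu)}\b x_i$, the $\Tb$-part reproduces exactly the monomial extraction $[z^{\b\nu}]$ of Lemma~\ref{CT-lem}, giving $f^{\b\lambda}_{\b\mu\b\nu}(t)\,b_{\b\nu}(t)$.

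The hard part will be the combinatorial matching between the positions $k(\nu)$ and the exponents $\b\nu$: the gaps between consecutive elements of $k(\nu)$ encode the parts of $\b\nu$, and the alternating sum produced by $\prod_{j\not\in k(\nu)}\Delta_j$ must precisely implement the Weyl-type antisymmetrization that extracts the monomial $z_1^{\b\nu_1}\cdots z_n^{\b\nu_n}$; this is the place where the definition of the shifted partial sums enters in an essential way. A secondary bookkeeping task is to account for all scalar factors: each $\Tb$ row traverses the green lattice and, via the $t^{\gr N}\otimes(\cdot)$ entries in \eqref{Lmat-bb}, picks up $t$-powers whose accumulation across all $|\nu|$ such rows produces the overall factor $t^{(m+1)|\nu|}$ whose inverse appears in \eqref{main-formula}. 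The residual scalar $B_{\b\lambda}/(B_{\b\mu} b_{\b\nu})$ then reconciles the $P$ versus $Q$ conventions and absorbs the factor $b_{\b\nu}(t)$ produced by Lemma~\ref{CT-lem}. Checking all these factors consistently is expected to be the most delicate part of the argument.
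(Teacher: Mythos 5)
Your target is the right one---reducing \eqref{main-formula} to the constant-term identity \eqref{CT-Hall}---and your degree count and the reduction of the $\Te$-part to the rank-one model are sound, but the mechanism you propose for the main step does not work. First, the black-boson part does \emph{not} produce a Cauchy-type kernel: by the trivial-action lemma (Lemma \ref{lem:triv}), $\Tb(y_1)\cdots\Tb(y_N)\,\ket{{\gr\cev\lambda}}\otimes\ket{{\cev 0}}=t^{\ell N}\,\ket{{\gr\cev\lambda}}\otimes\ket{\ }$, a pure power of $t$ with no dependence on the spectral parameters. The factors $\prod (x_i-tx_j)/(x_i-x_j)$ that ultimately become $\prod_{i<j}(1-z_j/z_i)/(1-tz_j/z_i)$ come instead from commuting the $\Te$'s leftward past the $\Tb$'s via \eqref{g-b} (the Bethe-Ansatz expansion of Lemma \ref{lem:bethe}), and they appear inside a multiple sum over which $x$-variables get assigned to the $\Te$'s, as in \eqref{n-iterat}; your sketch has no counterpart of this sum. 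Second, ``$\Delta_j$ vanishes on $\Tb(x_j)\Tb(x_{j+1})$ by commutativity'' cannot serve as a selection principle: the divided differences are composed in a definite order and all but the first act on functions already deformed by the earlier ones, which are no longer symmetric in $(x_j,x_{j+1})$. Taken literally, your rule would force the answer to vanish whenever $\nu$ has a part $\geq 2$ (e.g.\ $\nu=(2,1,1)$, where positions $5,6$ both carry $\Tb$), contradicting the nonzero examples; in the order that makes the theorem true the symmetry is already broken when $\Delta_j$ arrives, so nothing is ``selected'' this way.

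What is missing is the analytic structure on which the $\Delta_j$ actually act. The paper first converts the multiple sum into the contour integral \eqref{first-int}, whose integrand carries $\prod_{i}\prod_{j=1}^{k_i-1}(w_i-tx_j)\big/\prod_{j=1}^{k_i}(w_i-x_j)$ together with $Q_{\b\lambda/\b\mu}(\b w_1,\dots,\b w_n;t)$; each divided difference then strips one numerator factor and emits one factor of $t$ via \eqref{dd-act}. (This is also where your bookkeeping slips: $t^{m|\nu|}$ comes from the trivial action and the remaining $t^{|\nu|}$ from the $|\nu|$ applications of \eqref{dd-act}, not solely from the $t^{\gr N}$ entries.) Since $|\nu|$ degree-lowering operators act on a homogeneous polynomial of degree $|\nu|$, the result is constant in $x$, so one may let $x\to 0$ and read the integral as the coefficient of $\prod_i z_i^{M+1+k_{i-1}-k_i}=\prod_i z_i^{\b\nu_i}$, which is precisely \eqref{CT-Hall}. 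Without the integral representation and \eqref{dd-act}, the combinatorial matching between $k(\nu)$ and $\b\nu$ that you correctly identify as the crux is not delivered by your argument.
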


The rest of the section will be devoted to the proof of Theorem \ref{thm-hall}. Before proceeding to the proof, we give an example which better illustrates the meaning of \eqref{Hall-PF}.

\begin{ex}{\rm
Fix three partitions $\lambda = (2,1,0,0)$, $\mu = (3,3,1,1)$, $\nu = (2,1,1)$. Then $M=3$ and the complement partitions are given by $\b\lambda = (3,3,2,1)$, $\b\mu = (2,2,0,0)$, $\b\nu = (2,2,1)$, so that $|\b\lambda| = |\b\mu|+|\b\nu|$. We have
\begin{align*}
\mathcal{F}^{\lambda}_{\mu\nu}(x;t)
=
\bra{{\gr\cev \mu}}
\otimes
\bra{\ }
\Tb(x_1)\Te(x_2)\Tb(x_3)\Te(x_4)\Tb(x_5)\Tb(x_6)\Te(x_7)
\ket{{\gr\cev \lambda}}
\otimes
\ket{{\cev 0}}
\end{align*}
which has the graphical representation
\begin{align*}
\begin{tikzpicture}[scale=0.8]
\foreach\x in {1,...,7}{\node at (9.5,\x-0.5) {$x_\x$};}
\draw[arrow=1,draw=black!20!green,thick] (8,-1.25) -- (2,-1.25);
\node at (5,-1) {${\gr \bm \mu}$};
\node[text centered] at (1.5,-0.5) {\gr \tiny $m_3(\mu)$};
\node[text centered] at (3.5,-0.5) {\gr \tiny $m_2(\mu)$};
\node[text centered] at (5.5,-0.5) {\gr \tiny $m_1(\mu)$};
\node[text centered] at (7.5,-0.5) {\gr \tiny $m_0(\mu)$};
\draw[arrow=1,draw=black!20!green,thick] (8,8) -- (2,8);
\node at (5,8.25) {${\gr \bm \lambda}$};
\node[text centered] at (1.5,7.5) {\gr \tiny $m_3(\lambda)$};
\node[text centered] at (3.5,7.5) {\gr \tiny $m_2(\lambda)$};
\node[text centered] at (5.5,7.5) {\gr \tiny $m_1(\lambda)$};
\node[text centered] at (7.5,7.5) {\gr \tiny $m_0(\lambda)$};
\draw[arrow=1,draw=black,thick] (-0.75,1) -- (-0.75,6);
\node at (0,3.5) {$k({\bm \nu})$};
\blackoc{1}{0.5}{}
\blackoc{1}{2.5}{}
\blackoc{1}{4.5}{}
\blackoc{1}{5.5}{}
\gblank{1}{6}
\bblank{2}{6}
\gblank[1]{3}{6}
\bblank{4}{6}
\gblank[1]{5}{6}
\bblank{6}{6}
\gblank[2]{7}{6}
\bblank[4]{8}{6}
\gblank{1}{5}
\bblank{2}{5}
\gblank{3}{5}
\bblank{4}{5}
\gblank{5}{5}
\bblank{6}{5}
\gblank{7}{5}
\bblank{8}{5}
\gblank{1}{4}
\bblank{2}{4}
\gblank{3}{4}
\bblank{4}{4}
\gblank{5}{4}
\bblank{6}{4}
\gblank{7}{4}
\bblank{8}{4}
\gblank{1}{3}
\bblank{2}{3}
\gblank{3}{3}
\bblank{4}{3}
\gblank{5}{3}
\bblank{6}{3}
\gblank{7}{3}
\bblank{8}{3}
\gblank{1}{2}
\bblank{2}{2}
\gblank{3}{2}
\bblank{4}{2}
\gblank{5}{2}
\bblank{6}{2}
\gblank{7}{2}
\bblank{8}{2}
\gblank{1}{1}
\bblank{2}{1}
\gblank{3}{1}
\bblank{4}{1}
\gblank{5}{1}
\bblank{6}{1}
\gblank{7}{1}
\bblank{8}{1}
\gblank{1}{0}[2]
\bblank{2}{0}
\gblank{3}{0}
\bblank{4}{0}
\gblank{5}{0}[2]
\bblank{6}{0}
\gblank{7}{0}
\bblank{8}{0}
\end{tikzpicture}
\end{align*}
The corresponding Hall polynomial is given by 
\begin{align*}
f^{\b\lambda}_{\b\mu \b\nu}(t)
=
\frac{
\Delta_{1} \Delta_{3} \Delta_{5} \Delta_{6} \b{x}_2  \b{x}_4 \b{x}_7
\mathcal{F}^{\lambda}_{\mu\nu}(x;t)}
{t^{16} (1-t)(1-t^2)^2}
.
\end{align*}
}
\end{ex}

\subsubsection{Proof of homogeneity and degree}

Since the $L$ matrices \eqref{Lmat-bb} are used in the construction of the partition function $\mathcal{F}^{\lambda}_{\mu\nu}(x;t)$, we obtain an $x$ weight every time the left edge of a lightly-shaded tile is vacant. Conversely, there is no $x$ weight when such an edge is occupied. Let us abbreviate these by VLEs and OLEs (vacant and occupied left edges).

In any legal lattice configuration, the green particles give rise to $|\mu| - |\lambda| = |\b\nu| = Mn-|\nu|$ OLEs, while the black particles give rise to $(M+1)|\nu|$. This is a combined 
$M(n+|\nu|)$ OLEs, meaning that in any legal configuration there are exactly $(M+1)(n+|\nu|)-M(n+|\nu|) = n+|\nu|$ VLEs, each with an associated $x$ weight. The homogeneity and degree of the polynomial follow immediately. Furthermore since there is a trivial VLE in the $k_i$-th row, for all $1 \leq i \leq n$, 
$\mathcal{F}^{\lambda}_{\mu\nu}(x;t)$ has an obvious common factor of $x_{k_1} \dots x_{k_n}$. This common factor is divided away by $\prod_{i \in k(\nu)} \b{x}_i$.

\subsubsection{Multiple sum expression for $\mathcal{F}^{\lambda}_{\mu\nu}(x;t)$}

The first step is to calculate $\mathcal{F}^{\lambda}_{\mu\nu}(x;t)$ explicitly as a repeated sum. To do that, we recall the commutation relations 
\begin{align}
\label{g-b}
\Tb(x) \Te(y) 
&=
\frac{y-tx}{y-x} \Te(y) \Tb(x)
+
\frac{(1-t)y}{x-y} \Te(x) \Tb(y),
\\
\label{g-g}
\Tb(x) \Tb(y)
&=
\Tb(y) \Tb(x).
\end{align}
\begin{lem}
\label{lem:bethe}
{\rm 
By virtue of the commutation relations \eqref{g-b} and \eqref{g-g}, one can derive the following relation:
\begin{multline}
\label{ggg-b}
\bra{{\gr\cev \mu}}
\otimes
\bra{\ }
\Tb(x_1)
\dots
\Tb(x_{k-1})
\Te(x_k)
=
\prod_{j=1}^{k-1}
\left(
\frac{x_k-tx_j}{x_k-x_j}
\right)
\bra{{\gr\cev \mu}}
\otimes
\bra{\ }
\Te(x_k)
\Tb(x_1)
\dots
\Tb(x_{k-1})
\\
+
x_k
\sum_{i=1}^{k-1}
\left(
\frac{(1-t)}{x_i-x_k}
\prod_{j\not=i}^{k-1}
\left(
\frac{x_i-tx_j}{x_i-x_j}
\right)
\bra{{\gr\cev \mu}}
\otimes
\bra{\ }
\Te(x_i)
\Tb(x_1)
\dots
\widehat{\Tb}(x_i)
\dots
\Tb(x_k)
\right),
\end{multline}
valid for any $k \geq 1$.}
\end{lem}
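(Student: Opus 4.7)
The plan is to prove \eqref{ggg-b} as an operator identity (the bra $\bra{{\gr\cev\mu}}\otimes\bra{\ }$ plays no role in the derivation) by induction on $k$, using only the two commutation relations \eqref{g-b} and \eqref{g-g}. The base case $k=1$ is immediate, since both the product and the sum on the right hand side are empty.

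For the inductive step $k-1 \Rightarrow k$, I would first apply \eqref{g-b} to the rightmost pair $\Tb(x_{k-1})\Te(x_k)$ only, producing a ``wanted'' contribution
\begin{equation*}
\frac{x_k - tx_{k-1}}{x_k - x_{k-1}}\,\Tb(x_1)\cdots\Tb(x_{k-2})\,\Te(x_k)\,\Tb(x_{k-1})
\end{equation*}
and an ``unwanted'' contribution
\begin{equation*}
\frac{(1-t)x_k}{x_{k-1}-x_k}\,\Tb(x_1)\cdots\Tb(x_{k-2})\,\Te(x_{k-1})\,\Tb(x_k).
\end{equation*}
To each of these I would apply the inductive hypothesis, which controls $\Tb(x_1)\cdots\Tb(x_{k-2})\Te(x_{\ast})$ for $\ast\in\{k-1,k\}$, and then invoke the commutativity \eqref{g-g} of the $\Tb$-operators to place all $\Tb$ factors on the right in one canonical order.

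The coefficient of the leading term $\Te(x_k)\Tb(x_1)\cdots\Tb(x_{k-1})$ telescopes immediately,
\begin{equation*}
\frac{x_k-tx_{k-1}}{x_k-x_{k-1}}\prod_{j=1}^{k-2}\frac{x_k-tx_j}{x_k-x_j}
\;=\;
\prod_{j=1}^{k-1}\frac{x_k-tx_j}{x_k-x_j},
\end{equation*}
reproducing the first term of \eqref{ggg-b}. The coefficient of $\Te(x_i)\Tb(x_1)\cdots\widehat{\Tb}(x_i)\cdots\Tb(x_k)$ for $i=k-1$ is supplied entirely by the leading piece of the inductive expansion of the unwanted contribution above, and reproduces the $i=k-1$ term of \eqref{ggg-b} directly. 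For $1\leq i\leq k-2$, the corresponding coefficient receives two contributions, and matching them to the claimed form reduces to verifying the rational identity
\begin{equation*}
\frac{x_k-tx_{k-1}}{(x_k-x_{k-1})(x_i-x_k)}
\;+\;
\frac{(1-t)\,x_{k-1}}{(x_{k-1}-x_k)(x_i-x_{k-1})}
\;=\;
\frac{x_i-tx_{k-1}}{(x_i-x_k)(x_i-x_{k-1})},
\end{equation*}
which, after clearing denominators, becomes a short polynomial identity linear in $x_i$ that is checked by expansion.

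The whole computation is the standard ``wanted/unwanted'' split of the algebraic Bethe ansatz, so the only real obstacle is bookkeeping: keeping track of the hatted index $x_i$, the signs produced by the factors $1/(x_a-x_b)$ versus $1/(x_b-x_a)$, and the consistent reordering of the commuting $\Tb$-operators after applying the inductive hypothesis twice. No input beyond \eqref{g-b} and \eqref{g-g} is needed.
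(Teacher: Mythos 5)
Your proposal is correct, and it proves the identity by a genuinely different route than the paper. The paper argues non-inductively in the style of the algebraic Bethe ansatz: it pushes $\Te(x_k)$ through all $k-1$ operators $\Tb(x_j)$ using \eqref{g-b}, observes that the leading term of \eqref{ggg-b} arises from retaining the first ("wanted") piece at every step, computes the $i=k-1$ term from the unique sequence of choices that produces $\Te(x_{k-1})$, and then obtains the remaining terms of the sum by symmetrizing in $x_1,\dots,x_{k-1}$, the left-hand side being symmetric because of \eqref{g-g}. Your induction on $k$ replaces that symmetrization step by an explicit recombination of the two contributions produced when the inductive hypothesis is applied to both the wanted and unwanted pieces of the single exchange $\Tb(x_{k-1})\Te(x_k)$; the crux is the rational identity
\begin{equation*}
\frac{x_k-tx_{k-1}}{(x_k-x_{k-1})(x_i-x_k)}
+\frac{(1-t)\,x_{k-1}}{(x_{k-1}-x_k)(x_i-x_{k-1})}
=\frac{x_i-tx_{k-1}}{(x_i-x_k)(x_i-x_{k-1})},
\end{equation*}
which does hold (both sides, after clearing denominators, equal $x_ix_k-x_ix_{k-1}-tx_{k-1}x_k+tx_{k-1}^2$), and your telescoping of the leading coefficient and your identification of the $i=k-1$ term are likewise correct. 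What each approach buys: the paper's argument is shorter and makes transparent which sequence of wanted/unwanted choices produces which term, but the final symmetrization step is somewhat implicit; your induction is more mechanical and fully self-contained — everything reduces to \eqref{g-b}, \eqref{g-g}, and one checkable rational identity — at the cost of the bookkeeping you acknowledge. You are also right that the bra $\bra{{\gr\cev\mu}}\otimes\bra{\ }$ is inert here, so the statement holds as an operator identity; the paper's proof never uses the bra either.
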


\begin{proof}
This is a well-known equation in models solvable by the algebraic Bethe Ansatz \cite{fad,kbi}. The leading term on the right hand side of \eqref{ggg-b} comes from applying $k-1$ times the commutation relation \eqref{g-b}, and retaining only the term $(x_k-tx_j)/(x_k-x_j) \Te(x_{k}) \Tb(x_j)$ at each step (since this is the only way to produce terms proportional to $\bra{{\gr\cev \mu}} \otimes \bra{\ } \Te(x_k) \Tb(x_1) \dots \Tb(x_{k-1})$, as we desire). 

Now isolate the $i=k-1$ term from the sum on the right hand side of \eqref{ggg-b}. This comes from applying the commutation relation \eqref{g-b} once and retaining the term 
$(1-t)x_k / (x_{k-1} - x_k)  \Te(x_{k-1}) \Tb(x_k)$, then applying \eqref{g-b} a further $k-2$ times and retaining only the term $(x_{k-1}-tx_j)/(x_{k-1}-x_j) \Te(x_{k-1}) \Tb(x_j)$ at each step (again, this is the only way to produce terms proportional to $\bra{{\gr\cev \mu}} \otimes \bra{\ }\Te(x_{k-1}) \Tb(x_1) \dots \Tb(x_{k-2}) \Tb(x_{k})$, which is our current focus). 

The other terms in the sum can be deduced by symmetrizing the $i=k-1$ term, since the left hand side of equation \eqref{ggg-b} is symmetric in $\{x_1,\dots,x_{k-1}\}$.
\end{proof}

Next, we notice that \eqref{ggg-b} can be written more succinctly as a single sum:
\begin{multline}
\label{1-iterat}
\b{x}_k
\times
\bra{{\gr\cev \mu}}
\otimes
\bra{\ }
\Tb(x_1)
\dots
\Tb(x_{k-1})
\Te(x_k)
=
\\
\sum_{i=1}^{k}
\b{x}_i
\frac{\prod_{j=1}^{k-1} (x_i-tx_j)}{\prod_{j\not=i}^{k} (x_i-x_j)}
\bra{{\gr\cev \mu}}
\otimes
\bra{\ }
\Te(x_i)
\Tb(x_1)
\dots
\widehat{\Tb}(x_i)
\dots
\Tb(x_k).
\end{multline}
Repeatedly iterating \eqref{1-iterat}, it is then straightforward to show that
\begin{multline}
\label{n-iterat}
\prod_{j=1}^{n}
\b{x}_{k_j}
\times
\mathcal{F}^{\lambda}_{\mu\nu}(x;t)
=
\sum_{1 \leq i_1 \leq k_1}
\cdots
\sum_{\substack{1 \leq i_{n} \leq k_{n} \\ i_n \not=i_1,\dots,i_{n-1}}}
(\b{x}_{i_1} \cdots \b{x}_{i_n})
\times
\\
\frac{\prod_{j_1}^{k_1-1} (x_{i_1}-tx_{j_1})}
{\prod_{j_1\not=i_1}^{k_1} (x_{i_1}-x_{j_1})}
\cdots
\frac{\prod_{j_n\not=i_1,\dots,i_{n-1}}^{k_n-1} (x_{i_n}-tx_{j_n})}
{\prod_{j_n\not=i_1,\dots,i_n}^{k_n} (x_{i_n}-x_{j_n})}
\bra{{\gr\cev \mu}}
\otimes
\bra{\ }
\Te(x_{i_1})
\dots
\Te(x_{i_n})
\Tb(x_{\hat\imath_1})
\dots
\Tb(x_{\hat\imath_{|\nu|}})
\ket{{\gr\cev \lambda}}
\otimes
\ket{{\cev 0}}
\end{multline}
where the summation is over distinct integers $\{i_1,\dots,i_n\}$ such that $i_j \leq k_j$ for all $1 \leq j \leq n$, and where $\{\hat\imath_1,\dots,\hat\imath_{|\nu|}\}$ denotes the complement of $\{i_1,\dots,i_n\}$ in $\{1,\dots,n+|\nu|\}$.

\subsubsection{Trivial action}

Here we present a lemma which enables us to proceed further in the calculation of 
$\mathcal{F}^{\lambda}_{\mu\nu}(x;t)$. It is a simple result but an important one, since it allows us to eliminate the $\Tb(x)$ operators from \eqref{n-iterat}, effectively converting it to an expectation value in the rank-one model \eqref{Lmat-b}. 

\begin{lem}
\label{lem:triv}
{\rm
Let $\ket{{\gr\cev \lambda}}$ be an arbitrary length-$\ell$ partition state in ${\gr \mathcal{B}}$, and $\ket{{\cev 0}}$ the length-$N$ zero partition $(0,\dots,0)$ in $\mathcal{B}$ ($N$ repetitions of 0). Then
\begin{align}
\label{trivial-act}
\Tb(y_1)
\dots
\Tb(y_N) 
\ket{{\gr\cev \lambda}}
\otimes
\ket{{\cev 0}}
=
t^{\ell N}
\ket{{\gr\cev \lambda}}
\otimes
\ket{\ }.
\end{align}

}
\end{lem}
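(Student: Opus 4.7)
I would proceed by induction on $N$. The base case $N=0$ is trivial, since the empty product is the identity and the length-$0$ zero partition corresponds to $\ket{\ }$. Since the $\Tb(y_i)$ pairwise commute, the inductive step reduces to proving the single-operator statement
\[
\Tb(y)\,\ket{{\gr\cev\lambda}}\otimes\ket{n}_0\otimes\ket{0}_1\otimes\cdots
\;=\;t^{\ell}\,\ket{{\gr\cev\lambda}}\otimes\ket{n-1}_0\otimes\ket{0}_1\otimes\cdots,
\qquad n\geq 1,
\]
after which $N$-fold iteration yields the claim: each $\Tb(y_j)$ lowers the occupation at black site $0$ by one while contributing a factor $t^{\ell}$, and after $N$ applications the black state is completely depleted.

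To establish the single-operator identity, I would expand $\Tb(y)=T_{21}(y)$ as a sum over lattice configurations in the rank-two model, i.e., sequences of auxiliary labels $(c_0,c_1,c_2,\dots)$ with $c_0=2$ and $c_\infty=1$, each weighted by a product of $L$-matrix entries from \eqref{Lmat-bb} acting on the initial quantum state. The driving observation is that the initial black state is supported only at site~$0$: consequently any $L_{21}$-type transition at a site $k\geq 1$ applies $\phi$ to a black vacuum and contributes zero, and no single-row configuration can "transport" a black particle away from site~$0$ (since only one $L$-entry is allowed per site, any $L_{12}$ or $L_{32}$ that creates a stray black elsewhere cannot be matched by a compensating $L_{21}$ at the same site). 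A careful case analysis along these lines identifies the surviving configurations, whose combined weight is precisely $t^{\ell}$ times the lowering operator $\phi$ at site~$0$ of the black space.

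The main obstacle will be the bookkeeping of the surviving aux-label paths: specifically, verifying that the $t^{{\gr N}}$ weights collected from the $L_{22}$ (and $L_{21}$) entries at the sites in the green-support of $\lambda$ assemble to the single global factor $t^{\ell}=t^{\sum_i \b m_i(\lambda)}$, and that the potential "bubble" configurations involving green creation-annihilation pairs via $L_{13}$, $L_{31}$ and $L_{32}$ at distinct sites either vanish outright or produce states orthogonal to the target $\ket{{\gr\cev\lambda}}\otimes\ket{n-1}_0\otimes\cdots$. The semi-infinite tail arising from $L_{11}$-entries must also be treated consistently with the formal normalization conventions used to define $\Tb(y)$.
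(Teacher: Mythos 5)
Your overall strategy (analyse a single application of $\Tb(y)$, then iterate $N$ times) coincides with the paper's, but the execution has a genuine gap, and as formulated your single-operator identity is false. You place the $n$ black particles at site $0$, i.e.\ at the same end of the row as the free auxiliary label $c_0=2$. With that placement the $2\to1$ transition (the entry $L_{21}=t^{{\gr N}}\otimes\phi$) must indeed occur at site $0$; but from then on the auxiliary label is $1$, and nothing prevents an excursion $1\to3$ at some site $k_1\geq1$ (entry $L_{13}=x\,[{\gr\phid}\otimes 1]$) followed by $3\to1$ at a later site $k_2>k_1$ (entry $L_{31}={\gr\phi}\otimes 1$). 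Whenever the state $\ket{{\gr\cev\lambda}}$ has a particle at such a site $k_2$, this configuration has nonzero weight and moves a green particle, so $\Tb(y)$ applied to your state is \emph{not} proportional to $\ket{{\gr\cev\lambda}}\otimes\ket{n-1}_0\otimes\cdots$; moreover the coefficient of the diagonal term would be $t^{\overline m_0(\lambda)}$, not $t^{\ell}$. Your fallback option, that such ``bubble'' terms might merely be orthogonal to the target, cannot rescue this: the lemma is an identity of vectors which is fed back into the Bethe-type manipulations, so every extra term must actually vanish, not just fail to overlap with the target.

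What makes the lemma true --- and what the paper's short graphical proof exploits --- is the relative position of the black reservoir: in the partition function \eqref{Hall-PF} the state $\ket{{\cev 0}}$ occupies the column at the \emph{far} end of the (truncated) row from the edge where the black auxiliary particle exits (see the figure in the paper's proof, where the black particles sit in the rightmost dark column while the black line exits on the left). Combined with two structural facts about \eqref{Lmat-bb}, namely $L_{23}=0$ (a black auxiliary line can never convert into a green one) and $L_{21}\propto\phi$ (so the black line can only terminate by absorbing a black particle, and an empty black site kills the configuration), the auxiliary label is forced to equal $2$ on every internal edge up to the reservoir column. The row therefore freezes completely: no green particle can move, each green column of occupation $m$ contributes $t^{m}$ through $t^{{\gr N}}$, giving the total weight $\prod_k t^{m_k}=t^{\ell}$ with no $y$-dependence, and exactly one black particle is removed from the reservoir; iterating $N$ times yields \eqref{trivial-act}. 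Your proposal never isolates these two facts, mislocates the reservoir relative to the exiting line, and defers precisely this freezing step as an unresolved ``obstacle'', so as it stands it does not prove the lemma.
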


\begin{proof}
Consider the action of a single operator $\Tb(y)$ on $\ket{{\gr\cev \lambda}} \otimes \ket{{\cev 0}}$, for example when $\ket{{\gr\cev \lambda}} = \ket{{\gr(2,1,0,0)}}$ and $\ket{{\cev 0}} = \ket{(0,0,0,0)}$, for which $\ell = N = 4$. Writing $\Tb(y) \ket{{\gr\cev \lambda}} \otimes \ket{{\cev 0}}$ in graphical form, it is clear that the black particle exiting on the left edge of the lattice must have come from the reservoir of particles at the top edge:
\begin{align*}
\Tb(y)
\ket{{\gr\cev \lambda}} \otimes \ket{{\cev 0}}
=
\begin{tikzpicture}[scale=0.8,baseline=5cm]
\draw[arrow=1,draw=black!20!green,thick] (8,7.5) -- (2,7.5);
\node at (5,7.75) {${\gr \bm \lambda}$};
\blackoc{1}{6.5}{}
\lightbfull{1}{6}
\bfull{2}{6}
\lightbfull[1]{3}{6}[1]
\bfull{4}{6}
\lightbfull[1]{5}{6}[1]
\bfull{6}{6}
\lightbfull[2]{7}{6}[2]
\bminus[4]{8}{6}[3]
\end{tikzpicture}
\end{align*}
Hence all internal vertical edges in this row are occupied by black particles, meaning that the green particles are forced to propagate in straight lines. The Boltzmann weight of the resulting configuration is $t^4=t^{\ell}$. Since this action reduces the number of black particles by one, by iterating it $N$ times all black particles are eliminated from the lattice, and we recover precisely \eqref{trivial-act}.
\end{proof}

\subsubsection{Multiple integral expression for $\mathcal{F}^{\lambda}_{\mu\nu}(x;t)$}

Returning to \eqref{n-iterat}, we immediately apply the relation \eqref{trivial-act}, giving
\begin{multline}
\label{something}
\prod_{j=1}^{n}
\b{x}_{k_j}
\times
\mathcal{F}^{\lambda}_{\mu\nu}(x;t)
=
t^{m |\nu|}
\sum_{1 \leq i_1 \leq k_1}
\cdots
\sum_{\substack{1 \leq i_{n} \leq k_{n} \\ i_n \not=i_1,\dots,i_{n-1}}}
(\b{x}_{i_1} \cdots \b{x}_{i_n})
\times
\\
\frac{\prod_{j_1}^{k_1-1} (x_{i_1}-t x_{j_1})}
{\prod_{j_1\not=i_1}^{k_1} (x_{i_1}-x_{j_1})}
\cdots
\frac{\prod_{j_n\not=i_1,\dots,i_{n-1}}^{k_n-1} (x_{i_n}-t x_{j_n})}
{\prod_{j_n\not=i_1,\dots,i_n}^{k_n} (x_{i_n}-x_{j_n})}
\bra{{\gr\cev \mu}}
\otimes
\bra{\ }
\Te(x_{i_1})
\dots
\Te(x_{i_n})
\ket{{\gr\cev \lambda}}
\otimes
\ket{\ },
\end{multline}
since $\ell=m$. It is clear that the final expectation value has no dependence on the black bosons (all operators contained within act only in ${\gr \mathcal{B}}$). Hence we can calculate it using the results of Section \ref{model-bb}. Recalling that the spectral parameter in the model \eqref{Lmat-bb} is the reciprocal of that used in \eqref{Lmat-b}, we find that
\begin{align*}
\bra{{\gr\cev \mu}}
\otimes
\bra{\ }
\Te(x_{i_1})
\dots
\Te(x_{i_n})
\ket{{\gr\cev \lambda}}
\otimes
\ket{\ }
=
\prod_{j=1}^{n} x_{i_j}^{M+1}
Q_{\mu/\lambda}(\b{x}_{i_1},\dots,\b{x}_{i_n};t)
=
\frac{B_{\b\mu}(t)}{B_{\b\lambda}(t)}
\prod_{j=1}^{n} x_{i_j}^{M+1}
Q_{\b\lambda/\b\mu}(\b{x}_{i_1},\dots,\b{x}_{i_n};t).
\end{align*}
We can then convert the sum \eqref{something} into a multiple integral whose integrand contains a skew Hall--Littlewood polynomial:
\begin{multline}
\label{first-int}
\prod_{j=1}^{n}
\b{x}_{k_j}
\times
\mathcal{F}^{\lambda}_{\mu\nu}(x;t)
=
t^{m |\nu|}
\times
\frac{B_{\b\mu}(t)}{B_{\b\lambda}(t)}
\times
\\
\oint_{w_1}
\cdots
\oint_{w_n}
\prod_{1 \leq i<j \leq n}
\frac{
(w_j - w_i)
}
{
(w_j-t w_i)
}
\prod_{i=1}^{n}
\frac{
\prod_{j=1}^{k_i-1}
(w_i - t x_j)
}
{
\prod_{j=1}^{k_i}
(w_i - x_j)
}
\prod_{i=1}^{n}
w_i^{M}
Q_{\b\lambda/\b\mu}(\b{w}_1,\dots,\b{w}_n;t).
\end{multline}
where each contour of integration is identical, and can be taken to be a circle centred on the origin and surrounding the points $(x_1,\dots,x_{n+|\nu|})$, which are the only poles of the integrand.

At this stage, it would be ideal to take some limit of the parameters $x_1,\dots,x_{n+|\nu|}$, with the aim of recovering the constant term expression \eqref{CT-Hall} for the Hall polynomial. We were unable to find any specialization of $x_1,\dots,x_{n+|\nu|}$ that achieves this. The most obvious try, $x_1,\dots,x_{n+|\nu|} \rightarrow 0$, fails trivially (since the right hand side of \eqref{first-int} is necessarily still a homogeneous polynomial of degree $|\nu|$ in these variables). This suggests that the correct approach is to lower the degree of \eqref{first-int} by acting with appropriate degree-lowering operators.   

\subsubsection{Acting with divided-difference operators}

Consider the following simple relation involving the divided-difference operator $\Delta_{k-1} \equiv \Delta_{k-1,k}$, where $g(x)$ is any symmetric function in $(x_1,\dots,x_k)$:
\begin{align}
\label{dd-act}
\Delta_{k-1}
\left(
\frac{w-t x_1}{w-x_1}
\cdots
\frac{w-t x_{k-1}}{w- x_{k-1}}
\frac{g(x)}{w-x_k}
\right)
&=
\frac{1}{x_{k-1}-x_k}
(\sigma_{k-1,k}-1)
\left(
\frac{w-t x_1}{w-x_1}
\cdots
\frac{w-tx_{k-1}}{w-x_{k-1}}
\frac{g(x)}{w-x_k}
\right)
\\
&=
t\times
\frac{w-tx_1}{w-x_1}
\cdots
\frac{w-tx_{k-2}}{w-x_{k-2}}
\frac{g(x)}{(w-x_{k-1})(w-x_k)}.
\nonumber
\end{align}
Using this equation repeatedly, we easily find that
\begin{multline*}
\left(
\prod_{j \not\in k(\nu)}
\Delta_j
\right)
\left(
\prod_{i \in k(\nu)}
\b{x}_i
\right)
\mathcal{F}^{\lambda}_{\mu\nu}(x;t)
=
t^{(m+1)|\nu|}
\times
\frac{B_{\b\mu}(t)}{B_{\b\lambda}(t)}
\times
\\
\oint_{w_1}
\cdots
\oint_{w_n}
\prod_{1 \leq i<j \leq n}
\frac{
(w_j - w_i)
}
{
(w_j-t w_i)
}
\prod_{i=1}^{n}
\frac{
\prod_{1\leq j \leq k_{i-1}}
(w_i - t x_j)
}
{
\prod_{1 \leq j \leq k_i}
(w_i - x_j)
}
\prod_{i=1}^{n}
w_i^{M}
Q_{\b\lambda/\b\mu}(\b{w}_1,\dots,\b{w}_n;t).
\end{multline*}
where we have defined $k_0 = 0$. Since we have acted $|\nu|$ times with divided-difference operators on a polynomial of total degree $|\nu|$, it follows that the above expression is in fact constant with respect to $x_1,\dots,x_{n+|\nu|}$ (which is by no means apparent from the integral, taken at face value). We are therefore at liberty to take any limit of the variables that we choose, since our expression is independent of them.

\subsubsection{Homogeneous limit}

Taking the limit $x_i \rightarrow 0$ for all $1 \leq i \leq n+|\nu|$, we obtain
\begin{align*}
t^{-(m+1)|\nu|}
\times
\frac{B_{\b\lambda}(t)}{B_{\b\mu}(t)}
&\times
\lim_{x \rightarrow 0}
\left(
\prod_{j \not\in k(\nu)}
\Delta_j
\right)
\left(
\prod_{i \in k(\nu)}
\b{x}_i
\right)
\mathcal{F}^{\lambda}_{\mu\nu}(x;t)
\\
&=
\oint_{w_1}
\cdots
\oint_{w_n}
\prod_{1 \leq i<j \leq n}
\frac{
(w_j - w_i)
}
{
(w_j-t w_i)
}
\prod_{i=1}^{n}
w_i^{M+k_{i-1}-k_i}
Q_{\b\lambda/\b\mu}(\b{w}_1,\dots,\b{w}_n;t)
\\
&=
{\rm Coeff}
\left[
\prod_{1 \leq i<j \leq n}
\frac{(w_j-w_i)}{(w_j-tw_i)}
Q_{\b\lambda/\b\mu}(\b{w}_1,\dots,\b{w}_n;t),
\prod_{i=1}^{n}
w_i^{k_i-k_{i-1}-M-1}
\right]
\\
&=
{\rm Coeff}
\left[
\prod_{1 \leq i<j \leq n}
\frac{(z_i-z_j)}{(z_i-tz_j)}
Q_{\b\lambda/\b\mu}(z_1,\dots,z_n;t),
\prod_{i=1}^{n}
z_i^{M+1+k_{i-1}-k_i}
\right].
\end{align*}
It remains only to observe that $M+1+k_{i-1}-k_i = M-\nu_{n-i+1} \equiv \b\nu_i$ for all $1 \leq i \leq n$, which recovers the constant term expression \eqref{CT-Hall} for 
$b_{\b\nu}(t) f^{\b\lambda}_{\b\mu\b\nu}(t)$. This completes the proof of equation \eqref{main-formula}, and Theorem \ref{thm-hall}.

\subsection{Combinatorial interpretation of equation \eqref{main-formula}}

Equation \eqref{main-formula} is an explicit formula for the Hall polynomials, although it is complicated by the presence of divided-difference operators. Our aim in this section is to find a true combinatorial formula, that does not depend on auxiliary variables and operators which act on them. We begin by remarking that the partition function 
$\mathcal{F}^{\lambda}_{\mu\nu}(x;t)$ is easily decomposed into its monomials. Indeed, to calculate 
$\text{Coeff}[\mathcal{F}^{\lambda}_{\mu\nu}(x;t), \prod_{i=1}^{n+|\nu|} x_i^{p_i}]$ for any set of exponents $\{p_1,\dots,p_{n+|\nu|}\}$, one sums over all configurations in $\mathcal{F}^{\lambda}_{\mu\nu}(x;t)$ which have exactly $p_i$ vacant left edges in the $i$-th row for all $1 \leq i \leq n+|\nu|$. Hence the combinatorial implication of \eqref{main-formula} will become apparent by studying which monomials in $(x_1,\dots,x_{n+|\nu|})$ survive under the action of the divided-difference operators.

Before giving our combinatorial expression for the Hall polynomials, we require a number of simple definitions.

\begin{defn}{\rm
\label{def-dipole}
Consider a rectangular grid $G$ of light and dark squares, \begin{tikzpicture}[scale=0.5,baseline=0.1cm] \blight{0}{0} \end{tikzpicture} and \begin{tikzpicture}[scale=0.5,baseline=0.1cm] \bdark{0}{0} \end{tikzpicture}\ . A {\it dipole tiling} of $G$ is a placement of horizontal dipoles  
\begin{tikzpicture}[scale=0.5,baseline=0.1cm] 
\filldraw[fill=black!20!green,draw=black!20!green] (0.5,0.5) circle (0.25cm); 
\draw[gline] (0.5,0.5) -- (2,0.5); 
\filldraw[fill=black!20!green,draw=black!20!green] (2,0.5) circle (0.25cm);
\node[text centered] at (0.5,0.5) {\color{white} $+$};  
\node[text centered] at (2,0.5) {\color{white} $-$};
\end{tikzpicture} 
and
\begin{tikzpicture}[scale=0.5,baseline=0.1cm] 
\filldraw[fill=black,draw=black] (0.5,0.5) circle (0.25cm); 
\draw[bline] (0.5,0.5) -- (2,0.5); 
\filldraw[fill=black,draw=black] (2,0.5) circle (0.25cm);
\node[text centered] at (0.5,0.5) {\color{white} $+$};  
\node[text centered] at (2,0.5) {\color{white} $-$};
\end{tikzpicture}
of any length on $G$, such that green (black) dipoles begin and end on light (dark) squares, and dipoles cannot overlap.

}
\end{defn}

\begin{defn}
\label{def-Npuzz}
{\rm
Fix a non-negative integer $N \geq 0$. An $N$--{\it puzzle} is a dipole tiling of the grid
\begin{align*}
\begin{tikzpicture}[scale=0.6]
\node at (-0.5,5.4) {$N \Bigg\{$ };
\plusb{0}{4}
\plusb{0}{5}
\plusb{0}{6}
\bdark{0}{7}
\blight{1}{7}
\bdark{2}{7}
\blight{3}{7}
\bdark{4}{7}
\blight{5}{7}
\bdark{6}{7}
\blight{7}{7}
\bdark{8}{7}
\blight{1}{6}
\bdark{2}{6}
\blight{3}{6}
\bdark{4}{6}
\blight{5}{6}
\bdark{6}{6}
\blight{7}{6}
\bdark{8}{6}
\blight{1}{5}
\bdark{2}{5}
\blight{3}{5}
\bdark{4}{5}
\blight{5}{5}
\bdark{6}{5}
\blight{7}{5}
\bdark{8}{5}
\blight{1}{4}
\bdark{2}{4}
\blight{3}{4}
\bdark{4}{4}
\blight{5}{4}
\bdark{6}{4}
\blight{7}{4}
\bdark{8}{4}
\draw[ultra thick] (0.95,4) -- (0.95,8);
\end{tikzpicture}
\end{align*}
in which the left column is frozen as indicated: its top square must be unoccupied, \begin{tikzpicture}[scale=0.5,baseline=0.1cm] \bdark{0}{0} \end{tikzpicture}\ , and the remaining $N$ squares are required to be of the form \begin{tikzpicture}[scale=0.5,baseline=0.1cm] \plusb{0}{0} \end{tikzpicture}\ . Numbering the rows from top to bottom by $\{0,1,\dots,N\}$, let $r_i$ denote the total number of tiles of the form \begin{tikzpicture}[scale=0.5,baseline=0.1cm] \blight{0}{0} \end{tikzpicture} or \begin{tikzpicture}[scale=0.5,baseline=0.1cm] \plusg{0}{0} \end{tikzpicture} in the $i$-th row. Then we require that $(r_1,\dots,r_N)$ is a partition and 
$\sum_{i=0}^{N} r_i = N+1$. The {\it length} of the $N$--puzzle is the total number of rows $i$ for which $r_i > 0$.

}
\end{defn}

\begin{defn}{\rm
Let $\lambda,\mu,\nu$ be three partitions. A {\it $\nu$--puzzle} with frame $(\mu,\lambda)$ is a dipole tiling of the form
\begin{align*}
\begin{tikzpicture}[scale=0.6]
\node at (-0.75,9.25) {$\nu_1 \  \Bigg\{$ };
\node at (-0.75,5) {$\nu_n \  \Bigg\{$ };
\plusb{0}{4}
\plusb{0}{5}
\bdark{0}{6}
\plusb{0}{8}
\plusb{0}{9}
\plusb{0}{10}
\bdark{0}{11}
\blight{1}{11}
\bdark{2}{11}
\blight{3}{11}
\bdark{4}{11}
\blight{5}{11}
\bdark{6}{11}
\blight{7}{11}
\bdark{8}{11}
\blight{1}{10}
\bdark{2}{10}
\blight{3}{10}
\bdark{4}{10}
\blight{5}{10}
\bdark{6}{10}
\blight{7}{10}
\bdark{8}{10}
\blight{1}{9}
\bdark{2}{9}
\blight{3}{9}
\bdark{4}{9}
\blight{5}{9}
\bdark{6}{9}
\blight{7}{9}
\bdark{8}{9}
\blight{1}{8}
\bdark{2}{8}
\blight{3}{8}
\bdark{4}{8}
\blight{5}{8}
\bdark{6}{8}
\blight{7}{8}
\bdark{8}{8}
\node at (1.5,7.7) {$\vdots$};
\node at (7.5,7.7) {$\vdots$};
\blight{1}{6}
\bdark{2}{6}
\blight{3}{6}
\bdark{4}{6}
\blight{5}{6}
\bdark{6}{6}
\blight{7}{6}
\bdark{8}{6}
\blight{1}{5}
\bdark{2}{5}
\blight{3}{5}
\bdark{4}{5}
\blight{5}{5}
\bdark{6}{5}
\blight{7}{5}
\bdark{8}{5}
\blight{1}{4}
\bdark{2}{4}
\blight{3}{4}
\bdark{4}{4}
\blight{5}{4}
\bdark{6}{4}
\blight{7}{4}
\bdark{8}{4}
\node[text centered] at (1.5,12.5) {\gr \tiny $m_M(\lambda)$};
\node[text centered] at (3.5,12.5) {\gr \tiny $\cdots$};
\node[text centered] at (5.5,12.5) {\gr \tiny $m_1(\lambda)$};
\node[text centered] at (7.5,12.5) {\gr \tiny $m_0(\lambda)$};
\node[text centered] at (1.5,3.5) {\gr \tiny $m_M(\mu)$};
\node[text centered] at (3.5,3.5) {\gr \tiny $\cdots$};
\node[text centered] at (5.5,3.5) {\gr \tiny $m_1(\mu)$};
\node[text centered] at (7.5,3.5) {\gr \tiny $m_0(\mu)$};
\draw[ultra thick] (0.95,4) -- (0.95,7); \draw[ultra thick] (0.95,8) -- (0.95,12);
\end{tikzpicture}
\end{align*}
obtained by vertically concatenating $\nu_i$--puzzles for $1 \leq i \leq n$, such that the total charge of the $j$-th light column from the right is $m_j(\mu) - m_j(\lambda)$, and all internal dark columns have total charge 0. Necessarily, the left and rightmost dark columns will have total charge $|\nu|$ and $-|\nu|$, respectively. The length of the $\nu$--puzzle is the sum of the lengths of its constituent $\nu_i$--puzzles.

}
\end{defn}

An example of a (2,1,1)--puzzle with frame $((3,3,1,1),(2,1,0,0))$ is given below:
\begin{align*}
\begin{tikzpicture}[scale=0.6]
\node[text centered] at (1.5,8.5) {\gr 0};
\node[text centered] at (3.5,8.5) {\gr 1};
\node[text centered] at (5.5,8.5) {\gr 1};
\node[text centered] at (7.5,8.5) {\gr 2};
%
\node[text centered] at (1.5,0.5) {\gr 2};
\node[text centered] at (3.5,0.5) {\gr 0};
\node[text centered] at (5.5,0.5) {\gr 2};
\node[text centered] at (7.5,0.5) {\gr 0};
\bdark{0}{7};
\plusb{0}{6};
\plusb{0}{5};
\bdark{0}{4};
\plusb{0}{3};
\bdark{0}{2};
\plusb{0}{1};
\blight{1}{7}
\bdark{2}{7}
\blight{3}{7}
\plusb{4}{7}
\flatB{5}{7}
\flatb{6}{7}
\flatB{7}{7}
\minb{8}{7}
\flatB{1}{6}
\flatb{2}{6}
\flatB{3}{6}
\minb{4}{6}
\plusg{5}{6}
\flatG{6}{6}
\ming{7}{6}
\bdark{8}{6}
\flatB{1}{5}
\flatb{2}{5}
\flatB{3}{5}
\flatb{4}{5}
\flatB{5}{5}
\flatb{6}{5}
\flatB{7}{5}
\minb{8}{5}
\plusg{1}{4}
\flatG{2}{4}
\flatg{3}{4}
\flatG{4}{4}
\ming{5}{4}
\plusb{6}{4}
\flatB{7}{4}
\minb{8}{4}
\flatB{1}{3}
\flatb{2}{3}
\flatB{3}{3}
\flatb{4}{3}
\flatB{5}{3}
\minb{6}{3}
\blight{7}{3}
\bdark{8}{3}
\plusg{1}{2}
\flatG{2}{2}
\ming{3}{2}
\bdark{4}{2}
\plusg{5}{2}
\flatG{6}{2}
\ming{7}{2}
\bdark{8}{2}
\flatB{1}{1}
\flatb{2}{1}
\flatB{3}{1}
\flatb{4}{1}
\flatB{5}{1}
\flatb{6}{1}
\flatB{7}{1}
\minb{8}{1}
\draw[ultra thick] (0.95,1) -- (0.95,8);
\draw[ultra thick,red] (5.05,2.05) -- (5.95,2.05) -- (5.95,2.95) -- (5.05,2.95) -- (5.05,2.05);
\end{tikzpicture}
\end{align*}
We let $\mathbb{P}_{\nu}(\mu,\lambda)$ denote the set of all $\nu$--puzzles with frame $(\mu,\lambda)$.

\begin{defn}{\rm 
The {\it cumulative charge} of a tile is the sum (reading from top downward) of all charges in its column, up to and including the tile itself, plus the occupation number at the top of the column. For example, in the previous puzzle, the tile bordered in red has cumulative charge $1+1-1+1=2$.
}
\end{defn}   

\begin{defn}\label{weight}
{\rm
The Boltzmann weight of a puzzle $P$, denoted $W(P)$, is the product of local weights assigned to each tile in the lattice, excluding those in the trivial left column (which receive no weight). We indicate the weight of each tile below:
\begin{align*}
\begin{tikzpicture}[scale=0.6]
\blight{1}{1};
\plusg{3}{1};
\flatg{5}{1};
\ming{7}{1};
\flatG{9}{1};
\flatB{11}{1};
\plusb{13}{1};
\flatb{15}{1};
\minb{17}{1};
\bdark{19}{1};
\node[text centered] at (1.5,0.5) {$\ss 1$};
\node[text centered] at (3.5,0.5) {$\ss 1-t^{c}$};
\node[text centered] at (5.5,0.5) {$\ss 1$};
\node[text centered] at (7.5,0.5) {$\ss 1$};
\node[text centered] at (9.5,0.5) {$\ss 1$};
\node[text centered] at (11.5,0.5) {$\ss t^c$};
\node[text centered] at (13.5,0.5) {$\ss 1-t^{c}$};
\node[text centered] at (15.5,0.5) {$\ss 1$};
\node[text centered] at (17.5,0.5) {$\ss 1$};
\node[text centered] at (19.5,0.5) {$\ss 1$};
\end{tikzpicture} 
\end{align*}
where $c$ denotes the cumulative charge of the tile. An immediate consequence of these weights is that if the cumulative charge of any tile is negative, the Boltzmann weight of the whole puzzle is necessarily zero. We can therefore assume that the cumulative charge of any tile is non-negative.
}
\end{defn}

\begin{thm}
\label{thm-hall-puzzle}
{\rm
Let $\lambda,\mu,\nu$ be three partitions with the same definitions as in Theorem \ref{thm-hall}, and $\b\lambda,\b\mu,\b\nu$ their complements. Then the Hall polynomial $f^{\b\lambda}_{\b\mu\b\nu}(t)$ is obtained by summing over all $\nu$--puzzles with frame $(\mu,\lambda)$:
\begin{align}
\label{hall-comb}
f^{\b\lambda}_{\b\mu\b\nu}(t)
=
t^{-(m+1)|\nu|}
\times
\frac{B_{\b\lambda}(t)}
{B_{\b\mu}(t) b_{\b\nu}(t)}
\times
\sum_{P \in \mathbb{P}_{\nu}(\mu,\lambda)}
(-1)^{L(P)}
W(P),
\end{align}
where $L(P)$ denotes the length of $P$.

\begin{proof}
The proof is not difficult, so we only sketch its key points. We start from equation \eqref{main-formula}, and analyse the action of the divided-difference operators on $\mathcal{F}^{\lambda}_{\mu\nu}(t)$. For any four partitions $\alpha,\beta,\gamma,\delta$ and integer 
$N \geq 0$, define
\begin{align}
\label{build}
\rho^{(N)}_{\alpha / \gamma, \beta / \delta}(t)
=
\lim_{x \rightarrow 0}
\left(
\Delta_1 \dots \Delta_N
\b{x}_{N+1}
\bra{{\gr\cev\alpha}} \otimes \bra{{\cev\beta}}
\Tb(x_1) \dots \Tb(x_N) \Te(x_{N+1})
\ket{{\gr \cev\gamma}} \otimes \ket{{\cev \delta}}
\right),
\end{align}
where all $x$ variables are sent to zero, after having acted with the divided-difference operators $\Delta_1 \dots \Delta_N$. It follows that
\begin{align}
\label{double-tableau}
f^{\b\lambda}_{\b\mu\b\nu}(t)
=
t^{-(m+1)|\nu|}
\times
\frac{B_{\b\lambda}(t)}
{B_{\b\mu}(t) b_{\b\nu}(t)}
\times
\sum_{{\bm \alpha}_{\lambda,\mu}}
\sum_{{\bm \beta}_{\nu}}
\prod_{i=1}^{n}
\rho^{(\nu_i)}_{\alpha_i / \alpha_{i-1}, \beta_i / \beta_{i-1}}(t),
\end{align}
where the sum is taken over all possible sequences of partitions 
\begin{align*}
{\bm \alpha}_{\lambda,\mu}
:=
\{
\lambda \equiv \alpha_0 \subseteq \alpha_1 \subseteq \cdots \subseteq \alpha_n \equiv \mu
\},
\qquad
{\bm \beta}_{\nu}
:= 
\{
\underbrace{(0,\dots,0)}_{|\nu|} \equiv \beta_0,\beta_1,\dots,\beta_n \equiv \emptyset
\}
\end{align*}
such that $\ell\left(\alpha_{i-1}\right) = \ell\left(\alpha_{i}\right)$ and $\ell\left(\beta_{i-1}\right) = \ell\left(\beta_{i}\right) + \nu_{i}$ for all $1 \leq i \leq n$. 

Equation \eqref{double-tableau} is, in some sense, a ``double-tableau'' formula for the Hall polynomials, just as \eqref{skew-ssyt} expresses skew Hall--Littlewood polynomials as a sum over semi-standard Young tableaux. To translate it into the formula \eqref{hall-comb}, we only need to find the combinatorial interpretation of \eqref{build}, which are the building blocks. Recalling the lattice representation of $\mathcal{F}^{\lambda}_{\mu\nu}(t)$, we make the trivial bijection 
\begin{tikzpicture}[scale=0.5,baseline=0.1cm] 
\draw[gline] (0.5,0) -- (0.5,0.5) -- (2,0.5) -- (2,1); 
\end{tikzpicture} 
$\leftrightarrow$
\begin{tikzpicture}[scale=0.5,baseline=0.1cm] 
\filldraw[fill=black!20!green,draw=black!20!green] (0.5,0.5) circle (0.25cm); 
\draw[gline] (0.5,0.5) -- (2,0.5); 
\filldraw[fill=black!20!green,draw=black!20!green] (2,0.5) circle (0.25cm);
\node[text centered] at (0.5,0.5) {\color{white} $+$};  
\node[text centered] at (2,0.5) {\color{white} $-$};
\end{tikzpicture} and 
\begin{tikzpicture}[scale=0.5,baseline=0.1cm] 
\draw[bline] (0.5,0) -- (0.5,0.5) -- (2,0.5) -- (2,1); 
\end{tikzpicture} 
$\leftrightarrow$
\begin{tikzpicture}[scale=0.5,baseline=0.1cm] 
\filldraw[fill=black!,draw=black] (0.5,0.5) circle (0.25cm); 
\draw[bline] (0.5,0.5) -- (2,0.5); 
\filldraw[fill=black,draw=black] (2,0.5) circle (0.25cm);
\node[text centered] at (0.5,0.5) {\color{white} $+$};  
\node[text centered] at (2,0.5) {\color{white} $-$};
\end{tikzpicture} 
between lattice paths and dipoles. One readily finds that
\begin{align}
\label{build-puzz}
\rho^{(N)}_{\alpha / \gamma, \beta / \delta}(t)
=
\sum_{P}
(-1)^{L(P)} W(P),
\qquad
P
=
\begin{tikzpicture}[scale=0.6,baseline=3.5cm]
\node at (-0.7,5.4) {$N \Bigg\{$ };
\plusb{0}{4}
\plusb{0}{5}
\plusb{0}{6}
\bdark{0}{7}
\blight{1}{7}
\bdark{2}{7}
\blight{3}{7}
\bdark{4}{7}
\blight{5}{7}
\bdark{6}{7}
\blight{7}{7}
\bdark{8}{7}
\blight{1}{6}
\bdark{2}{6}
\blight{3}{6}
\bdark{4}{6}
\blight{5}{6}
\bdark{6}{6}
\blight{7}{6}
\bdark{8}{6}
\blight{1}{5}
\bdark{2}{5}
\blight{3}{5}
\bdark{4}{5}
\blight{5}{5}
\bdark{6}{5}
\blight{7}{5}
\bdark{8}{5}
\blight{1}{4}
\bdark{2}{4}
\blight{3}{4}
\bdark{4}{4}
\blight{5}{4}
\bdark{6}{4}
\blight{7}{4}
\bdark{8}{4}
\draw[ultra thick] (0.95,4) -- (0.95,8);
\node[text centered] at (3.5,8.5) {\gr \tiny $\cdots$};
\node[text centered] at (5.5,8.5) {\gr \tiny $m_1(\gamma)$};
\node[text centered] at (7.5,8.5) {\gr \tiny $m_0(\gamma)$};
\node[text centered] at (4.5,9) {\tiny $\cdots$};
\node[text centered] at (6.5,9) {\tiny $m_1(\delta)$};
\node[text centered] at (8.5,9) {\tiny $m_0(\delta)$};
\node[text centered] at (3.5,3.5) {\gr \tiny $\cdots$};
\node[text centered] at (5.5,3.5) {\gr \tiny $m_1(\alpha)$};
\node[text centered] at (7.5,3.5) {\gr \tiny $m_0(\alpha)$};
\node[text centered] at (4.5,3) {\tiny $\cdots$};
\node[text centered] at (6.5,3) {\tiny $m_1(\beta)$};
\node[text centered] at (8.5,3) {\tiny $m_0(\beta)$};
\end{tikzpicture}
\end{align}
where the sum is over all $N$--puzzles as shown on the right, with bottom and top boundaries corresponding to the partition states $\bra{{\gr\cev{\alpha}}} \otimes \bra{{\cev{\beta}}}$ and 
$\ket{{\gr \cev{\gamma}}} \otimes \ket{{\cev{\delta}}}$, and where the Boltzmann weights are given by Definition \ref{weight}. The main subtlety in going from \eqref{build} to \eqref{build-puzz} is to calculate the action of the divided-difference operators $\Delta_1 \dots \Delta_{N}$ on $\b{x}_{N+1}
\bra{{\gr\cev{\alpha}}} \otimes \bra{{\cev{\beta}}}
\Tb(x_1) \dots \Tb(x_N) \Te(x_{N+1})
\ket{{\gr \cev{\gamma}}} \otimes \ket{{\cev{\delta}}}$. It is not hard to show that it gives rise to the partition constraint described in Definition \ref{def-Npuzz}, while any minus signs acquired from the action are maintained through the factor $(-1)^{L(P)}$.

Finally, \eqref{double-tableau} calls for the concatenation of $\nu_i$--puzzles for all $1 \leq i \leq n$, and \eqref{hall-comb} follows immediately.

\end{proof}
}
\end{thm}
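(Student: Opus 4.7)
My plan is to reduce the theorem to a local statement about ``building-block'' expectation values, and then translate that local statement into the dipole language of puzzles. Starting from equation \eqref{main-formula} of Theorem \ref{thm-hall}, I would first notice that the ordered product of operators $\prod_{i=1}^{n+|\nu|}\{\Te(x_i)\text{ or }\Tb(x_i)\}$ naturally factorizes into $n$ blocks, where the $i$-th block consists of $\nu_i$ consecutive $\Tb$'s followed by a single $\Te$ (this matches the definition of $k(\nu)$). Similarly, the divided-difference prefactor $\big(\prod_{j\notin k(\nu)}\Delta_j\big)\big(\prod_{i\in k(\nu)}\bar{x}_i\big)$ decomposes into $n$ groups $\Delta_1\cdots\Delta_{\nu_i}\bar{x}_{\nu_i+1}$, each acting only within its own block (after relabelling variables). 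Inserting two complete sets of states, one in $\gr\mathcal{B}$ and one in $\mathcal{B}$, between consecutive blocks, and taking the $x\to0$ limit inside each block independently, I obtain precisely the double-sum formula \eqref{double-tableau} in terms of the building blocks $\rho^{(N)}_{\alpha/\gamma,\beta/\delta}(t)$ defined by \eqref{build}. The sequences $\bm\alpha_{\lambda,\mu}$ and $\bm\beta_\nu$ encode the intermediate green and black occupation data, and the constraints $\ell(\alpha_{i-1})=\ell(\alpha_i)$ and $\ell(\beta_{i-1})=\ell(\beta_i)+\nu_i$ come from conservation of the number of particles in each Fock space.

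The heart of the argument is then the local identity \eqref{build-puzz}, which identifies $\rho^{(N)}_{\alpha/\gamma,\beta/\delta}(t)$ with a signed sum of Boltzmann weights of $N$-puzzles. I would first expand $\bra{{\gr\cev\alpha}}\otimes\bra{{\cev\beta}}\Tb(x_1)\cdots\Tb(x_N)\Te(x_{N+1})\ket{{\gr\cev\gamma}}\otimes\ket{{\cev\delta}}$ as a sum over lattice configurations using the $L$-matrix \eqref{Lmat-bb}, and then apply the path-to-dipole bijection indicated at the end of the excerpt: each complete horizontal lattice path in a single row becomes a single dipole of the appropriate colour. The Boltzmann weight of the tiles listed in Definition \ref{weight} is then read off directly from the matrix entries of \eqref{Lmat-bb}, using the fact that the cumulative charge $c$ in a given column equals the green or black particle number at the \emph{bottom} of the tile (coming from $\phid\ket{m}=(1-t^{m+1})\ket{m+1}$ for the $+$ tile and $t^N\otimes1$ for the flat black tile).

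The main obstacle, and the step that requires real care, is the analysis of the operator $\Delta_1\cdots\Delta_N\bar{x}_{N+1}$ acting on this lattice sum, followed by the limit $x\to 0$. Using the same mechanism as \eqref{dd-act}, I expect that each divided-difference operator $\Delta_j$ effectively kills monomials in which the degree in $x_j$ exceeds the degree in $x_{j+1}$ and, after the $x\to0$ limit, selects exactly those lattice configurations in which the row-by-row occupation sequence $(r_1,\ldots,r_N)$ of tiles carrying an $x$-weight forms a partition summing to $N+1$ (together with the top unoccupied square), producing the partition constraint in Definition \ref{def-Npuzz}. The alternating signs introduced by the factor $(\sigma_{j,j+1}-1)$ in each $\Delta_j$ must then be shown to collapse into the global sign $(-1)^{L(P)}$, where $L(P)$ counts the number of non-trivial rows of the puzzle; tracking this combinatorially (by ``pairing'' trivial rows with $+1$ contributions and showing that each non-trivial row contributes one minus sign from the symmetrizer) is the delicate bookkeeping step.

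Once \eqref{build-puzz} is established, the passage to \eqref{hall-comb} is routine: the double sum \eqref{double-tableau} over $\bm\alpha_{\lambda,\mu}$ and $\bm\beta_\nu$ corresponds exactly to the vertical concatenation of $\nu_i$-puzzles along matching intermediate partition states, and the product $\prod_i(-1)^{L_i}W(P_i)$ of local weights becomes $(-1)^{L(P)}W(P)$ for the concatenated $\nu$-puzzle $P\in\mathbb{P}_\nu(\mu,\lambda)$. The column charge conditions in the definition of $\nu$-puzzles reproduce precisely the matching of occupation numbers $m_j(\alpha_i)$ on adjacent blocks, yielding the claimed formula.
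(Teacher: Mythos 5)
Your overall architecture is the same as the paper's: split the operator string of \eqref{main-formula} into blocks of $\Tb$'s capped by a single $\Te$ (note only the cosmetic point that, read from the bottom of the lattice, the blocks carry $\nu_n,\dots,\nu_1$ rather than $\nu_1,\dots,\nu_n$; each applied $\Delta_j$, $j\notin k(\nu)$, indeed involves two variables of the same block), insert complete sets of states in ${\gr\mathcal B}\otimes\mathcal B$ between blocks to get \eqref{double-tableau}, identify the building blocks \eqref{build} with signed puzzle sums \eqref{build-puzz} via the path--dipole bijection, and concatenate. Your reading of the tile weights through the cumulative charge is also correct. So the assessment hinges on the one step you flag as delicate, and there your proposed mechanism fails.

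The claim that ``each $\Delta_j$ kills monomials in which the degree in $x_j$ exceeds the degree in $x_{j+1}$'' and thereby selects, configuration by configuration, exactly the lattice configurations whose row degrees form a partition, is false. Divided differences annihilate monomials that are \emph{symmetric} in $x_j,x_{j+1}$, not those with the larger exponent on $x_j$: for instance $\Delta_1 x_1=-1\neq 0$ while $\Delta_2(x_2x_3)=0$. Concretely, take a block with $N=2$. A configuration with row degrees $(d_1,d_2,d_3)=(0,1,2)$ has profile $(r_1,r_2)=(1,0)$, so it is an admissible puzzle in the sense of Definition~\ref{def-Npuzz} and must contribute $-W$ to \eqref{build-puzz}; yet $\lim_{x\to0}\Delta_1\Delta_2\,\b{x}_3\,(x_2x_3^2)=0$. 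Conversely the inadmissible profile $(1,0,2)$ gives $\lim_{x\to0}\Delta_1\Delta_2\,\b{x}_3\,(x_1x_3^2)=-1$. Thus the selection rule you describe does not hold term by term, and the ``pairing'' bookkeeping for the signs, carried out literally on monomials, would produce contradictions. The identity \eqref{build-puzz} is only true in aggregate: since $\Tb(x)\Tb(y)=\Tb(y)\Tb(x)$ (equation \eqref{g-g}), the block expectation value is symmetric in its $\Tb$ variables $x_1,\dots,x_N$, so configurations whose degree profiles are permutations of one another carry equal total weight; acting with $\Delta_1\cdots\Delta_N\b{x}_{N+1}$ on the corresponding monomial symmetric combinations and then sending $x\to0$ yields $(-1)^{\#\{i\le N:\,d_i>0\}}$ per orbit (e.g.\ $x_2x_3^2+x_1x_3^2\mapsto -1$), which is what reproduces both the partition constraint and the sign $(-1)^{L(P)}$, with the admissible (weakly increasing) profile as orbit representative. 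You must either insert this symmetrization step explicitly, or avoid monomials altogether by re-running the manipulation \eqref{dd-act} of the proof of Theorem~\ref{thm-hall} at the level of a single block (the $n=1$ case of the integral form), which is the cleaner way to make the paper's ``not hard to show'' assertion precise. Without one of these repairs, the central step of your plan fails as stated.
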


\begin{rmk}{\rm
In the case where $\nu$ consists of just one part, $N>0$, the sum \eqref{hall-comb} factorizes\footnote{In saying that a polynomial ``factorizes'' we mean that all its non-zero roots are roots of unity.}. This is necessarily the case, because of the Pieri rule
\begin{align*}
(1-t)
P_{\mu} P_{N}
=
\sum_{\lambda: \lambda/\mu \in \mathfrak{h}_N}
\varphi_{\lambda/\mu}(t)
P_{\lambda},
\qquad
\varphi_{\lambda/\mu}(t)
=
\frac{b_{\lambda}(t)}
{b_{\mu}(t)}
\psi_{\lambda/\mu}(t),
\end{align*}
from which we conclude that $f^{\lambda}_{\mu N}(t) = \varphi_{\lambda/\mu}(t)/(1-t)$ if $\lambda/\mu \in \mathfrak{h}_N$, $f^{\lambda}_{\mu N}(t) = 0$ otherwise. However as one progresses to examples where $\nu$ consists of two parts, and beyond, such factorization does not occur in any systematic way.

In Appendix \ref{A}, we give two explicit examples of the combinatorial rule \eqref{hall-comb}. The first is an example where $\nu$ consists of one part, and gives rise to a factorized answer; the second is a more complicated example in which $\nu$ has two parts.
}
\end{rmk}

\section{\texorpdfstring{$t$}{t}-Schur polynomials from a rank-one model of fermions}

\subsection{Fermionic Fock space $\mathcal{F}$}

Consider a semi-infinite one-dimensional lattice, with sites labelled by non-negative integers. In a finite configuration of this lattice, each site $i \geq 0$ is either occupied by a particle (occupation number $a_i = 1$) or unoccupied ($a_i = 0$), and there exists $M \in \mathbb{N}$ such that $a_k = 0$ for all $k \geq M$. Such configurations are equivalent to Maya diagrams\footnote{Conventionally, Maya diagrams are infinite in both directions, with a particle or hole at every site of the integer lattice. They satisfy the finiteness condition that sufficiently far to the left/right only particles/holes occur. In this work we label our integer lattice in such a way that all negative sites are occupied by a particle, so we shall suppress the negative half of our Maya diagrams, since it contains no information.}. The Fermionic Fock space $\mathcal{F}$ is obtained by taking linear combinations of all possible Maya diagrams:
\begin{align*}
\mathcal{F}
=
{\rm Span} 
\left\{
\ket{a_0}_0
\otimes
\ket{a_1}_1
\otimes
\cdots
\right\},
\qquad
0 \leq a_i \leq 1,\ \forall\ i \geq 0.
\end{align*}
The dual vector space is defined as
\begin{align*}
\mathcal{F}^{*} = 
{\rm Span} 
\left\{
\bra{a_0}_0
\otimes
\bra{a_1}_1
\otimes
\cdots
\right\},
\qquad
0 \leq a_i \leq 1,\ \forall\ i \geq 0,
\end{align*}
with its action on $\mathcal{F}$ given by
\begin{align*}
\langle a | b \rangle = \prod_{i = 0}^{\infty} \delta_{a_i,b_i},
\quad
\forall\
\bra{a} = \bigotimes_{k = 0}^{\infty} \bra{a_k}_k,
\quad
\ket{b} = \bigotimes_{k = 0}^{\infty} \ket{b_k}_k.
\end{align*}

\subsection{Mapping partitions to states in \texorpdfstring{$\mathcal{F}$}{F}}

Now we consider the well-known mapping between partitions and Maya diagrams. Given a partition 
$\lambda = (\lambda_1,\dots,\lambda_{\ell})$, we define the shifted (strict) partition $\tilde\lambda$ with parts $\tilde\lambda_i = \lambda_i+\ell-i$. We associate a corresponding state $\ket{{\re\vec \lambda}}$ in $\mathcal{F}$:
\begin{align}
\label{maya}
\ket{{\re\vec \lambda}}
=
\bigotimes_{k =0}^{\infty}
\ket{a_k(\lambda)}_k,
\quad\quad
a_k(\lambda)
=
\left\{
\begin{array}{ll}
1, & k \in \tilde{\lambda},
\\ \\
0, & \text{otherwise}.
\end{array}
\right.
\end{align}
\begin{ex}{\rm 
Let $\lambda = (5,3,3,1,0)$. Then $\tilde\lambda = (9,6,5,2,0)$, and we find that
\begin{align*}
\ket{{\re\vec \lambda}}
=
\ket{{\re 1}}_{0}
\otimes
\ket{{\re 0}}_{1}
\otimes
\ket{{\re 1}}_{2}
\otimes
\ket{{\re 0}}_{3}
\otimes
\ket{{\re 0}}_{4}
\otimes
\ket{{\re 1}}_{5}
\otimes
\ket{{\re 1}}_{6}
\otimes
\ket{{\re 0}}_{7}
\otimes
\ket{{\re 0}}_{8}
\otimes
\ket{{\re 1}}_{9}
\otimes
\ket{{\re 0}}_{10}
\otimes
\cdots.
\end{align*}
This mapping has a simple pictorial explanation. Starting from the Young diagram of $\lambda$ rotated clockwise by 45\degree, we assign particles to every edge with positive slope, and holes to every edge with negative slope. The resulting array of particles/holes can then be projected onto the the integer lattice:
\begin{center}
\begin{tikzpicture}[scale=0.6,rotate=-45]
\foreach\x in {0,1,...,10}{
\node[text centered] at (2+0.5*\x,-2.5+0.5*\x) {$\sss \x$};
}
\draw[arrow=1] (1,-2.5) -- (7.5,4);
\draw[dotted,arrow=1] (0,-0.5) -- (1.25,-1.75);
\draw[dotted,arrow=1] (1,0.5) -- (2.25,-0.75);
\draw[dotted,arrow=1] (3,1.5) -- (3.75,0.75);
\draw[dotted,arrow=1] (3,2.5) -- (4.25,1.25);
\draw[dotted,arrow=1] (5,3.5) -- (5.75,2.75);
\draw (0,4) -- (5,4);
\draw (0,3) -- (5,3);
\draw (0,2) -- (3,2);
\draw (0,1) -- (3,1);
\draw (0,0) -- (1,0);
\draw (0,-1) -- (0,4);
\draw (1,0) -- (1,4);
\draw (2,1) -- (2,4);
\draw (3,1) -- (3,4);
\draw (4,3) -- (4,4);
\draw (5,3) -- (5,4);
\rbull{0}{-0.5}{0.09};
\rbull{1}{0.5}{0.09};
\rbull{3}{1.5}{0.09};\rbull{3}{2.5}{0.09};
\rbull{5}{3.5}{0.09};
\rbull{1.5}{-2}{0.09};
\ebull{2}{-1.5}{0.09};
\rbull{2.5}{-1}{0.09};
\ebull{3}{-0.5}{0.09};
\ebull{3.5}{0}{0.09};
\rbull{4}{0.5}{0.09};
\rbull{4.5}{1}{0.09};
\ebull{5}{1.5}{0.09};
\ebull{5.5}{2}{0.09};
\rbull{6}{2.5}{0.09};
\ebull{6.5}{3}{0.09};
\end{tikzpicture}
\end{center}
From here the Maya diagram $\ket{{\re\vec \lambda}}$ is obtained by reading off the resulting binary string, where we make the identifications $\rpart[0.15] \equiv {\re 1}$ and $\hole[0.15] \equiv {\re 0}$.
}
\end{ex}

\subsection{Reverse Maya diagrams}

Let $\ket{{\re\vec \lambda}}$ be the Maya diagram as defined by \eqref{maya}, corresponding with the length-$\ell$ partition $\lambda$, whose largest part is $L$. The reverse Maya diagram $\ket{{\re\cev \lambda}} \in \mathcal{F}$ is defined as
\begin{align*}
\ket{{\re\cev \lambda}}
=
\bigotimes_{k = 0}^{\infty}
\ket{\b{a}_k(\lambda)}_k,
\quad\quad
\b{a}_k(\lambda)
=
a_{\ell+L-1-k}(\lambda),\ 
\forall\ 
0 \leq k \leq \ell + L-1,
\quad\quad
\b{a}_k(\lambda) = 0,\
\text{otherwise}.
\end{align*}
The pictorial interpretation of a reverse Maya diagram is shown below. In analogy with Section \ref{sec:rev}, one first rotates the Young diagram anti-clockwise by 135\degree, before projecting onto the integer lattice:

\begin{center}
\begin{tikzpicture}[scale=0.6,rotate=-45]
\foreach\x in {0,1,...,10}{
\node[text centered] at (4+0.5*\x,-4.5+0.5*\x) {$\sss \x$};
}
\draw[arrow=1] (3,-4.5) -- (9.5,2);
\draw[dotted,arrow=1] (0,-0.5) -- (3.25,-3.75);
\draw[dotted,arrow=1] (1,0.5) -- (4.25,-2.75);
\draw[dotted,arrow=1] (3,1.5) -- (5.75,-1.25);
\draw[dotted,arrow=1] (3,2.5) -- (6.25,-0.75);
\draw[dotted,arrow=1] (5,3.5) -- (7.75,0.75);
\bdark{0}{3}; \bdark{1}{3}; \bdark{2}{3}; \bdark{3}{3}; \bdark{4}{3};
\bdark{0}{2}; \bdark{1}{2}; \bdark{2}{2};
\bdark{0}{1}; \bdark{1}{1}; \bdark{2}{1};
\bdark{0}{0}; 
\draw (3,3) -- (5,3); 
\draw (3,2) -- (5,2);
\draw (1,1) -- (5,1);
\draw (0,0) -- (5,0);
\draw (0,-1) -- (5,-1);
\draw (0,-1) -- (0,0);
\draw (1,-1) -- (1,1); 
\draw (2,-1) -- (2,1);
\draw (3,-1) -- (3,3);
\draw (4,-1) -- (4,3);
\draw (5,-1) -- (5,4);
\rbull{0}{-0.5}{0.09};
\rbull{1}{0.5}{0.09};
\rbull{3}{1.5}{0.09};\rbull{3}{2.5}{0.09};
\rbull{5}{3.5}{0.09};
\rbull{3.5}{-4}{0.09};
\ebull{4}{-3.5}{0.09};
\rbull{4.5}{-3}{0.09};
\ebull{5}{-2.5}{0.09};
\ebull{5.5}{-2}{0.09};
\rbull{6}{-1.5}{0.09};
\rbull{6.5}{-1}{0.09};
\ebull{7}{-0.5}{0.09};
\ebull{7.5}{0}{0.09};
\rbull{8}{0.5}{0.09};
\ebull{8.5}{1}{0.09};
\end{tikzpicture}
\end{center}
It is also clear from this picture that in general $\ket{{\re\cev \lambda}} = \ket{{\re\vec {\b\lambda}}}$, precisely as we saw in Proposition \ref{rev-prop}.

\subsection{A rank-one model of fermions}
\label{model-f}

Introduce fermionic creation and annihilation operators ${\re \psid}$ and ${\re \psi}$, and the 
particle-number operator ${\re N}$, which are given explicitly by $2 \times 2$ matrices:
\begin{align*}
{\re \psid}
=
\begin{pmatrix}
0 & 0
\\
1-t & 0
\end{pmatrix},
\quad
{\re \psi}
=
\begin{pmatrix}
0 & 1
\\
0 & 0
\end{pmatrix},
\quad
{\re N}
=
\begin{pmatrix}
0 & 0
\\
0 & 1
\end{pmatrix},
\end{align*}
and let ${\re \mathfrak{f}}$ denote the algebra generated by these and the identity matrix. They have a natural action on a single site of a Maya diagram, obtained by making the identifications $\ket{{\re 0}} = \begin{pmatrix} 1 \\ 0 \end{pmatrix}$ and $\ket{{\re 1}} = \begin{pmatrix} 0 \\ 1\end{pmatrix}$. Namely, we have
\begin{align*}
{\re \psid} \ket{{\re 0}}
=
(1-t)
\ket{{\re 1}},
\quad
{\re \psid} \ket{{\re 1}}
=
0,
\quad
{\re \psi} \ket{{\re 0}}
=
0,
\quad
{\re \psi} \ket{{\re 1}}
=
\ket{{\re 0}},
\quad
{\re N} \ket{{\re 0}}
=
0,
\quad
{\re N} \ket{{\re 1}}
=
\ket{{\re 1}}.
\end{align*}
The $L$-matrix
\begin{align}
\label{Lmat-f}
L^{*}_a(x|{\re\mathfrak f})
=
\begin{pmatrix}
1 & x{\re \psid}
\\
{\re \psi} & x(-t)^{\re N}
\end{pmatrix}_a
=
\left(
\begin{array}{cc}
\begin{tikzpicture}[scale=0.6]
\rblank{0}{0}
\end{tikzpicture} 
& 
\begin{tikzpicture}[scale=0.6]
\rplus{0}{0}
\end{tikzpicture} 
\\
\begin{tikzpicture}[scale=0.6]
\rminus{0}{0}
\end{tikzpicture} 
& 
\begin{tikzpicture}[scale=0.6]
\rfull{0}{0}
\end{tikzpicture} 
\end{array}
\right)_a
\end{align}
satisfies the intertwining relation 
\begin{align*}
R_{ab}(y/x)
L^{*}_a(x|{\re\mathfrak f})
L^{*}_b(y|{\re\mathfrak f})
=
L^{*}_b(y|{\re\mathfrak f})
L^{*}_a(x|{\re\mathfrak f})
R_{ab}(y/x),
\end{align*}
with $R$-matrix given by
\begin{align}
\label{Rmat-f}
R_{ab}(z)
=
\left(
\begin{array}{cc|cc}
1-tz & 0 & 0 & 0
\\
0 & t(1-z) & 1-t & 0
\\
\hline
0 & (1-t)z & 1-z & 0
\\
0 & 0 & 0 & z-t
\end{array}
\right)_{ab}.
\end{align}
We construct a monodromy matrix by taking a product of $L$-matrices over all sites $i \geq 0$ of the lattice:
\begin{align*}
T^{*}_a(x) 
=
L^{*}_a(x|{\re\mathfrak f_0})
L^{*}_a(x|{\re\mathfrak f_1})
\cdots
=
\prod_{i=0}^{\infty}
L^{*}_a(x|{\re \mathfrak f_i})
:=
\begin{pmatrix}
A(x) & B(x)
\\
C(x) & D(x)
\end{pmatrix}_a.
\end{align*}
Once again, we will be subsequently interested in the operator $A(x) \in {\rm End}(\mathcal{F})$, which can be interpreted as the sum of all possible (semi-infinite) rows of the tiles \eqref{Lmat-f}, whose left-most tile is \begin{tikzpicture}[scale=0.5,baseline=0.1cm] \rblank{0}{0} \end{tikzpicture} or \begin{tikzpicture}[scale=0.5,baseline=0.1cm] \rplus{0}{0} \end{tikzpicture} and such that sufficiently far to the right only \begin{tikzpicture}[scale=0.5,baseline=0.1cm] \rblank{0}{0} \end{tikzpicture} occurs. These operators commute: $[A(x),A(y)] = 0$ for all $x,y$.

We point out that the $R$-matrix \eqref{Rmat-f} can be obtained from the Felderhof model (introduced in \cite{fel}, but see also equation (3.1) of \cite{deg-aku}) by a particular choice of the external fields. The model defined by the $L$-matrix \eqref{Lmat-f} is therefore inherently free-fermionic. Indeed, by writing the entries of \eqref{Lmat-f} explicitly, the $L$-matrix is manifestly a free-fermionic six-vertex model:
\begin{align*}
L^{*}_a(x|{\re\mathfrak f_i})
=
\left(
\begin{array}{cc|cc}
1 & 0 & 0 & 0
\\
0 & 1 & (1-t) x & 0
\\
\hline
0 & 1 & x & 0
\\
0 & 0 & 0 & -tx
\end{array}
\right)_{ai}.
\end{align*}

\subsection{\texorpdfstring{$t$}{t}-Schur polynomials \texorpdfstring{$S_{\lambda}(x;t)$}{S lambda(x;t)}}

Macdonald defined a $t$-generalization of Schur polynomials, which we call $t$-Schur polynomials, denoted $S_{\lambda}(x;t)$ \cite{mac}. They can be expressed in Jacobi--Trudi form:
\begin{align}
\label{JT}
S_{\lambda}(x;t)
=
\det_{1 \leq i,j \leq \ell}
\Big( q_{\lambda_i-i+j}(x;t) \Big),
\end{align}
where the matrix entries are polynomials given by the generating series
\begin{align*}
\sum_{k} q_k(x;t) y^k
=
\prod_{i}
\left(
\frac{1-t x_i y}{1-x_i y}
\right).
\end{align*}
The $t$-Schur polynomials $S_{\lambda}(x;t)$ degenerate to Schur polynomials $s_{\lambda}(x)$ at $t=0$, which is easily from the Jacobi--Trudi formula for the latter and the fact that $q_k(x;0) = h_k(x)$, the $k$-th complete symmetric function. Furthermore they satisfy the Cauchy identity
\begin{align}
\label{cauchy-Ss}
\sum_{\lambda}
S_{\lambda}(x;t)
s_{\lambda}(y)
=
\prod_{i,j}
\left(
\frac{1-t x_i y_j}{1-x_i y_j}
\right)
\end{align}
which in turn leads to the duality relation $\langle S_{\lambda}, s_{\mu} \rangle = 
\delta_{\lambda,\mu}$ with respect to the Hall inner product \cite{mac}. Since we wish to construct them via a transfer matrix approach, it is convenient to derive a branching rule for the $t$-Schur polynomials. 
\begin{lem}{\rm 
Let $x$ denote a set of variables $(x_1,\dots,x_n)$, and $z$ a single variable. The $t$-Schur polynomials satisfy the relation
\begin{align}
\label{branch-S}
S_{\lambda}(x,z;t)
=
\sum_{\mu: \lambda/\mu \in \mathfrak{v}}
\
\sum_{\nu: \mu/\nu \in \mathfrak{h}}
(-t)^{|\lambda-\mu|}
z^{|\lambda-\nu|}
S_{\nu}(x;t).
\end{align}
In terms of Young diagrams, we note that the conditions $\mu/\nu \in \mathfrak{h}$ and 
$\lambda/\mu \in \mathfrak{v}$ imply that the skew diagram $\lambda/\nu$ forms a {\it border strip}.
}
\end{lem}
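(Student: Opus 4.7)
The plan is to reduce the identity to a coefficient extraction via the Cauchy formula \eqref{cauchy-Ss}, then apply classical Pieri rules to handle the factor involving $z$. First I would apply \eqref{cauchy-Ss} in two ways---once with the full variable set $(x,z)$ on the left, and once with only $x$---and divide the resulting identities to obtain
\begin{align*}
\sum_\lambda S_\lambda(x,z;t)\,s_\lambda(y) = \Bigl(\sum_\nu S_\nu(x;t)\,s_\nu(y)\Bigr) \cdot \prod_j \frac{1-tzy_j}{1-zy_j}.
\end{align*}
The rightmost factor is precisely the generating function of the $q_k$ in the auxiliary variable $z$, so it equals $\sum_{k\geq 0} z^k q_k(y;t)$. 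Since the Schur basis $\{s_\lambda(y)\}$ is linearly independent in $y$, matching coefficients of $s_\lambda(y)$ reduces \eqref{branch-S} to computing the Schur expansion of $q_k(y;t)\,s_\nu(y)$ for every $k\geq 0$.

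For that expansion, I would split $q_k$ into elementary and complete symmetric functions. Writing
\begin{align*}
\prod_j \frac{1-tzy_j}{1-zy_j} = \Bigl(\sum_a (-t)^a z^a e_a(y)\Bigr)\Bigl(\sum_b z^b h_b(y)\Bigr)
\end{align*}
yields $q_k(y;t) = \sum_{a+b=k}(-t)^a e_a(y)\,h_b(y)$. Now I would apply the Pieri rules $h_b\,s_\nu = \sum_{\mu/\nu\in\mathfrak{h}_b} s_\mu$ and $e_a\,s_\mu = \sum_{\lambda/\mu\in\mathfrak{v}_a} s_\lambda$ consecutively, obtaining
\begin{align*}
\bigl[s_\lambda(y)\bigr]\,q_k(y;t)\,s_\nu(y) = \sum_{\substack{\nu\subseteq\mu\subseteq\lambda \\ \mu/\nu\in\mathfrak{h},\ \lambda/\mu\in\mathfrak{v} \\ |\lambda-\nu|=k}} (-t)^{|\lambda-\mu|}.
\end{align*}
Substituting back, the factor $z^k$ collapses to $z^{|\lambda-\nu|}$, and swapping the order of summation so that $\mu$ is summed before $\nu$ produces exactly the right-hand side of \eqref{branch-S}.

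The main obstacle is really a bookkeeping one: ensuring that the composite Pieri expansion $e_a \cdot (h_b \cdot s_\nu)$ counts each admissible pair $(\mu,\nu)$ with the correct multiplicity. Fixing the order of application---first $h_b$, which singles out the intermediate $\mu$, then $e_a$ to arrive at $\lambda$---removes any ambiguity and makes clear that each $\mu$ with $\mu/\nu\in\mathfrak{h}$ and $\lambda/\mu\in\mathfrak{v}$ appears exactly once with weight $(-t)^{|\lambda-\mu|}$. The border-strip observation noted in the lemma then follows for free: a decomposition $\lambda/\nu = (\lambda/\mu)\sqcup(\mu/\nu)$ with $\lambda/\mu$ vertical and $\mu/\nu$ horizontal exists only when $\lambda/\nu$ contains no $2\times 2$ block. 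No input from the fermionic model of Section~\ref{model-f} is required; the proof is essentially Cauchy duality combined with the classical Pieri rules.
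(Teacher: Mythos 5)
Your proposal is correct and follows essentially the same route as the paper: both expand the $z$-dependent Cauchy factor as $\sum_{a,b}(-t)^a z^{a+b}e_a(y)h_b(y)$, apply the horizontal and vertical strip Pieri rules to $s_\nu(y)$, and then compare coefficients of $s_\lambda(y)$ using the Cauchy identity \eqref{cauchy-Ss} for the full variable set $(x,z)$. The only cosmetic difference is that you phrase the first step as dividing two Cauchy identities rather than expanding the product directly, which amounts to the same computation.
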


\begin{proof}
Using the Cauchy identity \eqref{cauchy-Ss}, and the generating series expressions of the complete symmetric and elementary symmetric polynomials, we can write
\begin{align*}
\prod_{i}
\frac{1-ty_i z}{1-y_i z}
\prod_{i,j}
\frac{1-t x_i y_j}
{1-x_i y_j}
=
\sum_{k,l}
(-t)^l
z^{k+l}
h_k(y)
e_l(y)
\sum_{\nu}
S_{\nu}(x;t)
s_{\nu}(y).
\end{align*}
Thanks to the horizontal and vertical strip Pieri identities for the Schur polynomials, the previous equation becomes
\begin{align*}
\prod_{i}
\frac{1-ty_i z}{1-y_i z}
\prod_{i,j}
\frac{1-t x_i y_j}
{1-x_i y_j}
&=
\sum_{k,l}
(-t)^l
z^{k+l}
\sum_{\mu: \mu/\nu \in \mathfrak{h}_k}
\sum_{\lambda: \lambda/\mu \in \mathfrak{v}_l}
S_{\nu}(x;t)
s_{\lambda}(y)
\\
&=
\sum_{\lambda}
s_{\lambda}(y)
\sum_{\mu:\lambda/\mu \in \mathfrak{v}}
\sum_{\nu: \mu/\nu \in \mathfrak{h}}
z^{|\lambda-\nu|}
(-t)^{|\lambda-\mu|}
S_{\nu}(x;t).
\end{align*}
Finally, we note that the starting product itself is equal to 
$\sum_{\lambda} S_{\lambda}(x,z;t) s_{\lambda}(y)$, again by the Cauchy identity \eqref{cauchy-Ss}. Comparing coefficients $s_{\lambda}(y)$ on both sides of the resulting equation, we obtain \eqref{branch-S}.
\end{proof}

\subsection{Skew \texorpdfstring{$t$}{t}-Schur polynomials \texorpdfstring{$S_{\lambda/\mu}(x;t)$}{S lambda/mu(x;t)}}

Next, we define a skew version of the $t$-Schur polynomial. One natural way to do this\footnote{Generally speaking, a skew polynomial is normally defined via an orthogonality relation of the form \eqref{ortho-rel}. Later on, we will see that the definition \eqref{S-branch} is equivalent to the orthogonality relation $\langle S_{\lambda}, s_{\mu} P_{\nu} \rangle = \langle S_{\lambda/\mu}, P_{\nu} \rangle$.} is via the expansion of $S_{\lambda}(x,y;t)$ over the basis $S_{\mu}(y;t)$:
\begin{align}
\label{S-branch}
S_{\lambda}(x_1,\dots,x_n,y_1,\dots,y_m;t)
:=
\sum_{\mu}
S_{\lambda/\mu}(x_1,\dots,x_n;t)
S_{\mu}(y_1,\dots,y_m;t)
\end{align}
for any partition $\lambda$ and any two sets of variables $(x_1,\dots,x_n)$, $(y_1,\dots,y_m)$, and where the sum on the right hand side is taken over all partitions $\mu$. This allows $S_{\lambda/\mu}(x_1,\dots,x_n;t)$ to be built up recursively, once given $S_{\lambda/\mu}(z;t)$ in a single variable $z$. Comparing with \eqref{branch-S}, we see that
\begin{align}
\label{one-var-S}
S_{\lambda/\nu}(z;t)
=
z^{|\lambda-\nu|}
\sum_{\mu: \mu/\nu \in \mathfrak{h}, \lambda/\mu \in \mathfrak{v}}
(-t)^{|\lambda-\mu|}.
\end{align}

\subsection{Lattice expression for \texorpdfstring{$S_{\lambda/\mu}(x;t)$}{S lambda/mu(x;t)}}

\begin{lem}
\label{lem:S-exp}
{\rm 
The skew $t$-Schur polynomial $S_{\lambda/\mu}$ is given by the following expectation value:
\begin{align}
\label{lattice-S}
S_{\lambda/\mu}(x_1,\dots,x_n;t)
=
\bra{{\re\vec \mu}}
A(x_1)
\dots
A(x_n)
\ket{{\re\vec\lambda}}.
\end{align}
%
}
\end{lem}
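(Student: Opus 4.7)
The plan is to mirror the proof of the skew Hall--Littlewood identity \eqref{skew-HL}. First I would insert a complete set of states $\sum_{\rho}\ket{{\re\vec\rho}}\bra{{\re\vec\rho}}$ between consecutive $A$-operators in the product; since the branching rule \eqref{S-branch} for $t$-Schur polynomials is structurally identical to the Hall--Littlewood branching rule, this immediately reduces the claim to the single-variable identity
\[
S_{\lambda/\nu}(z;t) \;=\; \bra{{\re\vec\nu}}\, A(z)\, \ket{{\re\vec\lambda}}
\]
for all partitions $\lambda,\nu$.

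The bulk of the work is then to establish this single-variable case by directly enumerating the lattice configurations of one row. Because the auxiliary index must be $1$ (no red line) at both horizontal ends, any red line in the row decomposes into disjoint ``segments'', each consisting of an $L_{12}$ (creation) tile, zero or more $L_{22}$ (through) tiles, and an $L_{21}$ (annihilation) tile reading left to right, separated by $L_{11}$ (empty) tiles. Using the Maya diagram map \eqref{maya}, a segment spanning positions $i<j$ encodes moving a single particle from position $j$ (occupied in $\lambda$'s Maya) to position $i$ (occupied in $\nu$'s Maya), which corresponds combinatorially to removing a border strip of size $j-i$ from $\lambda$. Lattice configurations are thus in bijection with decompositions of $\lambda/\nu$ into a disjoint union of border strips.

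Reading off Boltzmann weights from \eqref{Lmat-f}, each segment then contributes $z^{j-i}(1-t)(-t)^{h}$, where $h$ is the number of Maya-occupied sites strictly between $i$ and $j$, which equals the height of the associated border strip. To recover the right-hand side of \eqref{one-var-S}, I would expand $(1-t)(-t)^{h} = (-t)^{h} + (-t)^{h+1}$ and set up a bijection between the resulting terms and the admissible intermediate partitions $\mu$ with $\mu/\nu\in\mathfrak{h}$ and $\lambda/\mu\in\mathfrak{v}$: for each connected border strip component, the two choices should correspond to the two admissible horizontal--vertical decompositions of that component, differing by whether a specific corner cell is assigned to $\mu/\nu$ or to $\lambda/\mu$. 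Multiplying across disjoint segments would then reproduce the full sum in \eqref{one-var-S} term by term.

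The hard part will be verifying precisely this two-to-one correspondence between signed segment contributions and intermediate $\mu$'s: that each connected border strip in $\lambda/\nu$ admits exactly two admissible horizontal--vertical decompositions, and that their weights $(-t)^{h}$ and $(-t)^{h+1}$ agree with $(-t)^{|\lambda-\mu|}$ under the bijection. I expect a short local case analysis on the Maya diagram (using the identification height = number of intermediate particles) to handle a single segment, but organising the bookkeeping so that the correspondence extends multiplicatively across disjoint segments, and checking that no spurious $\mu$'s arise, will be the main obstacle.
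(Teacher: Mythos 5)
Your proposal is correct and takes essentially the same route as the paper: both insert complete sets of states and use the branching rule \eqref{S-branch} to reduce \eqref{lattice-S} to the one-variable identity \eqref{onerow-skew-S}, which is then verified by analysing the (unique) single-row configuration. Your expansion of each segment weight $z^{j-i}(1-t)(-t)^{h}$ as $(-t)^{h}+(-t)^{h+1}$, with the two terms indexing the two admissible horizontal--vertical splittings of the corresponding ribbon, is exactly the bookkeeping the paper performs pictorially by re-drawing the single row as a two-row lattice (horizontal strip below with weight $z$ per step, vertical strip above with weight $-tz$ per step), so the two verifications of the one-variable case amount to the same bijection.
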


\begin{proof}
Inserting a complete set of states in the right hand side of \eqref{lattice-S}, we have
\begin{align*}
\bra{{\re\vec \mu}}
A(x_1)
\dots
A(x_n)
\ket{{\re\vec \lambda}}
&=
\sum_{\nu}
\bra{{\re\vec \mu}}
A(x_1)
\dots
A(x_{n-1})
\ket{{\re\vec \nu}}
\bra{{\re\vec \nu}}
A(x_n)
\ket{{\re\vec \lambda}}.
\end{align*}
To complete the proof, it is sufficient to show that
\begin{align}
\label{onerow-skew-S}
\bra{{\re\vec \nu}}
A(z)
\ket{{\re\vec \lambda}}
=
S_{\lambda/\nu}(z;t).
\end{align}
We will illustrate this on the example $\lambda = (4,2,2,0)$, $\nu = (3,1,0,0)$, for which we have
\begin{align*}
\bra{{\re(0,0,1,3)}} A(z) \ket{{\re(0,2,2,4)}}
=
\begin{tikzpicture}[scale=0.8,baseline=0.3cm]
\rblank[1]{1}{0}[1]
\rplus[0]{2}{0}[1]
\rfull[0]{3}{0}[0]
\rfull[1]{4}{0}[1]
\rminus[1]{5}{0}[0]
\rblank[0]{6}{0}[0]
\rplus[0]{7}{0}[1]
\rminus[1]{8}{0}[0]
\rblank[0]{9}{0}[0]
\node[text centered] at (1.5,-0.5) {\re \tiny $a_0(\nu)$};
\node[text centered] at (2.5,-0.5) {\re \tiny $a_1(\nu)$};
\node[text centered] at (3.5,-0.5) {\re \tiny $a_2(\nu)$};
\node[text centered] at (4.5,-0.5) {\re \tiny $a_3(\nu)$};
\node[text centered] at (5.5,-0.5) {\re \tiny $a_4(\nu)$};
\node[text centered] at (6.5,-0.5) {\re \tiny $a_5(\nu)$};
\node[text centered] at (7.5,-0.5) {\re \tiny $a_6(\nu)$};
\node[text centered] at (8.5,-0.5) {\re \tiny $a_7(\nu)$};
\node[text centered] at (9.5,-0.5) {\re \tiny $\cdots$};
\node[text centered] at (1.5,1.5) {\re \tiny $a_0(\lambda)$};
\node[text centered] at (2.5,1.5) {\re \tiny $a_1(\lambda)$};
\node[text centered] at (3.5,1.5) {\re \tiny $a_2(\lambda)$};
\node[text centered] at (4.5,1.5) {\re \tiny $a_3(\lambda)$};
\node[text centered] at (5.5,1.5) {\re \tiny $a_4(\lambda)$};
\node[text centered] at (6.5,1.5) {\re \tiny $a_5(\lambda)$};
\node[text centered] at (7.5,1.5) {\re \tiny $a_6(\lambda)$};
\node[text centered] at (8.5,1.5) {\re \tiny $a_7(\lambda)$};
\node[text centered] at (9.5,1.5) {\re \tiny $\cdots$};
\end{tikzpicture}
=
-z^4 t (1-t)^2.
\end{align*}
We find that we can replace the unique configuration of our single row with multiple configurations of a double row lattice, as shown below: 
\begin{align*}
\bra{{\re(0,0,1,3)}} A(z) \ket{{\re(0,2,2,4)}}
&=
\begin{tikzpicture}[scale=0.6,baseline=0.3cm]
\rblank[1]{1}{1}
\rblank{2}{1}
\rblank{3}{1}
\rblank[1]{4}{1}
\rblank[1]{5}{1}
\rblank{6}{1}
\rblank{7}{1}
\rblank[1]{8}{1}
\rblank{9}{1}
\rblank{1}{0}[1]
\rblank{2}{0}[1]
\rblank{3}{0}[0]
\rblank{4}{0}[1]
\rblank{5}{0}[0]
\rblank{6}{0}[0]
\rblank{7}{0}[1]
\rblank{8}{0}[0]
\rblank{9}{0}[0]
\draw[draw=black!20!red,ultra thick,dotted] (1.5,0) -- (1.5,2);
\draw[draw=black!20!red,ultra thick,dotted] (2.5,0) -- (2.5,0.5) -- (3.5,0.5) -- (3.5,1) -- (4.5,2);
\draw[draw=black!20!red,ultra thick,dotted] (4.5,0) -- (4.5,0.5) -- (5.5,0.5) -- (5.5,2);
\draw[draw=black!20!red,ultra thick,dotted] (7.5,0) -- (7.5,0.5) -- (8.5,0.5) -- (8.5,2);
\node[text centered] at (5.5,-0.5) {$-z^4 t$};
\end{tikzpicture}
+
\begin{tikzpicture}[scale=0.6,baseline=0.3cm]
\rblank[1]{1}{1}
\rblank{2}{1}
\rblank{3}{1}
\rblank[1]{4}{1}
\rblank[1]{5}{1}
\rblank{6}{1}
\rblank{7}{1}
\rblank[1]{8}{1}
\rblank{9}{1}
\rblank{1}{0}[1]
\rblank{2}{0}[1]
\rblank{3}{0}[0]
\rblank{4}{0}[1]
\rblank{5}{0}[0]
\rblank{6}{0}[0]
\rblank{7}{0}[1]
\rblank{8}{0}[0]
\rblank{9}{0}[0]
\draw[draw=black!20!red,ultra thick,dotted] (1.5,0) -- (1.5,2);
\draw[draw=black!20!red,ultra thick,dotted] (2.5,0) -- (2.5,0.5) -- (3.5,0.5) -- (3.5,1) -- (4.5,2);
\draw[draw=black!20!red,ultra thick,dotted] (4.5,0) -- (4.5,0.5) -- (5.5,0.5) -- (5.5,2);
\draw[draw=black!20!red,ultra thick,dotted] (7.5,0) -- (7.5,1) -- (8.5,2);
\node[text centered] at (5.5,-0.5) {$z^4 t^2$};
\end{tikzpicture}
\\
&+
\begin{tikzpicture}[scale=0.6,baseline=0.3cm]
\rblank[1]{1}{1}
\rblank{2}{1}
\rblank{3}{1}
\rblank[1]{4}{1}
\rblank[1]{5}{1}
\rblank{6}{1}
\rblank{7}{1}
\rblank[1]{8}{1}
\rblank{9}{1}
\rblank{1}{0}[1]
\rblank{2}{0}[1]
\rblank{3}{0}[0]
\rblank{4}{0}[1]
\rblank{5}{0}[0]
\rblank{6}{0}[0]
\rblank{7}{0}[1]
\rblank{8}{0}[0]
\rblank{9}{0}[0]
\draw[draw=black!20!red,ultra thick,dotted] (1.5,0) -- (1.5,2);
\draw[draw=black!20!red,ultra thick,dotted] (2.5,0) -- (2.5,0.5) -- (3.5,0.5) -- (3.5,1) -- (4.5,2);
\draw[draw=black!20!red,ultra thick,dotted] (4.5,0) -- (4.5,1) -- (5.5,2);
\draw[draw=black!20!red,ultra thick,dotted] (7.5,0) -- (7.5,0.5) -- (8.5,0.5) -- (8.5,2);
\node[text centered] at (5.5,-0.5) {$z^4 t^2$};
\end{tikzpicture}
+
\begin{tikzpicture}[scale=0.6,baseline=0.3cm]
\rblank[1]{1}{1}
\rblank{2}{1}
\rblank{3}{1}
\rblank[1]{4}{1}
\rblank[1]{5}{1}
\rblank{6}{1}
\rblank{7}{1}
\rblank[1]{8}{1}
\rblank{9}{1}
\rblank{1}{0}[1]
\rblank{2}{0}[1]
\rblank{3}{0}[0]
\rblank{4}{0}[1]
\rblank{5}{0}[0]
\rblank{6}{0}[0]
\rblank{7}{0}[1]
\rblank{8}{0}[0]
\rblank{9}{0}[0]
\draw[draw=black!20!red,ultra thick,dotted] (1.5,0) -- (1.5,2);
\draw[draw=black!20!red,ultra thick,dotted] (2.5,0) -- (2.5,0.5) -- (3.5,0.5) -- (3.5,1) -- (4.5,2);
\draw[draw=black!20!red,ultra thick,dotted] (4.5,0) -- (4.5,1) -- (5.5,2);
\draw[draw=black!20!red,ultra thick,dotted] (7.5,0) -- (7.5,1) -- (8.5,2);
\node[text centered] at (5.5,-0.5) {$-z^4 t^3$};
\end{tikzpicture}
\end{align*}
The rules of this transformation are as follows: {\bf 1.} The lattice paths in the bottom row must form horizontal strips, while those in the top row form vertical strips; {\bf 2.} Each right step in the bottom row comes with a weight of $z$, those in the top row come with a weight of $-tz$. The sum \eqref{one-var-S} is recovered immediately. It is straightforward to extend this argument and show that \eqref{onerow-skew-S} holds for arbitrary partitions. 

\end{proof}

\begin{ex}{\rm 
We give an independent proof of \eqref{onerow-skew-S} when $\nu=0$. A quick manual calculation shows that
\begin{align*}
\bra{{\re\vec 0}}
A(z)
\ket{{\re\vec \lambda}}
=
z^{|\lambda|}
\times
\left\{
\begin{array}{ll}
1,
&
\quad
\lambda = 0,
\\ \\
(1-t)(-t)^{m},
&
\quad
\lambda = (p,\underbrace{1,\dots,1}_{m}),\ p \geq 1,
\\
0,
&
\quad
\text{otherwise.}
\end{array}
\right.
\end{align*}
This can be compared directly with the determinant formula \eqref{JT}. The case $\lambda=0$ is obvious, and when $\lambda = (p,\underbrace{1,\dots,1}_{m})$ with $p \geq 1$ we obtain the determinant
\begin{align*}
S_{\lambda}(z;t)
=
\left|
\begin{array}{ccccc}
q_p & q_{p+1} & \cdots & \cdots & q_{p+m}
\\
q_0 & q_1 & \cdots & \cdots & q_{m}
\\
0 & q_0 & q_1 & \cdots & q_{m-1}
\\
\vdots & \ddots & \ddots & \ddots & \vdots
\\
0 & \cdots & 0 & q_0 & q_1
\end{array}
\right|
=
\left|
\begin{array}{lllll}
(1-t) z^p  & (1-t) z^{p+1} & \cdots & \cdots & (1-t) z^{p+m} 
\\
1 & (1-t)z & \cdots & \cdots & (1-t)z^m
\\
0 & 1 & (1-t)z & \cdots & (1-t)z^{m-1}
\\
\vdots & \ddots & \ddots & \ddots & \vdots
\\
0 & \cdots & 0 & 1 & (1-t) z
\end{array}
\right|,
\end{align*}
which factorizes as $z^{p+m} (1-t) (-t)^m$, after basic row operations which put it in upper-triangular form. Finally, in the case where 
$\lambda$ does not form a border strip, the first two parts of $\lambda$ will be greater than 1. This leads to the first two rows of the determinant \eqref{JT} being proportional, and it vanishes.
}
\end{ex}

\begin{ex}{\rm
Consider $\lambda=(2,1)$, and compute $S_{\lambda}(x_1,x_2;t)$ using equation \eqref{lattice-S}. We find that
\begin{align*}
S_{\lambda}(x_1,x_2;t)
&=
\begin{tikzpicture}[scale=0.6,baseline=0.75cm]
\rblank[0]{1}{1}[0]
\rblank[1]{2}{1}[1]
\rblank[0]{3}{1}[0]
\rblank[1]{4}{1}[1]
\rplus{1}{0}[1]
\rfull{2}{0}[1]
\rfull{3}{0}[0]
\rminus{4}{0}[0]
\end{tikzpicture}
+
\begin{tikzpicture}[scale=0.6,baseline=0.75cm]
\rplus[0]{1}{1}[1]
\rminus[1]{2}{1}[0]
\rblank[0]{3}{1}[0]
\rblank[1]{4}{1}[1]
\rblank{1}{0}[1]
\rplus{2}{0}[1]
\rfull{3}{0}[0]
\rminus{4}{0}[0]
\end{tikzpicture}
+
\begin{tikzpicture}[scale=0.6,baseline=0.75cm]
\rblank[0]{1}{1}[0]
\rblank[1]{2}{1}[1]
\rplus[0]{3}{1}[1]
\rminus[1]{4}{1}[0]
\rplus{1}{0}[1]
\rfull{2}{0}[1]
\rminus{3}{0}[0]
\rblank{4}{0}[0]
\end{tikzpicture}
+
\begin{tikzpicture}[scale=0.6,baseline=0.75cm]
\rplus[0]{1}{1}[1]
\rminus[1]{2}{1}[0]
\rplus[0]{3}{1}[1]
\rminus[1]{4}{1}[0]
\rblank{1}{0}[1]
\rplus{2}{0}[1]
\rminus{3}{0}[0]
\rblank{4}{0}[0]
\end{tikzpicture}
+
\begin{tikzpicture}[scale=0.6,baseline=0.75cm]
\foreach\x in {1,...,2}{\node at (5.5,\x-0.5) {$x_\x$};}
\rplus[0]{1}{1}[1]
\rfull[1]{2}{1}[1]
\rfull[0]{3}{1}[0]
\rminus[1]{4}{1}[0]
\rblank{1}{0}[1]
\rblank{2}{0}[1]
\rblank{3}{0}[0]
\rblank{4}{0}[0]
\end{tikzpicture}
\end{align*}
where we have truncated the infinite lattice to show just the non-trivial part of the partition function. 
The five terms sum to give
\begin{align*}
S_{\lambda}(x_1,x_2;t)
&=
-t(1-t) x_1^3
+
(1-t)^2 x_1^2 x_2 
- 
t(1-t)^2 x_1^2 x_2
+
(1-t)^3 x_1 x_2^2
-
t(1-t) x_2^3
\\
&=
(1-t)\Big(-t x_1^3 + (1-t)^2 x_1^2 x_2 + (1-t)^2 x_1 x_2^2 - t x_2^3\Big).
\end{align*}
This agrees with the determinant calculation
\begin{align*}
S_{\lambda}(x_1,x_2;t)
=
\left|
\begin{array}{cc}
q_2 & q_3
\\
q_0 & q_1
\end{array}
\right|
=
q_2(x_1,x_2;t)
q_1(x_1,x_2;t)
-
q_3(x_1,x_2;t).
\end{align*}
}
\end{ex}

\section{Inverse Kostka polynomials from a rank-two fermion-boson model}

\subsection{Model with one fermion and one boson}
\label{fer-bos-model}

We now consider a model consisting of one fermion and one boson. The entries of the $L$-matrix act in ${\re \mathcal{F}} \otimes \mathcal{B}$, the tensor product of a fermionic and bosonic space, and are given by
\begin{align}
\label{Lmat-bf}
L_a(x|{\re \mathfrak{f}} \otimes \mathfrak{b})
=
\begin{pmatrix}
x[1 \otimes 1] & 1\otimes \phid & {\re \psid} \otimes 1
\\ \\ 
x[t^{\re N} \otimes \phi] & t^{\re N} \otimes 1 & 0
\\ \\
x[{\re \psi} \otimes 1] & {\re \psi} \otimes \phid & (-t)^{\re N} \otimes 1
\end{pmatrix}_a
=
\left(
\begin{array}{ccc}
\begin{tikzpicture}[scale=0.6,baseline=0.25cm]
\rblank{0}{0}
\bblank{1}{0}
\end{tikzpicture} 
& 
\begin{tikzpicture}[scale=0.6,baseline=0.25cm]
\rblank{0}{0}
\bplus{1}{0}
\end{tikzpicture} 
&
\begin{tikzpicture}[scale=0.6,baseline=0.25cm]
\rplus{0}{0}
\darkrfull{1}{0}
\end{tikzpicture} 
\\ \\
\begin{tikzpicture}[scale=0.6,baseline=0.25cm]
\lightbfull{0}{0}
\bminus{1}{0}
\end{tikzpicture} 
& 
\begin{tikzpicture}[scale=0.6,baseline=0.25cm]
\lightbfull{0}{0}
\bfull{1}{0}
\end{tikzpicture} 
&
0
\\ \\
\begin{tikzpicture}[scale=0.6,baseline=0.25cm]
\rminus{0}{0}
\bblank{1}{0}
\end{tikzpicture} 
& 
\begin{tikzpicture}[scale=0.6,baseline=0.25cm]
\rminus{0}{0}
\bplus{1}{0}
\end{tikzpicture} 
&
\begin{tikzpicture}[scale=0.6,baseline=0.25cm]
\rfull{0}{0}
\darkrfull{1}{0}
\end{tikzpicture} 
\end{array}
\right)_a
\end{align}
The $L$-matrix satisfies the intertwining equation
\begin{align}
\label{int-bf}
R_{ab}(x/y)
L_a(x| {\re \mathfrak{f}} \otimes \mathfrak{b})
L_b(y| {\re \mathfrak{f}} \otimes \mathfrak{b})
=
L_b(y| {\re \mathfrak{f}} \otimes \mathfrak{b})
L_a(x| {\re \mathfrak{f}} \otimes \mathfrak{b})
R_{ab}(x/y)
\end{align}
with respect to the $R$-matrix
\begin{align*}
R_{ab}(z)
=
\left(
\begin{array}{ccc|ccc|ccc}
1-tz & 0 & 0 & 0 & 0 & 0 & 0 & 0 & 0
\\
0 & t(1-z) & 0 & 1-t & 0 & 0 & 0 & 0 & 0
\\
0 & 0 & t(1-z) & 0 & 0 & 0 & 1-t & 0 & 0
\\
\hline
0 & (1-t)z & 0 & 1-z & 0 & 0 & 0 & 0 & 0
\\
0 & 0 & 0 & 0 & 1-tz & 0 & 0 & 0 & 0
\\
0 & 0 & 0 & 0 & 0 & t(1-z) & 0 & (1-t)z & 0
\\
\hline
0 & 0 & (1-t)z & 0 & 0 & 0 & 1-z & 0 & 0
\\
0 & 0 & 0 & 0 & 0 & 1-t & 0 & 1-z & 0
\\
0 & 0 & 0 & 0 & 0 & 0 & 0 & 0 & z-t
\end{array}
\right)_{ab}.
\end{align*}
By considering only the entries (1,1), (1,3), (3,1) and (3,3) of the $L$-matrix \eqref{Lmat-bf}, it reduces to the model \eqref{Lmat-f}. We construct a monodromy matrix in the same way as in Section \ref{model-f}, namely
\begin{align}
\label{mon-bf}
T_a(x) 
=
L_a(x|{\re\mathfrak f_0} \otimes {\mathfrak b_0})
L_a(x|{\re\mathfrak f_1} \otimes {\mathfrak b_1})
\cdots
=
\prod_{i=0}^{\infty}
L_a(x|{\re \mathfrak f_i} \otimes {\mathfrak b_i})
:=
\begin{pmatrix}
\Te(x) & \star & \star
\\
\Tb(x) & \star & \star
\\
\Tr(x) & \star & \star
\end{pmatrix}_a,
\end{align}
where we only indicate the operators in the first column of the monodromy matrix, since these are the only one we need subsequently. From \eqref{int-bf}, we see that they generate a sub-algebra of the Yang--Baxter algebra, whose relations include
\begin{align}
\label{T-be}
(y-x)
\Tb(x) \Te(y)
+
(1-t) x
\Te(x) \Tb(y)
=
(y-tx)
\Te(y) \Tb(x),
\\
\label{T-re}
(y-x)
\Tr(x) \Te(y)
+
(1-t) x
\Te(x) \Tr(y)
=
(y-tx)
\Te(y) \Tr(x),
\\
\label{T-ee}
\Tb(x) \Tb(y)
=
\Tb(y) \Tb(x),
\quad
(x - ty)
\Tr(x) \Tr(y)
=
(y - tx)
\Tr(y) \Tr(x).
\end{align}

\subsection{Definition of Kostka polynomials and their inverses}

The Kostka polynomials express Schur polynomials in the Hall--Littlewood basis of the ring of symmetric functions:
\begin{align*}
s_{\lambda}(x) = \sum_{\nu} K^{\lambda}_{\nu}(t) P_{\nu}(x;t).
\end{align*}
While it is not obvious from their definition, $K^{\lambda}_{\nu}(t) \in \mathbb{N}[t]$ for all $\lambda,\nu$. Exploiting this positivity, there are many combinatorial formulae for $K^{\lambda}_{\nu}(t)$, of which the most well-known are due to Lascoux and Sch\"utzenberger \cite{las-sch} and Kirillov and Reshetikhin \cite{kir-res}. In the present work we will not have more to say about the Kostka polynomials, but will consider rather their inverses: 
\begin{align}
\label{invK}
P_{\nu}(x;t)
=
\sum_{\lambda}
\b{K}^{\lambda}_{\nu}(t)
s_{\lambda}(x).
\end{align}
Here it is clear that $\b{K}^{\lambda}_{\nu}(t) \in \mathbb{Z}[t]$, since $P_{\nu}(x;t) \in \mathbb{Z}[x,t]$ and the Schur polynomials form a $\mathbb{Z}$-basis for the ring of symmetric functions, but the coefficients of $\b{K}^{\lambda}_{\nu}(t)$ are no longer positive. Nevertheless, the second main goal of this paper is to derive a combinatorial expression for these polynomials. Before proceeding to this result, we first generalize them.

\subsection{Generalized inverse Kostka polynomials}
\label{gen-inv-kost}

We start from the branching formula \eqref{S-branch} for the $t$-Schur polynomials:
\begin{align*}
S_{\lambda}(x,y;t)
=
\sum_{\mu}
S_{\lambda/\mu}(x;t)
S_{\mu}(y;t).
\end{align*}
Multiplying both sides by $s_{\lambda}(z)$ and summing over $\lambda$, from \eqref{cauchy-Ss} we find that
\begin{align*}
\prod_{i,j}
\frac{1-tx_i z_j}{1-x_i z_j}
\prod_{i,j}
\frac{1-ty_i z_j}{1-y_i z_j}
=
\sum_{\lambda,\mu}
S_{\lambda/\mu}(x;t)
S_{\mu}(y;t)
s_{\lambda}(z),
\end{align*}
or alternatively, re-expanding the left hand side in a different way,
\begin{align*}
\sum_{\mu,\nu}
Q_{\nu}(x;t)
P_{\nu}(z;t)
S_{\mu}(y;t)
s_{\mu}(z)
=
\sum_{\lambda,\mu}
S_{\lambda/\mu}(x;t)
S_{\mu}(y;t)
s_{\lambda}(z).
\end{align*}
Since the $t$-Schur polynomials are a $\mathbb{Z}$-basis for the ring of symmetric functions \cite{mac}, we can extract coefficients of $S_{\mu}(y;t)$ from both sides, giving
\begin{align}
\label{...}
\sum_{\nu}
Q_{\nu}(x;t)
P_{\nu}(z;t)
s_{\mu}(z)
=
\sum_{\lambda}
S_{\lambda/\mu}(x;t)
s_{\lambda}(z).
\end{align}
Now define $\b{K}^{\lambda}_{\mu\nu}(t)$ as the expansion coefficients of the product of a Schur and Hall--Littlewood polynomial, over the Schur basis:
\begin{align*}
s_{\mu}(z)
P_{\nu}(z;t)
:=
\sum_{\lambda}
\b{K}^{\lambda}_{\mu\nu}(t)
s_{\lambda}(z),
\end{align*}
which clearly generalizes \eqref{invK} to include a third non-trivial partition. Substituting this into \eqref{...} and comparing coefficients of $s_{\lambda}(z)$ on both sides, one finds that
\begin{align*}
S_{\lambda/\mu}(x;t)
=
\sum_{\nu}
\b{K}^{\lambda}_{\mu\nu}(t)
Q_{\nu}(x;t).
\end{align*}
Hence to calculate $\b{K}^{\lambda}_{\mu\nu}(t)$ we need to expand a skew $t$-Schur polynomial in the Hall--Littlewood basis. Applying the results of Section \ref{CT}, we can write the generalized inverse Kostka polynomials as a constant term expression:
\begin{align}
\label{CT-invK}
{\rm Coeff}\left[
\prod_{1 \leq i<j \leq n}
\left(
\frac{1-z_j/z_i}{1- t z_j/z_i}
\right)
S_{\lambda/\mu}(z_1,\dots,z_n;t),
z_1^{\nu_1} \dots z_n^{\nu_n}
\right]
=
\b{K}^{\lambda}_{\mu\nu}(t)
b_{\nu}(t).
\end{align}

\subsection{Inverse Kostka polynomials from action of divided-difference operators}

\begin{thm}
\label{thm-kost}
{\rm
Let $\lambda, \mu, \nu$ be three partitions with respective lengths $\ell,m,n$ such that $\ell = m$, and largest parts $L,M,N$ such that $M \geq L$ and $m+M > N$. Let $\b\lambda,\b\mu$ denote the partitions obtained by complementing each part of $\lambda,\mu$ by $M$, $\b\nu$ the partition obtained by complementing each part of $\nu$ by $m+M-1$, and assume that $|\b\lambda| = |\b\mu|+|\b\nu|$. Then
\begin{align}
\label{K-exp}
\mathcal{K}^{\lambda}_{\mu\nu}(x;t)
=
\bra{{\re\cev \mu}}
\otimes
\bra{\ }
\prod_{i=1}^{n+|\nu|}
\left\{
\begin{array}{ll}
\Te(x_i),
&
i \in k(\nu)
\\
\Tb(x_i),
&
i \not\in k(\nu)
\end{array}
\right\}
\ket{{\re\cev \lambda}}
\otimes
\ket{{\cev 0}}
\end{align}
is a homogeneous polynomial in $\{x_1,\dots,x_{n+|\nu|}\}$ of total degree $n+2|\nu|$, and the generalized inverse Kostka polynomial $\b{K}^{\b\lambda}_{\b\mu\b\nu}(t)$ is given by the repeated action of divided-difference operators on this partition function:
\begin{align}
\label{main-formula2}
\b{K}^{\b\lambda}_{\b\mu\b\nu}(t)
=
\frac{t^{-(m+1)|\nu|}}{b_{\b\nu}(t)}
\times
\left(
\prod_{j \not\in k(\nu)}
\Delta_j
\right)
\left(
\prod_{i=1}^{n+|\nu|}
\b{x}_i
\right)
\mathcal{K}^{\lambda}_{\mu\nu}(x;t),
\end{align}
where the product of divided-difference operators is ordered from left to right, starting at $i=1$ and finishing at $i = n+|\nu|$, and omitting $i \in k(\nu)$.
}
\end{thm}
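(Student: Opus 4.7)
The plan is to follow the proof of Theorem \ref{thm-hall} step-for-step, substituting the fermion-boson model of Section \ref{fer-bos-model} for the rank-two boson model, and using the skew $t$-Schur lattice formula (Lemma \ref{lem:S-exp}) in place of the skew Hall--Littlewood one. The homogeneity and degree claim is a direct $x$-counting argument: comparing the $L$-matrices \eqref{Lmat-bb} and \eqref{Lmat-bf}, the explicit factors of $x$ now sit on the three entries of the first \emph{column} rather than the first \emph{row}, so each of the $n+|\nu|$ rows of the partition function acquires one additional $x$ beyond the boson-boson count. This accounts for both the total degree $n+2|\nu|$ and the necessity of dividing by $\prod_{i=1}^{n+|\nu|}\b{x}_i$ (rather than the smaller product $\prod_{i\in k(\nu)}\b{x}_i$ of Theorem \ref{thm-hall}); the remaining polynomial has degree $|\nu|$, matching the number of divided-difference operators that will be applied.

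Next, I would use \eqref{T-be} together with $\Tb(x)\Tb(y) = \Tb(y)\Tb(x)$ from \eqref{T-ee} to move every $\Te$ to the left of every $\Tb$. The rewrite
\begin{align*}
\Tb(x)\Te(y) = \frac{y-tx}{y-x}\,\Te(y)\Tb(x) - \frac{(1-t)x}{y-x}\,\Te(x)\Tb(y)
\end{align*}
differs from the bosonic analogue \eqref{g-b} only by the replacement $(1-t)y \mapsto -(1-t)x$ in the off-diagonal coefficient. Iterating exactly as in Lemma \ref{lem:bethe} produces the fermion-boson analogue of the single-sum identity \eqref{1-iterat}; the precise form of the coefficient explains why $\b{x}_i$ appears symmetrically over all rows rather than only those labelled by $k(\nu)$. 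Repeated iteration then yields a multi-sum formula for $\prod_{i=1}^{n+|\nu|}\b{x}_i \cdot \mathcal{K}^{\lambda}_{\mu\nu}(x;t)$ structurally parallel to \eqref{n-iterat}.

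The trivial-action step now requires a fermion-boson analogue of Lemma \ref{lem:triv}. A graphical argument on a single row shows that each $\Tb(y)$ acting on $\ket{{\re\cev\lambda}}\otimes\ket{{\cev 0}}$ must evacuate one bosonic particle through the left edge, forcing every internal bosonic horizontal edge to be occupied and every fermionic lattice path to propagate vertically; iterating produces a pure power of $t$ and leaves $\ket{{\re\cev\lambda}}\otimes\ket{\ }$ on the right. Inserting this into the multi-sum eliminates all $\Tb$'s, leaving $\bra{{\re\cev\mu}}\otimes\bra{\ }\Te(x_{i_1})\cdots\Te(x_{i_n})\ket{{\re\cev\lambda}}\otimes\ket{\ }$, which is a matrix element in the pure rank-one fermion model. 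Using Lemma \ref{lem:S-exp}, the spectral-inversion relation between \eqref{Lmat-f} and \eqref{Lmat-bf}, and Proposition \ref{rev-prop} for the complementation $\lambda \mapsto \b\lambda$, this reduces to $S_{\b\lambda/\b\mu}(\b{x}_{i_1},\dots,\b{x}_{i_n};t)$ up to an explicit monomial in the $x_{i_j}$.

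At this stage the multi-sum packages as a contour integral around $\{x_1,\dots,x_{n+|\nu|}\}$ with integrand essentially $\prod_{i<j}\tfrac{w_j-w_i}{w_j-tw_i}\prod_i w_i^{m+M-1} S_{\b\lambda/\b\mu}(\b w_1,\dots,\b w_n;t)$ divided by the usual linear pole factors. Applying the divided-difference identity \eqref{dd-act} in exactly the same way as for Theorem \ref{thm-hall} shrinks the poles of the $i$-th variable from $k_i$ down to $k_{i-1}$, after which the resulting expression is manifestly constant in $(x_1,\dots,x_{n+|\nu|})$, so the limit $x_i \to 0$ is trivial and collapses the integral to a coefficient extraction. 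Matching exponents yields $z_i^{(m+M-1)-\nu_{n-i+1}} = z_i^{\b\nu_i}$ (this is where the hypothesis $m+M>N$ enters, ensuring the complement $\b\nu$ is a partition), so that comparison with \eqref{CT-invK} produces $b_{\b\nu}(t)\,\b{K}^{\b\lambda}_{\b\mu\b\nu}(t)$. The main obstacle will be careful bookkeeping of the overall prefactor: tracking the powers of $t$ and signs acquired through each commutation, through the trivial-action lemma, and through the spectral inversion between the two fermion models, so as to verify that only $t^{-(m+1)|\nu|}/b_{\b\nu}(t)$ survives and that the analogues of the $B_{\b\lambda}/B_{\b\mu}$ factors from Theorem \ref{thm-hall} cancel exactly---a cancellation which is conceptually natural, as it reflects the duality $\langle S_\lambda,s_\mu\rangle = \delta_{\lambda,\mu}$ replacing the more intricate Hall--Littlewood inner product.
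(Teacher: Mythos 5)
Your route is the same as the paper's: Theorem \ref{thm-kost} is proved in Appendix \ref{B} by transplanting the proof of Theorem \ref{thm-hall} to the fermion--boson model exactly as you outline --- the commutation relations \eqref{T-be}--\eqref{T-ee} to push the $\Te$'s to the left, a trivial-action lemma to strip the $\Tb$'s, reduction of the remaining matrix element to $S_{\b\lambda/\b\mu}$ (with no analogue of the $B_{\b\lambda}/B_{\b\mu}$ factors, as you anticipate), a contour-integral rewriting with integrand containing $\prod_i w_i^{m+M-1}\,S_{\b\lambda/\b\mu}(\b w_1,\dots,\b w_n;t)$, the divided-difference identity \eqref{dd-act}, and the homogeneous limit matched against \eqref{CT-invK}, with $m+M>N$ entering exactly where you say.

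One intermediate statement in your sketch is wrong as written, and it is not mere prefactor bookkeeping. The fermion--boson analogue of Lemma \ref{lem:triv} does not give a pure power of $t$: because the $(2,1)$ entry of \eqref{Lmat-bf} is $x[t^{\re N}\otimes\phi]$ (unlike the $x$-free entry $t^{\gr N}\otimes\phi$ in \eqref{Lmat-bb}), each application of $\Tb(y)$ to $\ket{{\re\cev\lambda}}\otimes\ket{{\cev 0}}$ picks up a factor of $y$ at the site where the bosonic particle is transferred to the horizontal line, and the correct statement \eqref{trivial-act2} is $\Tb(y_1)\cdots\Tb(y_N)\ket{{\re\cev\lambda}}\otimes\ket{{\cev 0}}=t^{\ell N}\prod_{i=1}^{N}y_i\,\ket{{\re\cev\lambda}}\otimes\ket{\ }$. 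This monomial is essential: it is what cancels the factors $\b{x}_{\hat\imath_b}$ coming from the full prefactor $\prod_{i=1}^{n+|\nu|}\b{x}_i$ in \eqref{main-formula2}, and it supplies the extra $|\nu|$ in the total degree $n+2|\nu|$ (relatedly, your heuristic that each of the $n+|\nu|$ rows gains ``one additional $x$ beyond the boson--boson count'' overcounts; the paper's tally of vacant right edges of dark tiles is the clean argument). With $t^{\ell N}$ alone, your partition function would have degree $n+|\nu|$, contradicting your own degree claim, and the exponent $m+M-1$ in your integrand would not emerge. Once the trivial-action lemma is used in its correct form, the rest of your outline coincides with Appendix \ref{B} and goes through.
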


The proof of Theorem \ref{thm-kost} is very similar to that of Theorem 
\ref{thm-hall}. To limit direct repetition of ideas, we give this proof in Appendix \ref{B}.

\subsection{Combinatorial interpretation of equation \eqref{main-formula2}}

In order to make a precise statement of Theorem \ref{thm-kost-puzzle}, we begin by adapting each of the Definitions \ref{def-dipole}--\ref{weight} to fit the fermion-boson model that we are currently considering. The proof of Theorem \ref{thm-kost-puzzle} is omitted, since it follows by direct analogy with the proof of Theorem \ref{thm-hall-puzzle}.

\noindent \textbf{Definition 2$\bm{'}$.}
Consider a rectangular grid $G$ of light and dark squares, \begin{tikzpicture}[scale=0.5,baseline=0.1cm] \blight{0}{0} \end{tikzpicture} and \begin{tikzpicture}[scale=0.5,baseline=0.1cm] \bdark{0}{0} \end{tikzpicture}\ . A dipole tiling of $G$ is a placement of horizontal dipoles  
\begin{tikzpicture}[scale=0.5,baseline=0.1cm] 
\filldraw[fill=black!20!red,draw=black!20!red] (0.5,0.5) circle (0.25cm); 
\draw[rline] (0.5,0.5) -- (2,0.5); 
\filldraw[fill=black!20!red,draw=black!20!red] (2,0.5) circle (0.25cm);
\node[text centered] at (0.5,0.5) {\color{white} $+$};  
\node[text centered] at (2,0.5) {\color{white} $-$};
\end{tikzpicture} 
and
\begin{tikzpicture}[scale=0.5,baseline=0.1cm] 
\filldraw[fill=black,draw=black] (0.5,0.5) circle (0.25cm); 
\draw[bline] (0.5,0.5) -- (2,0.5); 
\filldraw[fill=black,draw=black] (2,0.5) circle (0.25cm);
\node[text centered] at (0.5,0.5) {\color{white} $+$};  
\node[text centered] at (2,0.5) {\color{white} $-$};
\end{tikzpicture}
of any length on $G$, such that red (black) dipoles begin and end on light (dark) squares, and dipoles cannot overlap.

\noindent \textbf{Definition 3$\bm{'}$.}
Fix a non-negative integer $N \geq 0$. An $N$--puzzle is a dipole tiling of the grid
\begin{align*}
\begin{tikzpicture}[scale=0.6]
\node at (-0.5,5.4) {$N \Bigg\{$ };
\plusb{0}{4}
\plusb{0}{5}
\plusb{0}{6}
\bdark{0}{7}
\blight{1}{7}
\bdark{2}{7}
\blight{3}{7}
\bdark{4}{7}
\blight{5}{7}
\bdark{6}{7}
\blight{7}{7}
\bdark{8}{7}
\blight{1}{6}
\bdark{2}{6}
\blight{3}{6}
\bdark{4}{6}
\blight{5}{6}
\bdark{6}{6}
\blight{7}{6}
\bdark{8}{6}
\blight{1}{5}
\bdark{2}{5}
\blight{3}{5}
\bdark{4}{5}
\blight{5}{5}
\bdark{6}{5}
\blight{7}{5}
\bdark{8}{5}
\blight{1}{4}
\bdark{2}{4}
\blight{3}{4}
\bdark{4}{4}
\blight{5}{4}
\bdark{6}{4}
\blight{7}{4}
\bdark{8}{4}
\draw[ultra thick] (0.95,4) -- (0.95,8);
\end{tikzpicture}
\end{align*}
in which the left column is frozen as indicated: its top square must be unoccupied, \begin{tikzpicture}[scale=0.5,baseline=0.1cm] \bdark{0}{0} \end{tikzpicture}\ , and the remaining $N$ squares are required to be of the form \begin{tikzpicture}[scale=0.5,baseline=0.1cm] \plusb{0}{0} \end{tikzpicture}\ . Numbering the rows from top to bottom by $\{0,1,\dots,N\}$, let $r_i$ denote the total number of tiles of the form \begin{tikzpicture}[scale=0.5,baseline=0.1cm] \bdark{0}{0} \end{tikzpicture} or \begin{tikzpicture}[scale=0.5,baseline=0.1cm] \minb{0}{0} \end{tikzpicture} in the $i$-th row. Then we require that $(r_1,\dots,r_N)$ is a partition and 
$\sum_{i=0}^{N} r_i = 2N+2$. The length of the $N$--puzzle is the total number of rows $i$ for which $r_i > 1$.

\noindent \textbf{Definition 4$\bm{'}$.}
Let $\lambda,\mu,\nu$ be three partitions. A {\it $\nu$--puzzle} with frame $(\mu,\lambda)$ is a dipole tiling of the form
\begin{align*}
\begin{tikzpicture}[scale=0.6]
\node at (-0.75,9.25) {$\nu_1 \  \Bigg\{$ };
\node at (-0.75,5) {$\nu_n \  \Bigg\{$ };
\plusb{0}{4}
\plusb{0}{5}
\bdark{0}{6}
\plusb{0}{8}
\plusb{0}{9}
\plusb{0}{10}
\bdark{0}{11}
\blight{1}{11}
\bdark{2}{11}
\blight{3}{11}
\bdark{4}{11}
\blight{5}{11}
\bdark{6}{11}
\blight{7}{11}
\bdark{8}{11}
\blight{1}{10}
\bdark{2}{10}
\blight{3}{10}
\bdark{4}{10}
\blight{5}{10}
\bdark{6}{10}
\blight{7}{10}
\bdark{8}{10}
\blight{1}{9}
\bdark{2}{9}
\blight{3}{9}
\bdark{4}{9}
\blight{5}{9}
\bdark{6}{9}
\blight{7}{9}
\bdark{8}{9}
\blight{1}{8}
\bdark{2}{8}
\blight{3}{8}
\bdark{4}{8}
\blight{5}{8}
\bdark{6}{8}
\blight{7}{8}
\bdark{8}{8}
\node at (1.5,7.7) {$\vdots$};
\node at (7.5,7.7) {$\vdots$};
\blight{1}{6}
\bdark{2}{6}
\blight{3}{6}
\bdark{4}{6}
\blight{5}{6}
\bdark{6}{6}
\blight{7}{6}
\bdark{8}{6}
\blight{1}{5}
\bdark{2}{5}
\blight{3}{5}
\bdark{4}{5}
\blight{5}{5}
\bdark{6}{5}
\blight{7}{5}
\bdark{8}{5}
\blight{1}{4}
\bdark{2}{4}
\blight{3}{4}
\bdark{4}{4}
\blight{5}{4}
\bdark{6}{4}
\blight{7}{4}
\bdark{8}{4}
\draw[arrow=1,draw=black!20!red,thick] (7,2.5) -- (2,2.5);
\node at (4,3) {$\re{\bm \mu}$};
\draw[arrow=1,draw=black!20!red,thick] (7,13.25) -- (2,13.25);
\node at (4,13.75) {$\re{\bm \lambda}$};
\node[text centered] at (1.5,12.5) {\re \tiny $a_{m+M-1}(\lambda)$};
\node[text centered] at (3.5,12.5) {\re \tiny $\cdots$};
\node[text centered] at (5.5,12.5) {\re \tiny $a_1(\lambda)$};
\node[text centered] at (7.5,12.5) {\re \tiny $a_0(\lambda)$};
\node[text centered] at (1.5,3.5) {\re \tiny $a_{m+M-1}(\mu)$};
\node[text centered] at (3.5,3.5) {\re \tiny $\cdots$};
\node[text centered] at (5.5,3.5) {\re \tiny $a_1(\mu)$};
\node[text centered] at (7.5,3.5) {\re \tiny $a_0(\mu)$};
\draw[ultra thick] (0.95,4) -- (0.95,7); \draw[ultra thick] (0.95,8) -- (0.95,12);
\end{tikzpicture}
\end{align*}
obtained by vertically concatenating $\nu_i$--puzzles for $1 \leq i \leq n$, such that the total charge of the $j$-th light column from the right is $a_j(\mu) - a_j(\lambda)$, and all internal dark columns have total charge 0. Necessarily, the left and rightmost dark columns will have total charge $|\nu|$ and $-|\nu|$, respectively. The length of the $\nu$--puzzle is the sum of the lengths of its constituent $\nu_i$--puzzles.

\noindent \textbf{Definition 5$\bm{'}$.}
The cumulative charge of a tile is the sum (reading from top downward) of all charges in its column, up to and including the tile itself, plus the occupation number at the top of the column. 

\noindent \textbf{Definition 6$\bm{'}$.}
The Boltzmann weight of a puzzle $P$, denoted $W(P)$, is the product of local weights assigned to each tile in the lattice (again, excluding the tiles in the trivial left column). We indicate the weight of each tile below:
\begin{align*}
\begin{tikzpicture}[scale=0.6]
\blight{1}{1};
\plusr{3.5}{1};
\flatr{6}{1};
\minr{8.5}{1};
\flatR{11}{1};
\flatB{13.5}{1};
\plusb{16}{1};
\flatb{18.5}{1};
\minb{21}{1};
\bdark{23.5}{1};
\node[text centered] at (1.5,0.5) {$\ss 1$};
\node[text centered] at (4,0.5) {$\ss \delta_{c,1}(1- t)$};
\node[text centered] at (6.5,0.5) {$\ss (-t)^c$};
\node[text centered] at (9,0.5) {$\ss 1$};
\node[text centered] at (11.5,0.5) {$\ss 1$};
\node[text centered] at (14,0.5) {$\ss t^c$};
\node[text centered] at (16.5,0.5) {$\ss 1-t^{c}$};
\node[text centered] at (19,0.5) {$\ss 1$};
\node[text centered] at (21.5,0.5) {$\ss 1$};
\node[text centered] at (24,0.5) {$\ss 1$};
\end{tikzpicture} 
\end{align*}
where $c$ denotes the cumulative charge of the tile. From these weights, we see that the cumulative charge of any light tile is forced to be either 0 or 1, while the cumulative charge of any dark tile must be non-negative.

\begin{thm}
\label{thm-kost-puzzle}
{\rm

Let $\lambda,\mu,\nu$ be three partitions with the same definitions as in Theorem \ref{thm-kost}, and $\b\lambda,\b\mu,\b\nu$ their complements. Then the generalized inverse Kostka polynomial $\b{K}^{\b\lambda}_{\b\mu\b\nu}(t)$ is obtained by summing over all $\nu$--puzzles with frame $(\mu,\lambda)$:
\begin{align}
\label{invK-comb}
K^{\b\lambda}_{\b\mu\b\nu}(t)
=
\frac{t^{-(m+1)|\nu|}}{b_{\b\nu}(t)}
\times
\sum_{P \in \mathbb{P}_{\nu}(\mu,\lambda)}
(-1)^{L(P)}
W(P),
\end{align}
where $L(P)$ denotes the length of $P$. An example of \eqref{invK-comb} is given in Appendix \ref{app:c}.

}
\end{thm}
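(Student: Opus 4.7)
The plan is to mirror the proof of Theorem \ref{thm-hall-puzzle} step by step, starting from the identity \eqref{main-formula2} of Theorem \ref{thm-kost}. First I would expand $\mathcal{K}^{\lambda}_{\mu\nu}(x;t)$ into a sum over interlacing sequences of partitions by inserting a complete set of states in $\mathcal{F} \otimes \mathcal{B}$ between every consecutive pair of transfer-matrix operators in \eqref{K-exp}. This decomposes the partition function as
\begin{align*}
\mathcal{K}^{\lambda}_{\mu\nu}(x;t)
=
\sum_{{\bm \alpha}_{\lambda,\mu}}\sum_{{\bm \beta}_{\nu}}
\prod_{i=1}^{n}
\bra{{\re\cev\alpha_i}} \otimes \bra{{\cev\beta_i}}
\Tb(x_{k_{i-1}+1}) \cdots \Tb(x_{k_i-1}) \Te(x_{k_i})
\ket{{\re\cev\alpha_{i-1}}} \otimes \ket{{\cev\beta_{i-1}}},
\end{align*}
where the sequences $\bm{\alpha}_{\lambda,\mu}, \bm{\beta}_{\nu}$ satisfy the same length constraints as in \eqref{double-tableau}, but now $\alpha_i$ are strict partitions (reflecting the fermionic nature of $\mathcal{F}$).

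Next I would introduce the fermion-boson analogue of the building block \eqref{build}, namely
\begin{align*}
\tilde{\rho}^{(N)}_{\alpha/\gamma,\beta/\delta}(t)
=
\lim_{x \to 0}
\left(
\Delta_1 \cdots \Delta_N \, \b{x}_1 \cdots \b{x}_{N+1}
\bra{{\re\cev\alpha}} \otimes \bra{{\cev\beta}}
\Tb(x_1) \cdots \Tb(x_N) \Te(x_{N+1})
\ket{{\re\cev\gamma}} \otimes \ket{{\cev\delta}}
\right),
\end{align*}
and show that the divided-difference operators act on the two-row partition function exactly as in \eqref{dd-act}, producing a factor of $t$ at each step and shrinking the support of the remaining rational function. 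The key output is that $\tilde\rho$ equals the signed sum over $N$-puzzles from Definition 3$'$, with Boltzmann weights as in Definition 6$'$. Combining with the product over $i$ and using the bijection between lattice paths of the $L$-matrix \eqref{Lmat-bf} and the dipole tiles, one obtains \eqref{invK-comb}.

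The main obstacle will be the detailed combinatorial/algebraic checks that convert the fermion-boson lattice data into puzzle data. There are three subtleties not present in the proof of Theorem \ref{thm-hall-puzzle}: (i) the constraint that red (light) columns carry cumulative charges in $\{0,1\}$ must be extracted from the nilpotency ${\re \psid}^2 = 0$ of the fermionic creation operator, yielding the $\delta_{c,1}$ factor on the \begin{tikzpicture}[scale=0.5,baseline=0.1cm]\plusr{0}{0}\end{tikzpicture} tile; (ii) the diagonal entries $(-t)^{\re N}$ and $t^{\re N}$ of the $L$-matrix produce the signed weights $(-t)^c$ on \begin{tikzpicture}[scale=0.5,baseline=0.1cm]\flatr{0}{0}\end{tikzpicture} and $t^c$ on \begin{tikzpicture}[scale=0.5,baseline=0.1cm]\flatB{0}{0}\end{tikzpicture} respectively, together with a revised accounting of the total degree $n+2|\nu|$ which explains why the prefactor in \eqref{main-formula2} involves $\prod_{i=1}^{n+|\nu|}\b{x}_i$ rather than only $\prod_{i\in k(\nu)}\b{x}_i$; (iii) the row-partition condition $\sum r_i = 2N+2$ and the definition of length as the number of rows with $r_i>1$ (instead of $>0$) must be derived from the combined action of $\Delta_1 \cdots \Delta_N$ on the additional $\b{x}_i$ factors, which is the main computational verification required. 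Once these three points are handled, concatenation of the $\nu_i$--puzzles yields \eqref{invK-comb} directly.
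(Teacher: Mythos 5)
Your proposal is correct and takes essentially the same route as the paper, which omits the proof of Theorem \ref{thm-kost-puzzle} precisely because it ``follows by direct analogy with the proof of Theorem \ref{thm-hall-puzzle}'': you start from \eqref{main-formula2}, decompose $\mathcal{K}^{\lambda}_{\mu\nu}$ into blocks via complete sets of states, define the fermion--boson analogue of the building block \eqref{build}, and translate the divided-difference action into the puzzle data of Definitions 3$'$--6$'$. The three model-specific subtleties you flag (the $\delta_{c,1}(1-t)$ weight from fermionic nilpotency, the signed weights $(-t)^{c}$ and $t^{c}$ from the diagonal $L$-matrix entries together with the degree count $n+2|\nu|$ forcing the prefactor $\prod_{i}\b{x}_i$, and the modified row-sum and length conventions) are exactly the points in which the fermion--boson case differs, so your sketch matches the intended argument at the same level of detail as the paper's own treatment of Theorem \ref{thm-hall-puzzle}.
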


\section{Knutson--Tao puzzles from integrability}

In this section we calculate a certain partition function in the model \eqref{Lmat-bf} at $t=0$, and prove that it is proportional to the Littlewood--Richardson coefficient $c^{\lambda}_{\mu\nu}$. We then give a simple bijection between lattice configurations in the model under consideration, and Knutson--Tao puzzles \cite{knu-tao,knu-tao-woo}.

\subsection{Fermion-boson model at \texorpdfstring{$t=0$}{t=0}}

We begin by taking the limit $t \rightarrow 0$ of the model \eqref{Lmat-bf}. The operators of the model acquire the simplified actions
\begin{align*}
&
\phid \ket{m}
=
\ket{m+1},
\quad
\phi \ket{m}
=
(1-\delta_{m,0})
\ket{m-1}
\quad
\text{on}\ \mathcal{B},
\\
&
{\re \psid} \ket{{\re 0}}
=
\ket{{\re 1}},
\quad
{\re \psid} \ket{{\re 1}}
=
0,
\quad
{\re \psi} \ket{{\re 0}}
=
0,
\quad
{\re \psi} \ket{{\re 1}}
=
\ket{{\re 0}},
\quad
{\re N} \ket{{\re 0}}
=
0,
\quad
{\re N} \ket{{\re 1}}
=
\ket{{\re 1}}
\quad
\text{on}\ \mathcal{F}.
\end{align*}
In particular, we notice that $(\pm t)^{\re N} \mapsto {\re \pi} = \ket{{\re 0}}\bra{{\re 0}}$ when $t=0$, the projector onto the vacuum. Finally, inverting the spectral parameter in \eqref{Lmat-bf} (this is purely a convenience) we find that
\begin{align}
\label{t=0Lmat}
L^{*}_a(x| {\re \mathfrak{f}} \otimes \mathfrak{b})
=
\begin{pmatrix}
1 \otimes 1 & x [ 1 \otimes \phid ] & x [{\re \psid} \otimes 1]
\\ \\ 
{\re \pi} \otimes \phi & x [ {\re \pi} \otimes 1 ] & 0
\\ \\
{\re \psi} \otimes 1 & x [ {\re \psi} \otimes \phid ] & x [{\re \pi} \otimes 1 ]
\end{pmatrix}_a
=
\left(
\begin{array}{ccc}
\begin{tikzpicture}[scale=0.6,baseline=0.25cm]
\rblank{0}{0}
\bblank{1}{0}
\end{tikzpicture} 
& 
\begin{tikzpicture}[scale=0.6,baseline=0.25cm]
\rblank{0}{0}
\bplus{1}{0}
\end{tikzpicture} 
&
\begin{tikzpicture}[scale=0.6,baseline=0.25cm]
\rplus{0}{0}
\darkrfull{1}{0}
\end{tikzpicture} 
\\ \\
\begin{tikzpicture}[scale=0.6,baseline=0.25cm]
\lightbfull{0}{0}
\bminus{1}{0}
\end{tikzpicture} 
& 
\begin{tikzpicture}[scale=0.6,baseline=0.25cm]
\lightbfull{0}{0}
\bfull{1}{0}
\end{tikzpicture} 
&
0
\\ \\
\begin{tikzpicture}[scale=0.6,baseline=0.25cm]
\rminus{0}{0}
\bblank{1}{0}
\end{tikzpicture} 
& 
\begin{tikzpicture}[scale=0.6,baseline=0.25cm]
\rminus{0}{0}
\bplus{1}{0}
\end{tikzpicture} 
&
\begin{tikzpicture}[scale=0.6,baseline=0.25cm]
\rfull{0}{0}
\darkrfull{1}{0}
\end{tikzpicture} 
\end{array}
\right)_a
\quad
\text{when}\ t=0.
\end{align}
Because of the projectors which are now present in the $L$-matrix, this model is combinatorially much simpler than those considered earlier in the paper. In particular, the following expectation value vanishes:
\begin{align}
\label{forbid}
\bra{{\re 1}}
{\re \pi}
\ket{{\re 1}}
=
\begin{tikzpicture}[scale=0.8,baseline=1cm]
\lightbfullr[1]{1}{1}[1]  
\end{tikzpicture}
=
\begin{tikzpicture}[scale=0.8,baseline=1cm]
\rfull[1]{1}{1}[1]  
\end{tikzpicture}
=
0,
\end{align}
meaning that the two tiles shown above are forbidden in lattice configurations of the model.

We construct a monodromy matrix in the standard way:
\begin{align*}
T^{*}_a(x) 
=
L^{*}_a(x|{\re\mathfrak f_0} \otimes {\mathfrak b_0})
L^{*}_a(x|{\re\mathfrak f_1} \otimes {\mathfrak b_1})
\cdots
=
\begin{pmatrix}
\Tes(x) & \star & \star
\\
\Tbs(x) & \star & \star
\\
\Trs(x) & \star & \star
\end{pmatrix}_a.
\end{align*}
The commutation relations that we shall require are
\begin{align}
\label{schur-com1}
\Trs(x) \Tes(y)
&=
\frac{\b{y}}{\b{y}-\b{x}}
\Tes(y) \Trs(x)
+
\frac{\b{x}}{\b{x}-\b{y}}
\Tes(x) \Trs(y),
\\
\label{schur-com2}
\b{x} \Trs(x) \Trs(y) 
&= 
\b{y} \Trs(y) \Trs(x),
\end{align}
which (up to inversion of the spectral parameters) are the $t=0$ specialization of \eqref{T-re} and \eqref{T-ee}.

\subsection{Littlewood--Richardson coefficients from a partition function}
\label{sec:LR-schur}

\begin{thm}
\label{thm-LR}
{\rm
Let $\lambda$ be a partition of length $\ell$ with largest part $L$. Fix two other partitions $\mu$, $\nu$, also both of length $\ell$, which satisfy $|\lambda| = |\mu| + |\nu|$. Let $\b\nu$ denote the complement of $\nu$ by $L+1$, and define the set $\kappa(\b\nu) = \{\kappa_1,\dots,\kappa_\ell\}$ where $\kappa_i = i+\b\nu_{\ell-i+1} = i+L+1-\nu_i$. Define the partition function
\begin{align}
\label{LR-PF-schur}
\mathcal{C}^{\lambda}_{\mu\nu}(x)
=
\bra{\hole[0.1]}
\otimes
\bra{{\vec\mu}}
\prod_{i=1}^{\ell+L+1}
\left\{
\begin{array}{ll}
\Te^{*}(x_i),
&
i \in \kappa(\b\nu)
\\
\Tr^{*}(x_i),
&
i \not\in \kappa(\b\nu)
\end{array}
\right\}
\ket{\rpart[0.1]}
\otimes
\ket{{\vec \lambda}},
\end{align}
where $\ket{{\vec \lambda}}$ and $\ket{{\vec \mu}}$ are partition states in $\mathcal{B}$, while $\ket{\rpart[0.1]}$ and $\ket{\hole[0.1]}$ denote completely occupied and empty states in $\mathcal{F}$. Then 
$\mathcal{C}^{\lambda}_{\mu\nu}(x)$ is a polynomial in $x_1,\dots,x_{\ell+L+1}$ consisting of a single monomial, 
$\prod_{i=1}^{\ell+L+1} x_i^{p_i}$, where $p_i = \#\{j > i | j \not\in \kappa(\b\nu) \}$. The Littlewood--Richardson coefficient $c^{\lambda}_{\mu\nu}$ is the coefficient (possibly, zero) of that monomial:
\begin{align}
\label{LR-schur}
c^{\lambda}_{\mu\nu}
=
\left(
\prod_{i=1}^{\ell+L+1} 
\b{x}_i^{p_i}
\right)
\mathcal{C}^{\lambda}_{\mu\nu}(x).
\end{align}

}
\end{thm}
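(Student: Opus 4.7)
The plan is to mirror the strategy of Theorems \ref{thm-hall} and \ref{thm-kost}, exploiting the simplifications that occur at $t=0$. The key ingredients are the commutation relations \eqref{schur-com1}--\eqref{schur-com2} for $\Tes$ and $\Trs$ in the fermion-boson model, together with the lattice representations of $P_{\lambda/\mu}(x;t)$ (Lemma 2) and $S_{\lambda/\mu}(x;t)$ at $t=0$ (Lemma \ref{lem:S-exp}).

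First, iterate \eqref{schur-com1} in the style of Lemma \ref{lem:bethe} to push all $\ell$ operators $\Tes(x_i)$ with $i \in \kappa(\b\nu)$ to the left of the $L+1$ operators $\Trs(x_j)$ with $j \notin \kappa(\b\nu)$. This produces a finite sum indexed by a reassignment of the variables, with rational prefactors in $\b{x}_i/\b{x}_j$. Next, insert a complete set of states in $\mathcal{F} \otimes \mathcal{B}$ between the $\Tes$ block and the $\Trs$ block. Because $\Tes$ has empty auxiliary index, its fermion-sector action on $\bra{\hole}$ must leave the fermion state empty (any $(1,3)$ tile that creates a fermion would contradict the left boundary $\bra{\hole}$ once propagated through); effectively each $\Tes(x_i)$ behaves on the relevant fermion-vacuum sector as the $A(x_i)$ operator of Section \ref{model-bb} at $t=0$. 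Dually, the $\Trs$ block, acting between $\ket{\rpart}\otimes\ket{\vec\lambda}$ and an intermediate bra $\bra{\hole}\otimes\bra{\vec\alpha}$, must transport all auxiliary fermions into fermion-vacuum on the left, which by Lemma \ref{lem:S-exp} at $t=0$ computes a skew Schur polynomial $s_{\lambda/\alpha}$ in the variables $\{\b{x}_j\}_{j\notin\kappa(\b\nu)}$ (up to an explicit monomial in $x$).

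The resulting expression is then a sum over intermediate partitions $\alpha$, of the form $\sum_\alpha s_{\mu/\alpha}(x_I)\, s_{\lambda/\alpha}(\b{x}_J)$ times a prescribed monomial in $x$; by the standard identity $\sum_\alpha s_{\mu/\alpha}\, s_{\lambda/\alpha} \leftrightarrow s_\mu s_\nu = \sum_\lambda c^\lambda_{\mu\nu}\, s_\lambda$ applied in a Cauchy-kernel manner (exactly as the constant term extraction in \eqref{CT-Hall} specialized to $t=0$), the coefficient of the unique surviving monomial $\prod_i x_i^{p_i}$ is $c^\lambda_{\mu\nu}$. The combinatorial interpretation of $p_i$ as the number of $\Trs$ operators to the right of position $i$ is precisely what one reads off from the $x$-weights assigned by the $t=0$ $L$-matrix \eqref{t=0Lmat}: in a row with an auxiliary fermion present, each tile to the right of the absorption point contributes weight $x$ via the $(3,3)$ entry, while each boson-creation event contributes similarly.

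The main obstacle will be establishing the single-monomial claim, namely that $\mathcal{C}^{\lambda}_{\mu\nu}(x)$ really is of the form $c^\lambda_{\mu\nu} \prod x_i^{p_i}$, rather than a genuine polynomial with many monomials. The rational prefactors produced in Step 1 do not obviously collapse to a single monomial, and one must show that all would-be cross terms cancel or vanish by the constraints of the $t=0$ free-fermionic model (the projector $\pi$ in the $(3,3)$ and $(2,2)$ entries of \eqref{t=0Lmat} forces rigid configurations, and the identity \eqref{forbid} kills many tiles). A clean way to handle this is a charge-conservation argument: show that every nonzero lattice configuration in $\mathcal{C}^{\lambda}_{\mu\nu}(x)$ has exactly $p_i$ tiles of weight $x$ in row $i$, so the $x$-degree is rigid. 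Once this rigidity is in hand, the identification of the coefficient with $c^\lambda_{\mu\nu}$ follows from the symmetric-function manipulations above, and the bijection with Knutson--Tao puzzles promised at the start of the section becomes a matter of translating the resulting lattice tilings into the usual triangular puzzle pieces.
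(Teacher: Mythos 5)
Your opening step (reordering the operators with \eqref{schur-com1}--\eqref{schur-com2}, Bethe-Ansatz style) matches the paper, but the middle of your argument contains a genuine error. After inserting a complete set of states you claim that the block $\Trs(x_{\hat\imath_1})\cdots\Trs(x_{\hat\imath_{L+1}})$, sandwiched between an intermediate state and $\ket{\rpart[0.1]}\otimes\ket{\vec\lambda}$, computes a skew Schur polynomial $s_{\lambda/\alpha}$ ``by Lemma \ref{lem:S-exp} at $t=0$''. Lemma \ref{lem:S-exp} is about the rank-one fermionic model, where the partitions are encoded as Maya diagrams in $\mathcal{F}$ on both boundaries; here the partitions $\lambda,\mu$ live in the \emph{bosonic} factor and the fermionic boundary states are the full and empty states, so that lemma does not apply. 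The actual fact needed---and this is the crux of the paper's proof---is that at $t=0$ the forbidden tiles \eqref{forbid} freeze this block completely: by \eqref{schur-trivial} one has $\Trs(y_1)\cdots\Trs(y_{L+1})\ket{\rpart[0.1]}\otimes\ket{\vec\lambda}=\prod_{i=1}^{L+1}y_i^{L+1-i}\,\ket{\hole[0.1]}\otimes\ket{\vec\lambda}$, so the bosonic partition does not move at all (only $\alpha=\lambda$ survives) and the block contributes a pure monomial. Consequently your proposed structure $\sum_\alpha s_{\mu/\alpha}(x_I)\,s_{\lambda/\alpha}(\b{x}_J)$ (note also the inclusion is backwards: the $\Tes$ block gives $s_{\alpha/\mu}$, the $t=0$ case of \eqref{skew-HL}) is not what $\mathcal{C}^{\lambda}_{\mu\nu}(x)$ equals, and the subsequent ``Cauchy-kernel'' step has nothing correct to act on. Moreover the skew-Cauchy-type identity you invoke is not the mechanism that produces $c^{\lambda}_{\mu\nu}$: the LR coefficient enters through the $t=0$ constant-term identity \eqref{CT-Hall}, i.e.\ by pairing $s_{\lambda/\mu}$ against the antisymmetrizer $\prod_{i<j}(z_i-z_j)$ with the specific exponents $z_i^{\ell+\nu_i-i}$, and your sketch never explains where that Vandermonde or those exponents come from.

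In the paper they come from precisely the rational prefactors $\prod_{j\neq i}(\b{x}_i-\b{x}_j)^{-1}$ and powers $\b{x}_{i_a}^{L+1}$ generated by the reordering \eqref{n-iterat3}: after the frozen-action lemma reduces the expectation value to $s_{\lambda/\mu}(x_{i_1},\dots,x_{i_\ell})$, the multiple sum is rewritten as the contour integral \eqref{schur-mult-int}, and the limit $x_j\to\infty$ (harmless because the poles coalesce at the origin, which lies inside the contours) simultaneously (i) proves the single-monomial claim, given the global degree count by occupied right edges, and (ii) yields exactly the coefficient-extraction formula for $c^{\lambda}_{\mu\nu}$. Your proposal instead defers the single-monomial claim to an unproved row-by-row rigidity statement (exactly $p_i$ weight-$x$ tiles in row $i$); that statement is true a posteriori but you give no argument for it, and it becomes unnecessary once one argues through the integral representation. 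So the two essential ingredients missing from your proposal are the frozen action \eqref{schur-trivial} of the $\Trs$ block and the bookkeeping of the Bethe-Ansatz prefactors that turns $s_{\lambda/\mu}$ into $c^{\lambda}_{\mu\nu}$ via the constant-term identity.
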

The rest of this section will be devoted to the proof. It proceeds along similar lines to the proofs of Theorems \ref{thm-hall} and \ref{thm-kost}, apart from some key differences which make it worthwhile to sketch this proof in its own right. 

\begin{ex}
\label{ex-C}
{\rm
Let $\lambda = (4,4,2,1)$, $\mu = (3,3,1,0)$, $\nu = (2,1,1,0)$. Then $\ell = 4, L = 4, \b\nu = (5,4,4,3)$ and
\begin{align*}
\mathcal{C}^{\lambda}_{\mu\nu}(x)
=
\bra{\hole[0.1]}
\otimes
\bra{{\vec\mu}}
\Trs(x_1) \Trs(x_2) \Trs(x_3)
\Tes(x_4)
\Trs(x_5)
\Tes(x_6) \Tes(x_7)
\Trs(x_8)
\Tes(x_9)
\ket{\rpart[0.1]}
\otimes
\ket{{\vec \lambda}},
\end{align*} 
which is graphically represented as
\begin{align*}
\begin{tikzpicture}[scale=0.8]
\foreach\x in {1,...,9}{\node at (11.5,\x-1.5) {$x_\x$};}
\draw[arrow=1,draw=black,thick] (3,-2.25) -- (9,-2.25);
\node at (6,-2) {${\bm \mu}$};
\node[text centered] at (2.5,-1.5) {\tiny $m_0(\mu)$};
\node[text centered] at (4.5,-1.5) {\tiny $m_1(\mu)$};
\node[text centered] at (6.5,-1.5) {\tiny $m_2(\mu)$};
\node[text centered] at (8.5,-1.5) {\tiny $m_3(\mu)$};
\node[text centered] at (10.5,-1.5) {\tiny $m_4(\mu)$};
\draw[arrow=1,draw=black,thick] (3,9) -- (9,9);
\node at (6,9.25) {${\bm \lambda}$};
\node[text centered] at (2.5,8.5) {\tiny $m_0(\lambda)$};
\node[text centered] at (4.5,8.5) {\tiny $m_1(\lambda)$};
\node[text centered] at (6.5,8.5) {\tiny $m_2(\lambda)$};
\node[text centered] at (8.5,8.5) {\tiny $m_3(\lambda)$};
\node[text centered] at (10.5,8.5) {\tiny $m_4(\lambda)$};
\draw[arrow=1,draw=black!20!red,thick] (-0.75,1) -- (-0.75,6);
\node at (0,3.5) {$\re \kappa(\bm {\b\nu})$};
\redoc{1}{-0.5}{}
\redoc{1}{0.5}{}
\redoc{1}{1.5}{}
\redoc{1}{3.5}{}
\redoc{1}{6.5}{}
\rblank[1]{1}{7}
\bblank[0]{2}{7}
\rblank[1]{3}{7}
\bblank[1]{4}{7}
\rblank[1]{5}{7}
\bblank[1]{6}{7}
\rblank[1]{7}{7}
\bblank[0]{8}{7}
\rblank[1]{9}{7}
\bblank[2]{10}{7}
\rblank{1}{6}
\bblank{2}{6}
\rblank{3}{6}
\bblank{4}{6}
\rblank{5}{6}
\bblank{6}{6}
\rblank{7}{6}
\bblank{8}{6}
\rblank{9}{6}
\bblank{10}{6}
\rblank{1}{5}
\bblank{2}{5}
\rblank{3}{5}
\bblank{4}{5}
\rblank{5}{5}
\bblank{6}{5}
\rblank{7}{5}
\bblank{8}{5}
\rblank{9}{5}
\bblank{10}{5}
\rblank{1}{4}
\bblank{2}{4}
\rblank{3}{4}
\bblank{4}{4}
\rblank{5}{4}
\bblank{6}{4}
\rblank{7}{4}
\bblank{8}{4}
\rblank{9}{4}
\bblank{10}{4}
\rblank{1}{3}
\bblank{2}{3}
\rblank{3}{3}
\bblank{4}{3}
\rblank{5}{3}
\bblank{6}{3}
\rblank{7}{3}
\bblank{8}{3}
\rblank{9}{3}
\bblank{10}{3}
\rblank{1}{2}
\bblank{2}{2}
\rblank{3}{2}
\bblank{4}{2}
\rblank{5}{2}
\bblank{6}{2}
\rblank{7}{2}
\bblank{8}{2}
\rblank{9}{2}
\bblank{10}{2}
\rblank{1}{1}
\bblank{2}{1}
\rblank{3}{1}
\bblank{4}{1}
\rblank{5}{1}
\bblank{6}{1}
\rblank{7}{1}
\bblank{8}{1}
\rblank{9}{1}
\bblank{10}{1}
\rblank{1}{0}
\bblank{2}{0}
\rblank{3}{0}
\bblank{4}{0}
\rblank{5}{0}
\bblank{6}{0}
\rblank{7}{0}
\bblank{8}{0}
\rblank{9}{0}
\bblank{10}{0}
\rblank{1}{-1}[0]
\bblank{2}{-1}[1]
\rblank{3}{-1}[0]
\bblank{4}{-1}[1]
\rblank{5}{-1}[0]
\bblank{6}{-1}[0]
\rblank{7}{-1}[0]
\bblank{8}{-1}[2]
\rblank{9}{-1}[0]
\bblank{10}{-1}[0]
\end{tikzpicture}
\end{align*}

}
\end{ex}

\subsubsection{Degree counting}
\label{deg-count}

Writing the $L$-matrix \eqref{t=0Lmat} in its graphical representation, we clearly obtain an $x$ weight for every occupied right edge (ORE). Studying the lattice representation of $\mathcal{C}^{\lambda}_{\mu\nu}(x)$, in every legal configuration the black (bosonic) lines give rise to $|\lambda| - |\mu| = |\nu|$ OREs, while the red (fermionic) lines produce a total of $0+1+\cdots+ L = L(L+1)/2$ OREs. This is a total of $|\nu| + L(L+1)/2$ OREs, meaning that $\mathcal{C}^{\lambda}_{\mu\nu}(x)$ is a homogeneous polynomial of degree $|\nu| + L(L+1)/2$.

On the other hand, it is immediate that the sum of the exponents $p_i$ gives 
$\sum_{i=1}^{\ell+L+1} p_i = |\nu| + L(L+1)/2$. Hence to establish the first claim of the theorem (that 
$\mathcal{C}^{\lambda}_{\mu\nu}(x)$ consists of the single monomial 
$\prod_{i=1}^{\ell+L+1}  x_i^{p_i}$), we need to show that the limits
\begin{align*}
\lim_{x_j \rightarrow \infty}
\left(
\prod_{i=1}^{\ell+L+1} 
\b{x}_i^{p_i}
\right)
\mathcal{C}^{\lambda}_{\mu\nu}(x)
\end{align*}
exist for all $1 \leq j \leq \ell+L+1$. This ensures that $\mathcal{C}^{\lambda}_{\mu\nu}(x)$ contains no other degree $|\nu| + L(L+1)/2$ monomials.

\subsubsection{Multiple sum expression for $\mathcal{C}^{\lambda}_{\mu\nu}(x)$}

By repeated use of the commutation relations \eqref{schur-com1} and \eqref{schur-com2}, one can completely order the operators appearing in \eqref{LR-PF-schur} (so that all $\Tes$ operators are on the left). The result of the computation is given below:
\begin{multline}
\label{n-iterat3}
\prod_{i=1}^{\ell+L+1} 
\b{x}_i^{p_i}
\times
\mathcal{C}^{\lambda}_{\mu\nu}(x)
=
\sum_{1 \leq i_1 \leq \kappa_1}
\cdots
\sum_{\substack{1 \leq i_{\ell} \leq \kappa_{\ell} \\ i_\ell \not=i_1,\dots,i_{\ell-1}}}
\prod_{j_1\not=i_1}^{\kappa_1}
\frac{1}{(\b{x}_{i_1}-\b{x}_{j_1})}
\cdots
\prod_{j_\ell\not=i_1,\dots,i_\ell}^{\kappa_\ell}
\frac{1}{(\b{x}_{i_\ell}-\b{x}_{j_\ell})}
\\
\times
\prod_{a=1}^{\ell}
\b{x}_{i_a}^{L+1}
\times
\prod_{b=1}^{L+1}
\b{x}_{\hat\imath_b}^{L+1-b}
\times
\bra{\hole[0.1]}
\otimes
\bra{{\vec\mu}}
\Tes(x_{i_1})
\dots
\Tes(x_{i_\ell})
\Trs(x_{\hat\imath_1})
\dots
\Trs(x_{\hat\imath_{L+1}})
\ket{\rpart[0.1]}
\otimes
\ket{{\vec \lambda}}
\end{multline}
where the summation is over distinct integers $\{i_1,\dots,i_\ell\}$ such that $i_j \leq \kappa_j$ for all $1 \leq j \leq \ell$, and where $\{\hat\imath_1,\dots,\hat\imath_{L+1}\}$ denotes the complement of $\{i_1,\dots,i_\ell\}$ in 
$\{1,\dots,\ell+L+1\}$.

\subsubsection{Trivial action}

\begin{lem}{\rm
Let $\ket{{\vec \lambda}}$ be an arbitrary partition state in $\mathcal{B}$ with largest part $L$, and $\ket{\rpart[0.1]}$/$\ket{\hole[0.1]}$ the completely occupied/unoccupied state in ${\re \mathcal{F}}$. Then
\begin{align}
\label{schur-trivial}
\Trs(y_1)
\dots
\Trs(y_{L+1})
\ket{\rpart[0.1]}
\otimes
\ket{{\vec \lambda}}
=
\prod_{i=1}^{L+1}
y_i^{L+1-i}
\ket{\hole[0.1]}
\otimes
\ket{{\vec \lambda}}.
\end{align}
}
\end{lem}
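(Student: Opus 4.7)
My plan is to prove \eqref{schur-trivial} by induction on the number of $\Trs$ operators acting on $\ket{\rpart}\otimes\ket{\vec\lambda}$, exploiting the graphical rules for the $L$-matrix \eqref{t=0Lmat} to show that a unique lattice configuration contributes at each stage. Write $\ket{\rho_j}$ for the fermionic state occupied exactly at sites $j,j+1,\dots,L$, so that $\ket{\rho_0}=\ket{\rpart}$ and $\ket{\rho_{L+1}}=\ket{\hole}$. The inductive claim is
\begin{align*}
\Trs(y_{L+1-k}) \cdots \Trs(y_{L+1})\, \ket{\rpart}\otimes\ket{\vec\lambda}
\;=\; \Bigg(\prod_{i=L+1-k}^{L+1} y_i^{\,L+1-i}\Bigg)\, \ket{\rho_{k+1}}\otimes\ket{\vec\lambda}
\end{align*}
for $0\leq k\leq L$, from which \eqref{schur-trivial} follows on setting $k=L$.

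For the base case $k=0$ I would track the red auxiliary line of a single $\Trs(y_{L+1})$ entering at site $0$ of the lattice. Because site $0$ carries an occupied fermion, the diagonal entry $(3,3)=x[{\re\pi}\otimes 1]$ of \eqref{t=0Lmat} vanishes (the projector ${\re\pi}$ kills $\ket{{\re 1}}$), so the aux must use either $(3,1)={\re\psi}\otimes 1$ or $(3,2)=x[{\re\psi}\otimes\phid]$. The latter changes the aux to the bosonic label $2$, but is then killed at site $1$, since every row-$2$ entry of \eqref{t=0Lmat} contains ${\re\pi}$ on the fermionic factor, which still vanishes against the occupied fermion at site $1$. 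So $(3,1)$ is forced at site $0$, contributing weight $1$ and removing the fermion. With the aux now empty, a short further case analysis rules out $(1,2)$ and $(1,3)$ at any remaining occupied-fermion site, so the trivial entry $(1,1)$ is the only surviving choice, and the bosonic factor is left untouched.

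For the inductive step, the state entering $\Trs(y_{L+1-k})$ is $\ket{\rho_k}\otimes\ket{\vec\lambda}$. At each of the empty-fermion sites $0,\dots,k-1$ the red aux can only propagate via $(3,3)$, contributing a factor of $y_{L+1-k}$ per site (since ${\re\psi}\ket{{\re 0}}=0$ kills the $(3,1)$ and $(3,2)$ alternatives). At site $k$, the first remaining occupied-fermion site, the base-case argument re-runs verbatim and forces $(3,1)$ of weight $1$, followed by $(1,1)$ at every subsequent site. This step contributes an overall factor of $y_{L+1-k}^{\,k}$ and advances the state to $\ket{\rho_{k+1}}\otimes\ket{\vec\lambda}$, completing the induction.

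The main subtlety, and where I expect to be most careful, is the dead-path analysis eliminating the ``mixing'' alternatives built from $(3,2)$, $(1,2)$, $(1,3)$, $(2,1)$ or $(2,2)$: each such entry either hits an occupied fermion at the very next site (where all row-$2$ entries vanish because of the $\re\pi$ factor), or produces an infinite tail of $(2,2)$ or $(3,3)$ entries because $\ket{\vec\lambda}$ has no bosons beyond site $L$ to close off the path. Once this is phrased cleanly as a line-conservation statement on the graphical tiles, the remaining computation is routine.
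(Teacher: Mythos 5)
Your proof is correct and takes essentially the same approach as the paper: the paper likewise freezes the lattice row by row using the vanishing of the projector entries on occupied fermionic sites (its forbidden tiles \eqref{forbid}), picks up one factor of $y_i$ for each empty fermionic site the red line traverses before absorbing the first remaining fermion, and notes the bosons pass through untouched. Your induction and explicit dead-path analysis of the $(3,2)$, $(1,2)$, $(1,3)$ branches merely spell out the uniqueness of the frozen configuration that the paper asserts more briefly.
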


\begin{proof}
We study the lattice representation of the left hand side of \eqref{schur-trivial}, shown here for the example $\lambda = (4,4,2,1)$:
\begin{align*}
\begin{tikzpicture}[scale=0.8]
\foreach\x in {1,...,5}{\node at (11.5,\x+2.5) {$y_\x$};}
\draw[arrow=1,draw=black,thick] (3,8.5) -- (9,8.5);
\node at (6,8.75) {${\bm \lambda}$};
\redoc{1}{3.5}{}
\redoc{1}{4.5}{}
\redoc{1}{5.5}{}
\redoc{1}{6.5}{}
\redoc{1}{7.5}{}
\rminus[1]{1}{7}[0]
\bblank[0]{2}{7}[0]
\rblank[1]{3}{7}[1]
\bblank[1]{4}{7}[1]
\rblank[1]{5}{7}[1]
\bblank[1]{6}{7}[1]
\rblank[1]{7}{7}[1]
\bblank[0]{8}{7}[0]
\rblank[1]{9}{7}[1]
\bblank[2]{10}{7}[2]
\rfull{1}{6}[0]
\darkrfull{2}{6}[0]
\rminus{3}{6}[0]
\bblank{4}{6}[1]
\rblank{5}{6}[1]
\bblank{6}{6}[1]
\rblank{7}{6}[1]
\bblank{8}{6}[0]
\rblank{9}{6}[1]
\bblank{10}{6}[2]
\rfull{1}{5}[0]
\darkrfull{2}{5}[0]
\rfull{3}{5}[0]
\darkrfull{4}{5}[1]
\rminus{5}{5}[0]
\bblank{6}{5}[1]
\rblank{7}{5}[1]
\bblank{8}{5}[0]
\rblank{9}{5}[1]
\bblank{10}{5}[2]
\rfull{1}{4}[0]
\darkrfull{2}{4}[0]
\rfull{3}{4}[0]
\darkrfull{4}{4}[1]
\rfull{5}{4}[0]
\darkrfull{6}{4}[1]
\rminus{7}{4}[0]
\bblank{8}{4}[0]
\rblank{9}{4}[1]
\bblank{10}{4}[2]
\rfull{1}{3}
\darkrfull{2}{3}[0]
\rfull{3}{3}
\darkrfull{4}{3}[1]
\rfull{5}{3}
\darkrfull{6}{3}[1]
\rfull{7}{3}
\darkrfull{8}{3}[0]
\rminus{9}{3}
\bblank{10}{3}[2]
\end{tikzpicture}
\end{align*}
Because the configurations \eqref{forbid} are disallowed, it immediately follows that the lattice is frozen, with black particles propagating in straight lines. The red line in the $i$-th row gives rise to $L+1-i$ OREs, and hence produces the factor 
$y_i^{L+1-i}$, for all $1 \leq i \leq L+1$. All red particles exit from the left edge of the lattice, meaning that the outgoing state is constrained to be 
$\ket{\hole[0.1]} \otimes \ket{{\vec \lambda}}$.

\end{proof}

\subsubsection{Multiple integral expression for $\mathcal{C}^{\lambda}_{\mu\nu}(x)$}

Because of the trivial action \eqref{schur-trivial} of the $\Trs$ operators, the multiple sum \eqref{n-iterat3} can be simplified to
\begin{multline*}
\prod_{i=1}^{\ell+L+1} 
\b{x}_i^{p_i}
\times
\mathcal{C}^{\lambda}_{\mu\nu}(x)
=
\sum_{1 \leq i_1 \leq \kappa_1}
\cdots
\sum_{\substack{1 \leq i_{\ell} \leq \kappa_{\ell} \\ i_\ell \not=i_1,\dots,i_{\ell-1}}}
\prod_{j_1\not=i_1}^{\kappa_1}
\frac{1}{(\b{x}_{i_1}-\b{x}_{j_1})}
\cdots
\prod_{j_\ell\not=i_1,\dots,i_\ell}^{\kappa_\ell}
\frac{1}{(\b{x}_{i_\ell}-\b{x}_{j_\ell})}
\\
\times
\prod_{a=1}^{\ell}
\b{x}_{i_a}^{L+1}
\times
\bra{\hole[0.1]}
\otimes
\bra{{\vec\mu}}
\Tes(x_{i_1})
\dots
\Tes(x_{i_\ell})
\ket{\hole[0.1]}
\otimes
\ket{{\vec \lambda}}.
\end{multline*}
The expectation value that remains, 
$\bra{\hole[0.1]}
\otimes
\bra{{\vec\mu}}
\Tes(x_{i_1})
\dots
\Tes(x_{i_\ell})
\ket{\hole[0.1]}
\otimes
\ket{{\vec \lambda}}$,
has no further dependence on red particles and is therefore purely bosonic. In fact, by taking the $t=0$ specialization of \eqref{skew-HL}, it is simply a skew Schur polynomial:
\begin{align*}
\bra{\hole[0.1]}
\otimes
\bra{{\vec\mu}}
\Tes(x_{i_1})
\dots
\Tes(x_{i_\ell})
\ket{\hole[0.1]}
\otimes
\ket{{\vec \lambda}}
=
s_{\lambda/\mu}(x_{i_1},\dots,x_{i_\ell}).
\end{align*}
Substituting this result into the above sum, we then convert it into a multiple integral:
\begin{align}
\label{schur-mult-int}
\prod_{i=1}^{\ell+L+1} 
\b{x}_i^{p_i}
\times
\mathcal{C}^{\lambda}_{\mu\nu}(x)
=
\oint_{w_1}
\cdots
\oint_{w_\ell}
\frac{
\prod_{1 \leq i<j \leq \ell}
(w_j - w_i)
}
{
\prod_{i=1}^{\ell}
\prod_{j=1}^{\kappa_i}
(w_i - \b{x}_j)
}
\prod_{i=1}^{\ell}
w_i^{L+1}
s_{\lambda/\mu}(\b{w}_1,\dots,\b{w}_\ell),
\end{align}
where each contour of integration is a circle centred on the origin and surrounding the points 
$(\b{x}_1,\dots,\b{x}_{\ell+L+1})$, which are the only poles of the integrand.

\subsubsection{Homogeneous limit}

Now we consider the limits $x_j \rightarrow \infty$ of \eqref{schur-mult-int}. This causes poles to coalesce at the origin, but since the origin is contained within the contours of integration, these limits are clearly non-singular. Together with the arguments of Section \ref{deg-count}, this proves that $\mathcal{C}^{\lambda}_{\mu\nu}(x)$ is proportional to the monomial $\prod_{i=1}^{\ell+L+1} x_i^{p_i}$, and accordingly the right hand side of \eqref{schur-mult-int} is in fact a constant. To evaluate it, we take all such limits $x_j \rightarrow \infty$:
\begin{align*}
\lim_{x \rightarrow \infty}
\prod_{i=1}^{\ell+L+1} 
\b{x}_i^{p_i}
\times
\mathcal{C}^{\lambda}_{\mu\nu}(x)
&=
\oint_{w_1}
\cdots
\oint_{w_\ell}
\prod_{1 \leq i<j \leq \ell}
(w_j - w_i)
\prod_{i=1}^{\ell}
w_i^{L+1-\kappa_i}
s_{\lambda/\mu}(\b{w}_1,\dots,\b{w}_\ell)
\\
&=
{\rm Coeff}
\left[
\prod_{1 \leq i<j \leq \ell}
(w_j-w_i)
s_{\lambda/\mu}(\b{w}_1,\dots,\b{w}_\ell),
\prod_{i=1}^{\ell}
w_i^{\kappa_i-L-2}
\right]
\\
&=
{\rm Coeff}
\left[
\prod_{1 \leq i<j \leq \ell}
(z_i-z_j)
s_{\lambda/\mu}(z_1,\dots,z_\ell),
\prod_{i=1}^{\ell}
z_i^{\ell+L+1-\kappa_i}
\right].
\end{align*}
Finally, recalling that $L+1 - \kappa_i = \nu_i - i$ for all $1 \leq i \leq \ell$, we obtain
\begin{align*}
\lim_{x \rightarrow \infty}
\prod_{i=1}^{\ell+L+1} 
\b{x}_i^{p_i}
\times
\mathcal{C}^{\lambda}_{\mu\nu}(x)
=
{\rm Coeff}
\left[
\prod_{1 \leq i<j \leq \ell}
(z_i-z_j)
s_{\lambda/\mu}(z_1,\dots,z_\ell),
\prod_{i=1}^{\ell}
z_i^{\ell+\nu_i-i}
\right]
=
c^{\lambda}_{\mu\nu},
\end{align*}
with the final equality being simply the $t=0$ specialization of \eqref{CT-Hall}.

\subsection{Right-triangle puzzles}

We now analyse the combinatorial implication of Theorem \ref{thm-LR}. This is a much simpler task than in previous sections, since \eqref{LR-schur} contains no divided-difference operators and is manifestly positive. The partition function \eqref{LR-PF-schur} consists of exactly $c^{\lambda}_{\mu\nu}$ configurations. For instance, returning to Example \ref{ex-C}, one finds that only two lattice configurations are possible:
\begin{align}
\label{2configs}
\begin{tikzpicture}[scale=0.7]
\redoc{1}{-0.5}{}
\redoc{1}{0.5}{}
\redoc{1}{1.5}{}
\redoc{1}{3.5}{}
\redoc{1}{6.5}{}
\rblank[1]{1}{7}
\bblank[0]{2}{7}
\rblank[1]{3}{7}
\bblank[1]{4}{7}
\rblank[1]{5}{7}
\bblank[1]{6}{7}
\rblank[1]{7}{7}
\bblank[0]{8}{7}
\rblank[1]{9}{7}
\bblank[2]{10}{7}
\rblank{1}{6}
\bblank{2}{6}
\rblank{3}{6}
\bblank{4}{6}
\rblank{5}{6}
\bblank{6}{6}
\rblank{7}{6}
\bblank{8}{6}
\rblank{9}{6}
\bblank{10}{6}
\rblank{1}{5}
\bblank{2}{5}
\rblank{3}{5}
\bblank{4}{5}
\rblank{5}{5}
\bblank{6}{5}
\rblank{7}{5}
\bblank{8}{5}
\rblank{9}{5}
\bblank{10}{5}
\rblank{1}{4}
\bblank{2}{4}
\rblank{3}{4}
\bblank{4}{4}
\rblank{5}{4}
\bblank{6}{4}
\rblank{7}{4}
\bblank{8}{4}
\rblank{9}{4}
\bblank{10}{4}
\rblank{1}{3}
\bblank{2}{3}
\rblank{3}{3}
\bblank{4}{3}
\rblank{5}{3}
\bblank{6}{3}
\rblank{7}{3}
\bblank{8}{3}
\rblank{9}{3}
\bblank{10}{3}
\rblank{1}{2}
\bblank{2}{2}
\rblank{3}{2}
\bblank{4}{2}
\rblank{5}{2}
\bblank{6}{2}
\rblank{7}{2}
\bblank{8}{2}
\rblank{9}{2}
\bblank{10}{2}
\rblank{1}{1}
\bblank{2}{1}
\rblank{3}{1}
\bblank{4}{1}
\rblank{5}{1}
\bblank{6}{1}
\rblank{7}{1}
\bblank{8}{1}
\rblank{9}{1}
\bblank{10}{1}
\rblank{1}{0}
\bblank{2}{0}
\rblank{3}{0}
\bblank{4}{0}
\rblank{5}{0}
\bblank{6}{0}
\rblank{7}{0}
\bblank{8}{0}
\rblank{9}{0}
\bblank{10}{0}
\rblank{1}{-1}[0]
\bblank{2}{-1}[1]
\rblank{3}{-1}[0]
\bblank{4}{-1}[1]
\rblank{5}{-1}[0]
\bblank{6}{-1}[0]
\rblank{7}{-1}[0]
\bblank{8}{-1}[2]
\rblank{9}{-1}[0]
\bblank{10}{-1}[0]
\draw[rline] (1,6.5) -- (1.5,6.5) -- (1.5,8);
\draw[rline] (1,3.5) -- (1.5,3.5) -- (1.5,5.5) -- (3.5,5.5) -- (3.5,8);
\draw[rline] (1,1.5) -- (1.5,1.5) -- (1.5,2.5) -- (3.5,2.5) -- (3.5,3.5) -- (5.5,3.5) -- (5.5,8);
\draw[rline] (1,0.5) -- (3.5,0.5) -- (3.5,1.5) -- (5.5,1.5) -- (5.5,2.5) -- (7.5,2.5) -- (7.5,8);
\draw[rline] (1,-0.5) -- (7.5,-0.5) -- (7.5,1.5) -- (9.5,1.5) -- (9.5,8);
\draw[bline] (2.5,-1+0.12) -- (2.5,4.5) -- (4.5,4.5) -- (4.5,8-0.12);
\draw[bline] (4.5,-1+0.12) -- (4.5,0.5) -- (6.5,0.5) -- (6.5,8-0.12);
\draw[bline] (8.4,-1+0.09) -- (8.4,0.5) -- (10.4,0.5) -- (10.4,8-0.09);
\draw[bline] (8.6,-1+0.09) -- (8.6,-0.5) -- (10.6,-0.5) -- (10.6,8-0.09);
\end{tikzpicture}
\quad\quad\quad\quad
\begin{tikzpicture}[scale=0.7]
\redoc{1}{-0.5}{}
\redoc{1}{0.5}{}
\redoc{1}{1.5}{}
\redoc{1}{3.5}{}
\redoc{1}{6.5}{}
\rblank[1]{1}{7}
\bblank[0]{2}{7}
\rblank[1]{3}{7}
\bblank[1]{4}{7}
\rblank[1]{5}{7}
\bblank[1]{6}{7}
\rblank[1]{7}{7}
\bblank[0]{8}{7}
\rblank[1]{9}{7}
\bblank[2]{10}{7}
\rblank{1}{6}
\bblank{2}{6}
\rblank{3}{6}
\bblank{4}{6}
\rblank{5}{6}
\bblank{6}{6}
\rblank{7}{6}
\bblank{8}{6}
\rblank{9}{6}
\bblank{10}{6}
\rblank{1}{5}
\bblank{2}{5}
\rblank{3}{5}
\bblank{4}{5}
\rblank{5}{5}
\bblank{6}{5}
\rblank{7}{5}
\bblank{8}{5}
\rblank{9}{5}
\bblank{10}{5}
\rblank{1}{4}
\bblank{2}{4}
\rblank{3}{4}
\bblank{4}{4}
\rblank{5}{4}
\bblank{6}{4}
\rblank{7}{4}
\bblank{8}{4}
\rblank{9}{4}
\bblank{10}{4}
\rblank{1}{3}
\bblank{2}{3}
\rblank{3}{3}
\bblank{4}{3}
\rblank{5}{3}
\bblank{6}{3}
\rblank{7}{3}
\bblank{8}{3}
\rblank{9}{3}
\bblank{10}{3}
\rblank{1}{2}
\bblank{2}{2}
\rblank{3}{2}
\bblank{4}{2}
\rblank{5}{2}
\bblank{6}{2}
\rblank{7}{2}
\bblank{8}{2}
\rblank{9}{2}
\bblank{10}{2}
\rblank{1}{1}
\bblank{2}{1}
\rblank{3}{1}
\bblank{4}{1}
\rblank{5}{1}
\bblank{6}{1}
\rblank{7}{1}
\bblank{8}{1}
\rblank{9}{1}
\bblank{10}{1}
\rblank{1}{0}
\bblank{2}{0}
\rblank{3}{0}
\bblank{4}{0}
\rblank{5}{0}
\bblank{6}{0}
\rblank{7}{0}
\bblank{8}{0}
\rblank{9}{0}
\bblank{10}{0}
\rblank{1}{-1}[0]
\bblank{2}{-1}[1]
\rblank{3}{-1}[0]
\bblank{4}{-1}[1]
\rblank{5}{-1}[0]
\bblank{6}{-1}[0]
\rblank{7}{-1}[0]
\bblank{8}{-1}[2]
\rblank{9}{-1}[0]
\bblank{10}{-1}[0]
\draw[rline] (1,6.5) -- (1.5,6.5) -- (1.5,8);
\draw[rline] (1,3.5) -- (1.5,3.5) -- (1.5,5.5) -- (3.5,5.5) -- (3.5,8);
\draw[rline] (1,1.5) -- (1.5,1.5) -- (1.5,2.5) -- (3.5,2.5) -- (3.5,4.5) -- (5.5,4.5) -- (5.5,8);
\draw[rline] (1,0.5) -- (5.5,0.5) -- (5.5,2.5) -- (7.5,2.5) -- (7.5,8);
\draw[rline] (1,-0.5) -- (7.5,-0.5) -- (7.5,1.5) -- (9.5,1.5) -- (9.5,8);
\draw[bline] (2.5,-1+0.09) -- (2.5,1.5) -- (4.4,1.5) -- (4.4,8-0.09);
\draw[bline] (4.6,-1+0.09) -- (4.6,3.5) -- (6.5,3.5) -- (6.5,8-0.12);
\draw[bline] (8.4,-1+0.09) -- (8.4,0.5) -- (10.4,0.5) -- (10.4,8-0.09);
\draw[bline] (8.6,-1+0.09) -- (8.6,-0.5) -- (10.6,-0.5) -- (10.6,8-0.09);
\end{tikzpicture}
\end{align}
and accordingly $c^{\lambda}_{\mu\nu} = 2$ in this case. The fact that the lattice configurations of 
$\mathcal{C}^{\lambda}_{\mu\nu}(x)$ enumerate Littlewood--Richardson coefficients suggests that the former can be put in one-to-one correspondence with Knutson--Tao puzzles. To demonstrate that this is indeed the case, we first define an intermediate class of combinatorial objects, which we call \textit{right-triangle puzzles}.

\begin{defn}{\rm

Consider tilings of the square lattice by the following set of tiles:
\begin{align}
\label{puzzle-pieces}
\begin{tikzpicture}[scale=0.8]
\rp{0}{0}
\rmin{2}{0}
\rver{4}{0}
\rhor{6}{0}
\rb{8}{0}
\bver{10}{0}
\bhor{12}{0}
\bp{14}{0}
\bmin{16}{0}
\end{tikzpicture}
\end{align}
Let $\lambda,\mu,\nu$ be three partitions as defined in Theorem \ref{thm-LR}. A right-triangle puzzle with frame 
$(\lambda,\mu,\nu)$ is a tiling of an $(\ell+L+1) \times (\ell+L+1)$ lattice by the tiles \eqref{puzzle-pieces}, where the boundary conditions are fixed as follows:
\begin{align*}
\begin{tikzpicture}[scale=0.5]
\draw[bbline,arrow=1] (0.5,5) -- (0.5,1); \node at (-1,3) {$\nu(\hole[0.1],\rpart[0.1])$};
\draw[bbline,arrow=1] (1,0.5) -- (5,0.5); \node at (3,6) {$\lambda(\bpart[0.1],\rpart[0.1])$};
\draw[bbline,arrow=1] (1,5.5) -- (5,5.5); \node at (3,0) {$\mu(\bpart[0.1],\hole[0.1])$};
\clear{1}{1}
\clear{2}{1}
\clear{3}{1}
\clear{4}{1}
\clear{1}{2}
\clear{2}{2}
\clear{3}{2}
\clear{4}{2}
\clear{1}{3}
\clear{2}{3}
\clear{3}{3}
\clear{4}{3}
\clear{1}{4}
\clear{2}{4}
\clear{3}{4}
\clear{4}{4}
\end{tikzpicture}
\end{align*}
In other words, the top, bottom and left boundaries are truncated Maya diagrams formed by the two types of particles in parentheses, representing the partition indicated. These puzzles take their name from the fact that, when studying configurations of the lattice above, only the region below the NW--SE diagonal is non-trivial (the region above this diagonal is necessarily frozen).
}
\end{defn}

\begin{prop}
\label{thm-LR-puzzle}
{\rm
The Littlewood--Richardson coefficient $c^{\lambda}_{\mu\nu}$ is equal to the number of right-triangle puzzles with frame $(\lambda,\mu,\nu)$.
}
\end{prop}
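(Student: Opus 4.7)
The plan is to combine Theorem \ref{thm-LR} with an explicit local bijection between valid lattice configurations of $\mathcal{C}^{\lambda}_{\mu\nu}(x)$ and right-triangle puzzles. The approach has three stages: reduce the coefficient extraction of \eqref{LR-schur} to counting configurations, identify each vertex of \eqref{t=0Lmat} with a puzzle tile, then reconcile the boundary data.

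First, I would show that $c^{\lambda}_{\mu\nu}$ literally counts the valid configurations of $\mathcal{C}^{\lambda}_{\mu\nu}(x)$. Inspecting the $L$-matrix \eqref{t=0Lmat}, every non-zero vertex at $t=0$ carries Boltzmann weight $1$ or $x$. Combined with the degree-counting argument of Section \ref{deg-count} (which shows that the right-edge occupation numbers total $|\nu|+L(L+1)/2$ in every legal configuration), this forces every valid configuration to contribute exactly the single monomial $\prod_{i=1}^{\ell+L+1} x_i^{p_i}$. Extracting its coefficient as in \eqref{LR-schur} then returns the cardinality of the configuration set.

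Next, I would set up a tile-by-tile bijection. Each vertex of \eqref{t=0Lmat} is drawn as one red cell adjacent to one black cell, so a legal configuration of $\mathcal{C}^{\lambda}_{\mu\nu}(x)$ is naturally a filling of the lattice by cells drawn from the ten tiles of \eqref{puzzle-pieces}: the four red-only tiles fill red columns, the four black-only tiles fill black columns, and the single superposed red-and-black tile corresponds to entries $(1,3)$ and $(3,3)$ of \eqref{t=0Lmat}, where a horizontal red line crosses a vertical black bond in a black cell. The forbidden configuration \eqref{forbid} excludes any other mixed tile. After rotating the lattice by $45\degree$, the three input boundaries of \eqref{LR-PF-schur} become the NE, SW, and NW edges of a square puzzle, and the region beyond the NW--SE diagonal is frozen by the trivial action \eqref{schur-trivial}, producing exactly the right-triangle shape.

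Third, I would match the boundary conditions. The top (resp.\ bottom) horizontal boundary of the lattice encodes the multiplicities $m_i(\lambda)$ (resp.\ $m_i(\mu)$) in the black columns and the fermionic reservoir occupancies in the red columns; in the $(\bpart,\rpart,\hole)$ alphabet this is exactly $\lambda(\bpart,\rpart)$ on the top frame and $\mu(\bpart,\hole)$ on the bottom frame of the puzzle. The left boundary, where the shifted partial sums $\kappa(\b\nu)$ dictate which rows carry $\Tes$ versus $\Trs$, produces precisely the Maya diagram $\nu(\hole,\rpart)$ along the left edge of the puzzle, after one checks that $\kappa_i = i + L + 1 - \nu_i$ coincides with the convention for reading $\b\nu$ as a sequence of particles and holes down the NW face.

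The main obstacle is this last bookkeeping: one must carefully verify that the complementation $\b\nu = (L+1)^{\ell} - \nu$ and the shift $\kappa_i = i + \b\nu_{\ell-i+1}$ really do realize $\nu$ as the prescribed left-edge Maya diagram, and that the orientations of the three boundaries agree with the cyclic symmetry of the standard Knutson--Tao puzzle frame in \cite{knu-tao,knu-tao-woo}. Once this dictionary is fixed, the remainder reduces to a direct inspection of the nine $L$-matrix entries of \eqref{t=0Lmat} against the ten tiles of \eqref{puzzle-pieces}, together with the freezing consequence of \eqref{forbid}, so that every right-triangle puzzle lifts uniquely to a configuration of $\mathcal{C}^{\lambda}_{\mu\nu}(x)$ and vice versa.
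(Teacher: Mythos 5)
Your first stage is sound and is exactly what the paper does implicitly: at $t=0$ every non-vanishing vertex of \eqref{t=0Lmat} has weight $1$ or $x$, so no cancellations can occur, and once Theorem \ref{thm-LR} gives $\mathcal{C}^{\lambda}_{\mu\nu}(x)=c^{\lambda}_{\mu\nu}\prod_i x_i^{p_i}$ the number of legal configurations of \eqref{LR-PF-schur} must equal $c^{\lambda}_{\mu\nu}$. The boundary bookkeeping in your third stage is also unproblematic.

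The gap is in your second stage, the claimed ``tile-by-tile bijection.'' There is no local, vertex-to-tile dictionary between configurations of $\mathcal{C}^{\lambda}_{\mu\nu}(x)$ and right-triangle puzzles, for two reasons the paper is explicit about: the two objects live on grids of different widths ($(\ell+L+1)\times(2L+2)$ versus $(\ell+L+1)\times(\ell+L+1)$), and the bosonic columns of the configuration lattice may carry occupation numbers greater than one on a single edge (see the doubled black lines in \eqref{2configs}), whereas every tile of \eqref{puzzle-pieces} carries at most one black vertical line. Consequently an $L$-matrix vertex such as entry $(1,3)$ or $(3,3)$ of \eqref{t=0Lmat} does not correspond to a fixed puzzle tile: a red line crossing a dark column may coexist with $0,1,2,\dots$ black particles there, and those particles must be redistributed into distinct columns of the puzzle. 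The actual content of the paper's proof is precisely this non-local step: vertical line segments are translated horizontally, with horizontal segments contracting or expanding, subject to the rule that no crossing between lattice lines is created or destroyed, and one argues that this repositioning into the $(\ell+L+1)\times(\ell+L+1)$ grid using only the tiles \eqref{puzzle-pieces} exists and is unique, in both directions. Your proposal never confronts the width mismatch or the multiple occupancy, so the bijection is not established. (A minor point: the 45\degree\ shear belongs to the subsequent passage from right-triangle puzzles to Knutson--Tao puzzles, not to this proposition; right-triangle puzzles sit on the square lattice in the same orientation as the configuration lattice, and the frozen region above the NW--SE diagonal is forced by the puzzle's own boundary data rather than by \eqref{schur-trivial}.)
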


\begin{proof}
There is a simple bijection between the lattice configurations encoded by \eqref{LR-PF-schur} and right-triangle puzzles. We will sketch the general idea here, but avoid a rigorous proof, since this would require a long detour. 

We begin by noting that, for the two classes of objects, the underlying lattice has a different size. The lattice configurations of $\mathcal{C}^{\lambda}_{\mu\nu}(x)$ are $(\ell+L+1) \times (2L+2)$, while right-triangle puzzles are $(\ell+L+1) \times (\ell+L+1)$. Both objects have the same number of rows, but configurations of $\mathcal{C}^{\lambda}_{\mu\nu}(x)$ may contain more or less columns than their counterpart right-triangle puzzle. The bijection that we seek should therefore be viewed as contraction or expansion in the horizontal dimension, while no transformation takes place in the vertical dimension. 

To go from lattice configurations in $\mathcal{C}^{\lambda}_{\mu\nu}(x)$, for example those of \eqref{2configs}, to right-triangle puzzles, we allow all vertical line segments the freedom to be translated left or right by moves. The horizontal line segments are free to contract or expand during this process, such that the connectivity between vertical line segments is preserved. A move of a vertical line segment is permitted {\it only} if it does not introduce new crossings between lattice lines, or undo a crossing already present. Using these moves, given a configuration in $\mathcal{C}^{\lambda}_{\mu\nu}(x)$ there is a unique way to reposition vertical line segments within an $(\ell+L+1) \times (\ell+L+1)$ grid, such that only the tiles \eqref{puzzle-pieces} are used. When this procedure is applied to the configurations in \eqref{2configs}, the result is
\begin{align*}
\begin{tikzpicture}[scale=0.6]
\rver{1}{9}
\rver{2}{9}
\bver{3}{9}
\rver{4}{9}
\bver{5}{9}
\rver{6}{9}
\rver{7}{9}
\bver{8}{9}
\bver{9}{9}
\rmin{1}{8}
\rver{2}{8}
\bver{3}{8}
\rver{4}{8}
\bver{5}{8}
\rver{6}{8}
\rver{7}{8}
\bver{8}{8}
\bver{9}{8}
\rp{1}{7}
\rmin{2}{7}
\bver{3}{7}
\rver{4}{7}
\bver{5}{7}
\rver{6}{7}
\rver{7}{7}
\bver{8}{7}
\bver{9}{7}
\rver{1}{6}
\bp{2}{6}
\bmin{3}{6}
\rver{4}{6}
\bver{5}{6}
\rver{6}{6}
\rver{7}{6}
\bver{8}{6}
\bver{9}{6}
\rmin{1}{5}
\bver{2}{5}
\rp{3}{5}
\rmin{4}{5}
\bver{5}{5}
\rver{6}{5}
\rver{7}{5}
\bver{8}{5}
\bver{9}{5}
\rp{1}{4}
\rb{2}{4}
\rmin{3}{4}
\rp{4}{4}
\rb{5}{4}
\rmin{6}{4}
\rver{7}{4}
\bver{8}{4}
\bver{9}{4}
\rmin{1}{3}
\bver{2}{3}
\rp{3}{3}
\rmin{4}{3}
\bver{5}{3}
\rp{6}{3}
\rmin{7}{3}
\bver{8}{3}
\bver{9}{3}
\rhor{1}{2}
\rb{2}{2}
\rmin{3}{2}
\bp{4}{2}
\bmin{5}{2}
\rver{6}{2}
\bp{7}{2}
\bmin{8}{2}
\bver{9}{2}
\rhor{1}{1}
\rb{2}{1}
\rhor{3}{1}
\rb{4}{1}
\rhor{5}{1}
\rmin{6}{1}
\bver{7}{1}
\bp{8}{1}
\bmin{9}{1}
\draw[bbline] (2,9) -- (10,1);
\draw[bbline] (2,1) -- (10,1);
\draw[bbline] (2,1) -- (2,9);
\end{tikzpicture}
\quad\quad\quad\quad
\begin{tikzpicture}[scale=0.6]
\rver{1}{9}
\rver{2}{9}
\bver{3}{9}
\rver{4}{9}
\bver{5}{9}
\rver{6}{9}
\rver{7}{9}
\bver{8}{9}
\bver{9}{9}
\rmin{1}{8}
\rver{2}{8}
\bver{3}{8}
\rver{4}{8}
\bver{5}{8}
\rver{6}{8}
\rver{7}{8}
\bver{8}{8}
\bver{9}{8}
\rp{1}{7}
\rmin{2}{7}
\bver{3}{7}
\rver{4}{7}
\bver{5}{7}
\rver{6}{7}
\rver{7}{7}
\bver{8}{7}
\bver{9}{7}
\rver{1}{6}
\rp{2}{6}
\rb{3}{6}
\rmin{4}{6}
\bver{5}{6}
\rver{6}{6}
\rver{7}{6}
\bver{8}{6}
\bver{9}{6}
\rmin{1}{5}
\rver{2}{5}
\bver{3}{5}
\bp{4}{5}
\bmin{5}{5}
\rver{6}{5}
\rver{7}{5}
\bver{8}{5}
\bver{9}{5}
\rp{1}{4}
\rmin{2}{4}
\bver{3}{4}
\bver{4}{4}
\rp{5}{4}
\rmin{6}{4}
\rver{7}{4}
\bver{8}{4}
\bver{9}{4}
\rmin{1}{3}
\bp{2}{3}
\bmin{3}{3}
\bver{4}{3}
\rver{5}{3}
\rp{6}{3}
\rmin{7}{3}
\bver{8}{3}
\bver{9}{3}
\rhor{1}{2}
\rb{2}{2}
\rhor{3}{2}
\rb{4}{2}
\rmin{5}{2}
\rver{6}{2}
\bp{7}{2}
\bmin{8}{2}
\bver{9}{2}
\rhor{1}{1}
\rb{2}{1}
\rhor{3}{1}
\rb{4}{1}
\rhor{5}{1}
\rmin{6}{1}
\bver{7}{1}
\bp{8}{1}
\bmin{9}{1}
\draw[bbline] (2,9) -- (10,1);
\draw[bbline] (2,1) -- (10,1);
\draw[bbline] (2,1) -- (2,9);
\end{tikzpicture}
\end{align*}
which are the corresponding right-triangle puzzles. The non-trivial region of each puzzle is indicated by the blue triangle; outside of this triangle, the lattice paths are frozen.

It is straightforward to proceed in the opposite sense, from right-triangle puzzles to configurations of $\mathcal{C}^{\lambda}_{\mu\nu}(x)$, subject to the same rules of movement. Given any right-triangle puzzle, there is again a unique way to perform these moves, such that a configuration of 
$\mathcal{C}^{\lambda}_{\mu\nu}(x)$ on a $(\ell+L+1) \times (2L+2)$ grid is obtained.
\end{proof}

\subsection{Knutson--Tao puzzles}

Finally, we give the trivial correspondence between right-triangle puzzles and the ones introduced in \cite{knu-tao,knu-tao-woo}. One extracts the non-trivial region of a right-triangle puzzle and shears it so that it takes the shape of an equilateral triangle. This shearing, when applied to the right-triangle puzzles above, produces
\begin{align*}
\quad\quad\quad\quad
\begin{tikzpicture}[scale=0.6,distort]
\rver{2}{8}
\bver{3}{8}
\rver{4}{8}
\bver{5}{8}
\rver{6}{8}
\rver{7}{8}
\bver{8}{8}
\bver{9}{8}
\rmin{2}{7}
\bver{3}{7}
\rver{4}{7}
\bver{5}{7}
\rver{6}{7}
\rver{7}{7}
\bver{8}{7}
\bver{9}{7}
\bp{2}{6}
\bmin{3}{6}
\rver{4}{6}
\bver{5}{6}
\rver{6}{6}
\rver{7}{6}
\bver{8}{6}
\bver{9}{6}
\bver{2}{5}
\rp{3}{5}
\rmin{4}{5}
\bver{5}{5}
\rver{6}{5}
\rver{7}{5}
\bver{8}{5}
\bver{9}{5}
\rb{2}{4}
\rmin{3}{4}
\rp{4}{4}
\rb{5}{4}
\rmin{6}{4}
\rver{7}{4}
\bver{8}{4}
\bver{9}{4}
\bver{2}{3}
\rp{3}{3}
\rmin{4}{3}
\bver{5}{3}
\rp{6}{3}
\rmin{7}{3}
\bver{8}{3}
\bver{9}{3}
\rb{2}{2}
\rmin{3}{2}
\bp{4}{2}
\bmin{5}{2}
\rver{6}{2}
\bp{7}{2}
\bmin{8}{2}
\bver{9}{2}
\rb{2}{1}
\rhor{3}{1}
\rb{4}{1}
\rhor{5}{1}
\rmin{6}{1}
\bver{7}{1}
\bp{8}{1}
\bmin{9}{1}
\filldraw[fill=white,draw=white,line width=0.7mm] (2,9) -- (10,1) -- (10,9) -- (2,9);
\draw[bbline,arrow=0.5] (2,9) -- (10,1);
\draw[bbline,arrow=0.5] (2,1) -- (10,1);
\draw[bbline,arrow=0.5] (2,9) -- (2,1);
\node[text centered] at (0,5) {$\nu(\hole[0.1],\rpart[0.1])$};
\node[text centered] at (8,5) {$\lambda(\bpart[0.1],\rpart[0.1])$};
\node[text centered] at (6,0) {$\mu(\bpart[0.1],\hole[0.1])$};
\end{tikzpicture}
\quad\quad
\begin{tikzpicture}[scale=0.6,distort]
\rver{2}{8}
\bver{3}{8}
\rver{4}{8}
\bver{5}{8}
\rver{6}{8}
\rver{7}{8}
\bver{8}{8}
\bver{9}{8}
\rmin{2}{7}
\bver{3}{7}
\rver{4}{7}
\bver{5}{7}
\rver{6}{7}
\rver{7}{7}
\bver{8}{7}
\bver{9}{7}
\rp{2}{6}
\rb{3}{6}
\rmin{4}{6}
\bver{5}{6}
\rver{6}{6}
\rver{7}{6}
\bver{8}{6}
\bver{9}{6}
\rver{2}{5}
\bver{3}{5}
\bp{4}{5}
\bmin{5}{5}
\rver{6}{5}
\rver{7}{5}
\bver{8}{5}
\bver{9}{5}
\rmin{2}{4}
\bver{3}{4}
\bver{4}{4}
\rp{5}{4}
\rmin{6}{4}
\rver{7}{4}
\bver{8}{4}
\bver{9}{4}
\bp{2}{3}
\bmin{3}{3}
\bver{4}{3}
\rver{5}{3}
\rp{6}{3}
\rmin{7}{3}
\bver{8}{3}
\bver{9}{3}
\rb{2}{2}
\rhor{3}{2}
\rb{4}{2}
\rmin{5}{2}
\rver{6}{2}
\bp{7}{2}
\bmin{8}{2}
\bver{9}{2}
\rb{2}{1}
\rhor{3}{1}
\rb{4}{1}
\rhor{5}{1}
\rmin{6}{1}
\bver{7}{1}
\bp{8}{1}
\bmin{9}{1}
\filldraw[fill=white,draw=white,line width=0.7mm] (2,9) -- (10,1) -- (10,9) -- (2,9);
\draw[bbline,arrow=0.5] (2,9) -- (10,1);
\draw[bbline,arrow=0.5] (2,1) -- (10,1);
\draw[bbline,arrow=0.5] (2,9) -- (2,1);
\node[text centered] at (0,5) {$\nu(\hole[0.1],\rpart[0.1])$};
\node[text centered] at (8,5) {$\lambda(\bpart[0.1],\rpart[0.1])$};
\node[text centered] at (6,0) {$\mu(\bpart[0.1],\hole[0.1])$};
\end{tikzpicture}
\end{align*}
which are essentially of the same form as the puzzles in \cite{z-j}, except that our tiles are decorated in a different way. Right-triangle puzzles are thus equivalent to tilings of an equilateral triangle by the tiles
\begin{align}
\begin{tikzpicture}[scale=0.9,distort,baseline=0.5cm]
\rp{0}{0} \draw[dotted] (0,1)--(1,0);
\rmin{2}{0}  \draw[dotted] (2,1)--(3,0);
\rver{4}{0}  \draw[dotted] (4,1)--(5,0);
\rhor{6}{0}  \draw[dotted] (6,1)--(7,0);
\rb{8}{0}  \draw[dotted] (8,1)--(9,0);
\bver{10}{0}  \draw[dotted] (10,1)--(11,0);
\bhor{12}{0} \draw[dotted] (12,1)--(13,0);
\bp{14}{0}  \draw[dotted] (14,1)--(15,0);
\bmin{16}{0}  \draw[dotted] (16,1)--(17,0);
\end{tikzpicture}
\end{align}
where we divide each rhombus into its constituent equilateral triangles. Since the internal decoration of each tile is irrelevant, we may suppress it and instead label the edge states of the tiles by $+,0,-$ (as long as this is done in a self-consistent way). Doing so, we obtain the tiles 
\begin{align}
\begin{tikzpicture}[scale=1,distort,baseline=0.5cm]
\wwlight{0}{0}{0}{-}{-}{+} \draw[dotted] (0,1) -- node{${\bm-}$} (1,0);
\end{tikzpicture}
\quad%
\begin{tikzpicture}[scale=1,distort,baseline=0.5cm]
\wwlight{2}{0}{-}{+}{0}{-} \draw[dotted] (2,1) -- node{${\bm-}$} (3,0);
\end{tikzpicture}
\quad%
\begin{tikzpicture}[scale=1,distort,baseline=0.5cm]
\wwlight{4}{0}{0}{+}{0}{+} \draw[dotted] (4,1) -- node{${\bm-}$} (5,0);
\end{tikzpicture}
\quad%
\begin{tikzpicture}[scale=1,distort,baseline=0.5cm]
\wwlight{6}{0}{-}{-}{-}{-} \draw[dotted] (6,1) -- node{${\bm-}$} (7,0);
\end{tikzpicture}
\quad%
\begin{tikzpicture}[scale=1,distort,baseline=0.5cm]
\wwlight{8}{0}{+}{-}{+}{-} \draw[dotted] (8,1) -- node{${\bm0}$} (9,0);
\end{tikzpicture}
\quad%
\begin{tikzpicture}[scale=1,distort,baseline=0.5cm]
\wwlight{10}{0}{+}{+}{+}{+} \draw[dotted] (10,1) -- node{${\bm+}$} (11,0);
\end{tikzpicture}
\quad%
\begin{tikzpicture}[scale=1,distort,baseline=0.5cm]
\wwlight{12}{0}{-}{0}{-}{0} \draw[dotted] (12,1) -- node{${\bm+}$} (13,0);
\end{tikzpicture}
\quad%
\begin{tikzpicture}[scale=1,distort,baseline=0.5cm]
\wwlight{16}{0}{+}{0}{-}{+} \draw[dotted] (16,1) -- node{${\bm+}$} (17,0);
\end{tikzpicture}
\quad%
\begin{tikzpicture}[scale=1,distort,baseline=0.5cm]
\wwlight{14}{0}{-}{+}{+}{0} \draw[dotted] (14,1) -- node{${\bm+}$} (15,0);
\end{tikzpicture}
\end{align}
By studying the possible connectivities of the tiles above (and ensuring that $0$ states are made internal), we find that this is equivalent to a tiling by only seven tiles:
\begin{align}
\begin{tikzpicture}[scale=1,distort]
\draw[dotted,bgplaq] (1,1) -- node{${\bm+}$} (2,1) -- node{${\bm+}$} (1,2) -- node{${\bm+}$} (1,1);
\end{tikzpicture}
\qquad%
\begin{tikzpicture}[scale=0.9,distort]
\draw[dotted,bgplaq] (3,1) -- node{${\bm-}$} (4,1) -- node{${\bm-}$} (3,2) -- node{${\bm-}$} (3,1);
\end{tikzpicture}
\qquad%
\begin{tikzpicture}[scale=0.9,distort]
\draw[dotted,bgplaq] (5,2) -- node{${\bm+}$} (6,2) -- node{${\bm+}$} (6,1) -- node{${\bm+}$} (5,2);
\end{tikzpicture}
\qquad%
\begin{tikzpicture}[scale=0.9,distort]
\draw[dotted,bgplaq] (7,2) -- node{${\bm-}$} (8,2) -- node{${\bm-}$} (8,1) -- node{${\bm-}$} (7,2);
\end{tikzpicture}
\qquad%
\begin{tikzpicture}[scale=0.9,distort]
\draw[dotted,bgplaq] (9,2)  -- node{${\bm-}$} (10,2)  -- node{${\bm+}$} (11,1)  -- node{${\bm-}$} (10,1)  -- node{${\bm+}$} (9,2);
\draw[dotted] (10,2) -- (10,1);
\end{tikzpicture}
\qquad%
\begin{tikzpicture}[scale=0.9,distort]
\draw[dotted,bgplaq] (11.5,2.5) -- node{${\bm-}$} (12.5,1.5) -- node{${\bm+}$} (12.5,0.5) -- node{${\bm-}$} (11.5,1.5) -- node{${\bm+}$} (11.5,2.5);
\draw[dotted] (11.5,1.5) -- (12.5,1.5);
\end{tikzpicture}
\qquad%
\begin{tikzpicture}[scale=0.9,distort]
\draw[dotted,bgplaq] (14,1) -- node{${\bm+}$} (15,1) -- node{${\bm-}$} (15,2) -- node{${\bm+}$} (14,2) -- node{${\bm-}$} (14,1);
\draw[dotted] (15,1) -- (14,2);
\end{tikzpicture}
\end{align}
which recovers the result of \cite{knu-tao,knu-tao-woo}.

\section{Discussion}

We conclude with a number of comments and open questions.

{\bf 1.}
In this paper we have restricted our attention to the coproduct rules \eqref{coproducts} as a means of calculating various structure constants. It would be even more satisfying to have a direct proof of the product rules 
\eqref{HL-prod} and \eqref{schur-HL-prod} themselves, using nothing more than the rank-two integrable models in Sections \ref{model-2-bosons} and \ref{fer-bos-model} and repeated use of the associated intertwining equations \eqref{inter-2-bosons} and \eqref{int-bf}. Such a goal was already achieved in \cite{z-j} for the Schur product rule \eqref{schur-prod}, allowing it to be generalized to factorial Schur polynomials, and thus leading to a solution of the Molev--Sagan problem \cite{mol-sag}.  

{\bf 2.}
The connection between the Bethe Ansatz and expansion coefficients of symmetric functions is certainly not an original one. In the case of the Kostka polynomials, such an idea dates back to \cite{kir-res}. More conceptually related to the present work, Korff has used the $t$-boson model to obtain both the Kostka polynomials and their inverses as expectation values of non-commutative symmetric functions \cite{korff}. Our work also appears closely related with the recent preprint of Borodin and Petrov \cite{bor-pet}, in which appear a host of multiple integrals of very similar type to those in the present paper; it would be very rewarding to connect the two works.   

{\bf 3.}
It is mysterious that the natural $t$-generalization of $\mathcal{C}^{\lambda}_{\mu\nu}(x)$, obtained by simply retaining a non-zero $t$ parameter in the fermion-boson model, does not seem to admit any clear interpretation. It is tempting to speculate, however, that it should be related to the last of the possible product rules between Hall--Littlewood and Schur polynomials:
\begin{align}
\label{generalized-kost}
P_{\mu} s_{\nu} = \sum_{\lambda} K^{\lambda}_{\mu\nu}(t) P_{\lambda}.
\end{align}
We base this speculation on the fact that {\bf a)} The coproduct version of \eqref{generalized-kost} is $Q_{\lambda/\mu} = \sum_{\nu} K^{\lambda}_{\mu\nu}(t) S_{\nu}$ (this can be derived from similar reasoning as in Section \ref{gen-inv-kost}). Hence $K^{\lambda}_{\mu\nu}(t)$ can be calculated by expanding a skew Hall--Littlewood polynomial in the $t$-Schur basis; {\bf b)} Lifting the partition function \eqref{LR-PF-schur} to generic $t$, it clearly has the correct structure to perform such an expansion. The bosonic (black) particles will give rise to $Q_{\lambda/\mu}$, while the fermionic (red) particles should in turn allow the expansion in terms of $S_{\nu}$.

{\bf 4.}
It would be very desirable to extend our results to the level of Macdonald polynomials, and calculate (for example) the expansion coefficients in the equations
\begin{align*}
P_{\mu}(x;q,t) P_{\nu}(x;q,t) = \sum_{\lambda} f^{\lambda}_{\mu\nu}(q,t) P_{\lambda}(x;q,t),
\quad
c_{\nu}(q,t) P_{\nu}(x;q,t) = \sum_{\lambda} K^{\lambda}_{\nu}(q,t) S_{\lambda}(x;t).
\end{align*}
Finding a combinatorial rule for $f^{\lambda}_{\mu\nu}(q,t)$ and proving that 
$K^{\lambda}_{\nu}(q,t) \in \mathbb{N}[q,t]$ were celebrated problems in Macdonald theory, which were ultimately resolved in \cite{yip} and \cite{hai}. Can the methods of integrable models and the Bethe Ansatz be applied to these problems? For the moment, this remains a formidable challenge, considering that even the Macdonald polynomials themselves are difficult to construct within the framework of quantum integrability, although recently progress has been made in this direction \cite{cd-gw}.

{\bf 5.}
Because of the Macdonald involution, it can be shown that the structure constants $f^{\lambda}_{\mu\nu}(q,t)$ possess the following symmetry (see Equation (7.3) in Section 7, Chapter VI of \cite{mac}):
\begin{align*}
f^{\lambda}_{\mu\nu}(q,t)
=
f^{\lambda'}_{\mu'\nu'}(t,q)
\frac{b_{\lambda}(q,t)}{b_{\mu}(q,t)b_{\nu}(q,t)},
\end{align*}
where $\lambda',\mu',\nu'$ denote the conjugate of the original partitions. By taking $q \rightarrow 0$, this reduces to
\begin{align*}
f^{\lambda}_{\mu\nu}(t)
=
f^{\lambda'}_{\mu'\nu'}(t,0)
\frac{b_{\lambda}(t)}{b_{\mu}(t)b_{\nu}(t)},
\end{align*}
directly relating the Hall polynomials and the structure constants of the $t$--Whittaker polynomials. In view of this, \eqref{main-formula} can be dually interpreted as a formula for $f^{\lambda'}_{\mu'\nu'}(t,0)$.

\section*{Acknowledgments}

MW is supported by the ARC grant DE160100958 and the ARC Centre of Excellence for Mathematical and Statistical Frontiers (ACEMS). MW would like to thank Eric Ragoucy for kind hospitality and stimulating discussions at the Laboratoire d'Annecy-le-Vieux de Physique Th\'eorique (LAPTh), while this manuscript was in preparation. PZJ is supported by ERC grant ``LIC'' 278124 and ARC grant DP140102201.

\appendix

\section{Examples of Theorem \ref{thm-hall-puzzle}}
\label{A}

\subsection{The case \texorpdfstring{$\b\lambda = (4,1,1,1), \b\mu = (3,1,1,0), \b\nu = (2)$}{bar lambda=(4,1,1,1), bar mu=(3,1,1,0), bar nu=(2)}}

The partitions $\b\lambda,\b\mu,\b\nu$ are the complements of $\lambda = (3,3,3,0), \mu = (4,3,3,1), \nu = (2)$ by 4. The playing field is thus
\begin{align*}
\begin{tikzpicture}[scale=0.5]
\plusb{0}{4}
\plusb{0}{5}
\bdark{0}{6}
\blight{1}{6}
\bdark{2}{6}
\blight{3}{6}
\bdark{4}{6}
\blight{5}{6}
\bdark{6}{6}
\blight{7}{6}
\bdark{8}{6}
\blight{9}{6}
\bdark{10}{6}
\blight{1}{5}
\bdark{2}{5}
\blight{3}{5}
\bdark{4}{5}
\blight{5}{5}
\bdark{6}{5}
\blight{7}{5}
\bdark{8}{5}
\blight{9}{5}
\bdark{10}{5}
\blight{1}{4}
\bdark{2}{4}
\blight{3}{4}
\bdark{4}{4}
\blight{5}{4}
\bdark{6}{4}
\blight{7}{4}
\bdark{8}{4}
\blight{9}{4}
\bdark{10}{4}
\node[text centered] at (1.5,7.5) {\gr \tiny $0$};
\node[text centered] at (3.5,7.5) {\gr \tiny $3$};
\node[text centered] at (5.5,7.5) {\gr \tiny $0$};
\node[text centered] at (7.5,7.5) {\gr \tiny $0$};
\node[text centered] at (9.5,7.5) {\gr \tiny $1$};
\node[text centered] at (1.5,3.5) {\gr \tiny $1$};
\node[text centered] at (3.5,3.5) {\gr \tiny $2$};
\node[text centered] at (5.5,3.5) {\gr \tiny $0$};
\node[text centered] at (7.5,3.5) {\gr \tiny $1$};
\node[text centered] at (9.5,3.5) {\gr \tiny $0$};
\draw[ultra thick] (0.95,4) -- (0.95,7);
\end{tikzpicture}
\end{align*}
One finds a total of 4 dipole puzzles in $\mathbb{P}_{\nu}(\mu,\lambda)$, giving rise to the sum 
\begin{align*}
\sum_{P \in \mathbb{P}_{\nu}(\mu,\lambda)} (-1)^{L(P)} W(P) 
&
=
\begin{tikzpicture}[scale=0.5,baseline=2.6cm]
\plusb{0}{4}
\plusb{0}{5}
\bdark{0}{6}
\plusg{1}{6}
\flatG{2}{6}
\ming{3}{6}
\bdark{4}{6}
\blight{5}{6}
\bdark{6}{6}
\plusg{7}{6}
\flatG{8}{6}
\ming{9}{6}
\bdark{10}{6}
\flatB{1}{5}
\flatb{2}{5}
\flatB{3}{5}
\flatb{4}{5}
\flatB{5}{5}
\flatb{6}{5}
\flatB{7}{5}
\flatb{8}{5}
\flatB{9}{5}
\minb{10}{5}
\flatB{1}{4}
\flatb{2}{4}
\flatB{3}{4}
\flatb{4}{4}
\flatB{5}{4}
\flatb{6}{4}
\flatB{7}{4}
\flatb{8}{4}
\flatB{9}{4}
\minb{10}{4}
\node[text centered] at (1.5,7.5) {\gr \tiny $0$};
\node[text centered] at (3.5,7.5) {\gr \tiny $3$};
\node[text centered] at (5.5,7.5) {\gr \tiny $0$};
\node[text centered] at (7.5,7.5) {\gr \tiny $0$};
\node[text centered] at (9.5,7.5) {\gr \tiny $1$};
\node[text centered] at (1.5,3.5) {\gr \tiny $1$};
\node[text centered] at (3.5,3.5) {\gr \tiny $2$};
\node[text centered] at (5.5,3.5) {\gr \tiny $0$};
\node[text centered] at (7.5,3.5) {\gr \tiny $1$};
\node[text centered] at (9.5,3.5) {\gr \tiny $0$};
\draw[ultra thick] (0.95,4) -- (0.95,7);
\node[text centered] at (5.5,2.5) {$t^8 (1-t)^2$};
\end{tikzpicture}
\
-
\
\begin{tikzpicture}[scale=0.5,baseline=2.6cm]
\plusb{0}{4}
\plusb{0}{5}
\bdark{0}{6}
\plusg{1}{6}
\flatG{2}{6}
\ming{3}{6}
\bdark{4}{6}
\blight{5}{6}
\plusb{6}{6}
\flatB{7}{6}
\flatb{8}{6}
\flatB{9}{6}
\minb{10}{6}
\flatB{1}{5}
\flatb{2}{5}
\flatB{3}{5}
\flatb{4}{5}
\flatB{5}{5}
\minb{6}{5}
\plusg{7}{5}
\flatG{8}{5}
\ming{9}{5}
\bdark{10}{5}
\flatB{1}{4}
\flatb{2}{4}
\flatB{3}{4}
\flatb{4}{4}
\flatB{5}{4}
\flatb{6}{4}
\flatB{7}{4}
\flatb{8}{4}
\flatB{9}{4}
\minb{10}{4}
\node[text centered] at (1.5,7.5) {\gr \tiny $0$};
\node[text centered] at (3.5,7.5) {\gr \tiny $3$};
\node[text centered] at (5.5,7.5) {\gr \tiny $0$};
\node[text centered] at (7.5,7.5) {\gr \tiny $0$};
\node[text centered] at (9.5,7.5) {\gr \tiny $1$};
\node[text centered] at (1.5,3.5) {\gr \tiny $1$};
\node[text centered] at (3.5,3.5) {\gr \tiny $2$};
\node[text centered] at (5.5,3.5) {\gr \tiny $0$};
\node[text centered] at (7.5,3.5) {\gr \tiny $1$};
\node[text centered] at (9.5,3.5) {\gr \tiny $0$};
\draw[ultra thick] (0.95,4) -- (0.95,7);
\node[text centered] at (5.5,2.5) {$t^8 (1-t)^3$};
\end{tikzpicture}
\\
&-\ 
\begin{tikzpicture}[scale=0.5,baseline=2.6cm]
\plusb{0}{4}
\plusb{0}{5}
\bdark{0}{6}
\plusg{1}{6}
\flatG{2}{6}
\ming{3}{6}
\plusb{4}{6}
\flatB{5}{6}
\flatb{6}{6}
\flatB{7}{6}
\flatb{8}{6}
\flatB{9}{6}
\minb{10}{6}
\flatB{1}{5}
\flatb{2}{5}
\flatB{3}{5}
\minb{4}{5}
\blight{5}{5}
\bdark{6}{5}
\plusg{7}{5}
\flatG{8}{5}
\ming{9}{5}
\bdark{10}{5}
\flatB{1}{4}
\flatb{2}{4}
\flatB{3}{4}
\flatb{4}{4}
\flatB{5}{4}
\flatb{6}{4}
\flatB{7}{4}
\flatb{8}{4}
\flatB{9}{4}
\minb{10}{4}
\node[text centered] at (1.5,7.5) {\gr \tiny $0$};
\node[text centered] at (3.5,7.5) {\gr \tiny $3$};
\node[text centered] at (5.5,7.5) {\gr \tiny $0$};
\node[text centered] at (7.5,7.5) {\gr \tiny $0$};
\node[text centered] at (9.5,7.5) {\gr \tiny $1$};
\node[text centered] at (1.5,3.5) {\gr \tiny $1$};
\node[text centered] at (3.5,3.5) {\gr \tiny $2$};
\node[text centered] at (5.5,3.5) {\gr \tiny $0$};
\node[text centered] at (7.5,3.5) {\gr \tiny $1$};
\node[text centered] at (9.5,3.5) {\gr \tiny $0$};
\draw[ultra thick] (0.95,4) -- (0.95,7);
\node[text centered] at (5.5,2.5) {$t^8 (1-t)^3$};
\end{tikzpicture}
\
+
\
\begin{tikzpicture}[scale=0.5,baseline=2.6cm]
\plusb{0}{4}
\plusb{0}{5}
\bdark{0}{6}
\plusg{1}{6}
\flatG{2}{6}
\ming{3}{6}
\plusb{4}{6}
\flatB{5}{6}
\flatb{6}{6}
\flatB{7}{6}
\flatb{8}{6}
\flatB{9}{6}
\minb{10}{6}
\flatB{1}{5}
\flatb{2}{5}
\flatB{3}{5}
\minb{4}{5}
\blight{5}{5}
\plusb{6}{5}
\flatB{7}{5}
\flatb{8}{5}
\flatB{9}{5}
\minb{10}{5}
\flatB{1}{4}
\flatb{2}{4}
\flatB{3}{4}
\flatb{4}{4}
\flatB{5}{4}
\minb{6}{4}
\plusg{7}{4}
\flatG{8}{4}
\ming{9}{4}
\bdark{10}{4}
\node[text centered] at (1.5,7.5) {\gr \tiny $0$};
\node[text centered] at (3.5,7.5) {\gr \tiny $3$};
\node[text centered] at (5.5,7.5) {\gr \tiny $0$};
\node[text centered] at (7.5,7.5) {\gr \tiny $0$};
\node[text centered] at (9.5,7.5) {\gr \tiny $1$};
\node[text centered] at (1.5,3.5) {\gr \tiny $1$};
\node[text centered] at (3.5,3.5) {\gr \tiny $2$};
\node[text centered] at (5.5,3.5) {\gr \tiny $0$};
\node[text centered] at (7.5,3.5) {\gr \tiny $1$};
\node[text centered] at (9.5,3.5) {\gr \tiny $0$};
\draw[ultra thick] (0.95,4) -- (0.95,7);
\node[text centered] at (5.5,2.5) {$t^8 (1-t)^4$};
\end{tikzpicture}
\end{align*}
Summing the resulting Boltzmann weights, we find that $\sum_{P \in \mathbb{P}_{\nu}(\mu,\lambda)} (-1)^{L(P)} W(P) = t^{10} (1-t)^2$. The Hall polynomial is obtained by normalizing this sum:
\begin{align*}
f^{\b\lambda}_{\b\mu\b\nu}(t)
&=
t^{-(m+1)|\nu|}
\times
\frac{B_{\b\lambda}(t)}{B_{\b\mu}(t) b_{\b\nu}(t)}
\times
\sum_{P \in \mathbb{P}_{\nu}(\mu,\lambda)} (-1)^{L(P)} W(P) 
\\
&=
\frac{t^{-10}(1-t)^2 (1-t^2) (1-t^3)}{(1-t)^3 (1-t^2) (1-t)}
\times
t^{10} (1-t)^2
=
1-t^3. 
\end{align*}

\subsection{The case \texorpdfstring{$\b\lambda = (3,2,1), \b\mu = (2,1,0), \b\nu = (2,1)$}{bar lambda=(3,2,1), bar mu=(2,1,0), bar nu=(2,1)}}

The partitions $\b\lambda,\b\mu,\b\nu$ are the complements of $\lambda = (2,1,0), \mu = (3,2,1), \nu = (2,1)$ by 3. The relevant puzzles hence have the frame
\begin{align*}
\begin{tikzpicture}[scale=0.45]
\plusb{0}{2}
\bdark{0}{3}
\plusb{0}{4}
\plusb{0}{5}
\bdark{0}{6}
\blight{1}{6}
\bdark{2}{6}
\blight{3}{6}
\bdark{4}{6}
\blight{5}{6}
\bdark{6}{6}
\blight{7}{6}
\bdark{8}{6}
\blight{1}{5}
\bdark{2}{5}
\blight{3}{5}
\bdark{4}{5}
\blight{5}{5}
\bdark{6}{5}
\blight{7}{5}
\bdark{8}{5}
\blight{1}{4}
\bdark{2}{4}
\blight{3}{4}
\bdark{4}{4}
\blight{5}{4}
\bdark{6}{4}
\blight{7}{4}
\bdark{8}{4}
\blight{1}{3}
\bdark{2}{3}
\blight{3}{3}
\bdark{4}{3}
\blight{5}{3}
\bdark{6}{3}
\blight{7}{3}
\bdark{8}{3}
\blight{1}{2}
\bdark{2}{2}
\blight{3}{2}
\bdark{4}{2}
\blight{5}{2}
\bdark{6}{2}
\blight{7}{2}
\bdark{8}{2}
\node[text centered] at (1.5,7.5) {\gr \tiny $0$};
\node[text centered] at (3.5,7.5) {\gr \tiny $1$};
\node[text centered] at (5.5,7.5) {\gr \tiny $1$};
\node[text centered] at (7.5,7.5) {\gr \tiny $1$};
\node[text centered] at (1.5,1.5) {\gr \tiny $1$};
\node[text centered] at (3.5,1.5) {\gr \tiny $1$};
\node[text centered] at (5.5,1.5) {\gr \tiny $1$};
\node[text centered] at (7.5,1.5) {\gr \tiny $0$};
\draw[ultra thick] (0.95,2) -- (0.95,7);
\end{tikzpicture}
\end{align*}
In this case there are a total of 28 dipole puzzles in $\mathbb{P}_{\nu}(\mu,\lambda)$, giving rise to the sum
\begin{align*}
&
\begin{tikzpicture}[scale=0.45,baseline=2cm]
\plusb{0}{2}
\bdark{0}{3}
\plusb{0}{4}
\plusb{0}{5}
\bdark{0}{6}
\blight{1}{6}
\bdark{2}{6}
\blight{3}{6}
\bdark{4}{6}
\plusg{5}{6}
\flatG{6}{6}
\ming{7}{6}
\bdark{8}{6}
\flatB{1}{5}
\flatb{2}{5}
\flatB{3}{5}
\flatb{4}{5}
\flatB{5}{5}
\flatb{6}{5}
\flatB{7}{5}
\minb{8}{5}
\flatB{1}{4}
\flatb{2}{4}
\flatB{3}{4}
\flatb{4}{4}
\flatB{5}{4}
\flatb{6}{4}
\flatB{7}{4}
\minb{8}{4}
\plusg{1}{3}
\flatG{2}{3}
\flatg{3}{3}
\flatG{4}{3}
\ming{5}{3}
\bdark{6}{3}
\blight{7}{3}
\bdark{8}{3}
\flatB{1}{2}
\flatb{2}{2}
\flatB{3}{2}
\flatb{4}{2}
\flatB{5}{2}
\flatb{6}{2}
\flatB{7}{2}
\minb{8}{2}
\node[text centered] at (1.5,7.5) {\gr \tiny $0$};
\node[text centered] at (3.5,7.5) {\gr \tiny $1$};
\node[text centered] at (5.5,7.5) {\gr \tiny $1$};
\node[text centered] at (7.5,7.5) {\gr \tiny $1$};
\node[text centered] at (1.5,1.5) {\gr \tiny $1$};
\node[text centered] at (3.5,1.5) {\gr \tiny $1$};
\node[text centered] at (5.5,1.5) {\gr \tiny $1$};
\node[text centered] at (7.5,1.5) {\gr \tiny $0$};
\draw[ultra thick] (0.95,2) -- (0.95,7);
\node[text centered] at (4.5,0.5) {$t^9 (1-t)(1-t^2)$};
\end{tikzpicture}
+
\begin{tikzpicture}[scale=0.45,baseline=2cm]
\plusb{0}{2}
\bdark{0}{3}
\plusb{0}{4}
\plusb{0}{5}
\bdark{0}{6}
\blight{1}{6}
\bdark{2}{6}
\plusg{3}{6}
\flatG{4}{6}
\ming{5}{6}
\bdark{6}{6}
\blight{7}{6}
\bdark{8}{6}
\flatB{1}{5}
\flatb{2}{5}
\flatB{3}{5}
\flatb{4}{5}
\flatB{5}{5}
\flatb{6}{5}
\flatB{7}{5}
\minb{8}{5}
\flatB{1}{4}
\flatb{2}{4}
\flatB{3}{4}
\flatb{4}{4}
\flatB{5}{4}
\flatb{6}{4}
\flatB{7}{4}
\minb{8}{4}
\plusg{1}{3}
\flatG{2}{3}
\ming{3}{3}
\bdark{4}{3}
\plusg{5}{3}
\flatG{6}{3}
\ming{7}{3}
\bdark{8}{3}
\flatB{1}{2}
\flatb{2}{2}
\flatB{3}{2}
\flatb{4}{2}
\flatB{5}{2}
\flatb{6}{2}
\flatB{7}{2}
\minb{8}{2}
\node[text centered] at (1.5,7.5) {\gr \tiny $0$};
\node[text centered] at (3.5,7.5) {\gr \tiny $1$};
\node[text centered] at (5.5,7.5) {\gr \tiny $1$};
\node[text centered] at (7.5,7.5) {\gr \tiny $1$};
\node[text centered] at (1.5,1.5) {\gr \tiny $1$};
\node[text centered] at (3.5,1.5) {\gr \tiny $1$};
\node[text centered] at (5.5,1.5) {\gr \tiny $1$};
\node[text centered] at (7.5,1.5) {\gr \tiny $0$};
\draw[ultra thick] (0.95,2) -- (0.95,7);
\node[text centered] at (4.5,0.5) {$t^9 (1-t)^2 (1-t^2)$};
\end{tikzpicture}
+
\begin{tikzpicture}[scale=0.45,baseline=2cm]
\plusb{0}{2}
\bdark{0}{3}
\plusb{0}{4}
\plusb{0}{5}
\bdark{0}{6}
\plusg{1}{6}
\flatG{2}{6}
\ming{3}{6}
\bdark{4}{6}
\blight{5}{6}
\bdark{6}{6}
\blight{7}{6}
\bdark{8}{6}
\flatB{1}{5}
\flatb{2}{5}
\flatB{3}{5}
\flatb{4}{5}
\flatB{5}{5}
\flatb{6}{5}
\flatB{7}{5}
\minb{8}{5}
\flatB{1}{4}
\flatb{2}{4}
\flatB{3}{4}
\flatb{4}{4}
\flatB{5}{4}
\flatb{6}{4}
\flatB{7}{4}
\minb{8}{4}
\blight{1}{3}
\bdark{2}{3}
\plusg{3}{3}
\flatG{4}{3}
\flatg{5}{3}
\flatG{6}{3}
\ming{7}{3}
\bdark{8}{3}
\flatB{1}{2}
\flatb{2}{2}
\flatB{3}{2}
\flatb{4}{2}
\flatB{5}{2}
\flatb{6}{2}
\flatB{7}{2}
\minb{8}{2}
\node[text centered] at (1.5,7.5) {\gr \tiny $0$};
\node[text centered] at (3.5,7.5) {\gr \tiny $1$};
\node[text centered] at (5.5,7.5) {\gr \tiny $1$};
\node[text centered] at (7.5,7.5) {\gr \tiny $1$};
\node[text centered] at (1.5,1.5) {\gr \tiny $1$};
\node[text centered] at (3.5,1.5) {\gr \tiny $1$};
\node[text centered] at (5.5,1.5) {\gr \tiny $1$};
\node[text centered] at (7.5,1.5) {\gr \tiny $0$};
\draw[ultra thick] (0.95,2) -- (0.95,7);
\node[text centered] at (4.5,0.5) {$t^9 (1-t)^2 $};
\end{tikzpicture}
\\
-
&
\begin{tikzpicture}[scale=0.45,baseline=2cm]
\plusb{0}{2}
\bdark{0}{3}
\plusb{0}{4}
\plusb{0}{5}
\bdark{0}{6}
\blight{1}{6}
\bdark{2}{6}
\blight{3}{6}
\plusb{4}{6}
\flatB{5}{6}
\flatb{6}{6}
\flatB{7}{6}
\minb{8}{6}
\flatB{1}{5}
\flatb{2}{5}
\flatB{3}{5}
\minb{4}{5}
\plusg{5}{5}
\flatG{6}{5}
\ming{7}{5}
\bdark{8}{5}
\flatB{1}{4}
\flatb{2}{4}
\flatB{3}{4}
\flatb{4}{4}
\flatB{5}{4}
\flatb{6}{4}
\flatB{7}{4}
\minb{8}{4}
\plusg{1}{3}
\flatG{2}{3}
\flatg{3}{3}
\flatG{4}{3}
\ming{5}{3}
\bdark{6}{3}
\blight{7}{3}
\bdark{8}{3}
\flatB{1}{2}
\flatb{2}{2}
\flatB{3}{2}
\flatb{4}{2}
\flatB{5}{2}
\flatb{6}{2}
\flatB{7}{2}
\minb{8}{2}
\node[text centered] at (1.5,7.5) {\gr \tiny $0$};
\node[text centered] at (3.5,7.5) {\gr \tiny $1$};
\node[text centered] at (5.5,7.5) {\gr \tiny $1$};
\node[text centered] at (7.5,7.5) {\gr \tiny $1$};
\node[text centered] at (1.5,1.5) {\gr \tiny $1$};
\node[text centered] at (3.5,1.5) {\gr \tiny $1$};
\node[text centered] at (5.5,1.5) {\gr \tiny $1$};
\node[text centered] at (7.5,1.5) {\gr \tiny $0$};
\draw[ultra thick] (0.95,2) -- (0.95,7);
\node[text centered] at (4.5,0.5) {$t^9 (1-t)^2(1-t^2)$};
\end{tikzpicture}
-
\begin{tikzpicture}[scale=0.45,baseline=2cm]
\plusb{0}{2}
\bdark{0}{3}
\plusb{0}{4}
\plusb{0}{5}
\bdark{0}{6}
\blight{1}{6}
\bdark{2}{6}
\plusg{3}{6}
\flatG{4}{6}
\ming{5}{6}
\plusb{6}{6}
\flatB{7}{6}
\minb{8}{6}
\flatB{1}{5}
\flatb{2}{5}
\flatB{3}{5}
\flatb{4}{5}
\flatB{5}{5}
\minb{6}{5}
\blight{7}{5}
\bdark{8}{5}
\flatB{1}{4}
\flatb{2}{4}
\flatB{3}{4}
\flatb{4}{4}
\flatB{5}{4}
\flatb{6}{4}
\flatB{7}{4}
\minb{8}{4}
\plusg{1}{3}
\flatG{2}{3}
\ming{3}{3}
\bdark{4}{3}
\plusg{5}{3}
\flatG{6}{3}
\ming{7}{3}
\bdark{8}{3}
\flatB{1}{2}
\flatb{2}{2}
\flatB{3}{2}
\flatb{4}{2}
\flatB{5}{2}
\flatb{6}{2}
\flatB{7}{2}
\minb{8}{2}
\node[text centered] at (1.5,7.5) {\gr \tiny $0$};
\node[text centered] at (3.5,7.5) {\gr \tiny $1$};
\node[text centered] at (5.5,7.5) {\gr \tiny $1$};
\node[text centered] at (7.5,7.5) {\gr \tiny $1$};
\node[text centered] at (1.5,1.5) {\gr \tiny $1$};
\node[text centered] at (3.5,1.5) {\gr \tiny $1$};
\node[text centered] at (5.5,1.5) {\gr \tiny $1$};
\node[text centered] at (7.5,1.5) {\gr \tiny $0$};
\draw[ultra thick] (0.95,2) -- (0.95,7);
\node[text centered] at (4.5,0.5) {$t^9 (1-t)^3 (1-t^2)$};
\end{tikzpicture}
-
\begin{tikzpicture}[scale=0.45,baseline=2cm]
\plusb{0}{2}
\bdark{0}{3}
\plusb{0}{4}
\plusb{0}{5}
\bdark{0}{6}
\plusg{1}{6}
\flatG{2}{6}
\ming{3}{6}
\bdark{4}{6}
\blight{5}{6}
\plusb{6}{6}
\flatB{7}{6}
\minb{8}{6}
\flatB{1}{5}
\flatb{2}{5}
\flatB{3}{5}
\flatb{4}{5}
\flatB{5}{5}
\minb{6}{5}
\blight{7}{5}
\bdark{8}{5}
\flatB{1}{4}
\flatb{2}{4}
\flatB{3}{4}
\flatb{4}{4}
\flatB{5}{4}
\flatb{6}{4}
\flatB{7}{4}
\minb{8}{4}
\blight{1}{3}
\bdark{2}{3}
\plusg{3}{3}
\flatG{4}{3}
\flatg{5}{3}
\flatG{6}{3}
\ming{7}{3}
\bdark{8}{3}
\flatB{1}{2}
\flatb{2}{2}
\flatB{3}{2}
\flatb{4}{2}
\flatB{5}{2}
\flatb{6}{2}
\flatB{7}{2}
\minb{8}{2}
\node[text centered] at (1.5,7.5) {\gr \tiny $0$};
\node[text centered] at (3.5,7.5) {\gr \tiny $1$};
\node[text centered] at (5.5,7.5) {\gr \tiny $1$};
\node[text centered] at (7.5,7.5) {\gr \tiny $1$};
\node[text centered] at (1.5,1.5) {\gr \tiny $1$};
\node[text centered] at (3.5,1.5) {\gr \tiny $1$};
\node[text centered] at (5.5,1.5) {\gr \tiny $1$};
\node[text centered] at (7.5,1.5) {\gr \tiny $0$};
\draw[ultra thick] (0.95,2) -- (0.95,7);
\node[text centered] at (4.5,0.5) {$t^9 (1-t)^3 $};
\end{tikzpicture}
\\
-
&
\begin{tikzpicture}[scale=0.45,baseline=2cm]
\plusb{0}{2}
\bdark{0}{3}
\plusb{0}{4}
\plusb{0}{5}
\bdark{0}{6}
\blight{1}{6}
\plusb{2}{6}
\flatB{3}{6}
\flatb{4}{6}
\flatB{5}{6}
\flatb{6}{6}
\flatB{7}{6}
\minb{8}{6}
\flatB{1}{5}
\minb{2}{5}
\blight{3}{5}
\bdark{4}{5}
\plusg{5}{5}
\flatG{6}{5}
\ming{7}{5}
\bdark{8}{5}
\flatB{1}{4}
\flatb{2}{4}
\flatB{3}{4}
\flatb{4}{4}
\flatB{5}{4}
\flatb{6}{4}
\flatB{7}{4}
\minb{8}{4}
\plusg{1}{3}
\flatG{2}{3}
\flatg{3}{3}
\flatG{4}{3}
\ming{5}{3}
\bdark{6}{3}
\blight{7}{3}
\bdark{8}{3}
\flatB{1}{2}
\flatb{2}{2}
\flatB{3}{2}
\flatb{4}{2}
\flatB{5}{2}
\flatb{6}{2}
\flatB{7}{2}
\minb{8}{2}
\node[text centered] at (1.5,7.5) {\gr \tiny $0$};
\node[text centered] at (3.5,7.5) {\gr \tiny $1$};
\node[text centered] at (5.5,7.5) {\gr \tiny $1$};
\node[text centered] at (7.5,7.5) {\gr \tiny $1$};
\node[text centered] at (1.5,1.5) {\gr \tiny $1$};
\node[text centered] at (3.5,1.5) {\gr \tiny $1$};
\node[text centered] at (5.5,1.5) {\gr \tiny $1$};
\node[text centered] at (7.5,1.5) {\gr \tiny $0$};
\draw[ultra thick] (0.95,2) -- (0.95,7);
\node[text centered] at (4.5,0.5) {$t^9 (1-t)^2(1-t^2)$};
\end{tikzpicture}
-
\begin{tikzpicture}[scale=0.45,baseline=2cm]
\plusb{0}{2}
\bdark{0}{3}
\plusb{0}{4}
\plusb{0}{5}
\bdark{0}{6}
\blight{1}{6}
\plusb{2}{6}
\flatB{3}{6}
\flatb{4}{6}
\flatB{5}{6}
\flatb{6}{6}
\flatB{7}{6}
\minb{8}{6}
\flatB{1}{5}
\minb{2}{5}
\plusg{3}{5}
\flatG{4}{5}
\ming{5}{5}
\bdark{6}{5}
\blight{7}{5}
\bdark{8}{5}
\flatB{1}{4}
\flatb{2}{4}
\flatB{3}{4}
\flatb{4}{4}
\flatB{5}{4}
\flatb{6}{4}
\flatB{7}{4}
\minb{8}{4}
\plusg{1}{3}
\flatG{2}{3}
\ming{3}{3}
\bdark{4}{3}
\plusg{5}{3}
\flatG{6}{3}
\ming{7}{3}
\bdark{8}{3}
\flatB{1}{2}
\flatb{2}{2}
\flatB{3}{2}
\flatb{4}{2}
\flatB{5}{2}
\flatb{6}{2}
\flatB{7}{2}
\minb{8}{2}
\node[text centered] at (1.5,7.5) {\gr \tiny $0$};
\node[text centered] at (3.5,7.5) {\gr \tiny $1$};
\node[text centered] at (5.5,7.5) {\gr \tiny $1$};
\node[text centered] at (7.5,7.5) {\gr \tiny $1$};
\node[text centered] at (1.5,1.5) {\gr \tiny $1$};
\node[text centered] at (3.5,1.5) {\gr \tiny $1$};
\node[text centered] at (5.5,1.5) {\gr \tiny $1$};
\node[text centered] at (7.5,1.5) {\gr \tiny $0$};
\draw[ultra thick] (0.95,2) -- (0.95,7);
\node[text centered] at (4.5,0.5) {$t^9 (1-t)^3 (1-t^2)$};
\end{tikzpicture}
-
\begin{tikzpicture}[scale=0.45,baseline=2cm]
\plusb{0}{2}
\bdark{0}{3}
\plusb{0}{4}
\plusb{0}{5}
\bdark{0}{6}
\plusg{1}{6}
\flatG{2}{6}
\ming{3}{6}
\plusb{4}{6}
\flatB{5}{6}
\flatb{6}{6}
\flatB{7}{6}
\minb{8}{6}
\flatB{1}{5}
\flatb{2}{5}
\flatB{3}{5}
\minb{4}{5}
\blight{5}{5}
\bdark{6}{5}
\blight{7}{5}
\bdark{8}{5}
\flatB{1}{4}
\flatb{2}{4}
\flatB{3}{4}
\flatb{4}{4}
\flatB{5}{4}
\flatb{6}{4}
\flatB{7}{4}
\minb{8}{4}
\blight{1}{3}
\bdark{2}{3}
\plusg{3}{3}
\flatG{4}{3}
\flatg{5}{3}
\flatG{6}{3}
\ming{7}{3}
\bdark{8}{3}
\flatB{1}{2}
\flatb{2}{2}
\flatB{3}{2}
\flatb{4}{2}
\flatB{5}{2}
\flatb{6}{2}
\flatB{7}{2}
\minb{8}{2}
\node[text centered] at (1.5,7.5) {\gr \tiny $0$};
\node[text centered] at (3.5,7.5) {\gr \tiny $1$};
\node[text centered] at (5.5,7.5) {\gr \tiny $1$};
\node[text centered] at (7.5,7.5) {\gr \tiny $1$};
\node[text centered] at (1.5,1.5) {\gr \tiny $1$};
\node[text centered] at (3.5,1.5) {\gr \tiny $1$};
\node[text centered] at (5.5,1.5) {\gr \tiny $1$};
\node[text centered] at (7.5,1.5) {\gr \tiny $0$};
\draw[ultra thick] (0.95,2) -- (0.95,7);
\node[text centered] at (4.5,0.5) {$t^9 (1-t)^3 $};
\end{tikzpicture}
\\
+
&
\begin{tikzpicture}[scale=0.45,baseline=2cm]
\plusb{0}{2}
\bdark{0}{3}
\plusb{0}{4}
\plusb{0}{5}
\bdark{0}{6}
\blight{1}{6}
\plusb{2}{6}
\flatB{3}{6}
\flatb{4}{6}
\flatB{5}{6}
\flatb{6}{6}
\flatB{7}{6}
\minb{8}{6}
\flatB{1}{5}
\minb{2}{5}
\blight{3}{5}
\plusb{4}{5}
\flatB{5}{5}
\flatb{6}{5}
\flatB{7}{5}
\minb{8}{5}
\flatB{1}{4}
\flatb{2}{4}
\flatB{3}{4}
\minb{4}{4}
\plusg{5}{4}
\flatG{6}{4}
\ming{7}{4}
\bdark{8}{4}
\plusg{1}{3}
\flatG{2}{3}
\flatg{3}{3}
\flatG{4}{3}
\ming{5}{3}
\bdark{6}{3}
\blight{7}{3}
\bdark{8}{3}
\flatB{1}{2}
\flatb{2}{2}
\flatB{3}{2}
\flatb{4}{2}
\flatB{5}{2}
\flatb{6}{2}
\flatB{7}{2}
\minb{8}{2}
\node[text centered] at (1.5,7.5) {\gr \tiny $0$};
\node[text centered] at (3.5,7.5) {\gr \tiny $1$};
\node[text centered] at (5.5,7.5) {\gr \tiny $1$};
\node[text centered] at (7.5,7.5) {\gr \tiny $1$};
\node[text centered] at (1.5,1.5) {\gr \tiny $1$};
\node[text centered] at (3.5,1.5) {\gr \tiny $1$};
\node[text centered] at (5.5,1.5) {\gr \tiny $1$};
\node[text centered] at (7.5,1.5) {\gr \tiny $0$};
\draw[ultra thick] (0.95,2) -- (0.95,7);
\node[text centered] at (4.5,0.5) {$t^9 (1-t)^3(1-t^2)$};
\end{tikzpicture}
+
\begin{tikzpicture}[scale=0.45,baseline=2cm]
\plusb{0}{2}
\bdark{0}{3}
\plusb{0}{4}
\plusb{0}{5}
\bdark{0}{6}
\blight{1}{6}
\plusb{2}{6}
\flatB{3}{6}
\flatb{4}{6}
\flatB{5}{6}
\flatb{6}{6}
\flatB{7}{6}
\minb{8}{6}
\flatB{1}{5}
\minb{2}{5}
\plusg{3}{5}
\flatG{4}{5}
\ming{5}{5}
\plusb{6}{5}
\flatB{7}{5}
\minb{8}{5}
\flatB{1}{4}
\flatb{2}{4}
\flatB{3}{4}
\flatb{4}{4}
\flatB{5}{4}
\minb{6}{4}
\blight{7}{4}
\bdark{8}{4}
\plusg{1}{3}
\flatG{2}{3}
\ming{3}{3}
\bdark{4}{3}
\plusg{5}{3}
\flatG{6}{3}
\ming{7}{3}
\bdark{8}{3}
\flatB{1}{2}
\flatb{2}{2}
\flatB{3}{2}
\flatb{4}{2}
\flatB{5}{2}
\flatb{6}{2}
\flatB{7}{2}
\minb{8}{2}
\node[text centered] at (1.5,7.5) {\gr \tiny $0$};
\node[text centered] at (3.5,7.5) {\gr \tiny $1$};
\node[text centered] at (5.5,7.5) {\gr \tiny $1$};
\node[text centered] at (7.5,7.5) {\gr \tiny $1$};
\node[text centered] at (1.5,1.5) {\gr \tiny $1$};
\node[text centered] at (3.5,1.5) {\gr \tiny $1$};
\node[text centered] at (5.5,1.5) {\gr \tiny $1$};
\node[text centered] at (7.5,1.5) {\gr \tiny $0$};
\draw[ultra thick] (0.95,2) -- (0.95,7);
\node[text centered] at (4.5,0.5) {$t^9 (1-t)^4 (1-t^2)$};
\end{tikzpicture}
+
\begin{tikzpicture}[scale=0.45,baseline=2cm]
\plusb{0}{2}
\bdark{0}{3}
\plusb{0}{4}
\plusb{0}{5}
\bdark{0}{6}
\plusg{1}{6}
\flatG{2}{6}
\ming{3}{6}
\plusb{4}{6}
\flatB{5}{6}
\flatb{6}{6}
\flatB{7}{6}
\minb{8}{6}
\flatB{1}{5}
\flatb{2}{5}
\flatB{3}{5}
\minb{4}{5}
\blight{5}{5}
\plusb{6}{5}
\flatB{7}{5}
\minb{8}{5}
\flatB{1}{4}
\flatb{2}{4}
\flatB{3}{4}
\flatb{4}{4}
\flatB{5}{4}
\minb{6}{4}
\blight{7}{4}
\bdark{8}{4}
\blight{1}{3}
\bdark{2}{3}
\plusg{3}{3}
\flatG{4}{3}
\flatg{5}{3}
\flatG{6}{3}
\ming{7}{3}
\bdark{8}{3}
\flatB{1}{2}
\flatb{2}{2}
\flatB{3}{2}
\flatb{4}{2}
\flatB{5}{2}
\flatb{6}{2}
\flatB{7}{2}
\minb{8}{2}
\node[text centered] at (1.5,7.5) {\gr \tiny $0$};
\node[text centered] at (3.5,7.5) {\gr \tiny $1$};
\node[text centered] at (5.5,7.5) {\gr \tiny $1$};
\node[text centered] at (7.5,7.5) {\gr \tiny $1$};
\node[text centered] at (1.5,1.5) {\gr \tiny $1$};
\node[text centered] at (3.5,1.5) {\gr \tiny $1$};
\node[text centered] at (5.5,1.5) {\gr \tiny $1$};
\node[text centered] at (7.5,1.5) {\gr \tiny $0$};
\draw[ultra thick] (0.95,2) -- (0.95,7);
\node[text centered] at (4.5,0.5) {$t^9 (1-t)^4 $};
\end{tikzpicture}
\\
-
&
\begin{tikzpicture}[scale=0.45,baseline=2cm]
\plusb{0}{2}
\bdark{0}{3}
\plusb{0}{4}
\plusb{0}{5}
\bdark{0}{6}
\blight{1}{6}
\bdark{2}{6}
\blight{3}{6}
\bdark{4}{6}
\plusg{5}{6}
\flatG{6}{6}
\ming{7}{6}
\bdark{8}{6}
\flatB{1}{5}
\flatb{2}{5}
\flatB{3}{5}
\flatb{4}{5}
\flatB{5}{5}
\flatb{6}{5}
\flatB{7}{5}
\minb{8}{5}
\flatB{1}{4}
\flatb{2}{4}
\flatB{3}{4}
\flatb{4}{4}
\flatB{5}{4}
\flatb{6}{4}
\flatB{7}{4}
\minb{8}{4}
\plusg{1}{3}
\flatG{2}{3}
\flatg{3}{3}
\flatG{4}{3}
\ming{5}{3}
\plusb{6}{3}
\flatB{7}{3}
\minb{8}{3}
\flatB{1}{2}
\flatb{2}{2}
\flatB{3}{2}
\flatb{4}{2}
\flatB{5}{2}
\minb{6}{2}
\blight{7}{2}
\bdark{8}{2}
\node[text centered] at (1.5,7.5) {\gr \tiny $0$};
\node[text centered] at (3.5,7.5) {\gr \tiny $1$};
\node[text centered] at (5.5,7.5) {\gr \tiny $1$};
\node[text centered] at (7.5,7.5) {\gr \tiny $1$};
\node[text centered] at (1.5,1.5) {\gr \tiny $1$};
\node[text centered] at (3.5,1.5) {\gr \tiny $1$};
\node[text centered] at (5.5,1.5) {\gr \tiny $1$};
\node[text centered] at (7.5,1.5) {\gr \tiny $0$};
\draw[ultra thick] (0.95,2) -- (0.95,7);
\node[text centered] at (4.5,0.5) {$t^9 (1-t)^2(1-t^2)$};
\end{tikzpicture}
-
\begin{tikzpicture}[scale=0.45,baseline=2cm]
\plusb{0}{2}
\bdark{0}{3}
\plusb{0}{4}
\plusb{0}{5}
\bdark{0}{6}
\blight{1}{6}
\bdark{2}{6}
\plusg{3}{6}
\flatG{4}{6}
\ming{5}{6}
\bdark{6}{6}
\blight{7}{6}
\bdark{8}{6}
\flatB{1}{5}
\flatb{2}{5}
\flatB{3}{5}
\flatb{4}{5}
\flatB{5}{5}
\flatb{6}{5}
\flatB{7}{5}
\minb{8}{5}
\flatB{1}{4}
\flatb{2}{4}
\flatB{3}{4}
\flatb{4}{4}
\flatB{5}{4}
\flatb{6}{4}
\flatB{7}{4}
\minb{8}{4}
\plusg{1}{3}
\flatG{2}{3}
\ming{3}{3}
\plusb{4}{3}
\flatB{5}{3}
\flatb{6}{3}
\flatB{7}{3}
\minb{8}{3}
\flatB{1}{2}
\flatb{2}{2}
\flatB{3}{2}
\minb{4}{2}
\plusg{5}{2}
\flatG{6}{2}
\ming{7}{2}
\bdark{8}{2}
\node[text centered] at (1.5,7.5) {\gr \tiny $0$};
\node[text centered] at (3.5,7.5) {\gr \tiny $1$};
\node[text centered] at (5.5,7.5) {\gr \tiny $1$};
\node[text centered] at (7.5,7.5) {\gr \tiny $1$};
\node[text centered] at (1.5,1.5) {\gr \tiny $1$};
\node[text centered] at (3.5,1.5) {\gr \tiny $1$};
\node[text centered] at (5.5,1.5) {\gr \tiny $1$};
\node[text centered] at (7.5,1.5) {\gr \tiny $0$};
\draw[ultra thick] (0.95,2) -- (0.95,7);
\node[text centered] at (4.5,0.5) {$t^9 (1-t)^3 (1-t^2)$};
\end{tikzpicture}
-
\begin{tikzpicture}[scale=0.45,baseline=2cm]
\plusb{0}{2}
\bdark{0}{3}
\plusb{0}{4}
\plusb{0}{5}
\bdark{0}{6}
\plusg{1}{6}
\flatG{2}{6}
\ming{3}{6}
\bdark{4}{6}
\blight{5}{6}
\bdark{6}{6}
\blight{7}{6}
\bdark{8}{6}
\flatB{1}{5}
\flatb{2}{5}
\flatB{3}{5}
\flatb{4}{5}
\flatB{5}{5}
\flatb{6}{5}
\flatB{7}{5}
\minb{8}{5}
\flatB{1}{4}
\flatb{2}{4}
\flatB{3}{4}
\flatb{4}{4}
\flatB{5}{4}
\flatb{6}{4}
\flatB{7}{4}
\minb{8}{4}
\blight{1}{3}
\plusb{2}{3}
\flatB{3}{3}
\flatb{4}{3}
\flatB{5}{3}
\flatb{6}{3}
\flatB{7}{3}
\minb{8}{3}
\flatB{1}{2}
\minb{2}{2}
\plusg{3}{2}
\flatG{4}{2}
\flatg{5}{2}
\flatG{6}{2}
\ming{7}{2}
\bdark{8}{2}
\node[text centered] at (1.5,7.5) {\gr \tiny $0$};
\node[text centered] at (3.5,7.5) {\gr \tiny $1$};
\node[text centered] at (5.5,7.5) {\gr \tiny $1$};
\node[text centered] at (7.5,7.5) {\gr \tiny $1$};
\node[text centered] at (1.5,1.5) {\gr \tiny $1$};
\node[text centered] at (3.5,1.5) {\gr \tiny $1$};
\node[text centered] at (5.5,1.5) {\gr \tiny $1$};
\node[text centered] at (7.5,1.5) {\gr \tiny $0$};
\draw[ultra thick] (0.95,2) -- (0.95,7);
\node[text centered] at (4.5,0.5) {$t^9 (1-t)^3 $};
\end{tikzpicture}
-
\begin{tikzpicture}[scale=0.45,baseline=2cm]
\plusb{0}{2}
\bdark{0}{3}
\plusb{0}{4}
\plusb{0}{5}
\bdark{0}{6}
\blight{1}{6}
\bdark{2}{6}
\blight{3}{6}
\bdark{4}{6}
\blight{5}{6}
\plusb{6}{6}
\flatB{7}{6}
\minb{8}{6}
\flatB{1}{5}
\flatb{2}{5}
\flatB{3}{5}
\flatb{4}{5}
\flatB{5}{5}
\flatb{6}{5}
\flatB{7}{5}
\minb{8}{5}
\flatB{1}{4}
\flatb{2}{4}
\flatB{3}{4}
\flatb{4}{4}
\flatB{5}{4}
\flatb{6}{4}
\flatB{7}{4}
\minb{8}{4}
\plusg{1}{3}
\flatG{2}{3}
\flatg{3}{3}
\flatG{4}{3}
\flatg{5}{3}
\flatG{6}{3}
\ming{7}{3}
\bdark{8}{3}
\flatB{1}{2}
\flatb{2}{2}
\flatB{3}{2}
\flatb{4}{2}
\flatB{5}{2}
\minb{6}{2}
\blight{7}{2}
\bdark{8}{2}
\node[text centered] at (1.5,7.5) {\gr \tiny $0$};
\node[text centered] at (3.5,7.5) {\gr \tiny $1$};
\node[text centered] at (5.5,7.5) {\gr \tiny $1$};
\node[text centered] at (7.5,7.5) {\gr \tiny $1$};
\node[text centered] at (1.5,1.5) {\gr \tiny $1$};
\node[text centered] at (3.5,1.5) {\gr \tiny $1$};
\node[text centered] at (5.5,1.5) {\gr \tiny $1$};
\node[text centered] at (7.5,1.5) {\gr \tiny $0$};
\draw[ultra thick] (0.95,2) -- (0.95,7);
\node[text centered] at (4.5,0.5) {$t^{10} (1-t)^2 $};
\end{tikzpicture}
\\
+
&
\begin{tikzpicture}[scale=0.45,baseline=2cm]
\plusb{0}{2}
\bdark{0}{3}
\plusb{0}{4}
\plusb{0}{5}
\bdark{0}{6}
\blight{1}{6}
\bdark{2}{6}
\blight{3}{6}
\plusb{4}{6}
\flatB{5}{6}
\flatb{6}{6}
\flatB{7}{6}
\minb{8}{6}
\flatB{1}{5}
\flatb{2}{5}
\flatB{3}{5}
\minb{4}{5}
\plusg{5}{5}
\flatG{6}{5}
\ming{7}{5}
\bdark{8}{5}
\flatB{1}{4}
\flatb{2}{4}
\flatB{3}{4}
\flatb{4}{4}
\flatB{5}{4}
\flatb{6}{4}
\flatB{7}{4}
\minb{8}{4}
\plusg{1}{3}
\flatG{2}{3}
\flatg{3}{3}
\flatG{4}{3}
\ming{5}{3}
\plusb{6}{3}
\flatB{7}{3}
\minb{8}{3}
\flatB{1}{2}
\flatb{2}{2}
\flatB{3}{2}
\flatb{4}{2}
\flatB{5}{2}
\minb{6}{2}
\blight{7}{2}
\bdark{8}{2}
\node[text centered] at (1.5,7.5) {\gr \tiny $0$};
\node[text centered] at (3.5,7.5) {\gr \tiny $1$};
\node[text centered] at (5.5,7.5) {\gr \tiny $1$};
\node[text centered] at (7.5,7.5) {\gr \tiny $1$};
\node[text centered] at (1.5,1.5) {\gr \tiny $1$};
\node[text centered] at (3.5,1.5) {\gr \tiny $1$};
\node[text centered] at (5.5,1.5) {\gr \tiny $1$};
\node[text centered] at (7.5,1.5) {\gr \tiny $0$};
\draw[ultra thick] (0.95,2) -- (0.95,7);
\node[text centered] at (4.5,0.5) {$t^9 (1-t)^3(1-t^2)$};
\end{tikzpicture}
+
\begin{tikzpicture}[scale=0.45,baseline=2cm]
\plusb{0}{2}
\bdark{0}{3}
\plusb{0}{4}
\plusb{0}{5}
\bdark{0}{6}
\blight{1}{6}
\bdark{2}{6}
\plusg{3}{6}
\flatG{4}{6}
\ming{5}{6}
\plusb{6}{6}
\flatB{7}{6}
\minb{8}{6}
\flatB{1}{5}
\flatb{2}{5}
\flatB{3}{5}
\flatb{4}{5}
\flatB{5}{5}
\minb{6}{5}
\blight{7}{5}
\bdark{8}{5}
\flatB{1}{4}
\flatb{2}{4}
\flatB{3}{4}
\flatb{4}{4}
\flatB{5}{4}
\flatb{6}{4}
\flatB{7}{4}
\minb{8}{4}
\plusg{1}{3}
\flatG{2}{3}
\ming{3}{3}
\plusb{4}{3}
\flatB{5}{3}
\flatb{6}{3}
\flatB{7}{3}
\minb{8}{3}
\flatB{1}{2}
\flatb{2}{2}
\flatB{3}{2}
\minb{4}{2}
\plusg{5}{2}
\flatG{6}{2}
\ming{7}{2}
\bdark{8}{2}
\node[text centered] at (1.5,7.5) {\gr \tiny $0$};
\node[text centered] at (3.5,7.5) {\gr \tiny $1$};
\node[text centered] at (5.5,7.5) {\gr \tiny $1$};
\node[text centered] at (7.5,7.5) {\gr \tiny $1$};
\node[text centered] at (1.5,1.5) {\gr \tiny $1$};
\node[text centered] at (3.5,1.5) {\gr \tiny $1$};
\node[text centered] at (5.5,1.5) {\gr \tiny $1$};
\node[text centered] at (7.5,1.5) {\gr \tiny $0$};
\draw[ultra thick] (0.95,2) -- (0.95,7);
\node[text centered] at (4.5,0.5) {$t^9 (1-t)^4 (1-t^2)$};
\end{tikzpicture}
+
\begin{tikzpicture}[scale=0.45,baseline=2cm]
\plusb{0}{2}
\bdark{0}{3}
\plusb{0}{4}
\plusb{0}{5}
\bdark{0}{6}
\plusg{1}{6}
\flatG{2}{6}
\ming{3}{6}
\bdark{4}{6}
\blight{5}{6}
\plusb{6}{6}
\flatB{7}{6}
\minb{8}{6}
\flatB{1}{5}
\flatb{2}{5}
\flatB{3}{5}
\flatb{4}{5}
\flatB{5}{5}
\minb{6}{5}
\blight{7}{5}
\bdark{8}{5}
\flatB{1}{4}
\flatb{2}{4}
\flatB{3}{4}
\flatb{4}{4}
\flatB{5}{4}
\flatb{6}{4}
\flatB{7}{4}
\minb{8}{4}
\blight{1}{3}
\plusb{2}{3}
\flatB{3}{3}
\flatb{4}{3}
\flatB{5}{3}
\flatb{6}{3}
\flatB{7}{3}
\minb{8}{3}
\flatB{1}{2}
\minb{2}{2}
\plusg{3}{2}
\flatG{4}{2}
\flatg{5}{2}
\flatG{6}{2}
\ming{7}{2}
\bdark{8}{2}
\node[text centered] at (1.5,7.5) {\gr \tiny $0$};
\node[text centered] at (3.5,7.5) {\gr \tiny $1$};
\node[text centered] at (5.5,7.5) {\gr \tiny $1$};
\node[text centered] at (7.5,7.5) {\gr \tiny $1$};
\node[text centered] at (1.5,1.5) {\gr \tiny $1$};
\node[text centered] at (3.5,1.5) {\gr \tiny $1$};
\node[text centered] at (5.5,1.5) {\gr \tiny $1$};
\node[text centered] at (7.5,1.5) {\gr \tiny $0$};
\draw[ultra thick] (0.95,2) -- (0.95,7);
\node[text centered] at (4.5,0.5) {$t^9 (1-t)^4 $};
\end{tikzpicture}
+
\begin{tikzpicture}[scale=0.45,baseline=2cm]
\plusb{0}{2}
\bdark{0}{3}
\plusb{0}{4}
\plusb{0}{5}
\bdark{0}{6}
\blight{1}{6}
\bdark{2}{6}
\blight{3}{6}
\plusb{4}{6}
\flatB{5}{6}
\flatb{6}{6}
\flatB{7}{6}
\minb{8}{6}
\flatB{1}{5}
\flatb{2}{5}
\flatB{3}{5}
\minb{4}{5}
\blight{5}{5}
\plusb{6}{5}
\flatB{7}{5}
\minb{8}{5}
\flatB{1}{4}
\flatb{2}{4}
\flatB{3}{4}
\flatb{4}{4}
\flatB{5}{4}
\flatb{6}{4}
\flatB{7}{4}
\minb{8}{4}
\plusg{1}{3}
\flatG{2}{3}
\flatg{3}{3}
\flatG{4}{3}
\flatg{5}{3}
\flatG{6}{3}
\ming{7}{3}
\bdark{8}{3}
\flatB{1}{2}
\flatb{2}{2}
\flatB{3}{2}
\flatb{4}{2}
\flatB{5}{2}
\minb{6}{2}
\blight{7}{2}
\bdark{8}{2}
\node[text centered] at (1.5,7.5) {\gr \tiny $0$};
\node[text centered] at (3.5,7.5) {\gr \tiny $1$};
\node[text centered] at (5.5,7.5) {\gr \tiny $1$};
\node[text centered] at (7.5,7.5) {\gr \tiny $1$};
\node[text centered] at (1.5,1.5) {\gr \tiny $1$};
\node[text centered] at (3.5,1.5) {\gr \tiny $1$};
\node[text centered] at (5.5,1.5) {\gr \tiny $1$};
\node[text centered] at (7.5,1.5) {\gr \tiny $0$};
\draw[ultra thick] (0.95,2) -- (0.95,7);
\node[text centered] at (4.5,0.5) {$t^{10} (1-t)^3 $};
\end{tikzpicture}
\\
+
&
\begin{tikzpicture}[scale=0.45,baseline=2cm]
\plusb{0}{2}
\bdark{0}{3}
\plusb{0}{4}
\plusb{0}{5}
\bdark{0}{6}
\blight{1}{6}
\plusb{2}{6}
\flatB{3}{6}
\flatb{4}{6}
\flatB{5}{6}
\flatb{6}{6}
\flatB{7}{6}
\minb{8}{6}
\flatB{1}{5}
\minb{2}{5}
\blight{3}{5}
\bdark{4}{5}
\plusg{5}{5}
\flatG{6}{5}
\ming{7}{5}
\bdark{8}{5}
\flatB{1}{4}
\flatb{2}{4}
\flatB{3}{4}
\flatb{4}{4}
\flatB{5}{4}
\flatb{6}{4}
\flatB{7}{4}
\minb{8}{4}
\plusg{1}{3}
\flatG{2}{3}
\flatg{3}{3}
\flatG{4}{3}
\ming{5}{3}
\plusb{6}{3}
\flatB{7}{3}
\minb{8}{3}
\flatB{1}{2}
\flatb{2}{2}
\flatB{3}{2}
\flatb{4}{2}
\flatB{5}{2}
\minb{6}{2}
\blight{7}{2}
\bdark{8}{2}
\node[text centered] at (1.5,7.5) {\gr \tiny $0$};
\node[text centered] at (3.5,7.5) {\gr \tiny $1$};
\node[text centered] at (5.5,7.5) {\gr \tiny $1$};
\node[text centered] at (7.5,7.5) {\gr \tiny $1$};
\node[text centered] at (1.5,1.5) {\gr \tiny $1$};
\node[text centered] at (3.5,1.5) {\gr \tiny $1$};
\node[text centered] at (5.5,1.5) {\gr \tiny $1$};
\node[text centered] at (7.5,1.5) {\gr \tiny $0$};
\draw[ultra thick] (0.95,2) -- (0.95,7);
\node[text centered] at (4.5,0.5) {$t^9 (1-t)^3 (1-t^2)$};
\end{tikzpicture}
+
\begin{tikzpicture}[scale=0.45,baseline=2cm]
\plusb{0}{2}
\bdark{0}{3}
\plusb{0}{4}
\plusb{0}{5}
\bdark{0}{6}
\blight{1}{6}
\plusb{2}{6}
\flatB{3}{6}
\flatb{4}{6}
\flatB{5}{6}
\flatb{6}{6}
\flatB{7}{6}
\minb{8}{6}
\flatB{1}{5}
\minb{2}{5}
\plusg{3}{5}
\flatG{4}{5}
\ming{5}{5}
\bdark{6}{5}
\blight{7}{5}
\bdark{8}{5}
\flatB{1}{4}
\flatb{2}{4}
\flatB{3}{4}
\flatb{4}{4}
\flatB{5}{4}
\flatb{6}{4}
\flatB{7}{4}
\minb{8}{4}
\plusg{1}{3}
\flatG{2}{3}
\ming{3}{3}
\plusb{4}{3}
\flatB{5}{3}
\flatb{6}{3}
\flatB{7}{3}
\minb{8}{3}
\flatB{1}{2}
\flatb{2}{2}
\flatB{3}{2}
\minb{4}{2}
\plusg{5}{2}
\flatG{6}{2}
\ming{7}{2}
\bdark{8}{2}
\node[text centered] at (1.5,7.5) {\gr \tiny $0$};
\node[text centered] at (3.5,7.5) {\gr \tiny $1$};
\node[text centered] at (5.5,7.5) {\gr \tiny $1$};
\node[text centered] at (7.5,7.5) {\gr \tiny $1$};
\node[text centered] at (1.5,1.5) {\gr \tiny $1$};
\node[text centered] at (3.5,1.5) {\gr \tiny $1$};
\node[text centered] at (5.5,1.5) {\gr \tiny $1$};
\node[text centered] at (7.5,1.5) {\gr \tiny $0$};
\draw[ultra thick] (0.95,2) -- (0.95,7);
\node[text centered] at (4.5,0.5) {$t^9 (1-t)^4 (1-t^2)$};
\end{tikzpicture}
+
\begin{tikzpicture}[scale=0.45,baseline=2cm]
\plusb{0}{2}
\bdark{0}{3}
\plusb{0}{4}
\plusb{0}{5}
\bdark{0}{6}
\plusg{1}{6}
\flatG{2}{6}
\ming{3}{6}
\plusb{4}{6}
\flatB{5}{6}
\flatb{6}{6}
\flatB{7}{6}
\minb{8}{6}
\flatB{1}{5}
\flatb{2}{5}
\flatB{3}{5}
\minb{4}{5}
\blight{5}{5}
\bdark{6}{5}
\blight{7}{5}
\bdark{8}{5}
\flatB{1}{4}
\flatb{2}{4}
\flatB{3}{4}
\flatb{4}{4}
\flatB{5}{4}
\flatb{6}{4}
\flatB{7}{4}
\minb{8}{4}
\blight{1}{3}
\plusb{2}{3}
\flatB{3}{3}
\flatb{4}{3}
\flatB{5}{3}
\flatb{6}{3}
\flatB{7}{3}
\minb{8}{3}
\flatB{1}{2}
\minb{2}{2}
\plusg{3}{2}
\flatG{4}{2}
\flatg{5}{2}
\flatG{6}{2}
\ming{7}{2}
\bdark{8}{2}
\node[text centered] at (1.5,7.5) {\gr \tiny $0$};
\node[text centered] at (3.5,7.5) {\gr \tiny $1$};
\node[text centered] at (5.5,7.5) {\gr \tiny $1$};
\node[text centered] at (7.5,7.5) {\gr \tiny $1$};
\node[text centered] at (1.5,1.5) {\gr \tiny $1$};
\node[text centered] at (3.5,1.5) {\gr \tiny $1$};
\node[text centered] at (5.5,1.5) {\gr \tiny $1$};
\node[text centered] at (7.5,1.5) {\gr \tiny $0$};
\draw[ultra thick] (0.95,2) -- (0.95,7);
\node[text centered] at (4.5,0.5) {$t^9 (1-t)^4 $};
\end{tikzpicture}
+
\begin{tikzpicture}[scale=0.45,baseline=2cm]
\plusb{0}{2}
\bdark{0}{3}
\plusb{0}{4}
\plusb{0}{5}
\bdark{0}{6}
\blight{1}{6}
\plusb{2}{6}
\flatB{3}{6}
\flatb{4}{6}
\flatB{5}{6}
\flatb{6}{6}
\flatB{7}{6}
\minb{8}{6}
\flatB{1}{5}
\minb{2}{5}
\blight{3}{5}
\bdark{4}{5}
\blight{5}{5}
\plusb{6}{5}
\flatB{7}{5}
\minb{8}{5}
\flatB{1}{4}
\flatb{2}{4}
\flatB{3}{4}
\flatb{4}{4}
\flatB{5}{4}
\flatb{6}{4}
\flatB{7}{4}
\minb{8}{4}
\plusg{1}{3}
\flatG{2}{3}
\flatg{3}{3}
\flatG{4}{3}
\flatg{5}{3}
\flatG{6}{3}
\ming{7}{3}
\bdark{8}{3}
\flatB{1}{2}
\flatb{2}{2}
\flatB{3}{2}
\flatb{4}{2}
\flatB{5}{2}
\minb{6}{2}
\blight{7}{2}
\bdark{8}{2}
\node[text centered] at (1.5,7.5) {\gr \tiny $0$};
\node[text centered] at (3.5,7.5) {\gr \tiny $1$};
\node[text centered] at (5.5,7.5) {\gr \tiny $1$};
\node[text centered] at (7.5,7.5) {\gr \tiny $1$};
\node[text centered] at (1.5,1.5) {\gr \tiny $1$};
\node[text centered] at (3.5,1.5) {\gr \tiny $1$};
\node[text centered] at (5.5,1.5) {\gr \tiny $1$};
\node[text centered] at (7.5,1.5) {\gr \tiny $0$};
\draw[ultra thick] (0.95,2) -- (0.95,7);
\node[text centered] at (4.5,0.5) {$t^{10} (1-t)^3 $};
\end{tikzpicture}
\\
-
&
\begin{tikzpicture}[scale=0.45,baseline=2cm]
\plusb{0}{2}
\bdark{0}{3}
\plusb{0}{4}
\plusb{0}{5}
\bdark{0}{6}
\blight{1}{6}
\plusb{2}{6}
\flatB{3}{6}
\flatb{4}{6}
\flatB{5}{6}
\flatb{6}{6}
\flatB{7}{6}
\minb{8}{6}
\flatB{1}{5}
\minb{2}{5}
\blight{3}{5}
\plusb{4}{5}
\flatB{5}{5}
\flatb{6}{5}
\flatB{7}{5}
\minb{8}{5}
\flatB{1}{4}
\flatb{2}{4}
\flatB{3}{4}
\minb{4}{4}
\plusg{5}{4}
\flatG{6}{4}
\ming{7}{4}
\bdark{8}{4}
\plusg{1}{3}
\flatG{2}{3}
\flatg{3}{3}
\flatG{4}{3}
\ming{5}{3}
\plusb{6}{3}
\flatB{7}{3}
\minb{8}{3}
\flatB{1}{2}
\flatb{2}{2}
\flatB{3}{2}
\flatb{4}{2}
\flatB{5}{2}
\minb{6}{2}
\blight{7}{2}
\bdark{8}{2}
\node[text centered] at (1.5,7.5) {\gr \tiny $0$};
\node[text centered] at (3.5,7.5) {\gr \tiny $1$};
\node[text centered] at (5.5,7.5) {\gr \tiny $1$};
\node[text centered] at (7.5,7.5) {\gr \tiny $1$};
\node[text centered] at (1.5,1.5) {\gr \tiny $1$};
\node[text centered] at (3.5,1.5) {\gr \tiny $1$};
\node[text centered] at (5.5,1.5) {\gr \tiny $1$};
\node[text centered] at (7.5,1.5) {\gr \tiny $0$};
\draw[ultra thick] (0.95,2) -- (0.95,7);
\node[text centered] at (4.5,0.5) {$t^9 (1-t)^4(1-t^2)$};
\end{tikzpicture}
-
\begin{tikzpicture}[scale=0.45,baseline=2cm]
\plusb{0}{2}
\bdark{0}{3}
\plusb{0}{4}
\plusb{0}{5}
\bdark{0}{6}
\blight{1}{6}
\plusb{2}{6}
\flatB{3}{6}
\flatb{4}{6}
\flatB{5}{6}
\flatb{6}{6}
\flatB{7}{6}
\minb{8}{6}
\flatB{1}{5}
\minb{2}{5}
\plusg{3}{5}
\flatG{4}{5}
\ming{5}{5}
\plusb{6}{5}
\flatB{7}{5}
\minb{8}{5}
\flatB{1}{4}
\flatb{2}{4}
\flatB{3}{4}
\flatb{4}{4}
\flatB{5}{4}
\minb{6}{4}
\blight{7}{4}
\bdark{8}{4}
\plusg{1}{3}
\flatG{2}{3}
\ming{3}{3}
\plusb{4}{3}
\flatB{5}{3}
\flatb{6}{3}
\flatB{7}{3}
\minb{8}{3}
\flatB{1}{2}
\flatb{2}{2}
\flatB{3}{2}
\minb{4}{2}
\plusg{5}{2}
\flatG{6}{2}
\ming{7}{2}
\bdark{8}{2}
\node[text centered] at (1.5,7.5) {\gr \tiny $0$};
\node[text centered] at (3.5,7.5) {\gr \tiny $1$};
\node[text centered] at (5.5,7.5) {\gr \tiny $1$};
\node[text centered] at (7.5,7.5) {\gr \tiny $1$};
\node[text centered] at (1.5,1.5) {\gr \tiny $1$};
\node[text centered] at (3.5,1.5) {\gr \tiny $1$};
\node[text centered] at (5.5,1.5) {\gr \tiny $1$};
\node[text centered] at (7.5,1.5) {\gr \tiny $0$};
\draw[ultra thick] (0.95,2) -- (0.95,7);
\node[text centered] at (4.5,0.5) {$t^9 (1-t)^5 (1-t^2)$};
\end{tikzpicture}
-
\begin{tikzpicture}[scale=0.45,baseline=2cm]
\plusb{0}{2}
\bdark{0}{3}
\plusb{0}{4}
\plusb{0}{5}
\bdark{0}{6}
\plusg{1}{6}
\flatG{2}{6}
\ming{3}{6}
\plusb{4}{6}
\flatB{5}{6}
\flatb{6}{6}
\flatB{7}{6}
\minb{8}{6}
\flatB{1}{5}
\flatb{2}{5}
\flatB{3}{5}
\minb{4}{5}
\blight{5}{5}
\plusb{6}{5}
\flatB{7}{5}
\minb{8}{5}
\flatB{1}{4}
\flatb{2}{4}
\flatB{3}{4}
\flatb{4}{4}
\flatB{5}{4}
\minb{6}{4}
\blight{7}{4}
\bdark{8}{4}
\blight{1}{3}
\plusb{2}{3}
\flatB{3}{3}
\flatb{4}{3}
\flatB{5}{3}
\flatb{6}{3}
\flatB{7}{3}
\minb{8}{3}
\flatB{1}{2}
\minb{2}{2}
\plusg{3}{2}
\flatG{4}{2}
\flatg{5}{2}
\flatG{6}{2}
\ming{7}{2}
\bdark{8}{2}
\node[text centered] at (1.5,7.5) {\gr \tiny $0$};
\node[text centered] at (3.5,7.5) {\gr \tiny $1$};
\node[text centered] at (5.5,7.5) {\gr \tiny $1$};
\node[text centered] at (7.5,7.5) {\gr \tiny $1$};
\node[text centered] at (1.5,1.5) {\gr \tiny $1$};
\node[text centered] at (3.5,1.5) {\gr \tiny $1$};
\node[text centered] at (5.5,1.5) {\gr \tiny $1$};
\node[text centered] at (7.5,1.5) {\gr \tiny $0$};
\draw[ultra thick] (0.95,2) -- (0.95,7);
\node[text centered] at (4.5,0.5) {$t^9 (1-t)^5 $};
\end{tikzpicture}
-
\begin{tikzpicture}[scale=0.45,baseline=2cm]
\plusb{0}{2}
\bdark{0}{3}
\plusb{0}{4}
\plusb{0}{5}
\bdark{0}{6}
\blight{1}{6}
\plusb{2}{6}
\flatB{3}{6}
\flatb{4}{6}
\flatB{5}{6}
\flatb{6}{6}
\flatB{7}{6}
\minb{8}{6}
\flatB{1}{5}
\minb{2}{5}
\blight{3}{5}
\plusb{4}{5}
\flatB{5}{5}
\flatb{6}{5}
\flatB{7}{5}
\minb{8}{5}
\flatB{1}{4}
\flatb{2}{4}
\flatB{3}{4}
\minb{4}{4}
\blight{5}{4}
\plusb{6}{4}
\flatB{7}{4}
\minb{8}{4}
\plusg{1}{3}
\flatG{2}{3}
\flatg{3}{3}
\flatG{4}{3}
\flatg{5}{3}
\flatG{6}{3}
\ming{7}{3}
\bdark{8}{3}
\flatB{1}{2}
\flatb{2}{2}
\flatB{3}{2}
\flatb{4}{2}
\flatB{5}{2}
\minb{6}{2}
\blight{7}{2}
\bdark{8}{2}
\node[text centered] at (1.5,7.5) {\gr \tiny $0$};
\node[text centered] at (3.5,7.5) {\gr \tiny $1$};
\node[text centered] at (5.5,7.5) {\gr \tiny $1$};
\node[text centered] at (7.5,7.5) {\gr \tiny $1$};
\node[text centered] at (1.5,1.5) {\gr \tiny $1$};
\node[text centered] at (3.5,1.5) {\gr \tiny $1$};
\node[text centered] at (5.5,1.5) {\gr \tiny $1$};
\node[text centered] at (7.5,1.5) {\gr \tiny $0$};
\draw[ultra thick] (0.95,2) -- (0.95,7);
\node[text centered] at (4.5,0.5) {$t^{10} (1-t)^4 $};
\end{tikzpicture}
\end{align*}
Although this is a large sum, we have arranged the puzzles so that terms in each column combine in a neat way. In fact, one can see that the puzzles in the first three columns are ``factorized'' with respect to configurations that occur in their first three and last two rows (\textit{i.e.} there are 4 possible configurations of their top three rows, and 2 of their last two rows, for a total of 8 overall configurations). For this reason, each of these columns sum to produce a single factorized expression. The fourth column contains examples of puzzles which are not factorized in this way (\textit{i.e.} the configurations of the top three rows and the last two rows are not independent of each other). In general, we cannot expect such puzzles to combine nicely (although they do in this particular case).

Summing the Boltzmann weights in each column, we find that 
\begin{align*}
\sum_{P \in \mathbb{P}_{\nu}(\mu,\lambda)} (-1)^{L(P)} W(P) 
&= 
t^{12}(1-t)(1-t^2)+t^{12}(1-t)^2(1-t^2)+t^{12}(1-t)^2-t^{12}(1-t)^2
\\
&=
t^{12}(2-t)(1-t)(1-t^2).
\end{align*}
Finally, taking care of the normalization for this example, we obtain
\begin{align*}
f^{\b\lambda}_{\b\mu\b\nu}(t)
&=
t^{-(m+1)|\nu|}
\times
\frac{B_{\b\lambda}(t)}{B_{\b\mu}(t) b_{\b\nu}(t)}
\times
\sum_{P \in \mathbb{P}_{\nu}(\mu,\lambda)} (-1)^{L(P)} W(P) 
\\
&=
\frac{t^{-12}(1-t)^3}{(1-t)^3 (1-t)^2}
\times
t^{12}(2-t)(1-t)(1-t^2)
=
(2-t)(1+t). 
\end{align*}

\section{Proof of Theorem \ref{thm-kost}}
\label{B}

\subsection{Proof of homogeneity and degree}

The $L$ matrix \eqref{Lmat-bf} is used in the construction of $\mathcal{K}^{\lambda}_{\mu\nu}(x;t)$, which gives rise to an $x$ weight every time the right edge of a darkly-shaded tile is vacant. No $x$ weight is produced when such an edge is occupied. We abbreviate these edges by VREs and OREs (vacant and occupied right edges).

In any legal lattice configuration, the red particles give rise to 
$|\mu|-|\lambda| = |\b\nu| = (m+M-1)n - |\nu|$ OREs, while the black particles give rise to $(m+M-1)|\nu|$. This is a combined $(m+M-1)(n+|\nu|)-|\nu|$ OREs, so in any lattice configuration there are exactly 
$(m+M)(n+|\nu|)-(m+M-1)(n+|\nu|)+|\nu| = n+2|\nu|$ VREs, each with an associated $x$ weight. This establishes the homogeneity and degree of $\mathcal{K}^{\lambda}_{\mu\nu}(x;t)$. Considering that the entire right edge of the lattice is unoccupied, this produces $n+|\nu|$ trivial VREs, meaning that $\mathcal{K}^{\lambda}_{\mu\nu}(x;t)$ has an obvious common factor of $\prod_{i=1}^{n+|\nu|} x_i$.

\subsection{Multiple sum expression for \texorpdfstring{$\mathcal{K}^{\lambda}_{\mu\nu}(x;t)$}{K lambda mu nu(x;t)}}

Recall the commutation relations 
\begin{align}
\label{b-e}
\Tb(x) \Te(y) 
&=
\frac{y-tx}{y-x} \Te(y) \Tb(x)
+
\frac{(1-t)x}{x-y} \Te(x) \Tb(y),
\\
\label{b-b}
\Tb(x) \Tb(y)
&=
\Tb(y) \Tb(x),
\end{align}
from Section \ref{fer-bos-model}.

\begin{lem}{\rm 
By virtue of the commutation relations \eqref{b-e} and \eqref{b-b}, one can derive the following equation:
\begin{multline}
\label{bbb-e}
\bra{{\re\cev \mu}}
\otimes
\bra{\ }
\Tb(x_1)
\dots
\Tb(x_{k-1})
\Te(x_k)
=
\prod_{j=1}^{k-1}
\left(
\frac{x_k-tx_j}{x_k-x_j}
\right)
\bra{{\re\cev \mu}}
\otimes
\bra{\ }
\Te(x_k)
\Tb(x_1)
\dots
\Tb(x_{k-1})
\\
+
\sum_{i=1}^{k-1}
\left(
\frac{(1-t)x_i}{x_i-x_k}
\prod_{j\not=i}^{k-1}
\left(
\frac{x_i-tx_j}{x_i-x_j}
\right)
\bra{{\re\cev \mu}}
\otimes
\bra{\ }
\Te(x_i)
\Tb(x_1)
\dots
\widehat{\Tb}(x_i)
\dots
\Tb(x_k)
\right),
\end{multline}
valid for any $k \geq 1$.}
\end{lem}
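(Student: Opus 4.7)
The plan is to mirror the proof of Lemma \ref{lem:bethe} almost verbatim, since the commutation relations \eqref{b-e}--\eqref{b-b} used here differ from \eqref{g-b}--\eqref{g-g} only in the numerator of the non-leading term ($(1-t)x$ in place of $(1-t)y$). First I would verify the leading term on the right hand side of \eqref{bbb-e} by using \eqref{b-e} a total of $k-1$ times to transport $\Te(x_k)$ leftward past each $\Tb(x_j)$ in turn, at every step retaining only the ``leading'' piece $\tfrac{x_k - t x_j}{x_k - x_j}\Te(x_k)\Tb(x_j)$. The resulting coefficient $\prod_{j=1}^{k-1}(x_k-tx_j)/(x_k-x_j)$ multiplies the term $\bra{{\re\cev\mu}}\otimes\bra{\ }\Te(x_k)\Tb(x_1)\cdots\Tb(x_{k-1})$, with bosonic commutativity \eqref{b-b} guaranteeing that the order among the $\Tb(x_j)$'s is immaterial.

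Next, to obtain the $i=k-1$ summand, I would apply \eqref{b-e} once at the adjacent sites $(k-1,k)$ and extract instead the ``non-leading'' contribution $\tfrac{(1-t)x_{k-1}}{x_{k-1}-x_k}\Te(x_{k-1})\Tb(x_k)$; at this point $\Te(x_{k-1})$ sits to the right of the block $\Tb(x_1)\cdots\Tb(x_{k-2})$, so I apply \eqref{b-e} a further $k-2$ times (each time keeping only the leading piece $(x_{k-1}-tx_j)/(x_{k-1}-x_j)$) to push $\Te(x_{k-1})$ into the leftmost position. Collecting factors yields exactly the coefficient displayed as the $i=k-1$ term in \eqref{bbb-e}.

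For the remaining terms with $i<k-1$, I would invoke the symmetry in $\{x_1,\dots,x_{k-1}\}$ of the left hand side of \eqref{bbb-e}, which follows immediately from the commutativity \eqref{b-b} of the $\Tb(x_j)$'s. Transposing $x_{k-1}$ with $x_i$ in the $i=k-1$ coefficient produces precisely the $i$-th summand (with the range $j\neq i$ now inheriting $k-1$ and omitting $i$), so all terms are accounted for.

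The only subtle point, and what I expect to be the main bookkeeping obstacle, is confirming that no cross-terms have been double-counted or omitted: in principle, each time the non-leading piece of \eqref{b-e} is taken, it produces a $\Te$ with spectral parameter inherited from a $\Tb$, and one must check that no further ``non-leading'' applications are needed at subsequent steps to reproduce a given summand. This is resolved by the standard observation from algebraic Bethe Ansatz that once $\Te$ carries the parameter $x_i$, any further commutation with $\Tb(x_j)$ for $j\neq i,k$ can only contribute its leading term without duplicating the $i=k-1$ contribution, since the non-leading term would generate a pole at $x_i = x_j$ which is absent from the symmetrized final answer. This completes the derivation.
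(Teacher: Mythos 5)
Your proposal is correct and follows essentially the same route as the paper: the paper proves this lemma by noting it is the direct analogue of Lemma~\ref{lem:bethe}, whose proof likewise extracts the leading term by keeping only the first piece of the commutation relation at each of the $k-1$ steps, obtains the $i=k-1$ summand by taking the non-leading piece once (here contributing $(1-t)x_{k-1}/(x_{k-1}-x_k)$ rather than $(1-t)x_k/(x_{k-1}-x_k)$) followed by leading pieces, and recovers the remaining summands by symmetrizing in $x_1,\dots,x_{k-1}$ using \eqref{b-b}. Your closing remark about uniqueness of the extraction is exactly the paper's observation that keeping these pieces is the only way to produce terms proportional to the desired ordered operator product, so no contributions are double-counted or missed.
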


\begin{proof}{\rm
The proof is almost identical to that of Lemma \ref{lem:bethe}, except that the commutation relation \eqref{b-e} differs from \eqref{g-b}, leading to a slightly different final expression.
}
\end{proof}

Next, we notice that \eqref{bbb-e} can be written more succinctly as a single sum:
\begin{multline}
\label{1-iterat2}
\bra{{\re\cev \mu}}
\otimes
\bra{\ }
\Tb(x_1)
\dots
\Tb(x_{k-1})
\Te(x_k)
=
\sum_{i=1}^{k}
\frac{\prod_{j=1}^{k-1} (x_i-tx_j)}{\prod_{j\not=i}^{k} (x_i-x_j)}
\bra{{\re\cev \mu}}
\otimes
\bra{\ }
\Te(x_i)
\Tb(x_1)
\dots
\widehat{\Tb}(x_i)
\dots
\Tb(x_k).
\end{multline}
Repeatedly iterating \eqref{1-iterat2}, it is then straightforward to show that
\begin{multline}
\label{n-iterat2}
\mathcal{K}^{\lambda}_{\mu\nu}(x;t)
=
\sum_{1 \leq i_1 \leq k_1}
\cdots
\sum_{\substack{1 \leq i_{n} \leq k_{n} \\ i_n \not=i_1,\dots,i_{n-1}}}
\\
\frac{\prod_{j_1}^{k_1-1} (x_{i_1}-tx_{j_1})}
{\prod_{j_1\not=i_1}^{k_1} (x_{i_1}-x_{j_1})}
\cdots
\frac{\prod_{j_n\not=i_1,\dots,i_{n-1}}^{k_n-1} (x_{i_n}-tx_{j_n})}
{\prod_{j_n\not=i_1,\dots,i_n}^{k_n} (x_{i_n}-x_{j_n})}
\bra{{\re\cev \mu}}
\otimes
\bra{\ }
\Te(x_{i_1})
\dots
\Te(x_{i_n})
\Tb(x_{\hat\imath_1})
\dots
\Tb(x_{\hat\imath_{|\nu|}})
\ket{{\re\cev \lambda}}
\otimes
\ket{{\cev 0}},
\end{multline}
where the summation is over distinct integers $\{i_1,\dots,i_n\}$ such that $i_j \leq k_j$ for all $1 \leq j \leq n$, and where $\{\hat\imath_1,\dots,\hat\imath_{|\nu|}\}$ denotes the complement of $\{i_1,\dots,i_n\}$ in $\{1,\dots,n+|\nu|\}$.

\subsection{Trivial action}

The following lemma is the direct analogue of Lemma \ref{lem:triv}, and can be proved in the very same way.

\begin{lem}{\rm
Let $\ket{{\re\cev \lambda}}$ be an arbitrary length-$\ell$ partition state in ${\re \mathcal{F}}$, and $\ket{{\cev 0}}$ the length-$N$ zero partition $(0,\dots,0)$ in $\mathcal{B}$ ($N$ repetitions of 0). Then
\begin{align}
\label{trivial-act2}
\Tb(y_1)
\dots
\Tb(y_N) 
\ket{{\re\cev \lambda}}
\otimes
\ket{{\cev 0}}
=
t^{\ell N}
\prod_{i=1}^{N}
y_i
\ket{{\re\cev \lambda}}
\otimes
\ket{\ }.
\end{align}

}
\end{lem}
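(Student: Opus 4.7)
The plan is to mimic the proof of Lemma \ref{lem:triv} almost verbatim, with two small modifications accounting for the fermionic nature of the upper factor and the extra spectral-parameter factor in the $(2,1)$ entry of the fermion-boson $L$-matrix \eqref{Lmat-bf}. The key observation is that the length-$N$ zero partition state $\ket{{\cev 0}}$ places all $N$ bosons at lattice position $0$ (since $\overline{m}_0 = N$ and $\overline{m}_k = 0$ for $k > 0$ when the parent partition is $(0,\dots,0)$), so position $0$ is the unique site where the black auxiliary line can exchange particles with the bosonic Fock space.

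First I will analyse a single row of the graphical representation of $\Tb(y) \ket{{\re \cev \lambda}} \otimes \ket{{\cev 0}}$. The auxiliary enters on the right edge in state $1$ and must exit on the left edge in state $2$ (the black/bosonic state). Conversion from state $1$ to state $2$ can only happen at a $(2,1)$ vertex, whose entry in \eqref{Lmat-bf} is $x\,[t^{\re N} \otimes \phi]$. Because $\phi$ annihilates a boson, this conversion is forbidden at any empty bosonic site, and is therefore forced to occur at position $0$, where all $N$ bosons are concentrated. That conversion contributes a factor $y$ from the spectral parameter (absent in the boson-boson analogue) together with a factor $t^{a_0(\lambda)}$ from the diagonal $t^{\re N}$. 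After conversion, the auxiliary propagates leftward through every remaining site via the $(2,2)$ vertex $t^{\re N} \otimes 1$, which is diagonal on both tensor factors and contributes $t^{a_i(\lambda)}$ at the $i$-th fermionic site. Multiplying across all sites yields $y \cdot \prod_i t^{a_i(\lambda)} = y \cdot t^{\ell}$, since the Maya diagram $\ket{{\re\cev \lambda}}$ has exactly $\ell$ occupied fermionic sites.

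Because none of the row-$2$ entries $(2,1)$ or $(2,2)$ create or annihilate a fermion (they act on the fermionic factor only by the diagonal $t^{\re N}$), the state $\ket{{\re \cev \lambda}}$ is preserved at each step, while the bosonic occupancy at position $0$ decreases by one. Iterating the above reasoning $N$ times successively exhausts the bosons, producing the true vacuum $\ket{\ }$ in the bosonic factor and the accumulated weight $t^{\ell N}\,\prod_{i=1}^{N} y_i$, as claimed. The only point requiring a little care is the rigidity of the configuration: one must observe that $\phi$ annihilates any empty bosonic site, which forces the conversion to take place at position $0$; and that once the auxiliary is in state $2$ to the left of position $0$, it has no option but to continue as state $2$ since the $(2,3)$ entry of the $L$-matrix vanishes. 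This is the same rigidity argument used in Lemma \ref{lem:triv}, and the main obstacle, such as it is, is simply bookkeeping the extra $y$ factor produced at the conversion vertex.
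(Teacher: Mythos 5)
Your overall strategy is the paper's own: the paper proves this lemma simply by declaring it ``the direct analogue of Lemma \ref{lem:triv}'', i.e.\ by the same single-row freezing argument, and your weight bookkeeping (one factor $y$ from the $(2,1)$ vertex $x[t^{\re N}\otimes\phi]$, a factor $t^{a_i(\lambda)}$ from each site crossed in the black auxiliary state, total $y\,t^{\ell}$ per row, iterated $N$ times with the fermionic factor untouched) is exactly the intended computation. However, as written there is a genuine geometric inconsistency. You place the reservoir at site $0$ via $\overline m_0=N$ and then have the auxiliary, after converting there, ``propagate leftward through every remaining site''; these two statements are only compatible if the reservoir column is the \emph{first} tile the auxiliary meets, i.e.\ it sits at the opposite end of the row from the edge where the black particle exits -- which is indeed what the paper's figures show (the zero-partition reservoir is drawn in the far column, beyond all the fermionic occupation). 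If the reservoir sat at the exit end, the auxiliary would instead traverse the entire fermionic region in the empty auxiliary state, each such tile being the $(1,1)$ entry $x[1\otimes 1]$ of \eqref{Lmat-bf} and hence contributing a spectral-parameter factor per site, and red $1\leftrightarrow 3$ excursions through the nonzero entries $\psid\otimes 1$ and $x[\psi\otimes 1]$ would generate unwanted final states; both the claimed weight $t^{\ell N}\prod_i y_i$ and the preservation of $\ket{{\re\cev\lambda}}$ would then fail. So the essential (and in your write-up unstated) input is that the bosonic reservoir lies beyond all occupied fermionic sites, on the side from which the auxiliary enters.

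Second, your rigidity argument is incomplete: the vanishing of the $(2,3)$ entry only forbids the black auxiliary line from turning red, but it does not forbid it from re-depositing its boson into the Fock space through the nonzero $(1,2)$ entry $1\otimes\phid$ (or the $(3,2)$ entry $\psi\otimes\phid$). What excludes this is the conservation argument used in Lemma \ref{lem:triv}: the black particle leaving on the exit edge must be traceable back to the reservoir, and once a boson has been dropped at an intermediate column there are no bosons available further along the row, so the boundary condition (black state on the exit edge) could no longer be satisfied. With that observation, and with the reservoir located as above, each row is frozen and your weight count gives $y\,t^{\ell}$ per application of $\Tb$, completing the proof as in the paper.
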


\subsection{Multiple integral expression for \texorpdfstring{$\mathcal{K}^{\lambda}_{\mu\nu}(x;t)$}{K lambda mu nu(x;t)}}

Returning to the sum \eqref{n-iterat2}, we apply the relation \eqref{trivial-act2} in order to eliminate all $\Tb$ operators, giving
\begin{multline}
\label{something2}
\prod_{i=1}^{n+|\nu|}
\b{x}_i
\times
\mathcal{K}^{\lambda}_{\mu\nu}(x;t)
=
t^{m |\nu|}
\sum_{1 \leq i_1 \leq k_1}
\cdots
\sum_{\substack{1 \leq i_{n} \leq k_{n} \\ i_n \not=i_1,\dots,i_{n-1}}}
(\b{x}_{i_1}
\dots
\b{x}_{i_n})
\times
\\
\frac{\prod_{j_1}^{k_1-1} (x_{i_1}-t x_{j_1})}
{\prod_{j_1\not=i_1}^{k_1} (x_{i_1}-x_{j_1})}
\cdots
\frac{\prod_{j_n\not=i_1,\dots,i_{n-1}}^{k_n-1} (x_{i_n}-t x_{j_n})}
{\prod_{j_n\not=i_1,\dots,i_n}^{k_n} (x_{i_n}-x_{j_n})}
\bra{{\re\cev \mu}}
\otimes
\bra{\ }
\Te(x_{i_1})
\dots
\Te(x_{i_n})
\ket{{\re\cev \lambda}}
\otimes
\ket{\ }.
\end{multline}
By virtue of the fact that 
\begin{align*}
\bra{{\re\cev \mu}}
\otimes
\bra{\ }
\Te(x_{i_1})
\dots
\Te(x_{i_n})
\ket{{\re\cev \lambda}}
\otimes
\ket{\ }
=
\prod_{j=1}^{n} x_{i_j}^{m+M}
S_{\mu/\lambda}(\b{x}_{i_1},\dots,\b{x}_{i_n};t)
=
\prod_{j=1}^{n} x_{i_j}^{m+M}
S_{\b\lambda/\b\mu}(\b{x}_{i_1},\dots,\b{x}_{i_n};t),
\end{align*}
we can then convert the sum \eqref{something2} into a multiple integral whose integrand contains a skew $t$-Schur polynomial:
\begin{multline*}
\prod_{i=1}^{n+|\nu|}
\b{x}_i
\times
\mathcal{K}^{\lambda}_{\mu\nu}(x;t)
=
t^{m |\nu|}
\times
\\
\oint_{w_1}
\cdots
\oint_{w_n}
\prod_{1 \leq i<j \leq n}
\frac{
(w_j - w_i)
}
{
(w_j-t w_i)
}
\prod_{i=1}^{n}
\frac{
\prod_{j=1}^{k_i-1}
(w_i - t x_j)
}
{
\prod_{j=1}^{k_i}
(w_i - x_j)
}
\prod_{i=1}^{n}
w_i^{m+M-1}
S_{\b\lambda/\b\mu}(\b{w}_1,\dots,\b{w}_n;t).
\end{multline*}

\subsection{Acting with divided-difference operators}

Recalling the action \eqref{dd-act} of the divided-difference operators, we find that
\begin{multline*}
\left(
\prod_{j \not\in k(\nu)}
\Delta_j
\right)
\left(
\prod_{i=1}^{n+|\nu|}
\b{x}_i
\right)
\mathcal{K}^{\lambda}_{\mu\nu}(x;t)
=
t^{(m+1)|\nu|}
\times
\\
\oint_{w_1}
\cdots
\oint_{w_n}
\prod_{1 \leq i<j \leq n}
\frac{
(w_j - w_i)
}
{
(w_j-t w_i)
}
\prod_{i=1}^{n}
\frac{
\prod_{1\leq j \leq k_{i-1}}
(w_i - t x_j)
}
{
\prod_{1 \leq j \leq k_i}
(w_i - x_j)
}
\prod_{i=1}^{n}
w_i^{m+M-1}
S_{\b\lambda/\b\mu}(\b{w}_1,\dots,\b{w}_n;t),
\end{multline*}
where we have defined $k_0 = 0$.

\subsection{Homogeneous limit}

Taking the limit $x_i \rightarrow 0$ for all $1 \leq i \leq n+|\nu|$, we obtain
\begin{align*}
t^{-(m+1)|\nu|}
&\times
\lim_{x \rightarrow 0}
\left(
\prod_{j \not\in k(\nu)}
\Delta_j
\right)
\left(
\prod_{i=1}^{n+|\nu|}
\b{x}_i
\right)
\mathcal{K}^{\lambda}_{\mu\nu}(x;t)
\\
&=
\oint_{w_1}
\cdots
\oint_{w_n}
\prod_{1 \leq i<j \leq n}
\frac{
(w_j - w_i)
}
{
(w_j-t w_i)
}
\prod_{i=1}^{n}
w_i^{m+M-1+k_{i-1}-k_i}
S_{\b\lambda/\b\mu}(\b{w}_1,\dots,\b{w}_n;t)
\\
&=
{\rm Coeff}
\left[
\prod_{1 \leq i<j \leq n}
\frac{(w_j-w_i)}{(w_j-tw_i)}
S_{\b\lambda/\b\mu}(\b{w}_1,\dots,\b{w}_n;t),
\prod_{i=1}^{n}
w_i^{k_i-k_{i-1}-m-M}
\right]
\\
&=
{\rm Coeff}
\left[
\prod_{1 \leq i<j \leq n}
\frac{(z_i-z_j)}{(z_i-tz_j)}
S_{\b\lambda/\b\mu}(z_1,\dots,z_n;t),
\prod_{i=1}^{n}
z_i^{m+M+k_{i-1}-k_i}
\right].
\end{align*}
Finally, since $m+M+k_{i-1}-k_i = m + M - 1 - \nu_{n-i+1} = \b\nu_i$, the final expression is equal to $b_{\b\nu}(t)\b{K}^{\b\lambda}_{\b\mu\b\nu}(t)$.

\section{Example of Theorem \ref{thm-kost-puzzle}}
\label{app:c}

Fix three partitions $\b\lambda=(1,1,1),\b\mu=(0,0,0),\b\nu=(2,1)$. $\b\lambda,\b\mu$ are the complements of $\lambda=(0,0,0),\mu=(1,1,1)$ by 1, while $\b\nu$ is the complement of $\nu=(2,1)$ by 3 (see the statement of Theorem \ref{thm-kost} for the details on taking complements). Our puzzles therefore have the frame
\begin{align*}
\begin{tikzpicture}[scale=0.45]
\plusb{0}{2}
\bdark{0}{3}
\plusb{0}{4}
\plusb{0}{5}
\bdark{0}{6}
\blight{1}{6}
\bdark{2}{6}
\blight{3}{6}
\bdark{4}{6}
\blight{5}{6}
\bdark{6}{6}
\blight{7}{6}
\bdark{8}{6}
\blight{1}{5}
\bdark{2}{5}
\blight{3}{5}
\bdark{4}{5}
\blight{5}{5}
\bdark{6}{5}
\blight{7}{5}
\bdark{8}{5}
\blight{1}{4}
\bdark{2}{4}
\blight{3}{4}
\bdark{4}{4}
\blight{5}{4}
\bdark{6}{4}
\blight{7}{4}
\bdark{8}{4}
\blight{1}{3}
\bdark{2}{3}
\blight{3}{3}
\bdark{4}{3}
\blight{5}{3}
\bdark{6}{3}
\blight{7}{3}
\bdark{8}{3}
\blight{1}{2}
\bdark{2}{2}
\blight{3}{2}
\bdark{4}{2}
\blight{5}{2}
\bdark{6}{2}
\blight{7}{2}
\bdark{8}{2}
\node[text centered] at (1.5,7.5) {\re \tiny $0$};
\node[text centered] at (3.5,7.5) {\re \tiny $1$};
\node[text centered] at (5.5,7.5) {\re \tiny $1$};
\node[text centered] at (7.5,7.5) {\re \tiny $1$};
\node[text centered] at (1.5,1.5) {\re \tiny $1$};
\node[text centered] at (3.5,1.5) {\re \tiny $1$};
\node[text centered] at (5.5,1.5) {\re \tiny $1$};
\node[text centered] at (7.5,1.5) {\re \tiny $0$};
\draw[ultra thick] (0.95,2) -- (0.95,7);
\end{tikzpicture}
\end{align*}
A total of 12 configurations are possible, leading to the following sum:
\begin{align*}
\begin{tikzpicture}[scale=0.45,baseline=2cm]
\plusb{0}{2}
\bdark{0}{3}
\plusb{0}{4}
\plusb{0}{5}
\bdark{0}{6}
\plusr{1}{6}
\flatR{2}{6}
\minr{3}{6}
\bdark{4}{6}
\blight{5}{6}
\bdark{6}{6}
\blight{7}{6}
\bdark{8}{6}
\flatB{1}{5}
\flatb{2}{5}
\flatB{3}{5}
\flatb{4}{5}
\flatB{5}{5}
\flatb{6}{5}
\flatB{7}{5}
\minb{8}{5}
\flatB{1}{4}
\flatb{2}{4}
\flatB{3}{4}
\flatb{4}{4}
\flatB{5}{4}
\flatb{6}{4}
\flatB{7}{4}
\minb{8}{4}
\blight{1}{3}
\bdark{2}{3}
\plusr{3}{3}
\flatR{4}{3}
\flatr{5}{3}
\flatR{6}{3}
\minr{7}{3}
\bdark{8}{3}
\flatB{1}{2}
\flatb{2}{2}
\flatB{3}{2}
\flatb{4}{2}
\flatB{5}{2}
\flatb{6}{2}
\flatB{7}{2}
\minb{8}{2}
\node[text centered] at (1.5,7.5) {\re \tiny $0$};
\node[text centered] at (3.5,7.5) {\re \tiny $1$};
\node[text centered] at (5.5,7.5) {\re \tiny $1$};
\node[text centered] at (7.5,7.5) {\re \tiny $1$};
\node[text centered] at (1.5,1.5) {\re \tiny $1$};
\node[text centered] at (3.5,1.5) {\re \tiny $1$};
\node[text centered] at (5.5,1.5) {\re \tiny $1$};
\node[text centered] at (7.5,1.5) {\re \tiny $0$};
\draw[ultra thick] (0.95,2) -- (0.95,7);
\node[text centered] at (4.5,0.5) {$-t^{10} (1-t)^2 $};
\end{tikzpicture}
-
\begin{tikzpicture}[scale=0.45,baseline=2cm]
\plusb{0}{2}
\bdark{0}{3}
\plusb{0}{4}
\plusb{0}{5}
\bdark{0}{6}
\plusr{1}{6}
\flatR{2}{6}
\minr{3}{6}
\bdark{4}{6}
\blight{5}{6}
\bdark{6}{6}
\blight{7}{6}
\bdark{8}{6}
\flatB{1}{5}
\flatb{2}{5}
\flatB{3}{5}
\flatb{4}{5}
\flatB{5}{5}
\flatb{6}{5}
\flatB{7}{5}
\minb{8}{5}
\flatB{1}{4}
\flatb{2}{4}
\flatB{3}{4}
\flatb{4}{4}
\flatB{5}{4}
\flatb{6}{4}
\flatB{7}{4}
\minb{8}{4}
\blight{1}{3}
\plusb{2}{3}
\flatB{3}{3}
\flatb{4}{3}
\flatB{5}{3}
\flatb{6}{3}
\flatB{7}{3}
\minb{8}{3}
\flatB{1}{2}
\minb{2}{2}
\plusr{3}{2}
\flatR{4}{2}
\flatr{5}{2}
\flatR{6}{2}
\minr{7}{2}
\bdark{8}{2}
\node[text centered] at (1.5,7.5) {\re \tiny $0$};
\node[text centered] at (3.5,7.5) {\re \tiny $1$};
\node[text centered] at (5.5,7.5) {\re \tiny $1$};
\node[text centered] at (7.5,7.5) {\re \tiny $1$};
\node[text centered] at (1.5,1.5) {\re \tiny $1$};
\node[text centered] at (3.5,1.5) {\re \tiny $1$};
\node[text centered] at (5.5,1.5) {\re \tiny $1$};
\node[text centered] at (7.5,1.5) {\re \tiny $0$};
\draw[ultra thick] (0.95,2) -- (0.95,7);
\node[text centered] at (4.5,0.5) {$-t^{10} (1-t)^3 $};
\end{tikzpicture}
-
\begin{tikzpicture}[scale=0.45,baseline=2cm]
\plusb{0}{2}
\bdark{0}{3}
\plusb{0}{4}
\plusb{0}{5}
\bdark{0}{6}
\blight{1}{6}
\bdark{2}{6}
\blight{3}{6}
\bdark{4}{6}
\blight{5}{6}
\plusb{6}{6}
\flatB{7}{6}
\minb{8}{6}
\flatB{1}{5}
\flatb{2}{5}
\flatB{3}{5}
\flatb{4}{5}
\flatB{5}{5}
\flatb{6}{5}
\flatB{7}{5}
\minb{8}{5}
\flatB{1}{4}
\flatb{2}{4}
\flatB{3}{4}
\flatb{4}{4}
\flatB{5}{4}
\flatb{6}{4}
\flatB{7}{4}
\minb{8}{4}
\plusr{1}{3}
\flatR{2}{3}
\flatr{3}{3}
\flatR{4}{3}
\flatr{5}{3}
\flatR{6}{3}
\minr{7}{3}
\bdark{8}{3}
\flatB{1}{2}
\flatb{2}{2}
\flatB{3}{2}
\flatb{4}{2}
\flatB{5}{2}
\minb{6}{2}
\blight{7}{2}
\bdark{8}{2}
\node[text centered] at (1.5,7.5) {\re \tiny $0$};
\node[text centered] at (3.5,7.5) {\re \tiny $1$};
\node[text centered] at (5.5,7.5) {\re \tiny $1$};
\node[text centered] at (7.5,7.5) {\re \tiny $1$};
\node[text centered] at (1.5,1.5) {\re \tiny $1$};
\node[text centered] at (3.5,1.5) {\re \tiny $1$};
\node[text centered] at (5.5,1.5) {\re \tiny $1$};
\node[text centered] at (7.5,1.5) {\re \tiny $0$};
\draw[ultra thick] (0.95,2) -- (0.95,7);
\node[text centered] at (4.5,0.5) {$t^{12} (1-t)^2 $};
\end{tikzpicture}
\\
-
\begin{tikzpicture}[scale=0.45,baseline=2cm]
\plusb{0}{2}
\bdark{0}{3}
\plusb{0}{4}
\plusb{0}{5}
\bdark{0}{6}
\plusr{1}{6}
\flatR{2}{6}
\minr{3}{6}
\bdark{4}{6}
\blight{5}{6}
\plusb{6}{6}
\flatB{7}{6}
\minb{8}{6}
\flatB{1}{5}
\flatb{2}{5}
\flatB{3}{5}
\flatb{4}{5}
\flatB{5}{5}
\minb{6}{5}
\blight{7}{5}
\bdark{8}{5}
\flatB{1}{4}
\flatb{2}{4}
\flatB{3}{4}
\flatb{4}{4}
\flatB{5}{4}
\flatb{6}{4}
\flatB{7}{4}
\minb{8}{4}
\blight{1}{3}
\bdark{2}{3}
\plusr{3}{3}
\flatR{4}{3}
\flatr{5}{3}
\flatR{6}{3}
\minr{7}{3}
\bdark{8}{3}
\flatB{1}{2}
\flatb{2}{2}
\flatB{3}{2}
\flatb{4}{2}
\flatB{5}{2}
\flatb{6}{2}
\flatB{7}{2}
\minb{8}{2}
\node[text centered] at (1.5,7.5) {\re \tiny $0$};
\node[text centered] at (3.5,7.5) {\re \tiny $1$};
\node[text centered] at (5.5,7.5) {\re \tiny $1$};
\node[text centered] at (7.5,7.5) {\re \tiny $1$};
\node[text centered] at (1.5,1.5) {\re \tiny $1$};
\node[text centered] at (3.5,1.5) {\re \tiny $1$};
\node[text centered] at (5.5,1.5) {\re \tiny $1$};
\node[text centered] at (7.5,1.5) {\re \tiny $0$};
\draw[ultra thick] (0.95,2) -- (0.95,7);
\node[text centered] at (4.5,0.5) {$-t^{10} (1-t)^3 $};
\end{tikzpicture}
+
\begin{tikzpicture}[scale=0.45,baseline=2cm]
\plusb{0}{2}
\bdark{0}{3}
\plusb{0}{4}
\plusb{0}{5}
\bdark{0}{6}
\plusr{1}{6}
\flatR{2}{6}
\minr{3}{6}
\bdark{4}{6}
\blight{5}{6}
\plusb{6}{6}
\flatB{7}{6}
\minb{8}{6}
\flatB{1}{5}
\flatb{2}{5}
\flatB{3}{5}
\flatb{4}{5}
\flatB{5}{5}
\minb{6}{5}
\blight{7}{5}
\bdark{8}{5}
\flatB{1}{4}
\flatb{2}{4}
\flatB{3}{4}
\flatb{4}{4}
\flatB{5}{4}
\flatb{6}{4}
\flatB{7}{4}
\minb{8}{4}
\blight{1}{3}
\plusb{2}{3}
\flatB{3}{3}
\flatb{4}{3}
\flatB{5}{3}
\flatb{6}{3}
\flatB{7}{3}
\minb{8}{3}
\flatB{1}{2}
\minb{2}{2}
\plusr{3}{2}
\flatR{4}{2}
\flatr{5}{2}
\flatR{6}{2}
\minr{7}{2}
\bdark{8}{2}
\node[text centered] at (1.5,7.5) {\re \tiny $0$};
\node[text centered] at (3.5,7.5) {\re \tiny $1$};
\node[text centered] at (5.5,7.5) {\re \tiny $1$};
\node[text centered] at (7.5,7.5) {\re \tiny $1$};
\node[text centered] at (1.5,1.5) {\re \tiny $1$};
\node[text centered] at (3.5,1.5) {\re \tiny $1$};
\node[text centered] at (5.5,1.5) {\re \tiny $1$};
\node[text centered] at (7.5,1.5) {\re \tiny $0$};
\draw[ultra thick] (0.95,2) -- (0.95,7);
\node[text centered] at (4.5,0.5) {$-t^{10} (1-t)^4 $};
\end{tikzpicture}
+
\begin{tikzpicture}[scale=0.45,baseline=2cm]
\plusb{0}{2}
\bdark{0}{3}
\plusb{0}{4}
\plusb{0}{5}
\bdark{0}{6}
\blight{1}{6}
\bdark{2}{6}
\blight{3}{6}
\plusb{4}{6}
\flatB{5}{6}
\flatb{6}{6}
\flatB{7}{6}
\minb{8}{6}
\flatB{1}{5}
\flatb{2}{5}
\flatB{3}{5}
\minb{4}{5}
\blight{5}{5}
\plusb{6}{5}
\flatB{7}{5}
\minb{8}{5}
\flatB{1}{4}
\flatb{2}{4}
\flatB{3}{4}
\flatb{4}{4}
\flatB{5}{4}
\flatb{6}{4}
\flatB{7}{4}
\minb{8}{4}
\plusr{1}{3}
\flatR{2}{3}
\flatr{3}{3}
\flatR{4}{3}
\flatr{5}{3}
\flatR{6}{3}
\minr{7}{3}
\bdark{8}{3}
\flatB{1}{2}
\flatb{2}{2}
\flatB{3}{2}
\flatb{4}{2}
\flatB{5}{2}
\minb{6}{2}
\blight{7}{2}
\bdark{8}{2}
\node[text centered] at (1.5,7.5) {\re \tiny $0$};
\node[text centered] at (3.5,7.5) {\re \tiny $1$};
\node[text centered] at (5.5,7.5) {\re \tiny $1$};
\node[text centered] at (7.5,7.5) {\re \tiny $1$};
\node[text centered] at (1.5,1.5) {\re \tiny $1$};
\node[text centered] at (3.5,1.5) {\re \tiny $1$};
\node[text centered] at (5.5,1.5) {\re \tiny $1$};
\node[text centered] at (7.5,1.5) {\re \tiny $0$};
\draw[ultra thick] (0.95,2) -- (0.95,7);
\node[text centered] at (4.5,0.5) {$t^{12} (1-t)^3 $};
\end{tikzpicture}
\\
-
\begin{tikzpicture}[scale=0.45,baseline=2cm]
\plusb{0}{2}
\bdark{0}{3}
\plusb{0}{4}
\plusb{0}{5}
\bdark{0}{6}
\plusr{1}{6}
\flatR{2}{6}
\minr{3}{6}
\plusb{4}{6}
\flatB{5}{6}
\flatb{6}{6}
\flatB{7}{6}
\minb{8}{6}
\flatB{1}{5}
\flatb{2}{5}
\flatB{3}{5}
\minb{4}{5}
\blight{5}{5}
\bdark{6}{5}
\blight{7}{5}
\bdark{8}{5}
\flatB{1}{4}
\flatb{2}{4}
\flatB{3}{4}
\flatb{4}{4}
\flatB{5}{4}
\flatb{6}{4}
\flatB{7}{4}
\minb{8}{4}
\blight{1}{3}
\bdark{2}{3}
\plusr{3}{3}
\flatR{4}{3}
\flatr{5}{3}
\flatR{6}{3}
\minr{7}{3}
\bdark{8}{3}
\flatB{1}{2}
\flatb{2}{2}
\flatB{3}{2}
\flatb{4}{2}
\flatB{5}{2}
\flatb{6}{2}
\flatB{7}{2}
\minb{8}{2}
\node[text centered] at (1.5,7.5) {\re \tiny $0$};
\node[text centered] at (3.5,7.5) {\re \tiny $1$};
\node[text centered] at (5.5,7.5) {\re \tiny $1$};
\node[text centered] at (7.5,7.5) {\re \tiny $1$};
\node[text centered] at (1.5,1.5) {\re \tiny $1$};
\node[text centered] at (3.5,1.5) {\re \tiny $1$};
\node[text centered] at (5.5,1.5) {\re \tiny $1$};
\node[text centered] at (7.5,1.5) {\re \tiny $0$};
\draw[ultra thick] (0.95,2) -- (0.95,7);
\node[text centered] at (4.5,0.5) {$-t^{10} (1-t)^3 $};
\end{tikzpicture}
+
\begin{tikzpicture}[scale=0.45,baseline=2cm]
\plusb{0}{2}
\bdark{0}{3}
\plusb{0}{4}
\plusb{0}{5}
\bdark{0}{6}
\plusr{1}{6}
\flatR{2}{6}
\minr{3}{6}
\plusb{4}{6}
\flatB{5}{6}
\flatb{6}{6}
\flatB{7}{6}
\minb{8}{6}
\flatB{1}{5}
\flatb{2}{5}
\flatB{3}{5}
\minb{4}{5}
\blight{5}{5}
\bdark{6}{5}
\blight{7}{5}
\bdark{8}{5}
\flatB{1}{4}
\flatb{2}{4}
\flatB{3}{4}
\flatb{4}{4}
\flatB{5}{4}
\flatb{6}{4}
\flatB{7}{4}
\minb{8}{4}
\blight{1}{3}
\plusb{2}{3}
\flatB{3}{3}
\flatb{4}{3}
\flatB{5}{3}
\flatb{6}{3}
\flatB{7}{3}
\minb{8}{3}
\flatB{1}{2}
\minb{2}{2}
\plusr{3}{2}
\flatR{4}{2}
\flatr{5}{2}
\flatR{6}{2}
\minr{7}{2}
\bdark{8}{2}
\node[text centered] at (1.5,7.5) {\re \tiny $0$};
\node[text centered] at (3.5,7.5) {\re \tiny $1$};
\node[text centered] at (5.5,7.5) {\re \tiny $1$};
\node[text centered] at (7.5,7.5) {\re \tiny $1$};
\node[text centered] at (1.5,1.5) {\re \tiny $1$};
\node[text centered] at (3.5,1.5) {\re \tiny $1$};
\node[text centered] at (5.5,1.5) {\re \tiny $1$};
\node[text centered] at (7.5,1.5) {\re \tiny $0$};
\draw[ultra thick] (0.95,2) -- (0.95,7);
\node[text centered] at (4.5,0.5) {$-t^{10} (1-t)^4 $};
\end{tikzpicture}
+
\begin{tikzpicture}[scale=0.45,baseline=2cm]
\plusb{0}{2}
\bdark{0}{3}
\plusb{0}{4}
\plusb{0}{5}
\bdark{0}{6}
\blight{1}{6}
\plusb{2}{6}
\flatB{3}{6}
\flatb{4}{6}
\flatB{5}{6}
\flatb{6}{6}
\flatB{7}{6}
\minb{8}{6}
\flatB{1}{5}
\minb{2}{5}
\blight{3}{5}
\bdark{4}{5}
\blight{5}{5}
\plusb{6}{5}
\flatB{7}{5}
\minb{8}{5}
\flatB{1}{4}
\flatb{2}{4}
\flatB{3}{4}
\flatb{4}{4}
\flatB{5}{4}
\flatb{6}{4}
\flatB{7}{4}
\minb{8}{4}
\plusr{1}{3}
\flatR{2}{3}
\flatr{3}{3}
\flatR{4}{3}
\flatr{5}{3}
\flatR{6}{3}
\minr{7}{3}
\bdark{8}{3}
\flatB{1}{2}
\flatb{2}{2}
\flatB{3}{2}
\flatb{4}{2}
\flatB{5}{2}
\minb{6}{2}
\blight{7}{2}
\bdark{8}{2}
\node[text centered] at (1.5,7.5) {\re \tiny $0$};
\node[text centered] at (3.5,7.5) {\re \tiny $1$};
\node[text centered] at (5.5,7.5) {\re \tiny $1$};
\node[text centered] at (7.5,7.5) {\re \tiny $1$};
\node[text centered] at (1.5,1.5) {\re \tiny $1$};
\node[text centered] at (3.5,1.5) {\re \tiny $1$};
\node[text centered] at (5.5,1.5) {\re \tiny $1$};
\node[text centered] at (7.5,1.5) {\re \tiny $0$};
\draw[ultra thick] (0.95,2) -- (0.95,7);
\node[text centered] at (4.5,0.5) {$t^{12} (1-t)^3 $};
\end{tikzpicture}
\\
+
\begin{tikzpicture}[scale=0.45,baseline=2cm]
\plusb{0}{2}
\bdark{0}{3}
\plusb{0}{4}
\plusb{0}{5}
\bdark{0}{6}
\plusr{1}{6}
\flatR{2}{6}
\minr{3}{6}
\plusb{4}{6}
\flatB{5}{6}
\flatb{6}{6}
\flatB{7}{6}
\minb{8}{6}
\flatB{1}{5}
\flatb{2}{5}
\flatB{3}{5}
\minb{4}{5}
\blight{5}{5}
\plusb{6}{5}
\flatB{7}{5}
\minb{8}{5}
\flatB{1}{4}
\flatb{2}{4}
\flatB{3}{4}
\flatb{4}{4}
\flatB{5}{4}
\minb{6}{4}
\blight{7}{4}
\bdark{8}{4}
\blight{1}{3}
\bdark{2}{3}
\plusr{3}{3}
\flatR{4}{3}
\flatr{5}{3}
\flatR{6}{3}
\minr{7}{3}
\bdark{8}{3}
\flatB{1}{2}
\flatb{2}{2}
\flatB{3}{2}
\flatb{4}{2}
\flatB{5}{2}
\flatb{6}{2}
\flatB{7}{2}
\minb{8}{2}
\node[text centered] at (1.5,7.5) {\re \tiny $0$};
\node[text centered] at (3.5,7.5) {\re \tiny $1$};
\node[text centered] at (5.5,7.5) {\re \tiny $1$};
\node[text centered] at (7.5,7.5) {\re \tiny $1$};
\node[text centered] at (1.5,1.5) {\re \tiny $1$};
\node[text centered] at (3.5,1.5) {\re \tiny $1$};
\node[text centered] at (5.5,1.5) {\re \tiny $1$};
\node[text centered] at (7.5,1.5) {\re \tiny $0$};
\draw[ultra thick] (0.95,2) -- (0.95,7);
\node[text centered] at (4.5,0.5) {$-t^{10} (1-t)^4 $};
\end{tikzpicture}
-
\begin{tikzpicture}[scale=0.45,baseline=2cm]
\plusb{0}{2}
\bdark{0}{3}
\plusb{0}{4}
\plusb{0}{5}
\bdark{0}{6}
\plusr{1}{6}
\flatR{2}{6}
\minr{3}{6}
\plusb{4}{6}
\flatB{5}{6}
\flatb{6}{6}
\flatB{7}{6}
\minb{8}{6}
\flatB{1}{5}
\flatb{2}{5}
\flatB{3}{5}
\minb{4}{5}
\blight{5}{5}
\plusb{6}{5}
\flatB{7}{5}
\minb{8}{5}
\flatB{1}{4}
\flatb{2}{4}
\flatB{3}{4}
\flatb{4}{4}
\flatB{5}{4}
\minb{6}{4}
\blight{7}{4}
\bdark{8}{4}
\blight{1}{3}
\plusb{2}{3}
\flatB{3}{3}
\flatb{4}{3}
\flatB{5}{3}
\flatb{6}{3}
\flatB{7}{3}
\minb{8}{3}
\flatB{1}{2}
\minb{2}{2}
\plusr{3}{2}
\flatR{4}{2}
\flatr{5}{2}
\flatR{6}{2}
\minr{7}{2}
\bdark{8}{2}
\node[text centered] at (1.5,7.5) {\re \tiny $0$};
\node[text centered] at (3.5,7.5) {\re \tiny $1$};
\node[text centered] at (5.5,7.5) {\re \tiny $1$};
\node[text centered] at (7.5,7.5) {\re \tiny $1$};
\node[text centered] at (1.5,1.5) {\re \tiny $1$};
\node[text centered] at (3.5,1.5) {\re \tiny $1$};
\node[text centered] at (5.5,1.5) {\re \tiny $1$};
\node[text centered] at (7.5,1.5) {\re \tiny $0$};
\draw[ultra thick] (0.95,2) -- (0.95,7);
\node[text centered] at (4.5,0.5) {$-t^{10} (1-t)^5 $};
\end{tikzpicture}
-
\begin{tikzpicture}[scale=0.45,baseline=2cm]
\plusb{0}{2}
\bdark{0}{3}
\plusb{0}{4}
\plusb{0}{5}
\bdark{0}{6}
\blight{1}{6}
\plusb{2}{6}
\flatB{3}{6}
\flatb{4}{6}
\flatB{5}{6}
\flatb{6}{6}
\flatB{7}{6}
\minb{8}{6}
\flatB{1}{5}
\minb{2}{5}
\blight{3}{5}
\plusb{4}{5}
\flatB{5}{5}
\flatb{6}{5}
\flatB{7}{5}
\minb{8}{5}
\flatB{1}{4}
\flatb{2}{4}
\flatB{3}{4}
\minb{4}{4}
\blight{5}{4}
\plusb{6}{4}
\flatB{7}{4}
\minb{8}{4}
\plusr{1}{3}
\flatR{2}{3}
\flatr{3}{3}
\flatR{4}{3}
\flatr{5}{3}
\flatR{6}{3}
\minr{7}{3}
\bdark{8}{3}
\flatB{1}{2}
\flatb{2}{2}
\flatB{3}{2}
\flatb{4}{2}
\flatB{5}{2}
\minb{6}{2}
\blight{7}{2}
\bdark{8}{2}
\node[text centered] at (1.5,7.5) {\re \tiny $0$};
\node[text centered] at (3.5,7.5) {\re \tiny $1$};
\node[text centered] at (5.5,7.5) {\re \tiny $1$};
\node[text centered] at (7.5,7.5) {\re \tiny $1$};
\node[text centered] at (1.5,1.5) {\re \tiny $1$};
\node[text centered] at (3.5,1.5) {\re \tiny $1$};
\node[text centered] at (5.5,1.5) {\re \tiny $1$};
\node[text centered] at (7.5,1.5) {\re \tiny $0$};
\draw[ultra thick] (0.95,2) -- (0.95,7);
\node[text centered] at (4.5,0.5) {$t^{12} (1-t)^4 $};
\end{tikzpicture}
\end{align*}
Summing the resulting Boltzmann weights, we find that $\sum_{P \in \mathbb{P}_{\nu}(\mu,\lambda)} (-1)^{L(P)} W(P) = -t^{13} (1-t)^2 (1+t)$. The inverse Kostka polynomial is obtained by normalizing this sum:
\begin{align*}
\b{K}^{\b\lambda}_{\b\mu\b\nu}(t)
\equiv
\b{K}^{\b\lambda}_{\b\nu}(t)
&=
\frac{t^{-(m+1)|\nu|}}{b_{\b\nu}(t)}
\times
\sum_{P \in \mathbb{P}_{\nu}(\mu,\lambda)} (-1)^{L(P)} W(P) 
\\
&=
\frac{t^{-12}}{(1-t)^2}
\times
-t^{13} (1-t)^2 (1+t)
=
-t(1+t). 
\end{align*}

\bibliographystyle{abbrv}
\bibliography{references}

\end{document}